\documentclass[reqno,10pt,a4paper,dvipsnames]{amsart}
\numberwithin{equation}{section}
\allowdisplaybreaks % let equations break across pages (doesn't work with groups like aligned)

\usepackage{amssymb,eucal,mathrsfs,setspace,cite}

\usepackage{array}
\newcolumntype{C}{>{$}c<{$}} %Defines math mode in tabular

\usepackage[largesc,theoremfont]{newtxtext}      % mathptmx is now considered obsolete...
                                                 % largesc makes small caps not too small
                                                 % theoremfont makes numbers and punctuation upright in theorems
\usepackage[libertine,cmbraces,varbb]{newtxmath} % but the replacement greek math looks crap (hence libertine)...
                                                 % cmbraces makes curly brackets a bit weightier
                                                 % varbb makes blackboard bold look good again

%D the "theoremfont" option to newtxtext has the great property that it stops italicising of numbers and punctuation in theorems
%D BUT the author rather stupidly forces the spacing before theorems to disappear (claiming this is correct).
%D the following code seems to reset the spacing back to the amsart defaults (yay).
\begingroup
  \makeatletter
  \@for\theoremstyle:=definition,remark,plain\do{%
    \expandafter\g@addto@macro\csname th@\theoremstyle\endcsname{%
      \addtolength\thm@preskip{.5\baselineskip plus .2\baselineskip minus .2\baselineskip}
      \addtolength\thm@postskip{.5\baselineskip plus .2\baselineskip minus .2\baselineskip}
    }%
  }
\endgroup
%\DeclareSymbolFont{largesymbols}{OMX}{zplm}{m}{n} % Replaces summation symbol in mathptmx by the palatino one

\usepackage[inner=24mm, outer=24mm, top=24mm, bottom=24mm, head=10mm, foot=10mm]{geometry}

\usepackage{enumitem}
\setitemize{leftmargin=*}   % Removes margin from itemized lists (enumitem package)
\setenumerate{leftmargin=*, % Removes margin for enumerate lists (enumitem package)
  label=\textup{(\roman*)}} % Makes labels upright lowercase roman numerals

\usepackage{mathtools} % needed for cases* and the bra/ket constructions

\usepackage{graphicx}

\usepackage{tikz}
\usetikzlibrary{decorations.markings,positioning,arrows}
\usepackage[section]{placeins}
\usepackage{float}
% insert tikz styles and macros here!
\tikzset{-<-/.style={thick, decoration={markings, mark=at position .3 with {\arrow{<}}},postaction={decorate}}}

\usepackage[colorlinks=true,citecolor=red,linkcolor=blue]{hyperref} % load _before_ cleveref!

\usepackage[capitalise,noabbrev]{cleveref} % needed for \cref and \Cref (all refs except \eqref)

% editing macros

%%%%% Macros %%%%%

% miscellania
\newcommand{\eps}{\varepsilon}
\newcommand{\pd}{\partial}     % holomorphic partial d

 % restore the Halmos to its original glory (proofs)

\renewcommand{\ge}{\geq}%slant} % never use \geq or \geqslant, use \ge which is globally defined to be one or the other
\renewcommand{\le}{\leq}%slant} % same for \leq and \leqslant

\newcommand{\relmiddle}[1]{\mathrel{}\middle#1\mathrel{}}

\DeclareMathOperator{\Sid}{\overline{\wun}}%{\rho}
 % why does latex define \ker but not \im ?
\DeclareMathOperator{\vspn}{span}
\DeclareMathOperator{\End}{End}

\newcommand{\dd}{\mathrm{d}}   % d in derivatives and integrals
\newcommand{\ii}{\mathfrak{i}} % imaginary unit
\newcommand{\ee}{\mathsf{e}}   % ln e = 1

\newcommand{\wun}{\vvmathbb{1}}  % here used to indicate characteristic functions (requires newtxmath)

% matched brackets and so forth
\DeclarePairedDelimiter{\brac}{\lparen}{\rparen} % use \brac for (...) and \brac* to automatically scale the ( and )
\DeclarePairedDelimiter{\sqbrac}{\lbrack}{\rbrack} % use \sqbrac[\big] for \bigl(...\bigr) etc...
\DeclarePairedDelimiter{\set}{\lbrace}{\rbrace}
\newcommand{\st}{\mspace{5mu} : \mspace{5mu}} % "such that" in sets
\DeclarePairedDelimiter{\abs}{\lvert}{\rvert}

\DeclarePairedDelimiter{\ang}{\langle}{\rangle}
\DeclarePairedDelimiter{\normord}{{} :}{: {}} % normal ordering ({} necessary to prevent := or =:)

\DeclarePairedDelimiterX{\comm}[2]{\lbrack}{\rbrack}{#1 , #2}  % commutators
\DeclarePairedDelimiterX{\acomm}[2]{\lbrace}{\rbrace}{#1 , #2} % anticommutators
\DeclarePairedDelimiterX{\inner}[2]{\langle}{\rangle}{#1 , #2} % scalar products
\DeclarePairedDelimiterX{\super}[2]{\lparen}{\rparen}{#1 \delimsize\vert \mathopen{} #2} % for super args (m|n)

% homological algebra
\newcommand{\ra}{\rightarrow}
\newcommand{\lra}{\longrightarrow}
    % for injections
 % for surjections
 % long injection
        % long surjection

                                  % short exact sequence
\newcommand{\dses}[5]{0 \lra #1 \overset{#2}{\lra} #3 \overset{#4}{\lra} #5 \lra 0} % displayed ses

\newcommand{\Res}[1]{#1 \mbox{$\downarrow$} {}}           % restriction
\newcommand{\Ind}[1]{#1 \raisebox{0.18em}{$\uparrow$} {}} % induction

% font macros
\newcommand{\fld}[1]{\mathbb{#1}}    % for fields and related things
\newcommand{\alg}[1]{\mathfrak{#1}}  % for Lie algebras
\newcommand{\grp}[1]{\mathsf{#1}}    % for groups
\newcommand{\Mod}[1]{\mathcal{#1}}   % modules
\newcommand{\VOA}[1]{\mathsf{#1}}    % VOAs
\newcommand{\categ}[1]{\textbf{#1}}  % categories (using text instead of math as the latter's bold is crap)
\newcommand{\sector}[1]{\mathscr{#1}} % for standard module sectors, requires mathrsfs package

% fields and things
\newcommand{\ZZ}{\fld{Z}}
\newcommand{\NN}{\ZZ_{\ge 0}} %\fld{N}} %D modified to eliminate confusion among uneducated people

\newcommand{\RR}{\fld{R}}
\newcommand{\CC}{\fld{C}}

%groups
\newcommand{\SLG}[2]{\grp{#1} \brac{#2}}       % Lie groups like SU(2)
 % Lie supergroups like GL(1|1)

% algebras
\newcommand{\affine}[1]{\widehat{#1}}
\newcommand{\SLA}[2]{\alg{#1} \brac{#2}}                 % Lie algebras like sl(2)
\newcommand{\SLSA}[3]{\alg{#1} \super{#2}{#3}}           % Lie superalgebras like gl(1|1)
\newcommand{\AKMA}[2]{\affine{\alg{#1}} \brac{#2}}       % Kac-Moody algebras
\newcommand{\AKMSA}[3]{\affine{\alg{#1}} \super{#2}{#3}} % Kac-Moody superalgebras
\newcommand{\EnvAlg}[1]{\mathsf{U}\brac{#1}}             % Universal enveloping (super)algebras
\newcommand{\EnvAlgk}[1]{\mathsf{U}_k\brac{#1}}
\newcommand{\ideal}[1]{\ang[\big]{#1}}

\newcommand{\cox}{\mathsf{h}}   % Coxeter number
\newcommand{\dcox}{\cox^{\vee}} % dual Coxeter number

\newcommand{\killing}[2]{\kappa \brac{#1 , #2}} % the Killing form or the next best thing

% automorphisms
\newcommand{\finite}[1]{\overline{#1}}
\newcommand{\func}[2]{#1 #2}                         % apply a functor (brackets have been removed)
\newcommand{\parr}{\Pi}                              % parity reversal
\newcommand{\conjaut}{\mathsf{w}}                    % affine conjugation
\newcommand{\finconjaut}{\finite{\mathsf{w}}}        % finite conjugation
\newcommand{\sfaut}{\sigma}                          % spectral flow
\newcommand{\conjmod}[1]{\func{\conjaut}{#1}}        % conjugate an affine module
\newcommand{\finconjmod}[1]{\func{\finconjaut}{#1}}  % conjugate a module
\newcommand{\sfmod}[2]{\func{\sfaut^{#1}}{#2}}       % apply spectral flow #1 times to #2
\newcommand{\parrmod}[1]{\func{\parr}{#1}}           % apply parity reversal

% Categories
\newcommand{\algmod}[1]{\categ{Mod}_{#1}}
\newcommand{\wtmod}[1]{\categ{WMod}_{#1}}

% VOAs
                   % Virasoro minimal model VOA
\newcommand{\SLMinMod}[2]{\VOA{A}_1 \brac{#1 , #2}}             % admissible level sl(2) models
\newcommand{\OSPMinMod}[2]{\VOA{B}_{0\vert 1} \brac{#1 , #2}}   % admissible level osp(1|2) models
\newcommand{\OSPUniv}[1]{\AKMSA{osp}{1}{2}_{#1}}                % universal osp(1|2) model (#1 is the level)

\newcommand{\vac}[1]{\Omega_{#1}}                               % VOSA vacuum vector of level #1

\newcommand{\gvoa}{\VOA{V}}                                     % a general VOSA
\newcommand{\orb}{\VOA{O}}                                      % bosonic orbifold of \gvoa

% Zhu
\newcommand{\gtzhu}[2]{\mathsf{Zhu}^{#2}\sqbrac*{#1}}                       % g twisted Zhu's algebra
\newcommand{\zhu}[1]{\gtzhu{#1}{}}                                          % Zhu's algebra
\newcommand{\parity}{\tau}                                                  % parity automorphism
\newcommand{\tzhu}[1]{\gtzhu{#1}{\parity}}                                  % Twisted Zhu

\newcommand{\parityof}[1]{\abs{#1}}                                         % the parity of something

                        % O space one quotients by

                                               % conformal weight
\newcommand{\zstar}{\ast}                                                   % Zhu's star product

                                       % zero mode (or not) subspaces
                            % twisted version

% modules...

\newcommand{\simp}{\Mod{J}}                % simple current
\newcommand{\nsr}[2]{\sector{S}_{#1}^{#2}} % set of NS/R modules
\newcommand{\gnsr}[1]{\nsr{#1}{}}
\newcommand{\bnsr}[1]{\nsr{#1}{\orb}}

\newcommand{\even}{(0)} % even and odd submodules
\newcommand{\odd}{(1)}

%\newcommand{\nsrstand}[1]{\nsr{#1}^S}      % set of standard NS/R modules

% affine mods
\newcommand{\NS}{\mathrm{NS}}
\newcommand{\R}{\mathrm{R}}
\newcommand{\NSVer}[1]{\Mod{V}^{\NS}_{#1}}   % NS Verma module
   % NS irreducible module
\newcommand{\RVer}[1]{\Mod{V}^{\R}_{#1}}     % R Verma module
     % R irreducible module

\newcommand{\NSFin}[1]{\Mod{A}_{#1}} % NS affine osp(1|2)-module with finite-dimensional space of ground states
\newcommand{\NSInf}[1]{\Mod{B}_{#1}} % infinitely many ground states, but a highest or lowest one
\newcommand{\NSRel}[1]{\Mod{C}_{#1}} % relaxed, no highest or lowest ground state
\newcommand{\RInf}[1]{\Mod{D}_{#1}}  % R version of NSInf
\newcommand{\RRel}[1]{\Mod{E}_{#1}}  % R version of relaxed
\newcommand{\RFin}[1]{\Mod{F}_{#1}}  % R version of NSFin (not needed probably)
 % integrable affine sl(2) modules (needed for fusion)

% finite mods
\newcommand{\Fin}[1]{\finite{\Mod{A}}_{#1}} % finite-dimensional osp(1|2)-module
\newcommand{\Inf}[1]{\finite{\Mod{B}}_{#1}} % infinite-dimensional highest- or lowest-weight osp(1|2)-module
\newcommand{\Rel}[1]{\finite{\Mod{C}}_{#1}} % relaxed osp(1|2)-module (no highest- or lowest-weight)
\newcommand{\slFin}[1]{\finite{\Mod{F}}_{#1}}  % finite-dimensional sl(2)-module
\newcommand{\slInf}[1]{\finite{\Mod{D}}_{#1}}  % infinite-dimensional hw\lw sl(2)-module
\newcommand{\slRel}[1]{\finite{\Mod{E}}_{#1}}  % relaxed sl(2)-module

% bras and kets and so forth
                                                    % < |
                                                    % | >
\DeclarePairedDelimiterX{\brakets}[2]{\langle}{\rangle}{#1 \delimsize\vert #2}                     % < | >
\DeclarePairedDelimiterX{\brackets}[3]{\langle}{\rangle}{#1 \delimsize\vert #2 \delimsize\vert #3} % < | | >
                                                % correlation function < ... >

% use these ones unless we have a very large argument

% characters
\DeclareMathOperator{\tr}{tr}
\DeclareMathOperator{\str}{str}

\newcommand{\Gr}[1]{\sqbrac[\big]{#1}}                               % element of a Grothendieck group/ring
\newcommand{\tGr}[1]{\sqbrac{#1}}                                    % element of a Grothendieck group/ring

\newcommand{\traceover}[1]{\tr_{\raisebox{-2pt}{$\scriptstyle #1$}}} % for ch[M] = tr_M ...
\newcommand{\straceover}[1]{\str_{\raisebox{-2pt}{$\scriptstyle #1$}}} % for sch[M] = str_M ...

\newcommand{\nchmap}[1]{\mathrm{ch}_{#1}}                            % new character as proposed by SW
\newcommand{\chmap}{\nchmap{0}}
\newcommand{\schmap}{\nchmap{1/2}}

\newcommand{\nch}[2]{\nchmap{#2} \Gr{#1}}                            % new character #2 = 0 or 1/2
                  % new character as function of q and ...
\newcommand{\ch}[1]{\nch{#1}{0}}                                     % characters
\newcommand{\fch}[2]{\ch{#1} \brac[\big]{#2}}
\newcommand{\sch}[1]{\nch{#1}{1/2}}                                  % supercharacters
\newcommand{\fsch}[2]{\sch{#1} \brac[\big]{#2}}
\newcommand{\bch}[1]{\nch{#1}{}}                                     % bosonic orbifold characters

\newcommand{\tnch}[2]{\nchmap{#2} \tGr{#1}}                           % inline (text) version of \nch
                       % new character as function of q and ...
\newcommand{\tch}[1]{\tnch{#1}{0}}                                    % characters

\newcommand{\tsch}[1]{\tnch{#1}{1/2}}                                 % supercharacters

                                    % bosonic orbifold characters

%\newcommand{\schol}[1]{\overline{#1}}                                % Overline denoting supercharacter in S-matrix/fusion entries

\newcommand{\jth}[1]{\vartheta_{#1}}                                 % Jacobi theta
\newcommand{\fjth}[2]{\jth{#1} \brac{#2}}                            % Jacobi theta as a function of z, q (or zeta, tau)
\newcommand{\deta}[1]{\eta \brac{#1}}                                % Dedekind eta function
\newcommand{\ddelta}[1]{\delta \brac{#1}}                            % Dirac delta function
\newcommand{\taufn}[2]{A_{#1} \brac{#2}}                             % Tau dependent factors in character formulae

% modular stuff
\newcommand{\modfont}[1]{\mathsf{#1}}
\newcommand{\modS}{\modfont{S}}                        % modular S-matrix
\newcommand{\slmodS}{\modfont{s}}                      % modular S-matrix for sl(2)
\newcommand{\modT}{\modfont{T}}                        % modular T-matrix
                        % modular conjugation matrix
\newcommand{\modch}[2]{\modfont{#1} \set[\big]{#2}}

\newcommand{\Smat}[4]{\modS_{#1, #3} \sqbrac[\big]{#2 \ra #4}} % new S-matrix entry
       % old S-matrix entry
\newcommand{\Selt}[2]{\Sigma \sqbrac[\big]{#1 \ra #2}}         % factorised S-matrix entry
\newcommand{\slSmat}[2]{\slmodS \sqbrac[\big]{#1 \ra #2}}      % S-matrix entry for sl(2)

\newcommand{\tSmat}[4]{\modS_{#1, #3} \sqbrac{#2 \ra #4}} % new S-matrix entry
         % factorised S-matrix entry
      % S-matrix entry for sl(2)

\newcommand{\sker}[2]{K_{#1}^{\modS} \brac{#2}}                      % Integral kernel in S matrix entries
\newcommand{\tker}[2]{K_{#1}^{\modT} \brac{#2}}

% fusion
\newcommand{\fuse}[1]{\mathbin{\times_{#1}}}                              % osp fusion
\newcommand{\gfuse}{\fuse{}}                                              % general fusion
\newcommand{\bfuse}{\fuse{\orb}}                                          % bosonic orbifold fusion

\newcommand{\Grfuse}[1]{\mathbin{\boxtimes_{#1}}}                         % Verlinde algebra fusion
\newcommand{\gGrfuse}{\Grfuse{}}                                          % general Verlinde algebra fusion
\newcommand{\bGrfuse}{\Grfuse{\orb}}                                      % bosonic orbifold Verlinde algebra fusion
\newcommand{\fuscoeff}[4]{\genfrac{[}{]}{0pt}{0}{#3}{#1 \ \ #2}_{#4}}     % Grothendieck fusion coefficient
\newcommand{\bfuscoeff}[3]{\fuscoeff{#1}{#2}{#3}{\orb}}                   % orbifold fusion coefficient
\newcommand{\tfuscoeff}[4]{\left[ \begin{smallmatrix} #3 \\ #1 \ #2 \end{smallmatrix} \right]_{#4}}

% lazy macros
\newcommand{\cft}{conformal field theory}
\newcommand{\cfts}{conformal field theories}
\newcommand{\uea}{universal enveloping algebra}
\newcommand{\uesa}{universal enveloping superalgebra}
\newcommand{\lcft}{logarithmic conformal field theory}
\newcommand{\lcfts}{logarithmic conformal field theories}

\newcommand{\lw}{lowest-weight}
\newcommand{\lwv}{\lw{} vector}
\newcommand{\lwvs}{\lw{} vectors}
\newcommand{\lwm}{\lw{} module}
\newcommand{\lwms}{\lw{} modules}
\newcommand{\hw}{highest-weight}
\newcommand{\hwv}{\hw{} vector}
\newcommand{\hwvs}{\hw{} vectors}
\newcommand{\hwm}{\hw{} module}
\newcommand{\hwms}{\hw{} modules}
\newcommand{\sv}{singular vector}
\newcommand{\svs}{singular vectors}

\newcommand{\voa}{vertex operator algebra}
\newcommand{\voas}{vertex operator algebras}
\newcommand{\vosa}{vertex operator superalgebra}
\newcommand{\vosas}{vertex operator superalgebras}

\newcommand{\opes}{operator product expansions}

\newcommand{\rhs}{right-hand side}
\newcommand{\rhss}{right-hand sides}
\newcommand{\ns}{Neveu-Schwarz}
\newcommand{\pbw}{Poincar\'{e}{-}Birkhoff{-}Witt}

\newcommand{\sltwo}{\SLA{sl}{2}}
\newcommand{\myosp}{\SLSA{osp}{1}{2}}
\newcommand{\myaosp}{\AKMSA{osp}{1}{2}}
\newcommand{\algg}{\alg{g}}                    % Lie algebra g
\newcommand{\algh}{\alg{h}}                    % Cartan subalgebra h
\newcommand{\aalgg}{\affine{\alg{g}}}          % Affine algebra g
\newcommand{\aalgh}{\affine{\alg{h}}}          % Affine Cartan subalgebra h
\newcommand{\aalgb}{\affine{\alg{b}}}          % Affine Borel subalgebra b
\newcommand{\aalgn}{\affine{\alg{n}}}          % Affine negative subalgebra n
\newcommand{\aalgz}{\affine{\alg{z}}}          % Affine horizontal subalgebra z
\newcommand{\aalgp}{\affine{\alg{p}}}          % Affine positive subalgebra p

\newcommand{\csl}{\mathsf{C}(\algh,\EnvAlg{\sltwo})}    % centraliser of h in Usl2
\newcommand{\cosp}{\mathsf{C}(\algh,\EnvAlg{\myosp})}   % centraliser of h in Uosp12

\theoremstyle{plain}
\newtheorem{thm}{Theorem}
\newtheorem{prop}[thm]{Proposition}
\newtheorem{lem}[thm]{Lemma}
\newtheorem{cor}[thm]{Corollary}

\newtheorem{lemma}[thm]{Lemma}
\newtheorem{proposition}[thm]{Proposition}

\newtheorem*{remark}{Remark}

% cleveref with capitalise option wants to know the plurals, but in capitalised form, hence \Crefname and not \crefname...
\Crefname{thm}{Theorem}{Theorems}
\Crefname{prop}{Proposition}{Propositions}
\Crefname{lem}{Lemma}{Lemmas}
\Crefname{cor}{Corollary}{Corollaries}
\Crefname{defin}{Definition}{Definitions}

%%%% End Macros %%%%

\hyphenation{semi-simple}
\hyphenation{super-algebra}

\begin{document}

\title[]{An admissible level $\widehat{\mathfrak{osp}} \left( 1 \middle\vert 2 \right)$-model: \\ modular transformations and the Verlinde formula}

\author[J~Snadden]{John Snadden}
\address[John Snadden]{
Mathematical Sciences Institute \\
Australian National University \\
Acton, Australia, 2601.
}
\email{john.snadden@anu.edu.au}

\author[D~Ridout]{David Ridout}
\address[David Ridout]{
School of Mathematics and Statistics \\
University of Melbourne \\
Parkville, Australia, 3010.
}
\email{david.ridout@unimelb.edu.au}

\author[S~Wood]{Simon Wood}
\address[Simon Wood]{
School of Mathematics\\
Cardiff University \\
Cardiff, United Kingdom, CF24 4AG.
}
\email{woodsi@cardiff.ac.uk}

\subjclass[2010]{Primary 17B69, 81T40; Secondary 17B10, 17B67}

\begin{abstract}
	The modular properties of the simple vertex operator superalgebra associated to the affine Kac-Moody superalgebra $\AKMSA{osp}{1}{2}$ at level $-\frac{5}{4}$ are investigated.  After classifying the relaxed highest-weight modules over this vertex operator superalgebra, the characters and supercharacters of the simple weight modules are computed and their modular transforms are determined.  This leads to a complete list of the Grothendieck fusion rules by way of a continuous superalgebraic analogue of the Verlinde formula.  All Grothendieck fusion coefficients are observed to be non-negative integers.  These results indicate that the extension to general admissible levels will follow using the same methodology once the classification of relaxed highest-weight modules is completed.
\end{abstract}

\maketitle

\onehalfspacing

\section{Introduction} \label{sec:Intro}

The construction of \cfts{} from affine Kac-Moody algebras $\affine{\alg{g}}$ at fractional levels has a long history.  These theories were first proposed by Kent \cite{KenInf86} for $\affine{\alg{g}} = \AKMA{sl}{2}$ as a means of generalising the coset construction of \cite{GodVir85} to non-unitary Virasoro minimal models.  Shortly thereafter, Kac and Wakimoto discovered \cite{KacMod88} that for certain fractional levels (called the \emph{admissible levels}), $\affine{\alg{g}}$ possesses a finite set of simple \hwms{} whose characters span, in a sense, a representation of the modular group $\SLG{SL}{2;\ZZ}$.  It was natural then to expect that one could build a rational \cft{} from these \hwms{}.  However, Koh and Sorba immediately noticed \cite{KohFus88} that this expectation failed, even for $\affine{\alg{g}} = \AKMA{sl}{2}$, because Verlinde's formula \cite{VerFus88} for the (necessarily non-negative integer) fusion coefficients always returned at least one negative number.

Subsequent work \cite{BerFoc90,MatFra90,AwaFus92,RamNew93,FeiFus94,AndOpe95,PetFus96,DonVer97,FurAdm97,MatPri99} on this observation did little to ameliorate the confusion.  However, physicists eventually found reason to consider modules (again for $\affine{\alg{g}} = \AKMA{sl}{2}$) that were neither \hw{} \cite{FeiEqu98,SemEmb97,MalStr01} nor simple \cite{GabFus01,LesLog04}.  Indeed, it seemed that admissible level $\AKMA{sl}{2}$-theories naturally allowed for a continuously parametrised family of simple non-\hwms{}, a fact that had been previously discovered \cite{AdaVer95} by Adamovi\'{c} and Milas.

The root cause of the negative fusion coefficients, predicted by the Verlinde formula, remained obscure until recently.  In \cite{RidSL208}, a careful analysis of $\AKMA{sl}{2}$ at the admissible level $k=-\frac{1}{2}$ showed that the negative results could be traced back to the fact that the simple module characters were not linearly independent.  More precisely, the fundamental error in the preceding analyses was demonstrated to be that the modular transformations of the characters of Kac and Wakimoto did not respect their non-trivial convergence properties.  Subsequent work \cite{RidSL210,CreMod12,CreMod13} extended this to all admissible levels for $\AKMA{sl}{2}$ and proved that properly accounting for convergence regions (by treating characters as distributions, not meromorphic functions) indeed resulted in non-negative integer fusion coefficients.  Moreover, the corresponding Grothendieck fusion rules agreed perfectly with the fusion rules that were known \cite{GabFus01,RidFus10} from independent computations.

We remark that these successes were obtained as one instance of a rather more general methodology, dubbed the \emph{standard module formalism} \cite{CreLog13,RidVer14}, for modular properties and Verlinde-like formulae in \lcfts{}.  Originating in work on theories based on the affine Kac-Moody superalgebra $\AKMSA{gl}{1}{1}$ \cite{CreRel11,CreWAl11,AlfMoc14}, this formalism has since been applied to a wide range of \lcfts{} \cite{BabTak12,CreFal13,RidMod13,RidBos14,CreReg14,MorKac15,CanFusI15,CanFusII15,AugMod17}, all of which are in some sense related to rank $1$ objects such as the $\grp{A}_1$ lattice.

There is therefore a need to explore higher rank \lcfts{} and the standard module formalism is expected to be crucial to this endeavour.  The analysis of higher rank theories constructed from affine Kac-Moody algebras (and superalgebras) at admissible levels is particularly attractive because we expect that they will play a central role in understanding logarithmic models, just as the Wess-Zumino-Witten models do in the rational case.  However, this analysis is still in its infancy.  The relevant simple \hwms{} have been identified by Arakawa \cite{AraRat16} for all admissible levels, but a complete set of (positive-energy) simple modules has only recently been described for $\AKMA{sl}{3}$ \cite{AraWei16}.  More generally, almost nothing is known aside from some partial level-specific results for affine algebras \cite{AdaSom94,AdaCon03,PerVer07,PerVer08,AxtVer11,AxtVer14,AdaRea16} and superalgebras \cite{BowRep97,BowCha98,JohMod00,JohFus01,SalGL106,GotWZN07,SalSU207,QueFre07}.

In this paper, we shall not embark immediately on a study of higher rank \lcfts{}.  Rather, we will describe in detail a particular example based on the affine Kac-Moody superalgebra $\AKMSA{osp}{1}{2}$ at the admissible level $k = -\frac{5}{4}$.  The aim here is to develop and test the standard module formalism in the presence of fermionic degrees of freedom (and determine the precise role of the Ramond sector) before tackling the more challenging, but also more physically interesting, cases of $\AKMSA{sl}{2}{1}$ and $\AKMSA{psl}{2}{2}$.\footnote{We mention that the corresponding analysis for the $\AKMSA{gl}{1}{1}$ \lcft{}, carried out in \cite{CreRel11}, was restricted to the \ns{} sector as the simple characters of this sector closed on themselves under modular transformations.  The same is \emph{not} true for $\AKMSA{osp}{1}{2}$ \cfts{}.}  This particular level is an attractive starting place for two reasons: first, it describes one of the ``smallest'' $\AKMSA{osp}{1}{2}$ minimal models (meaning that it has very few simple \hwms{}) and, second, it is an order $2$ simple current extension of the $\AKMA{sl}{2}$ minimal model of the same (admissible) level \cite{CreMod13}.  The latter property allows us to independently check our $\AKMSA{osp}{1}{2}$ results against the known $\AKMA{sl}{2}$ results.

Of course, \cfts{} with $\AKMSA{osp}{1}{2}$ symmetry have been studied in the past, both at integer and fractional levels \cite{FanMod93,EnnFus97,IohFus01a}.  However, these works only considered simple \hwms{} in the \ns{} (untwisted) sector, ignoring the known issue of negative fusion coefficients.  Here, we discuss a more complete spectrum of simple modules (as well as some of the reducible but indecomposable ones) in both the \ns{} and Ramond sectors.  We moreover emphasise the global parity of each module in order to be able to distinguish the relative parities of the direct summands appearing in each Grothendieck fusion product.  In physics parlance, this is equivalent to computing both the even and odd Grothendieck fusion rules of Sotkov and Stanishkov \cite{SotN=186}.

The results confirm that the standard module formalism applies to the affine superalgebra theory studied here:  characters and supercharacters close under modular transformations and the Grothendieck fusion coefficients are verified to be non-negative integers.  The methodology developed in this paper also applies to the other admissible levels of $\AKMSA{osp}{1}{2}$, so extending these results to general admissible levels will be straightforward, assuming that one can first classify the relaxed \hwms{}.  The latter classification has not yet been completed, though we expect that it can be obtained using the methods developed in \cite{TsuExt13,RidJac14,RidRel15,BloSVir16}.  Because of this, the Grothendieck fusion rules of the $\AKMSA{osp}{1}{2}$ models for general admissible levels will instead be addressed in a forthcoming paper \cite{CreOSP17} using coset technology.  A byproduct of this work will be the relaxed \hwm{} classification for $\AKMSA{osp}{1}{2}$ models of general admissible level.

We begin, in \cref{sec:finosp}, with a quick review of the simple Lie superalgebra $\SLSA{osp}{1}{2}$ and its representation theory.  We prove, in particular, a classification result (\cref{thm:ospclass}) for all simple weight modules of $\SLSA{osp}{1}{2}$ that have at least one finite-dimensional weight space (we were unable to find this result in the literature).  This is followed, in \cref{sec:affosp}, by a quick review of the affine Kac-Moody superalgebra $\AKMSA{osp}{1}{2}$, its automorphisms (conjugation and spectral flow), and the associated \vosas{}.  We also discuss \emph{relaxed} Verma modules over $\AKMSA{osp}{1}{2}$ and their simple quotients, borrowing this notion from \cite{FeiEqu98} where it was introduced for $\AKMA{sl}{2}$ (see \cite[Sec.~2.1]{RidRel15} for a general definition of relaxed \hwms{}).

\Cref{sec:minmod} then specialises to the simple $\AKMSA{osp}{1}{2}$ \vosa{} of level $k=-\frac{5}{4}$ that we study in this work, denoting it by $\OSPMinMod{2}{4}$.  We first give an efficient characterisation of affine Zhu algebras, twisted and untwisted, and identify those of the universal \vosas{} (\cref{prop:UnivZhu}) before explicitly computing the Zhu algebras of $\OSPMinMod{2}{4}$ (\cref{prop:SimpZhu,prop:SimpZhu'}).  This is then used to classify the simple relaxed \hw{} $\OSPMinMod{2}{4}$-modules (\cref{thm:classB24NS,thm:classB24R}) and identify some of the reducible ones.  These are partitioned into standard, typical and atypical modules as per the standard module formalism of \cite{CreLog13,RidVer14}.

Having classified the simple (and standard) $\OSPMinMod{2}{4}$-modules, we turn to the computation of their characters and supercharacters in \cref{sec:chars}.  Such character formulae are easy to compute for the \ns{} \hwms{} because the submodule structure of the associated Verma modules was determined by Iohara and Koga \cite{IohFus01a}.  Spectral flow automorphisms then allow us to deduce the analogous Ramond formulae (\cref{prop:AtypCh}).  We explicitly note the convergence regions of these characters, treated as meromorphic functions, and use the results to determine the characters of the relaxed \hwms{}, treated as distributions (\cref{prop:AtypStCh,prop:TypCh}).  Supercharacter formulae follow easily (\cref{prop:RelSch}) and we conclude by introducing the Grothendieck group of (an appropriate category of) $\OSPMinMod{2}{4}$-modules and showing explicitly that the images of the standard modules form a basis of (a completion of) this Grothendieck group.

This last result (\cref{cor:resgr}) is the key to computing the modular transforms of the $\OSPMinMod{2}{4}$-(super)characters, the topic of \cref{sec:modver}.  We begin by introducing slightly unfamiliar S and T coordinate transforms (following \cite{RidBos14}) before computing the modular group action on the span of the standard $\OSPMinMod{2}{4}$-(super)characters.  Of note is that the S-transform amounts to a generalised Fourier transform on (a countably-infinite number of copies of) the real vector space $\algh_{\RR}$ spanned by the fundamental weight of $\SLSA{osp}{1}{2}$.  The S-transforms are then extended to the simple atypical (super)characters using \cref{cor:resgr}.  We remark that trying to compute these S-transforms directly from the meromorphic characters would lead to nonsensical results (such as negative fusion coefficients) because the modular S-transform does not preserve the convergence regions of the (super)characters.

Finally, \cref{sec:fusion} addresses the Grothendieck fusion rules of the simple (and standard) $\OSPMinMod{2}{4}$-modules.  First, we deduce a version (\cref{thm:Verlindeformula}) of the standard Verlinde formula that works for this \vosa{} --- generally, Verlinde formulae are only expected to apply directly to ($\ZZ$-graded) \voas{}.  The method follows the approach of \cite{EhoFus94} for the $N=1$ minimal model \vosa{} (see also \cite{CanFusII15}) wherein one lifts the Verlinde formula from the bosonic orbifold using simple current technology.  With this formula in hand, we compute all Grothendieck fusion rules, including global parity information, among the simple and standard $\OSPMinMod{2}{4}$-modules (\cref{thm:fuseformulae}).

\section*{Acknowledgements}

DR thanks Kenji Iohara for illuminating discussions on the structure of Verma modules over $\AKMSA{osp}{1}{2}$.
We would like to thank the anonymous referee whose careful reading of the original manuscript and many suggestions significantly improved the article.
JS's research is supported by a University Research Scholarship from the Australian National University.
DR's research is supported by the Australian Research Council Discovery Projects DP1093910 and DP160101520 as well as the Australian Research Council Centre of Excellence for Mathematical and Statistical Frontiers CE140100049.
SW's research is supported by Australian Research Council Discovery Early Career Researcher Award DE140101825 and the Australian Research Council Discovery Project DP160101520.

\section{The basic Lie superalgebra $\myosp$} \label{sec:finosp}

In this section, we quickly review the theory of weight modules over $\sltwo$ and $\myosp$.  The latter algebra is important because of its role as the horizontal subalgebra of the \ns{} $\myaosp$ algebra, the former plays the same role for the Ramond $\myaosp$ algebra.

\subsection{A brief review of $\sltwo$} \label{subsec:sl2}

The simple complex Lie algebra $\VOA{A}_1=\sltwo$ has Cartan-Weyl basis $\set{h,e,f}$, satisfying the commutation relations
\begin{equation} \label{eqn:slcomm}
\comm{h}{e} = 2e, \quad \comm{h}{f} = -2f, \quad \comm{e}{f} = h.
\end{equation}
The Cartan subalgebra $\alg{h} = \CC h$ then gives rise to the root system $\set{\dot{\alpha}, -\dot{\alpha}} \subset \alg{h}^*$ where $\dot{\alpha}(h)=2$ and we choose $\dot{\alpha}$ to be the lone fundamental root. The non-zero entries of the (appropriately normalised) Killing form $\kappa$ on $\sltwo$, with respect to the given basis, are
\begin{equation} \label{eqn:slkill}
\killing{h}{h} = 2, \quad \killing{e}{f} = \killing{f}{e} = 1.
\end{equation}
This induces an inner product on $\alg{h}^*$, defined by $\brac{\dot{\alpha},\dot{\alpha}} = 2$. From these data, one calculates the fundamental weight to be $\omega = \frac{1}{2}\dot{\alpha}$ and that the algebra has dual Coxeter number $\dcox=2$. The Weyl group is isomorphic to $\ZZ_2$, generated by the root reflection $\dot{\alpha} \mapsto -\dot{\alpha}$.

As with all complex semisimple Lie algebras, the finite-dimensional modules of $\sltwo$ are necessarily semisimple weight modules and the finite-dimensional simple modules are uniquely determined (up to isomorphism) by their highest weight. For each $\lambda \in \NN$, we denote the unique (up to isomorphism) $\brac{\lambda+1}$-dimensional simple $\sltwo$-module of highest weight $\lambda \omega$ (and lowest weight $-\lambda \omega$) by $\slFin{\lambda}$.

Extending to the infinite-dimensional case, we no longer have complete reducibility, though we can nevertheless classify the simple weight modules.  Here we include in the definition of a weight module that all of its weight spaces are required to be finite-dimensional. The first class that we consider are the simple \hwms{} $\slInf{\lambda}^+$, where $\lambda \in \CC \setminus \NN$, with highest weight $\lambda \omega$ and no lowest weight. It is straightforward to show that the weight support (the set of weights with non-trivial weight spaces) of such a module is $(\lambda - 2 \NN) \omega$ and that all the weight spaces are one-dimensional. Similarly, we also have the simple \lwms{} $\slInf{\lambda}^-$ where $\lambda \in \CC \setminus \ZZ_{\le 0}$. Here, the weight support is instead $(\lambda + 2 \NN) \omega$ and again all weight spaces are one-dimensional.

Finally, we have the \emph{dense} modules $\slRel{\Lambda,q}$, parametrised by $q \in \CC$ and $\Lambda \in \CC / 2 \ZZ$. These are simple precisely when $q \neq \frac{1}{2} \lambda (\lambda + 2)$ for all $\lambda \in \Lambda$. The weight support of $\slRel{\Lambda,q}$ is precisely $\Lambda \omega$ (that is, $(\lambda_0 + 2 \ZZ) \omega$ for some $\lambda_0$), again with all weight spaces one-dimensional. The parameter $q$ is the (unique) eigenvalue of the Casimir element $Q = \frac{1}{2}h^2 + ef + fe$ which generates the centre of the \uea{} and must therefore act as a scalar multiple of the identity. We remark that dense modules are also referred to as \emph{cuspidal} and \emph{torsion-free} in the literature.

Having introduced the above classes of modules, we can state the following result (see \cite{MazLec10}).
\begin{thm}[Classification of simple $\sltwo$ weight modules] \label{thm:slclass}
Every simple $\sltwo$ weight module is isomorphic to one of the following mutually non-isomorphic modules:
\begin{enumerate}
\item $\slFin{\lambda}$ with $\lambda \in \NN$;
\item $\slInf{\lambda}^+$ with $\lambda \in \CC \setminus \NN$;
\item $\slInf{\lambda}^-$ with $\lambda \in \CC \setminus \ZZ_{\le 0}$;
\item $\slRel{\Lambda, q}$ with $q \in \CC$, $\Lambda \in \CC / 2 \ZZ$ and $q \neq \frac{1}{2} \lambda (\lambda+2)$ for all $\lambda \in \Lambda$.
\end{enumerate}
\end{thm}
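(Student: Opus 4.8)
The plan is to follow the standard strategy for classifying simple weight modules over a rank-one (super)algebra: first reduce to the case of modules all of whose weight spaces are nonzero and one-dimensional, then distinguish the ``bounded'' cases (those with a highest or lowest weight) from the ``dense'' case, and finally pin down the parameters using the Casimir.

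First I would recall that for a simple weight module $M$, the weight support is contained in a single coset $\lambda_0 + 2\ZZ$ (since $e$ and $f$ shift weights by $\pm 2$) and that $M$ is the direct sum of its weight spaces. The key structural input is the action of the Casimir $Q=\tfrac12 h^2+ef+fe$, which, being central in $\EnvAlg{\sltwo}$ and acting on a simple module, acts as a scalar $q$. On a weight vector $v$ of weight $\mu\omega$ one computes $Q v = \big(\tfrac12\mu^2 + \mu + 2\,(\text{something involving }fe)\big)v$; more usefully, $ef$ and $fe$ act on each one-dimensional weight space as scalars determined by $q$ and the weight, so one gets explicit formulas for the composite maps $e\colon M_\mu\to M_{\mu+2}$ followed by $f$, and vice versa. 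The standard computation gives that $fe$ acts on the weight-$\mu\omega$ space as $q-\tfrac12\mu(\mu+2)$-type expression (I would just fix the precise normalisation by a direct check against \eqref{eqn:slcomm} and the stated condition $q\neq\tfrac12\lambda(\lambda+2)$), and similarly for $ef$.

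Next I would run the case analysis. If some weight space $M_\mu$ has $e$ acting as zero, then $M_\mu$ generates a highest-weight module; by simplicity $M$ equals this submodule, and it is the simple quotient of a Verma module, hence either finite-dimensional $\slFin{\lambda}$ ($\lambda\in\NN$) or the infinite-dimensional $\slInf{\lambda}^+$ ($\lambda\in\CC\setminus\NN$) — the dichotomy being exactly whether the Verma module has a singular vector, which happens iff $\lambda\in\NN$. The mirror argument with $f$ acting as zero gives $\slInf{\lambda}^-$ with $\lambda\in\CC\setminus\ZZ_{\le 0}$ (the restriction coming from the corresponding lowest-weight Verma reducibility and from not double-counting the finite-dimensional modules). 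The remaining case is that both $e$ and $f$ act injectively on every weight space; then, since each composite $fe$ and $ef$ is a nonzero scalar on each (one-dimensional) weight space, $e$ and $f$ are in fact bijective between consecutive weight spaces, the support is all of $\Lambda\omega$ for $\Lambda=\mu+2\ZZ$, every weight space is one-dimensional, and $M$ is determined by $\Lambda$ together with the Casimir scalar $q$: this is $\slRel{\Lambda,q}$. Simplicity forces $fe$ (equivalently $ef$) to be nonzero on every weight space, which is precisely the condition $q\neq\tfrac12\lambda(\lambda+2)$ for all $\lambda\in\Lambda$; conversely that condition makes $\slRel{\Lambda,q}$ simple. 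Finally I would check mutual non-isomorphism by comparing weight supports (finite vs.\ half-line vs.\ full coset), the presence/absence of extremal weights, and the value of $q$.

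I expect the main obstacle to be the ``dense'' case: one must show not merely that $e,f$ act injectively but that a simple dense module is \emph{uniquely} reconstructed from the pair $(\Lambda,q)$ — i.e.\ that the scalars by which $e,f$ act on successive weight spaces are forced (up to the obvious rescaling of basis vectors) by $q$ — and that every admissible $(\Lambda,q)$ actually arises. Reducibility of a dense module is equivalent to the existence of a weight $\mu$ with $e$ or $f$ acting as zero there, which by the explicit $ef$/$fe$ scalars happens exactly when $q=\tfrac12\lambda(\lambda+2)$ for some $\lambda\in\Lambda$; ruling this out gives simplicity, and in that case the coherent family construction (or a direct induction building the module from one weight space) yields existence. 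Everything else — the highest/lowest-weight subcases and the bookkeeping of parameter ranges — is routine given \cref{thm:slclass}'s hypotheses and the reducibility criteria for Verma modules. I would cite \cite{MazLec10} for the parts that are entirely standard and give the Casimir computation explicitly since it is what makes the parameter conditions transparent.
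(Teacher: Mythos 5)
Your outline is correct and is essentially the standard argument; note that the paper does not actually prove this statement but cites \cite{MazLec10} for it, reserving a written proof for the superalgebra analogue (\cref{thm:ospclass}), which proceeds along the same lines as your sketch. The one step you invoke repeatedly without establishing it is that every weight space of a simple weight module is one-dimensional (equivalently, that a simple dense module is \emph{uniquely} reconstructed from $(\Lambda,q)$, which you correctly flag as the main obstacle). This is precisely what the centraliser argument used in the paper's proof of \cref{thm:ospclass} delivers: for $0 \neq v \in \Mod{M}(\mu)$, simplicity gives $\Mod{M} = \EnvAlg{\sltwo} v$, and by the \pbw{} theorem the weight-$\mu$ subspace is $\csl \, v = \CC[h,Q] v = \CC v$, since both generators of the (abelian) centraliser act on $v$ as scalars. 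With that in hand, your computation of the $ef$ and $fe$ scalars (which, as you note, act as scalars on each weight space because $fe = \tfrac{1}{2}(Q - \tfrac{1}{2}h^2 - h)$) and the injectivity dichotomy do the rest; uniqueness in the dense case is then immediate because the \pbw{} spanning set $\set{w, e^n w, f^n w \st n \in \ZZ_{>0}}$ has its $\sltwo$-action completely determined by $\lambda$ and $q$.
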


We shall also need to consider the reducible, but indecomposable, dense modules that correspond to parameters $\Lambda$ and $q$, where $q = \frac{1}{2} \lambda (\lambda + 2)$ for some $\lambda \in \Lambda$.  We note that these do not exhaust the reducible but indecomposable dense modules of $\AKMA{sl}{2}$.  The latter are classified (somewhat explicitly) in \cite{MazLec10}.  However, the others will not be needed in what follows.  If this condition is met, then either the unique (up to rescaling) state $v_{\lambda}$ of weight $\lambda \omega$ is a \hwv{} or $e v_{\lambda}$ is a \lwv{}.  Both possibilities occur independently:\footnote{We assume throughout that when $\lambda$ solves the reducibility condition, then it is the unique element of $\Lambda$ that does so.  This need not be the case, as for certain $q$ there are two solutions for $\lambda$ in $\Lambda$.  However, the corresponding reducible modules will, again, not be needed here.} the first gives rise to a \hw{} submodule isomorphic to the \hwm{} $\slInf{\lambda}^+$, while the second gives rise to a \lw{} submodule isomorphic to the \lwm{} $\slInf{\lambda+2}^-$.  The structures of these indecomposable dense $\sltwo$-modules, which we denote by $\slRel{\Lambda, q}^+$ and $\slRel{\Lambda, q}^-$, respectively, where $\Lambda = \lambda + 2 \ZZ$, are thus determined by the following non-split short exact sequences:
\begin{equation} \label{eqn:sese}
	\dses{\slInf{\lambda-2}^+}{}{\slRel{\lambda + 2 \ZZ, \lambda (\lambda - 2) / 2}^+}{}{\slInf{\lambda}^-}, \qquad
	\dses{\slInf{\lambda+2}^-}{}{\slRel{\lambda + 2 \ZZ, \lambda (\lambda + 2) / 2}^-}{}{\slInf{\lambda}^+}.
\end{equation}
We have shifted $\lambda$ by $2$ in the first sequence for clarity.

We recall a concrete construction of certain dense $\sltwo$-modules that will be useful when we generalise to dense $\myosp$-modules in \cref{subsec:strthy}.  First, note that the elements of the \uea{} which commute with $h \in \sltwo$ (more generally, with the Cartan subalgebra) will preserve weight spaces. Such elements form the centraliser $\csl$, which (by the \pbw{} theorem) is just the polynomial subalgebra $\CC \sqbrac{h,Q} \subseteq \EnvAlg{\sltwo}$. As this is abelian, its simple modules are all one-dimensional. Suppose that $\Mod{W}_{\lambda, q} = \CC w$ is one such module, with $h$ and $Q$ acting as complex scalars $\lambda$ and $q$, respectively. We can then induce from this to the full $\EnvAlg{\sltwo}$-module
\begin{equation} \label{eqn:slinduc}
\finite{\Mod{W}}_{\lambda, q} = \operatorname{Ind}_{\csl}^{\EnvAlg{\sltwo}} \Mod{W}_{\lambda, q},
\end{equation}
which (again using the \pbw{} theorem) has basis
\begin{equation} \label{slinducbasis}
\set{w, e^n w, f^n w \st n \in \ZZ_{>0}}.
\end{equation}
The induced module $\finite{\Mod{W}}_{\lambda, q}$ obviously has weight support $\Lambda \omega$, where $\Lambda = \lambda + 2 \ZZ$, and one-dimensional weight spaces, hence it is dense.  It is moreover clear that $\finite{\Mod{W}}_{\lambda, q}$ is a simple $\sltwo$-module, and is thus isomorphic to $\slRel{\Lambda, q}$, unless $e^n w$ is a \lwv{} or $f^n w$ is a \hwv{}, for some $n \in \ZZ_{>0}$.  When $\finite{\Mod{W}}_{\lambda, q}$ is not simple, it is isomorphic to $\slRel{\Lambda, q}^-$ or $\slRel{\Lambda, q}^+$, respectively.

\subsection{A brief review of $\myosp$} \label{subsec:strthy}

The simple complex Lie superalgebra $\algg = \myosp$ has basis $\set{h,e,f,x,y}$, where the elements of $\algg^{\even} = \vspn{\set{h,e,f}}$ and $\algg^{\odd} = \vspn{\set{x,y}}$ are declared to be even and odd, respectively. The even subalgebra is (as its elements suggest) isomorphic to $\sltwo$, thus \eqref{eqn:slcomm} still holds. The remaining (anti)commutation relations are:
\begin{equation} \label{eqn:ospcomm}
\begin{aligned}
\comm{h}{x}&=x, &\comm{e}{x}&=0, &\comm{f}{x}&=-y, \\
\comm{h}{y}&=-y, &\comm{e}{y}&=-x, &\comm{f}{y}&=0, \\
\acomm{x}{y}&=h, &\acomm{x}{x}&=2e, &\acomm{y}{y}&=-2f.
\end{aligned}
\end{equation}
In this basis, the (rescaled) Killing form has non-zero entries given by
\begin{equation} \label{eqn:ospkill}
\killing{h}{h} = 2, \quad \killing{e}{f} = \killing{f}{e} = 1, \quad \killing{x}{y} = - \killing{y}{x} = 2.
\end{equation}
Due to the existence of a non-degenerate even supersymmetric bilinear form, $\myosp$ is an example of a \emph{basic} Lie superalgebra \cite{CheDua12}. In the classification \cite{KacLie77} of such algebras, the isomorphism class of $\myosp$ is denoted by $\VOA{B}_{0\vert 1}$.

We consider the Cartan subalgebra $\alg{h} = \CC h$, with root system $\set{-2\alpha, -\alpha, \alpha, 2\alpha}$, where $\alpha(h)=1$ (so $2 \alpha$ is identified with $\dot{\alpha}$), and choose $\alpha$ to be positive (hence simple). The inner product on $\alg{h}^*$ induced by the Killing form is given by $\brac{\alpha,\alpha}=\frac{1}{2}$, from which one can calculate that the dual Coxeter number is $\dcox=\frac{3}{2}$ and that the fundamental weight is $\alpha$. Since the Weyl group of a Lie superalgebra is generated by reflections in the even roots, it is precisely the Weyl group of its even subalgebra.  As the even subalgebra of $\myosp$ is isomorphic to $\sltwo$, its Weyl group is also of order 2, generated by the reflection $\alpha \mapsto -\alpha$.

As with all superalgebras, modules $\Mod{M}$ of $\myosp$ are required to carry a compatible $\ZZ_2$-grading: that is, they must decompose as a direct sum $\Mod{M}^{\even} \oplus \Mod{M}^{\odd}$, such that $\alg{g}^{(i)} \Mod{M}^{(j)} \subseteq \Mod{M}^{(i+j)}$, for all $i,j \in \ZZ_2$. Having identified such an $\Mod{M}^{\even}$ and $\Mod{M}^{\odd}$, these summands are then referred to as the even and odd subspaces, respectively. Similarly, elements of the even and odd subspaces are said to have even and odd parity, respectively. However, it should be apparent that reversing these labels, whilst maintaining the same module structure, still gives a valid grading. As such, on any category of modules we might consider, we require there to be an involutive functor $\parr$ taking any module to its \emph{parity reversal}.  In principle, a module may be isomorphic to its parity reversal.  This does not happen for the simple weight modules of $\myosp$.

All finite-dimensional $\myosp$-modules are semisimple weight modules. The simple ones must have a unique highest (and lowest) weight. Indeed, for each $\lambda \in \NN$, there is a unique (up to isomorphism) $(2 \lambda + 1)$-dimensional simple $\myosp$-module with highest weight $\lambda \alpha$ and lowest weight $-\lambda \alpha$, for which the \hwvs{} (and thus also the \lwvs{}) are assigned even parity. We will denote this module by $\Fin{\lambda}$. It has weight support \mbox{$\set{\mu \alpha \st \mu \in \ZZ, \; \abs{\mu} \le \lambda}$}, with all weight spaces one-dimensional.

In addition, for each $\lambda \in \CC \setminus \NN$, $\myosp$ has an infinite-dimensional simple \hwm{} $\Inf{\lambda}^+$, generated by an even \hwv{} of weight $\lambda \alpha$, whose weight support is $(\lambda - \NN) \alpha$. Similarly, there is also the simple \lwm{} $\Inf{\lambda}^-$, for each $\lambda \in \CC \setminus \ZZ_{\le 0}$.  This module is generated by an even \lwv{} of weight $\lambda \alpha$ and its weight support is $(\lambda + \NN) \alpha$.

The modules listed above, together with their parity reversals, exhaust the simple highest- and \lwms{} of $\myosp$.  The proof is elementary, following the same steps used to classify \hw{} $\sltwo$-modules. Moreover, again as with $\sltwo$, there is an additional infinite family of simple weight modules with no highest nor lowest weights. However, a little care is needed here to characterise these in a meaningful way.

Recall from \cref{subsec:sl2} the concrete construction of certain dense $\sltwo$-modules.  For $\algg = \myosp$, the centre of the \uesa{} $\EnvAlg{\algg}$ is still generated by a Casimir element
\begin{equation} \label{eqn:casimir1}
Q' = \frac{1}{2} h^2 + ef + fe - \frac{1}{2} xy + \frac{1}{2} yx,
\end{equation}
which must therefore act as a scalar on any simple module, but the centraliser $\cosp$ is not a polynomial algebra in $h$ and $Q'$. For example, by rewriting the previous equation in the form
\begin{equation} \label{eqn:casimir2}
Q' = \frac{1}{2} h^2 + \frac{1}{2} h - yx + 2 (yx)^2 - 2 h(yx),
\end{equation}
we see that $yx$ cannot be expressed as a polynomial in these elements, though indeed $(yx)h = h(yx)$.

This motivates introducing the \emph{super-Casimir} (or sCasimir)\cite{ArnCas97}
\begin{equation} \label{eqn:scasimir}
\Sigma = xy - yx + \frac{1}{2} \in \cosp.
\end{equation}
Though this element of $\EnvAlg{\algg}$ is not central, it satisfies
\begin{equation} \label{eqn:scomm}
\comm{\Sigma}{\alg{g}^{\even}} = \acomm{\Sigma}{\alg{g}^{\odd}} = 0,
\end{equation}
from which it follows that $\Sigma$ is diagonalisable on a simple weight module, taking eigenvalues $s$ and $-s$ on the even and odd subspaces, respectively, for some $s \in \CC$. We can now identify $\cosp$ with the polynomial algebra $\CC[h,\Sigma]$. In particular, we may write $yx$ and $Q'$ as polynomials in $h$ and $\Sigma$ as follows:
\begin{equation}
	yx = \frac{1}{2}\brac*{h - \Sigma + \frac{1}{2}}, \quad Q' = \frac{1}{2} \Sigma^2 - \frac{1}{8}.
\end{equation}
We note that the eigenvalue of $\Sigma$ is $\lambda + \frac{1}{2}$ on a \hwv{}
of weight $\lambda \alpha$ and $-\lambda + \frac{1}{2}$ on a \lwv{} of the
same weight.  As mentioned above, this eigenvalue is denoted by $s$ if the highest-/\lwv{} is even and $-s$ if it is odd.

One can carry out an induction procedure analogous to that described by \eqref{eqn:slinduc}, giving weight $\algg$-modules $\finite{\Mod{W}}'_{\lambda, s}$ with bases
\begin{equation} \label{eqn:ospinducbasis}
\set{w,x^n w, y^n w \st n \in \ZZ_{>0}},
\end{equation}
where $h w = \lambda w$ and $\Sigma w = s w$, for some $\lambda, s \in \CC$, assigning even parity to $w$. As in the $\AKMA{sl}{2}$ case, this is a \emph{dense} module: its weight support is $\lambda + \ZZ$ and its weight spaces are one-dimensional.

$\finite{\Mod{W}}'_{\lambda, s}$ is reducible if and only if it has either a highest- or a \lwv{}, thus if one of the basis elements above is annihilated by either $x$ or $y$. If $s = \mu + \frac{1}{2}$, for some $\mu \in \lambda + 2 \ZZ$ (so that the corresponding weight vector $v_{\mu}$ has even parity), then either $v_{\mu}$ is a \hwv{} or $x v_{\mu}$ is a \lwv{}.  Similarly, if $s = -\mu + \frac{1}{2}$, for some $\mu \in \lambda + 2 \ZZ$, then either $v_{\mu}$ is a \lwv{} or $y v_{\mu}$ is a \hwv{}.\footnote{The analogous analysis in which $v_{\mu}$ has odd parity leads to equivalent constraints on $s$.}  If neither constraint is satisfied, that is if $\mu^2 \neq (s-\frac{1}{2})^2$ for every $\mu \in \lambda + 2 \ZZ$, then $\finite{\Mod{W}}'_{\lambda, s}$ is simple. In this instance, we can unambiguously label these simple dense modules as $\Rel{\Lambda, s}$, where $\Lambda = \lambda + 2 \ZZ$, in analogy with the notation used for simple dense $\sltwo$-modules.  Clearly, $\Rel{\Lambda, s}$ has weight support $\brac*{\Lambda + \set{0,1}} \alpha$ and $\Sigma$ acts as $(-1)^j s$ on weight vectors whose weights lie in $\brac*{\Lambda + j} \alpha$.  We also note the isomorphisms
\begin{equation} \label{eqn:modperiodicity}
\Rel{\Lambda, s} \cong \parrmod{\Rel{\Lambda+1, -s}}.
\end{equation}

This argument shows that the simple weight modules are classified in a manner entirely analogous to \cref{thm:slclass}. In particular we have:
\begin{thm}[Classification of simple $\myosp$ weight modules] \label{thm:ospclass}
Every simple $\myosp$ weight module is isomorphic to one of the following mutually non-isomorphic modules, or their parity reversals:
\begin{enumerate}
\item $\Fin{\lambda}$ with $\lambda \in \NN$; \label{ospclass1}
\item $\Inf{\lambda}^+$ with $\lambda \in \CC \setminus \ZZ_{\ge 0}$; \label{ospclass2}
\item $\Inf{\lambda}^-$ with $\lambda \in \CC \setminus \ZZ_{\le 0}$; \label{ospclass3}
\item $\Rel{\Lambda, s}$ with $s \in \CC$, $\Lambda \in \CC / 2\ZZ$ and $\lambda^2 \neq (s-\frac{1}{2})^2$ for every
$\lambda \in \Lambda$. \label{ospclass4}
\end{enumerate}
\end{thm}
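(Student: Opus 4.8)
The plan is to follow the template of the $\sltwo$ classification (\cref{thm:slclass}): I would take a simple $\myosp$ weight module $\Mod{M}$ and split according to whether or not it contains a highest-weight or lowest-weight vector. Several preliminary observations serve both cases. Because $\Mod{M}$ is generated by any one of its weight vectors and $\algg$ shifts weights only by root-lattice elements, the weight support of $\Mod{M}$ lies in a single coset $(\mu_{0} + \ZZ)\alpha$; and because the odd elements $x, y$ shift weights by $\pm\alpha$ while the even ones shift by $0$ or $\pm 2\alpha$, each weight space of $\Mod{M}$ is homogeneous, with parities alternating along the weight string. Hence $h$ and the sCasimir $\Sigma$ of \eqref{eqn:scasimir}, both lying in $\cosp = \CC[h,\Sigma]$, act as scalars on each weight space. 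Next, the Casimir $Q'$ acts as a scalar $q$ on $\Mod{M}$ by Schur's lemma, so $\Sigma^{2} = 2Q' + \tfrac14$ acts as a scalar $s^{2}$; if $s \neq 0$ this forces $\Sigma$ to be semisimple with eigenvalues $\pm s$, while if $s = 0$ one notes that $\ker\Sigma$ is a submodule (since $\Sigma$ commutes with $\algg^{\even}$ and anticommutes with $\algg^{\odd}$), whence $\Sigma = 0$. Finally, because $x^{2} = e$ and $y^{2} = -f$, a nonzero vector annihilated by $x$ is a highest-weight vector and one annihilated by $y$ is a lowest-weight vector, so $\Mod{M}$ contains no extreme vector exactly when $x$ and $y$ both act injectively on all of $\Mod{M}$.

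\emph{Case 1: $\Mod{M}$ contains an extreme vector.} Such a vector generates $\Mod{M}$, so $\Mod{M}$ is a simple highest- or lowest-weight module, and the elementary analysis indicated above (parallel to the $\sltwo$ argument) shows that these are precisely $\Fin{\lambda}$ with $\lambda \in \NN$, $\Inf{\lambda}^{+}$ with $\lambda \in \CC\setminus\NN$, $\Inf{\lambda}^{-}$ with $\lambda \in \CC\setminus\ZZ_{\le 0}$, and their parity reversals, i.e.\ alternatives (i)--(iii); the finite-dimensional modules, which carry both an extreme highest and an extreme lowest weight, are among the $\Fin{\lambda}$.

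\emph{Case 2: $\Mod{M}$ contains no extreme vector.} Then $x$ and $y$ act injectively on $\Mod{M}$. I would fix a nonzero $w \in \Mod{M}_{\mu\alpha}$, let $s$ be the scalar by which $\Sigma$ acts on it, view $\CC w$ as a one-dimensional $\cosp$-module, and use Frobenius reciprocity to build a $\myosp$-module map $\phi\colon \finite{\Mod{W}}'_{\mu, s} = \operatorname{Ind}_{\cosp}^{\EnvAlg{\algg}}\CC w \to \Mod{M}$ sending the generator to $w$; it is surjective because $\Mod{M}$ is simple. Since $\finite{\Mod{W}}'_{\mu, s}$ has one-dimensional weight spaces spanned by the basis \eqref{eqn:ospinducbasis}, with support $(\mu + \ZZ)\alpha$, and injectivity makes every $x^{n}w$ and $y^{n}w$ nonzero, $\Mod{M}$ has the same weight support and $\phi$ is a surjection between one-dimensional spaces in each weight; hence $\phi$ is an isomorphism. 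Thus $\Mod{M} \cong \finite{\Mod{W}}'_{\mu, s}$, which is therefore simple, and the reducibility criterion established above (namely that $\finite{\Mod{W}}'_{\mu, s}$ is simple precisely when $\nu^{2} \neq (s - \tfrac12)^{2}$ for all $\nu \in \mu + 2\ZZ$) identifies $\Mod{M}$ with $\Rel{\mu + 2\ZZ, s}$ --- alternative (iv), up to the periodicity isomorphism \eqref{eqn:modperiodicity} and parity reversal.

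To finish I would verify that the modules listed, together with their parity reversals, are mutually non-isomorphic apart from the identifications in \eqref{eqn:modperiodicity}: this is routine, separating the families by finite- versus infinite-dimensionality, by the presence, absence and location of an extreme weight, by the value of $s$, and by the parity of an extreme vector, and using that the pair $(\Lambda, s) \in (\CC/2\ZZ) \times \CC$ together with parity is a complete invariant of family (iv) modulo \eqref{eqn:modperiodicity}. I expect the only genuinely non-formal ingredient to be the one underpinning Case 2, namely the facts --- established above, via \pbw{}-type arguments --- that the centraliser $\cosp$ is the polynomial algebra $\CC[h,\Sigma]$ and that the induced module $\finite{\Mod{W}}'_{\mu, s}$ has the basis \eqref{eqn:ospinducbasis}. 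These are precisely what let the dense $\myosp$-modules be handled exactly as the dense $\sltwo$-modules of \cref{subsec:sl2}, and are where the odd generators have to be brought under control.
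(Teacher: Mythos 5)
Your proposal is correct and follows essentially the same route as the paper's proof: both arguments hinge on identifying the centraliser $\cosp$ as $\CC[h,\Sigma]$, reducing a simple weight module with no extreme vector to the induced module $\finite{\Mod{W}}'_{\lambda,s}$ with \pbw{} basis \eqref{eqn:ospinducbasis}, and deferring the highest-/lowest-weight case to the elementary $\sltwo$-style analysis. The only differences are presentational (you organise the dichotomy around the existence of an extreme vector and invoke Frobenius reciprocity, where the paper argues that the chosen weight space is a simple, hence one-dimensional, $\cosp$-module), and your added justification of the diagonalisability of $\Sigma$ via $\Sigma^{2}=2Q'+\tfrac{1}{4}$ is a point the paper asserts in the preamble without proof.
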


\begin{proof}
Let $\Mod{M}$ be a simple weight module over $\myosp$ and suppose that $w \in \Mod{M}$ is a vector of weight $\lambda \alpha$ so that $h w = \lambda w$.  If the even subspace of $\Mod{M}$ is zero, then $w$ is odd and both $x w$ and $y w$ vanish, so that $\Mod{M} = \CC w \cong \parrmod{\Fin{0}}$. Otherwise, without loss of generality, we may assume that $w$ is of even parity.

Simplicity implies that $\Mod{M} = \EnvAlg{\algg} w$ and, in particular, that the weight space of weight $\lambda \alpha$ is
\begin{equation}
	\Mod{M}\brac*{\lambda} = \cosp w.
\end{equation}
Now, if $\Mod{M}\brac*{\lambda}$ had a $\cosp$-submodule, then this would generate a proper $\algg$-submodule of $\Mod{M}$, contradicting its simplicity. It therefore must be that $\Mod{M}\brac*{\lambda}$ is a simple $\cosp$-module and, since $\cosp$ is abelian, $\Mod{M}(\lambda)$ is thus one-dimensional. It follows that $w$ is an eigenvector of $\Sigma$ with, say, $\Sigma w = s w$ and, by the \pbw{} theorem, $\Mod{M}$ is spanned by
\begin{equation} \label{eqn:ospmodbasis}
\set{w, x^n w, y^n w \st n \in \ZZ_{>0}}.
\end{equation}
By iterative application of the (anti)commutation relations \eqref{eqn:ospcomm}, one can uniquely determine the action of any element of $\algg$ on $\Mod{M}$ in terms of the parameters $\lambda$ and $s$. If the spanning set \eqref{eqn:ospmodbasis} is linearly dependent, then $\Mod{M}$ must be highest- and/or lowest-weight and thus belongs to classes \ref{ospclass1}, \ref{ospclass2} or \ref{ospclass3}, as discussed above. Otherwise $\Mod{M} \cong \finite{\Mod{W}}'_{\lambda, s}$, so it belongs to class \ref{ospclass4}.
\end{proof}

Whilst all finite-dimensional $\myosp$-modules are semisimple and can therefore be decomposed into a direct sum of a finite number of the $\Fin{\lambda}$ and $\parrmod{\Fin{\lambda}}$, there are infinite-dimensional modules which are not. In particular, the dense module $\finite{\Mod{W}}'_{\lambda, \lambda + 1/2}$ is reducible, but indecomposable, and is characterised by the following non-split short exact sequence:
\begin{subequations} \label{eqn:sesc}
\begin{equation} \label{eqn:sesc-}
\dses{\parrmod{\Inf{\lambda + 1}^-}}{}{\finite{\Mod{W}}'_{\lambda, \lambda + 1/2}}{}{\Inf{\lambda}^+}.
\end{equation}
We shall denote this reducible module $\finite{\Mod{W}}'_{\lambda, \lambda + 1/2}$ by $\Rel{\Lambda,s}^-$, where $\Lambda = \lambda + 2 \ZZ$ and $s = \lambda + \frac{1}{2}$, to emphasise its denseness. The superscript $-$ refers to the existence of a \lw{} submodule. Indeed, one similarly arrives at the non-split short exact sequence
\begin{equation} \label{eqn:sesc+}
\dses{\parrmod{\Inf{\lambda - 1}^+}}{}{\finite{\Mod{W}}'_{\lambda, -\lambda + 1/2}}{}{\Inf{\lambda}^-}
\end{equation}
\end{subequations}
in an entirely analogous manner.  We therefore denote $\finite{\Mod{W}}'_{\lambda, -\lambda + 1/2}$ by $\Rel{\Lambda,s}^+$, where $\Lambda = \lambda + 2 \ZZ$, $s=-\lambda + \frac{1}{2}$ and the superscript $+$ indicates a \hw{} submodule.

It will turn out, in \cref{sec:minmod}, that these reducible, but indecomposable, modules are the keys to the analysis of the \cft{}.
We remark that if $s \in \ZZ + \frac{1}{2} \setminus \set{\frac{1}{2}}$, then the indecomposable structure of the induced module $\finite{\Mod{W}}'_{\lambda, s}$ is slightly more complicated that that discussed above.  However, this case turns out not to be relevant for the \cft{} that we shall explore, hence it will not be considered any further.  There are, in addition, many other indecomposable dense $\myosp$-modules beyond those discussed here which are likewise irrelevant to what follows.

\subsection{The Weyl group} \label{subsec:weylgrp}

Although we have, in both \cref{subsec:sl2,subsec:strthy}, treated the Weyl group $\SLG{W}{\algh}$ associated with a chosen Cartan subalgebra as the subgroup of $\SLG{GL}{\algh^*}$ generated by (even) root reflections, it is possible (and useful) to view it in a number of other ways. One equivalent definition (for any simple Lie superalgebra $\algg$ with Cartan subalgebra $\algh$) is
\begin{equation} \label{eqn:altweyl}
\SLG{W}{\algh} = \frac{ \SLG{N}{\algh} }{ \SLG{Z}{\algh} },
\end{equation}
where $\SLG{N}{\algh}$ and $\SLG{Z}{\algh}$ are the subgroups of the group $\SLG{Inn}{\algg}$ of inner automorphisms of $\algg$ given by
\begin{equation} \label{eqn:normcent}
\SLG{N}{\algh} = \set{\phi \in \SLG{Inn}{\algg} \st \phi(\algh) = \algh}, \quad
\SLG{Z}{\algh} = \set{\phi \in \SLG{Inn}{\algg} \st \phi(x)=x\ \text{for all}\ x \in \algh}.
\end{equation}
We recall that for a simple Lie superalgebra, the group of inner automorphisms is generated by exponentiating the adjoint actions of the even subalgebra elements.  Now, if the quotient \eqref{eqn:altweyl} splits, so that
\begin{equation} \label{eqn:nzsplit}
\SLG{N}{\algh} = \SLG{Z}{\algh} \rtimes \SLG{W}{\algh},
\end{equation}
then we may treat the Weyl group as a subgroup of $\SLG{Inn}{\algg}$; in particular, one which preserves the choice of Cartan subalgebra.

For example, for the Cartan subalgebra $\algh$ of $\sltwo$ used in \cref{subsec:sl2}, $\SLG{N}{\algh}$ is the union of two disjoint subsets: those maps taking $h \mapsto h$ and those taking $h \mapsto -h$. The first is of course $\SLG{Z}{\algh}$, so taking the quotient as in \eqref{eqn:altweyl} indeed gives a copy of the Weyl group $\ZZ_2$. A choice of coset representatives are the identity map $\mathrm{id}_\algg$ and the linear involution defined by
\begin{equation} \label{eqn:slconj}
h \mapsto -h, \quad e \mapsto -f, \quad f \mapsto -e,
\end{equation}
demonstrating the splitting \eqref{eqn:nzsplit}.

Now, for $\myosp$, with $\algh$ as in \cref{subsec:strthy}, we again find that $\SLG{Z}{\algh}$ is a normal subgroup of $\SLG{N}{\algh}$ of index 2, so that indeed the Weyl group is isomorphic to $\ZZ_2$. However, this quotient no longer splits. We can see this by again considering coset representatives. Here, we may choose these to be the identity map and the \emph{conjugation automorphism} $\finconjaut$, which acts according to \eqref{eqn:slconj} on the even subalgebra and on odd elements as
\begin{equation} \label{eqn:ospconj}
\finconjmod{x} = -y, \quad \finconjmod{y} = x.
\end{equation}
Note that $\finconjaut$ is not involutive, but rather squares to an element $\Sid \in \SLG{Z}{\algh}$ which acts as the identity on even elements and minus the identity on odd ones. Indeed, no element of that coset squares to the identity, so unfortunately we cannot here realise the Weyl group as a subgroup of $\SLG{Inn}{\algg}$.

It is also useful to consider automorphisms of $\algg$ as defining invertible functors on the category $\algmod{\algg}$ of $\algg$-modules. Taking any $\varphi \in \SLG{Aut}{\algg}$ and any $\Mod{M} \in \algmod{\algg}$, let $\widetilde{\varphi} \colon \Mod{M} \to \func{\widetilde{\varphi}}{\Mod{M}}$ be an isomorphism of vector superspaces and let $X \in \algg$ act on $\widetilde{\varphi}(m) \in \func{\widetilde{\varphi}}{\Mod{M}}$ according to
\begin{equation} \label{eqn:twistmod}
X \widetilde{\varphi}(m) = \widetilde{\varphi}(\varphi^{-1}(X) m).
\end{equation}
This gives $\func{\widetilde{\varphi}}{\Mod{M}}$ the structure of a $\algg$-module.  We emphasise that $\func{\widetilde{\varphi}}{\Mod{M}}$ may or may not be isomorphic to $\Mod{M}$. The assignment $\Mod{M} \mapsto \func{\widetilde{\varphi}}{\Mod{M}}$ is called \emph{twisting} by $\varphi$ and indeed defines a functor on the category of $\algg$-modules (acting in the obvious way on morphisms).  The resulting homomorphism $\SLG{Aut}{\algg} \lra \SLG{Aut}{\algmod{\algg}}$ is then a strict $\SLG{Aut}{\algg}$-action on $\algmod{\algg}$.  These functors obviously commute with parity reversal:  $\parr \widetilde{\varphi} = \widetilde{\varphi} \parr$.  For notational simplicity, we shall drop the tildes that distinguish the automorphism from its induced functor in what follows.

Two notable properties of these twisting functors are that they preserve indecomposable structures and take weight modules to weight modules, albeit with respect to possibly different Cartan subalgebras. Indeed, if $\Mod{M}$ is a weight module for $\algh$, then $\varphi(\algh)$ is another Cartan subalgebra for which $\func{\varphi}{\Mod{M}}$ is a weight module. However, since all Cartan subalgebras are related to one another by inner automorphisms, we are justified in restricting attention to $\varphi \in \SLG{N}{\algh}$ for a given $\algh$. Moreover, twisting by $\varphi \in \SLG{Z}{\algh}$ takes (isomorphism classes of) weight modules to themselves, thus we need only consider those twists defined (up to isomorphism) by the cosets in $\SLG{W}{\algg}$. We have, in this way, obtained an action of the Weyl group on the set of isomorphism classes of the objects of $\wtmod{\algg}$, the category of weight $\algg$-modules (with fixed Cartan subalgebra $\algh$).  In general, one can also act with (equivalence classes of) those outer automorphisms that preserve the chosen Cartan subalgebra.  However, for $\sltwo$ and $\myosp$, there are no such outer automorphisms.

We illustrate this action of the Weyl group with the following two examples.  If $\algg = \sltwo$ and $\varphi$ is the inner automorphism defined in \eqref{eqn:slconj}, then we have
\begin{subequations} \label{eq:FinConj}
	\begin{alignat}{3}
		\func{\varphi}{\slFin{\lambda}} &\cong \slFin{\lambda}, &\quad
		\func{\varphi}{\slInf{\lambda}^\pm} &\cong \slInf{-\lambda}^\mp, &\quad
		\func{\varphi}{\slRel{\Lambda,q}} &\cong \slRel{-\Lambda,q}. \label{eqn:sltwist}
	\intertext{Similarly, for $\myosp$ we have}
		\finconjmod{\Fin{\lambda}} &\cong \Fin{\lambda}, &\quad
		\finconjmod{\Inf{\lambda}^\pm} &\cong \Inf{-\lambda}^\mp, &\quad
		\finconjmod{\Rel{\Lambda,s}} &\cong \Rel{-\Lambda,s}. \label{eqn:osptwist}
	\end{alignat}
\end{subequations}
Note that in these examples, the induced action on the weight support of the module is precisely that of the corresponding Weyl reflection. This generalises, with the twisting functor corresponding to a Weyl group element acting on a module's weight support via its standard linear action on the dual Cartan subalgebra.

We remark that it is possible to lift the action of the Weyl group from isomorphism classes of weight modules to the category $\wtmod{\algg}$.  This is trivial for $\algg = \sltwo$, because of \eqref{eqn:slconj}, but not for $\algg = \myosp$ as $\finconjmod^2 \neq \wun$.  In the latter case, we instead have natural isomorphisms $\eta_{\Mod{M}} \colon \func{\finconjaut^2}{\Mod{M}} \to \Mod{M}$, for each weight module $\Mod{M}$, given by $\eta_{\Mod{M}} \brac*{\finconjaut^2(m)} = (-1)^{\parityof{m}} m$, for all homogeneous elements $m \in \Mod{M}$.  Here, $\parityof{m} \in \set{0,1}$ denotes the parity of $m$.  It is now easy to check that these natural isomorphisms, along with the identity, satisfy the associativity constraints required to give $\wtmod{\myosp}$ a Weyl group action.  However, this action is not essential for much of the analysis to follow because we will be chiefly concerned with identifying weight modules up to isomorphism.

\section{The affine Kac-Moody superalgebra $\myaosp$} \label{sec:affosp}

We now turn to the affinisation $\myaosp$ in its \ns{} and Ramond guises as well as the associated \vosas{}.  Verma modules and their generalisations, the relaxed Verma modules, are introduced along with their simple quotients.  The conjugation and spectral flow automorphisms are used to twist the latter and thereby construct a large collection of simple smooth weight modules over $\myaosp$, almost none of which are positive-energy. We recall that a module being smooth means that for all $j \in \myosp$ and $v$ in the module, $j_m \cdot v$ vanishes for $m$ sufficiently large (see below for notation).

\subsection{The affine algebra} \label{subsec:affalg}

The affine Kac-Moody superalgebra $\myaosp$ may be defined, as a vector space, by choosing a basis.  The standard choice is
\begin{equation} \label{eqn:affbasis}
\set{h_m, e_m, f_m \st m \in \ZZ} \cup \set{x_m, y_m \st m \in \ZZ + \xi} \cup \set{K,L_0},
\end{equation}
where $\xi$ is either $0$ or $\frac{1}{2}$, giving what we will call the \ns{} and Ramond $\myaosp$ algebras, $\aalgg_{\NS}$ and $\aalgg_{\R}$, respectively. As we will demonstrate later, these two choices of indexing give isomorphic algebras, hence we will generally suppress the subscripts and just write $\aalgg = \myaosp$. However, many of the representation-theoretic constructions that we shall consider depend on this choice. In particular, the representation theory splits into two sectors according to which algebra, \ns{} or Ramond, is acting on the module.  The modules on which $\aalgg_{\NS}$ acts constitute the \emph{\ns{} sector} and the $\aalgg_{\R}$-modules constitute the \emph{Ramond sector}.

In both cases, the even subalgebra is defined to be spanned by the $h_m$, $e_m$ and $f_m$, as well as $K$ and $L_0$.  The $x_m$ and $y_m$ are declared to be odd. Letting $j, j' \in \set{h,e,f,x,y}$ denote arbitrary basis vectors of $\myosp$, the (anti)commutation relations of the affine basis vectors \eqref{eqn:affbasis} take the form
\begin{equation} \label{eqn:affcomm}
	\begin{aligned}
		\comm{j_m}{j'_n} &= \comm{j}{j'}_{m+n} + m \killing{j}{j'} \delta_{m+n,0} K & &\text{if \(j\) or \(j'\) is even}, \\
		\acomm{j_m}{j'_n} &= \acomm{j}{j'}_{m+n} + m \killing{j}{j'} \delta_{m+n,0} K & &\text{if \(j\) and \(j'\) are odd},
	\end{aligned}
	\qquad \comm{L_0}{j'_n} = -n j'_n
\end{equation}
and $K$ is central. We recall that the (anti)commutators of the basis elements of $\myosp$ were given in \eqref{eqn:slcomm} and \eqref{eqn:ospcomm}, while the non-zero values taken by the (normalised) Killing form were listed in \eqref{eqn:slkill} and \eqref{eqn:ospkill}.

Note that the \ns{} algebra $\aalgg_{\NS}$ has a finite-dimensional subalgebra spanned by $\set{h_0,e_0,f_0,x_0,y_0}$, isomorphic to $\myosp$. This is the \emph{horizontal subalgebra} of $\aalgg_{\NS}$. The inclusion allows us to carry much of the representation-theoretic data we have for $\myosp$ over to $\myaosp$. The horizontal subalgebra of the Ramond algebra $\aalgg_{\R}$ is defined instead to be spanned by $\set{h_0,e_0,f_0}$ (because odd elements do not have zero modes in $\aalgg_{\R}$), hence it is isomorphic to $\sltwo$.  Our study of the Ramond sector will therefore be closely related to $\sltwo$ representation theory.

The Cartan subalgebra $\aalgh$ of both $\aalgg_{\NS}$ and $\aalgg_{\R}$ is defined to be the abelian subalgebra spanned by $h_0$, $K$ and $L_0$. With respect to $\aalgh$, these algebras have root systems
\begin{equation} \label{eqn:affroots}
\set{\pm 2 \alpha + n \delta \st n \in \ZZ} \cup \set{\pm \alpha + n \delta \st n \in \ZZ + \xi} \cup \set{n \delta \st n \in \ZZ_{\neq 0}},
\end{equation}
where the roots $\alpha,\delta \in \aalgh^*$ are defined by
\begin{equation} \label{eqn:alphadelta}
\begin{aligned}
\alpha(h_0)&=1, &&& \alpha(K)&=\alpha(L_0)=0,\\
\delta(h_0)&=\delta(K)=0, &&& \delta(L_0)&=-1.
\end{aligned}
\end{equation}
The positive roots are taken to be those in \eqref{eqn:affroots} with $n>0$, along with $\alpha$ and $2 \alpha$.
The simple roots are then $\set{\alpha, -2\alpha+\delta}$ in the \ns{} case and $\set{2\alpha, -\alpha+\frac{1}{2} \delta}$ in the Ramond case. These choices of simple roots give rise to identical generating reflections for the Weyl group.

\subsection{Generalised Verma modules and vertex operator superalgebras} \label{subsec:verma} \label{subsec:voa}

Given the choice of positive roots above, we obtain \emph{triangular decompositions} for $\aalgg = \aalgg_{\NS}$ and $\aalgg_{\R}$:
\begin{equation} \label{eqn:tridecomp}
\aalgg = \aalgg^- \oplus \; \aalgh \; \oplus \; \aalgg^+.
\end{equation}
Here, $\aalgg^+$ and $\aalgg^-$ denote the subalgebras of $\aalgg$ spanned by the positive and negative root vectors, respectively. For example, $\aalgg^+_{\NS}$ is spanned by $x_0$, $e_0$ and all the $j_m$, $j=e,x,h,y,f$, with $m \in \ZZ_{>0}$. Associated with each decomposition is the \emph{Borel subalgebra} $\aalgb = \aalgh \oplus \aalgg^+$.

We define a weight space of an $\myaosp$-module to be a simultaneous eigenspace of $h_0$ and $K$ that is also a generalised eigenspace of $L_0$.  We then define a weight module over $\myaosp$ to be a $\ZZ_2$-graded module that decomposes (as a vector space) into a direct sum of finite-dimensional weight spaces.  Note that although $L_0$ is permitted to act non-semisimply on a weight module, its Jordan blocks will have finite rank.

Consider now a one-dimensional $\aalgb$-module spanned by some $v \neq 0$ on which $\aalgg^+$ acts trivially and $\aalgh$ acts via
\begin{equation}
	h_0 v = \lambda v, \quad K v = k v, \quad L_0 v = \Delta v,
\end{equation}
for some $k,\lambda,\Delta \in \CC$. We call $k$ the \emph{level}, $\lambda \alpha$ the \emph{$\myosp$-weight}, and $\Delta$ the \emph{conformal weight} of $v$. We then promote this $\aalgb$-module to a $\aalgg$-module via induction:
\begin{equation} \label{eqn:vermainduc}
\Mod{V}^{\NS/\R}_{k, \lambda} = \operatorname{Ind}_{\aalgb}^{\aalgg} \CC v,
\end{equation}
which we call a \emph{Verma module} of (\ns{} or Ramond, as indicated by the superscript) $\myaosp$. Note, in particular, that $L_0$ acts diagonalisably on both modules. Moreover, they are clearly \hwms{}, generated by the \hwv{} $v$, whose weight spaces all have finite dimension.  As usual, any \hw{} $\aalgg$-module can be written as a quotient of the Verma module of the same highest weight.

Let $\vac{k}$ denote the \hwv{} generating the \ns{} Verma module $\NSVer{k,0}$.  The vector $y_0 \vac{k}$ is always singular in this module and it generates a proper Verma submodule.  If the level $k$ is non-critical, meaning that $k \neq -\dcox = -\frac{3}{2}$, quotienting by this submodule gives a \hwm{} that carries the structure of a \vosa{}, called the level $k$ \emph{universal} \vosa{} of $\myaosp$ (the conformal structure will be given in \eqref{eq:Sugawara} below).  We shall denote it by $\OSPUniv{k}$.  For generic values of $k$, this proper submodule is the unique maximal submodule, hence $\OSPUniv{k}$ is simple as a \vosa{}.

The algebraic structure of the universal \vosa{} $\OSPUniv{k}$ is completely determined by the \opes{} of the generating fields $h(z)$, $e(z)$, $f(z)$, $x(z)$ and $y(z)$:
\begin{equation}
	j(z) = \sum_{n \in \ZZ+\xi} j_n z^{-n-1} \quad \text{(\(j = h, e, f, x, y\)).}
\end{equation}
The \opes{} themselves have the form
\begin{equation}
	j(z) j'(w) \sim \frac{\killing{j}{j'} k}{(z-w)^2} + \frac{\comm{j}{j'}(w)}{z-w} \quad \text{(\(j,j' = h, e, f, x, y\))}
\end{equation}
and these are equivalent to the (anti)commutation relations \eqref{eqn:affcomm}.  We remark that the \vosa{} $\OSPUniv{k}$ is universal in the sense that any \vosa{} whose fields are normally ordered products of derivatives of generating fields satisfying these \opes{} is a quotient of $\OSPUniv{k}$.  This follows immediately from the universality of Verma modules and the state-field correspondence of vertex algebras.

The conformal structure of the \vosa{} $\OSPUniv{k}$ is defined by the Sugawara construction.  Explicitly, the energy-momentum tensor is
\begin{equation} \label{eq:Sugawara}
	T(z) = \sum_{n \in \ZZ} L_n z^{-n-2} = \frac{1}{2t} \sqbrac*{\frac{1}{2} \normord{h(z)h(z)} + \normord{e(z)f(z)} + \normord{f(z)e(z)} - \frac{1}{2} \normord{x(z)y(z)} + \frac{1}{2} \normord{y(z)x(z)}},
\end{equation}
where $t = k + \dcox = k + \frac{3}{2}$, and the generating fields are all weight $1$ conformal primaries with respect to this structure.  The Virasoro modes $L_n$ are thus expressed as infinite sums of normally ordered products of modes in (an appropriate completion of) the \uea{} $\EnvAlg{\aalgg}$. Note that we need not specify a completion, as for smooth modules the action of each infinite sum truncates to a finite one.

\begin{prop}[The Sugawara Construction] \label{thm:sugawara}
In any smooth representation of $\myaosp$ on which $K$ acts as multiplication by $k \in \CC \setminus \set{-\frac{3}{2}}$, the operators
\begin{equation}
L_m = \frac{1}{2t} \brac*{ \; \frac{1}{2}\normord{hh}_m + \normord{ef}_m + \normord{fe}_m - \frac{1}{2}\normord{xy}_m + \frac{1}{2}\normord{yx}_m}
\end{equation}
furnish a representation of the Virasoro algebra of central charge
\begin{equation}
	c=1-\frac{3}{2t}=\frac{k}{t} \quad \text{(\(t = k + \frac{3}{2}\)).}
\end{equation}
\end{prop}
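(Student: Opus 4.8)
The plan is to run the standard Sugawara argument for an affine Kac--Moody superalgebra equipped with a non-degenerate even invariant supersymmetric bilinear form --- here the one induced by the Killing form $\kappa$ of \eqref{eqn:slkill} and \eqref{eqn:ospkill} --- so that the only genuinely $\myaosp$-specific inputs are the value $\dcox = \tfrac32$ of the dual Coxeter number and the superdimension of $\myosp$ (equal to $3 - 2 = 1$). First I would fix the usual mode convention for the normally ordered fields,
\[
\normord{jj'}_m = \sum_{n < 0} j_n j'_{m-n} + (-1)^{\parityof{j}\parityof{j'}} \sum_{n \ge 0} j'_{m-n} j_n
\]
(homogeneous $j,j' \in \myosp$, with $n$ running through $\ZZ$ or $\ZZ + \tfrac12$ and the split adjusted in the evident way in the half-integer case), and note that smoothness makes each such sum act as a finite sum on any given vector, so that the $L_m$ and the commutators below are well defined; the hypothesis $k \ne -\tfrac32$ is precisely what makes the prefactor $\tfrac1{2t}$ meaningful.

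The main step --- and the one I expect to be the principal obstacle, on account of the super-sign bookkeeping --- is to verify that the generating currents are weight-one primaries, i.e.\ that
\[
\comm{L_m}{j_n} = -n\, j_{m+n}, \qquad j \in \set{h,e,f,x,y},\quad m \in \ZZ
\]
for all admissible $n$. I would expand the left-hand side using \eqref{eqn:affcomm} together with \eqref{eqn:slcomm} and \eqref{eqn:ospcomm}, and reorganise. After applying the completeness relation $\sum_a \killing{J_a}{j} J^a = j$ for a pair of $\kappa$-dual bases $\set{J^a}$, $\set{J_a}$ of $\myosp$, exactly two contributions survive: the level-$k$ (central) term of \eqref{eqn:affcomm} contributes $-2kn\, j_{m+n}$, while the reordering of the normal product contributes a term controlled by the quadratic Casimir acting in the adjoint representation, namely $\sum_a (-1)^{\parityof{J_a}} \comm{J^a}{\comm{J_a}{j}} = 2\dcox\, j = 3j$, hence contributes $-2\dcox n\, j_{m+n}$. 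Dividing by $2t = 2(k+\dcox)$ gives the claim. It is worth stressing that the relative minus sign on the fermionic summands in \eqref{eq:Sugawara} is exactly what renders $\set{J^a}$ and $\set{J_a}$ $\kappa$-dual (recall $\killing{x}{y} = -\killing{y}{x} = 2$), hence what makes this cancellation go through; the Ramond case $\xi = \tfrac12$ is identical, the half-integer moding of the fermions affecting none of the identities used.

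Granting primarity, I would compute $\comm{L_m}{L_n}$ by letting $L_m$ act through the normally ordered products defining $L_n$: weight-one primarity forces the non-central part to be $(m-n)L_{m+n}$, and the only further input is the normal-ordering anomaly, which must give a central term of the shape $\tfrac{c}{12}(m^3-m)\,\delta_{m+n,0}\,\id$ for a constant $c$ independent of the module (the cubic polynomial being pinned down by antisymmetry, the Jacobi identity, $\comm{L_m}{L_0} = mL_m$ and $\comm{L_1}{L_{-1}} = 2L_0$). To evaluate $c$ I would apply $L_2 L_{-2}$ to the \hwv{} $\vac{k}$ of the universal \vosa{} $\OSPUniv{k}$ (equivalently, of the \ns{} Verma module $\NSVer{k,0}$): since $L_0\vac{k} = 0$ and $L_2\vac{k} = 0$ this equals $\tfrac{c}{2}\vac{k}$, while \eqref{eq:Sugawara} gives $L_{-2}\vac{k} = \tfrac1{2t}\bigl(\tfrac12 h_{-1}h_{-1} + e_{-1}f_{-1} + f_{-1}e_{-1} - \tfrac12 x_{-1}y_{-1} + \tfrac12 y_{-1}x_{-1}\bigr)\vac{k}$ after using $y_0\vac{k} = 0$ (hence also $f_0\vac{k} = -y_0^2\vac{k} = 0$) in $\OSPUniv{k}$; a routine computation with \eqref{eqn:affcomm} then yields $L_2 L_{-2}\vac{k} = \tfrac{k}{2t}\vac{k}$, so $c = \tfrac{k}{t} = 1 - \tfrac{3}{2t}$. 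As a check, this matches the general superalgebra Sugawara formula, which for $\myosp$ reads $c = k(3-2)/(k+\dcox) = k/t$ and could alternatively be cited outright.
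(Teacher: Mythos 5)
The paper states this proposition without proof --- it is the classical Sugawara construction for an affine Lie superalgebra with a non-degenerate even invariant form, quoted as standard --- so there is no in-paper argument to compare against. Judged on its own terms, your proposal is a correct rendition of the standard proof: you write $2tL_m$ as the affinisation of the Casimir $\sum_a J_a J^a$ for $\kappa$-dual bases (and your observation that the signs in $-\frac{1}{2}\normord{xy}+\frac{1}{2}\normord{yx}$ are exactly the duality signs forced by $\killing{x}{y}=-\killing{y}{x}=2$ is the right sanity check, cf.\ \eqref{eqn:casimir1}), establish $\comm{L_m}{j_n}=-nj_{m+n}$ from the completeness relation together with the adjoint Casimir eigenvalue $2\dcox=3$, and then extract the central charge from the reordering anomaly. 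Two small points deserve care. First, the form $\frac{c}{12}(m^3-m)$ of the anomaly requires two data points, not one: antisymmetry and Jacobi only give $a(m)=\alpha m^3+\beta m$, and since $L_0$ is already fixed by the Sugawara formula you cannot shift it to absorb $\beta$; the relation $\comm{L_1}{L_{-1}}=2L_0$, which you list among the inputs pinning down the polynomial, is therefore itself something to verify, equivalent to $a(1)=0$. It does hold, via the one-line check $L_{-1}\vac{k}=0$ in $\OSPUniv{k}$, but that check should be recorded alongside the evaluation of $a(2)$. Second, the parenthetical identification of $\vac{k}$ with the \hwv{} of the Verma module $\NSVer{k,0}$ is not quite right for the simplification you use: in the Verma module $y_0\vac{k}\neq 0$ and $f_0\vac{k}\neq 0$, so the clean expression for $L_{-2}\vac{k}$ holds only in the quotient $\OSPUniv{k}$ (the extra Verma-module terms do end up contributing zero to $L_2L_{-2}\vac{k}$, but that needs a further line of computation). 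Neither point affects the conclusion $c=k/t=1-3/(2t)$, which, as you note, also agrees with the general formula $c=k\,\mathrm{sdim}(\algg)/(k+\dcox)$ with $\mathrm{sdim}(\myosp)=1$.
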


For the bosonic modes, normally-ordered products are defined by the usual formula
\begin{subequations} \label{eqn:normord}
	\begin{equation} \label{eqn:bosnormord}
		\normord{A B}_n = \sum_{m \le -1} A_m B_{n-m} + \sum_{m > -1} B_{n-m} A_m \quad \text{(\(A,B = h, e, f\)).}
	\end{equation}
	However, for the fermionic fields $x(z)$ and $y(z)$, there is some subtlety in defining normal-ordering, depending on whether we are considering the \ns{} or Ramond sector of the \cft{}. The definition follows from considering the \emph{generalised commutation relations}:
	\begin{equation} \label{eqn:fermgcr}
		\begin{aligned}
			\sum_{r<0} x_{a+r} y_{b-r} - \sum_{r \ge 0} y_{b-r} x_{a+r} &= +a(a-1) \delta_{a+b,0} K + a h_{a+b} + \normord{x y}_{a+b}, \\
			\sum_{r<0} y_{a+r} x_{b-r} - \sum_{r \ge 0} x_{b-r} y_{a+r} &= -a(a-1) \delta_{a+b,0} K + a h_{a+b} + \normord{y x}_{a+b},
		\end{aligned}
	\end{equation}
\end{subequations}
which hold for all $a,b \in \ZZ + \xi$.  In both sectors, the normally-ordered fields $\normord{xy}$ and $\normord{yx}$ have integer-indexed modes (they are bosonic).  Of course, there are also many other normally-ordered products involving $x$ or $y$; for the calculations below those given here are sufficient.

Note that the Virasoro mode $L_0$ obtained from the Sugawara construction obeys the same commutation relations as the $\myaosp$ basis element of the same symbol. It is standard to identify these by restricting attention to modules on which they act as the same endomorphism.  Depending on whether we are considering Verma modules over the \ns{} or Ramond $\myaosp$ algebra, this identification leads to
\begin{equation} \label{eqn:cwhw}
\Delta = \frac{\lambda \brac{\lambda+1}}{4t} \quad \text{or} \quad \Delta = \frac{1}{2t} \brac*{\frac{\lambda \brac{\lambda+2}}{2} - \frac{k}{4}},
\end{equation}
respectively.

Of course, there are levels for which the universal \vosa{} $\OSPUniv{k}$ is not simple.  By studying embedding diagrams, or Shapovalov-type forms on $\OSPUniv{k}$, one deduces that this happens precisely for the (non-critical) levels satisfying the following condition \cite{GorSim07}:
\begin{equation} \label{eq:DefUV}
	2t=2\brac*{k+\frac{3}{2}} = \frac{u}{v} \quad \text{(\(u \in \ZZ_{\ge 2}\), \(v \in \ZZ_{\ge 1}\), \(u-v \in 2\ZZ\) and \(\gcd\set*{u,\frac{u-v}{2}}=1\)).}
\end{equation}
For these levels, called the \emph{admissible} levels, $\OSPUniv{k}$ has a maximal proper ideal generated by a single \sv{} $\chi_k$.  The simple quotient \vosa{} is the level $k$ \emph{minimal model} of $\myaosp$, which we denote by $\OSPMinMod{u}{v}$.

The construction of affine Verma modules via induction of $\aalgb$-modules generalises so that one can induce from an arbitrary module over the horizontal subalgebra.  For this, we replace \eqref{eqn:tridecomp} by
\begin{equation}
	\aalgg = \aalgn \oplus \aalgz \oplus \aalgp,
\end{equation}
where $\aalgn$ and $\aalgp$ are the subalgebras spanned by the modes with negative and positive indices, respectively, and $\aalgz$ is the subalgebra spanned by $K$ and the modes with index $0$ (the \emph{zero modes}).  Any module over the horizontal subalgebra may be extended to a $\aalgz$-module by requiring that $K$ act as $k$ times the identity, then to a $\aalgz \oplus \aalgp$-module by letting $\aalgp$ act as $0$.  If the module for the horizontal subalgebra is simple, then the result of inducing this $\aalgz \oplus \aalgp$-module to a $\aalgg$-module is called a \emph{generalised Verma module}.

When the simple module is a Verma module for the horizontal subalgebra, then the result of the induction described above is just a Verma module for $\aalgg$.  When the simple module is the trivial module, then the generalised Verma module may be identified as the $\aalgg$-module underlying the universal \vosa{} $\OSPUniv{k}$.  However, there are many other possibilities for the initial simple module (see \cref{thm:ospclass}).  We remark that generalised Verma modules are examples of \emph{relaxed \hwms{}} \cite{FeiEqu98}, these being modules generated by a single weight vector, called a \emph{relaxed \hwv{}}, that is annihilated by $\aalgp$.  They are also examples of \emph{positive-energy} modules, these being weight modules for which the conformal weights are bounded from below.

In contrast to \hwms{}, the conformal weight \(\Delta\) of a relaxed \hwv{} is not necessarily determined by the weight but rather by the eigenvalue \(s\) of the super-Casimir $\Sigma$, in the \ns{} sector, and by the eigenvalue \(q\) of the \(\sltwo\) Casimir $Q$, in the Ramond sector. The respective formulae are
\begin{align} \label{eq:RelConfWts}
	\Delta=\frac{s^2-1/4}{4t}\quad \text{and} \quad	\Delta=\frac{q-k/4}{2t}.
\end{align}

\subsection{Simple weight modules} \label{subsec:irrwtmods}

In what follows, we shall be chiefly interested, not in these generalised Verma modules over $\myaosp$, but rather in their simple quotients.  Our notation for these follows that used for the simple modules of $\sltwo$ and $\myosp$ in \cref{thm:slclass,thm:ospclass}.  More specifically, the simple quotients of the level $k$ \ns{} generalised Verma modules induced from the simple $\myosp$-modules $\Fin{\lambda}$, $\Inf{\lambda}^{\pm}$ and $\Rel{\Lambda, s}$ will be denoted by $\NSFin{\lambda}$, $\NSInf{\lambda}^{\pm}$ and $\NSRel{\Lambda, s}$, respectively.  Similarly, the simple quotients of the level $k$ Ramond generalised Verma modules induced from the simple $\sltwo$-modules $\slFin{\lambda}$, $\slInf{\lambda}^{\pm}$ and $\slRel{\Lambda, q}$ will be denoted by $\RFin{\lambda}$, $\RInf{\lambda}^{\pm}$ and $\RRel{\Lambda, q}$, respectively.  In all cases, the level dependence will be implicit.

We shall also need to consider quotients of the \ns{} $\myaosp$-modules that are induced from the reducible, but indecomposable, $\myosp$-modules $\Rel{\Lambda, s}^{\pm}$ and the Ramond $\myaosp$-modules that are induced from the reducible, but indecomposable, $\sltwo$-modules $\slRel{\Lambda, q}^{\pm}$.  Specifically, we want the quotient by the (unique) maximal submodule whose intersection with the subspace of vectors of minimal conformal weight is zero.  We denote these quotients by $\NSRel{\Lambda, s}^{\pm}$, in the \ns{} sector, and by $\RRel{\Lambda, q}^{\pm}$, in the Ramond sector.  Their structures are determined, up to isomorphism, by the following non-split short exact sequences (see \eqref{eqn:sese} and \eqref{eqn:sesc}):\footnote{
	This is actually non-trivial to prove rigorously and we shall not do so here, instead referring to \cite{KawRel18} for the details.
}
\begin{equation} \label{ses:CE}
	\begin{aligned}
		&\dses{\parrmod{\NSInf{\lambda - 1}^+}}{}{\NSRel{\lambda + 2 \ZZ, -\lambda + 1/2}^+}{}{\NSInf{\lambda}^-}, &&&
		&\dses{\RInf{\lambda-2}^+}{}{\RRel{\lambda + 2 \ZZ, \lambda (\lambda - 2) / 2}^+}{}{\RInf{\lambda}^-}, \\
		&\dses{\parrmod{\NSInf{\lambda + 1}^-}}{}{\NSRel{\lambda + 2 \ZZ, +\lambda + 1/2}^-}{}{\NSInf{\lambda}^+}, &&&
		&\dses{\RInf{\lambda+2}^-}{}{\RRel{\lambda + 2 \ZZ, \lambda (\lambda + 2) / 2}^-}{}{\RInf{\lambda}^+}.
	\end{aligned}
\end{equation}

In \cref{subsec:simpchars}, we will compute the characters of (some of) these quotients.  We shall focus initially on genuine Verma modules, for which this computation may be performed by constructing a resolution, in the sense of Bern\v{s}te\u{\i}n-Gel'fand-Gel'fand \cite{BerDif75}, of the simple quotient in terms of (direct sums of) Verma modules.  Such a resolution was constructed for the simple quotients of \ns{} Verma modules over $\myaosp$ by Iohara and Koga in \cite{IohFus01a}.  For the simple quotient of a \ns{} Verma module $\NSVer{}$, each Verma module that appears in the resolution is generated by a \sv{} of $\NSVer{}$.  The fact that a \sv{} generates a Verma submodule of $\NSVer{}$ follows from the fact that the \uea{} of $\myaosp$ has no zero divisors \cite{AubZer85}.

The computation of the character of the simple quotient of a \ns{} Verma module $\NSVer{}$ therefore reduces to the identification of its \emph{\svs}. This is achieved by means of the \emph{Shapovalov form}: an invariant bilinear form $\brac*{\cdot,\cdot}$ on $\NSVer{}$. We will not go here into the details of how it is defined, instead being content to simply state and use some of its properties. If $u,w \in \NSVer{}$ are weight vectors of distinct weights, then $\brac*{u,w}$ vanishes, so we may separately consider the restriction of this to a form $F_{\mu}$ on each weight space $\NSVer{} \brac*{\mu}$, $\mu \in \aalgh^*$. For \svs{} $u \in \NSVer{} \brac*{\mu}$ and their descendant vectors, the linear functional $F_{\mu} \brac*{u,\cdot}$ identically vanishes.

As we will show in \cref{subsec:aut}, it is possible to relate Ramond Verma modules to \ns{} ones in such a way that all information concerning their submodule structure carries over. We therefore need only look for \svs{} in the \ns{} Verma modules, aided by the following formula.
\begin{thm}[Kac-Kazhdan Determinant Formula \protect{\cite{KacStr79,BowRep97}}] \label{thm:kkdet}
Let $\Lambda \in \aalgh^*$ denote the highest weight of the \ns{} Verma module $\NSVer{k,\lambda}$ over $\myaosp$. Then, for any non-negative integer linear combination $\eta$ of simple roots, the determinant of the Shapovalov form restricted to the weight space $\NSVer{k,\lambda} \brac*{\Lambda - \eta}$ is given by
\begin{multline} \label{eq:KK}
\det\brac*{F_{\Lambda - \eta}} = \prod_{\ell=1}^\infty \set[\Bigg]{
\brac*{ \frac{\lambda + 1 - \ell}{2}}^{P\brac{ \eta - (2 \ell -1) \alpha }}
\prod_{n=1}^\infty \sqbrac[\bigg]{
\brac*{t(2n-1)}^{P\brac{\eta - \ell (2n-1)\delta}} \\
\cdot\brac*{\frac{\lambda + 1 +2nt - \ell}{2}}^{P\brac{\eta - (2\ell -1)(\alpha + n\delta) }}
\brac*{\frac{-\lambda +2nt - \ell}{2}}^{P\brac{\eta - (2\ell -1)(-\alpha + n\delta)}} \\
\cdot\brac*{\lambda + \frac{1}{2} + t(2n-1) - \ell}^{P\brac{\eta - \ell(2 \alpha + (2n-1)\delta)}}
\brac*{-\lambda - \frac{1}{2} + t(2n-1) - \ell}^{P\brac{\eta - \ell(-2 \alpha + (2n-1)\delta)}}}},
\end{multline}
where $P(\mu)$ is the number of ways that $\mu \in \aalgh^*$ can be written as a linear combination $\mu = \sum n_\beta \beta$ of positive roots, with coefficients $n_\beta \in \set{0,1}$ if $2 \beta$ is itself a positive root and coefficients $n_\beta \in \ZZ_{\ge 0}$ otherwise.
\end{thm}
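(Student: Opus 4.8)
The plan is to reproduce, for $\myaosp$, the strategy of Kac and Kazhdan \cite{KacStr79}, importing the modifications forced by the presence of odd roots (cf.\ \cite{BowRep97}). Fix the target weight $\Lambda-\eta$ and regard $\det\brac*{F_{\Lambda-\eta}}$ as a polynomial in the variables $\lambda$ and $t = k+\dcox$. First I would check that it really is a polynomial, of a controllable degree: by the \pbw{} theorem the weight space $\NSVer{k,\lambda}\brac*{\Lambda-\eta}$ has a basis of ordered monomials in the negative root vectors, with the fermionic modes (those for the odd roots $\pm\alpha+n\delta$, whose doubles $\pm2\alpha+2n\delta$ are again roots) occurring at most once, so that the number of basis vectors is precisely $P(\eta)$. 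Expanding the Shapovalov matrix in this basis makes each entry, hence the determinant, visibly polynomial in $\lambda$ and $t$; ordering the basis by the $\delta$-height of the imaginary-root part renders the matrix block-triangular up to lower-degree corrections, which simultaneously bounds the degree and exhibits the leading term, the latter being responsible for the $\brac*{t(2n-1)}^{P(\eta-\ell(2n-1)\delta)}$ factors in \eqref{eq:KK}.

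Next I would locate the zeros. The determinant vanishes at $(\lambda,t)$ exactly when $F_{\Lambda-\eta}$ has a non-trivial radical, and since a \sv{} of $\NSVer{k,\lambda}$ generates a Verma submodule (here one uses that $\EnvAlg{\myaosp}$ has no zero divisors \cite{AubZer85}), this happens precisely when $\NSVer{k,\lambda}$ carries a \sv{} at some weight $\Lambda-\eta'$ with $P(\eta-\eta')>0$. The admissible positions of \svs{} are dictated by the Kac--Kazhdan equations $2\brac*{\Lambda+\affine{\rho},\beta}=\mu\brac*{\beta,\beta}$ ranging over positive roots $\beta$: for the even roots $\pm2\alpha+(2n-1)\delta$ the depth $\mu=\ell$ runs over all positive integers, whereas the odd non-isotropic roots $\pm\alpha+n\delta$ admit only odd depths $\mu=2\ell-1$ (and the directions $\pm2\alpha+2n\delta$, being doubles of the odd roots, contribute nothing new) --- this is exactly the index pattern in \eqref{eq:KK}. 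Evaluating $\brac*{\Lambda+\affine{\rho},\cdot}$ on the root system \eqref{eqn:affroots} via \eqref{eqn:ospkill} and \eqref{eqn:alphadelta} turns each such equation into one of the linear factors of \eqref{eq:KK}, the loose factors of $\tfrac12$ recording $\brac*{\alpha,\alpha}=\tfrac12$ and the identification of $L_0$ with its Sugawara counterpart.

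Then I would fix the exponents: deform $(\lambda,t)$ transversally to a chosen Kac--Kazhdan hyperplane and pass to the Jantzen filtration $\NSVer{k,\lambda}=M^{(0)}\supseteq M^{(1)}\supseteq\cdots$. The order of vanishing of $\det\brac*{F_{\Lambda-\eta}}$ along that hyperplane equals $\sum_{j\ge1}\dim M^{(j)}\brac*{\Lambda-\eta}$, and the Jantzen sum formula --- only the string of \svs{} at $\Lambda-j\mu\beta$, $j\ge1$, being visible in a generic one-parameter family --- evaluates this to $\sum_{j\ge1}P(\eta-j\mu\beta)$. Since $\beta$ already absorbs the index $\ell$ in \eqref{eq:KK}, re-indexing $j$ against the product over $\ell$ reproduces the stated exponent factor by factor. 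The proposed product then has the same divisor as $\det\brac*{F_{\Lambda-\eta}}$ and, by the first step, the same degree and leading coefficient, so the two agree.

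I expect the main obstacle to be the odd non-isotropic roots $\pm\alpha+n\delta$. One has to derive the super-version of the Kac--Kazhdan reducibility condition that both forces the odd depth $\mu=2\ell-1$ and underlies the ``$n_\beta\in\set{0,1}$'' clause in the definition of $P$ (which encodes the relation $x_{-n}^2=e_{-2n}$), and then verify that the \sv{} strings along the odd hyperplanes interleave consistently with those along the even $\pm2\alpha+(2n-1)\delta$ hyperplanes, so that no accidental coincidence of hyperplanes corrupts the multiplicity bookkeeping of the previous step. Carefully propagating the normalisation constants --- those stray $\tfrac12$'s --- through $\affine{\rho}$ and the Sugawara construction is a secondary but unavoidable chore.
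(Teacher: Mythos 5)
The paper does not actually prove this theorem --- it is imported wholesale from the cited references --- so there is no in-paper argument to measure you against. Your outline is, in essence, the Kac--Kazhdan strategy that those references implement: read off the dimension $P(\eta)$ and the leading term from a \pbw{} basis, bound the divisor of the determinant by the Kac--Kazhdan hyperplanes, and fix the exponents with a Jantzen filtration and sum formula. As a strategy this is correct, and you correctly isolate the two genuinely ``super'' features: the odd depths $2\ell-1$ along the non-isotropic odd roots $\pm\alpha+n\delta$, and the $n_\beta\in\set{0,1}$ clause in $P$ encoding $x_{-n}^2\propto e_{-2n}$.

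Two steps are weaker than your wording suggests. First, ``the determinant vanishes \ldots{} precisely when $\NSVer{k,\lambda}$ carries a \sv{}'' is not an equivalence you may assume: the radical of $F_{\Lambda-\eta}$ is the intersection of the maximal proper submodule with that weight space, and for affine (super)algebras the maximal proper submodule need not be generated by \svs{} (subsingular vectors occur, critical level being the standard illustration). The containment of the zero set in the union of Kac--Kazhdan hyperplanes must come from applying the generalised Casimir/Sugawara operator to an \emph{arbitrary} primitive vector of the radical; the \svs{} enter only afterwards, to certify that each hyperplane genuinely occurs and to feed the Jantzen sum formula. Second, the factors $\brac*{t(2n-1)}^{P(\eta-\ell(2n-1)\delta)}$ are independent of $\lambda$, so they cannot be recovered from the leading term of the determinant viewed as a polynomial in $\lambda$; one needs the leading term with respect to the full Cartan subalgebra $\aalgh^*$ (so that the level direction, shifted by the Weyl vector from $k$ to $t$, is tracked), or an independent degree count in $t$. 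Moreover, the restriction to \emph{odd} multiples $(2n-1)\delta$ --- in contrast with the $\AKMA{sl}{2}$ formula, where the imaginary-root factors are not thinned out in this way --- is a bosonic--fermionic cancellation that a block-triangularity heuristic does not deliver for free. I would promote this from a parenthetical consequence of the leading-term computation to a named step of the proof, on a par with the odd-root analysis you already flag as the main obstacle.
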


When $\det\brac*{F_{\Lambda - \eta}} = 0$, for some non-negative integer linear combination of positive roots $\eta$, the weight space $\NSVer{k,\lambda} \brac*{\Lambda - \eta}$ contains an element of some proper submodule of $\NSVer{k,\lambda}$.  If the vanishing factor of \eqref{eq:KK} has an exponent where the argument of $P$ also vanishes, then the weight space contains a \sv{}.  Unfortunately, we do not obtain all \svs{} of $\NSVer{k,\lambda}$ from such vanishing factors and instead obtain the rest by analysing the Kac-Kazhdan determinant where the highest weight $\Lambda$ is that of one of the \svs{} that we have already identified (here it is important that the \uea{} of $\myaosp$ has no zero divisors).  We illustrate this determination of \svs{} by depicting the corresponding submodule inclusions (also known as embedding diagrams) of two $k=-\frac{5}{4}$ Verma modules in \cref{fig:nsverma}.

Unfortunately, an analogous determinant formula for generalised \hwms{} over $\myaosp$ does not appear to have been considered in the literature, though there is some work pertaining to $\AKMA{sl}{2}$, for instance \cite{KhoDet99}.  In the absence of structural data, we shall have to resort to indirect means to compute the characters of the simple quotients of the generalised Verma modules (see \cref{subsec:relchars}).

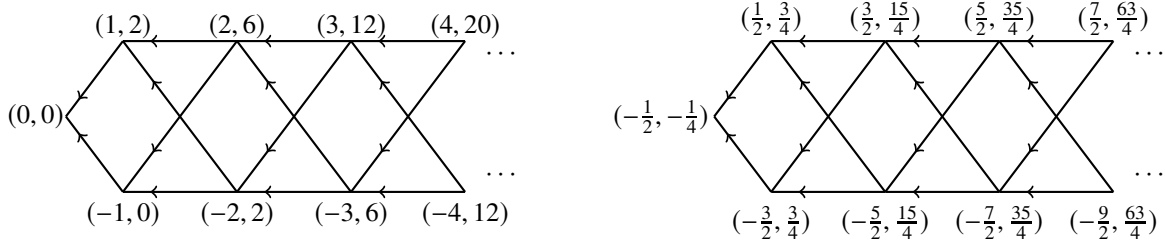
\begin{figure}
\begin{tikzpicture}[baseline=(c.center)]
\draw [-<-] (0.75,0)--(1.5,1);
\draw [-<-] (0.75,0)--(1.5,-1);
\draw [-<-] (1.5,1)--(3,1);
\draw [-<-] (1.5,1)--(3,-1);
\draw [-<-] (1.5,-1)--(3,-1);
\draw [-<-] (1.5,-1)--(3,1);
\draw [-<-] (3,1)--(4.5,1);
\draw [-<-] (3,1)--(4.5,-1);
\draw [-<-] (3,-1)--(4.5,-1);
\draw [-<-] (3,-1)--(4.5,1);
\draw [-<-] (4.5,1)--(6,1);
\draw [-<-] (4.5,1)--(6,-1);
\draw [-<-] (4.5,-1)--(6,-1);
\draw [-<-] (4.5,-1)--(6,1);
\node (c) at (.35,0) {$(0,0)$};
\node at (1.5,1.2) {$(1,2)$};
\node at (1.5,-1.3) {$(-1,0)$};
\node at (3,1.2) {$(2,6)$};
\node at (3,-1.3) {$(-2,2)$};
\node at (4.5,1.2) {$(3,12)$};
\node at (4.5,-1.3) {$(-3,6)$};
\node at (6,1.2) {$(4,20)$};
\node at (6,-1.3) {$(-4,12)$};
\node at (6.5,.8) {$\cdots$};
\node at (6.5,-.8) {$\cdots$};
\end{tikzpicture}
\hspace{0.05\textwidth}
\begin{tikzpicture}[baseline=(c.center)]
\draw [-<-] (0.75,0)--(1.5,1);
\draw [-<-] (0.75,0)--(1.5,-1);
\draw [-<-] (1.5,1)--(3,1);
\draw [-<-] (1.5,1)--(3,-1);
\draw [-<-] (1.5,-1)--(3,-1);
\draw [-<-] (1.5,-1)--(3,1);
\draw [-<-] (3,1)--(4.5,1);
\draw [-<-] (3,1)--(4.5,-1);
\draw [-<-] (3,-1)--(4.5,-1);
\draw [-<-] (3,-1)--(4.5,1);
\draw [-<-] (4.5,1)--(6,1);
\draw [-<-] (4.5,1)--(6,-1);
\draw [-<-] (4.5,-1)--(6,-1);
\draw [-<-] (4.5,-1)--(6,1);
\node (c) at (.05,0) {$(-\frac{1}{2},-\frac{1}{4})$};
\node at (1.5,1.3) {$(\frac{1}{2},\frac{3}{4})$};
\node at (1.5,-1.4) {$(-\frac{3}{2},\frac{3}{4})$};
\node at (3,1.3) {$(\frac{3}{2},\frac{15}{4})$};
\node at (3,-1.4) {$(-\frac{5}{2},\frac{15}{4})$};
\node at (4.5,1.3) {$(\frac{5}{2},\frac{35}{4})$};
\node at (4.5,-1.4) {$(-\frac{7}{2},\frac{35}{4})$};
\node at (6,1.3) {$(\frac{7}{2},\frac{63}{4})$};
\node at (6,-1.4) {$(-\frac{9}{2},\frac{63}{4})$};
\node at (6.5,.8) {$\cdots$};
\node at (6.5,-.8) {$\cdots$};
\end{tikzpicture}
\caption{Submodule inclusions in the $k=-\frac{5}{4}$ Verma modules $\NSVer{-5/4,0}$ (left) and $\NSVer{-5/4,-1/2}$ (right).  Each vertex indicates a \sv{} generating a Verma submodule.  The pairs $(\lambda,\Delta)$ attached to each vertex give the $\myosp$-weight $\lambda$ and conformal weight $\Delta$ of the corresponding \sv{}.} \label{fig:nsverma}
\end{figure}

\subsection{Automorphisms} \label{subsec:aut}

The Weyl group of an affine Kac-Moody superalgebra, defined as the subgroup of $\SLG{GL}{\aalgh}$ generated by the reflections about the hyperplanes orthogonal to the even roots, has a relatively simple structure which can be understood in terms of data arising from its horizontal subalgebra, despite having infinite order. Abstractly, if $\grp{W}$ is the Weyl group of the horizontal subalgebra $\algg$, then the Weyl group $\affine{\grp{W}}$ of the full affine superalgebra $\aalgg$ decomposes as
\begin{equation} \label{eqn:weyldecomp}
\affine{\grp{W}} \cong \grp{W} \ltimes \grp{Q}_0^\vee,
\end{equation}
where $\grp{Q}_0^\vee$ is the coroot lattice of the even subalgebra $\algg^{\even}$: the $\ZZ$-span of the coroots $\alpha_j^\vee$ for any choice of simple roots $\set{\alpha_j}$ of $\algg^{\even}$. The semidirect product structure is defined by the standard linear action of $\grp{W}$ on this lattice, as a subset of $\algh$.

In the case of $\myosp$, the (lone) simple coroot of the even subalgebra is $(2\alpha)^\vee$, giving $\grp{Q}_0^\vee \cong \ZZ$, with the Weyl group $\grp{W} \cong \ZZ_2$ acting by inversion on this lattice.  The Weyl group $\affine{\grp{W}}$ of $\myaosp$ is thus isomorphic to the non-trivial semidirect product of $\ZZ_2$ with $\ZZ$: the infinite dihedral group.

For both the \ns{} and Ramond algebras, we define the \emph{conjugation} automorphism $\conjaut$ by
\begin{subequations}
	\begin{align}
		&
		\begin{aligned}
			\conjmod{e_n} &= -f_n, \\
			\conjmod{x_n} &= -y_n,
		\end{aligned}
		&&
		\begin{aligned}
			\conjmod{h_n} &= -h_n, \\
			\conjmod{K} &= K, \\
			\conjmod{L_0} &= L_0,
		\end{aligned}
		&&
		\begin{aligned}
			\conjmod{f_n} &= -e_n, \\
			\conjmod{y_n} &= x_n,
		\end{aligned}
		\label{eqn:affconjaut}
  \intertext{where we note that the restriction to the horizontal subalgebra gives precisely the conjugation automorphism $\finconjaut$ of $\myosp$, as defined in \cref{subsec:weylgrp}.  Similarly, the \emph{spectral flow} automorphism $\sfaut$ is defined by}
    &
		\begin{aligned}
			\sfmod{}{e_n} &= e_{n-2}, \\
			\sfmod{}{x_n} &= x_{n-1},
		\end{aligned}
		&&
		\begin{aligned}
			\sfmod{}{h_n} &= h_{n} - 2\delta_{n,0} K, \\
			\sfmod{}{K} &= K, \\
			\sfmod{}{L_0} &= L_0 - h_0 + K,
		\end{aligned}
		&&
		\begin{aligned}
			\sfmod{}{f_n} &= f_{n+2}, \\
			\sfmod{}{y_n} &= y_{n+1}.
		\end{aligned}
		\label{eqn:sfaut}
	\end{align}
	When we need to distinguish the algebra on which the automorphism is acting, we shall furnish it with a subscript, as in $\conjaut_{\NS}$ or $\conjaut_{\R}$.  In addition, we define isomorphisms $\tau \colon \aalgg_{\NS} \to \aalgg_{\R}$ and $\tau' \colon \aalgg_{\R} \to \aalgg_{\NS}$ between the \ns{} and Ramond algebras, both of which act on the basis elements according to
	\begin{align}\label{eqn:halfsf}
		&
		\begin{aligned}
			e_n &\longmapsto e_{n-1}, \\
			x_n &\longmapsto x_{n-1/2},
		\end{aligned}
		&&
		\begin{aligned}
			h_n &\longmapsto h_{n} - \delta_{n,0} K, \\
			K &\longmapsto K, \\
			L_0 &\longmapsto L_0 - \tfrac{1}{2} h_0 + \tfrac{1}{4} K,
		\end{aligned}
		&&
		\begin{aligned}
			f_n &\longmapsto f_{n+1}, \\
			y_n &\longmapsto y_{n+1/2}
		\end{aligned}
	\end{align}
\end{subequations}
(the indices $n$ are constrained to range over the appropriate domains for $\aalgg_{\NS}$ and $\aalgg_{\R}$). It is not hard to see that $\tau' \circ \tau = \sfaut_{\NS}$ and $\tau \circ \tau' = \sfaut_{\R}$. As such, we will from here on denote both $\tau$ and $\tau'$ by $\sfaut^{1/2}$, using a subscript to distinguish (where necessary) which algebra is being acted upon.

Note that the automorphisms $\conjaut$ and $\sfaut$ both preserve the Cartan subalgebra $\aalgh$ and that their restrictions to $\aalgh$ generate the Weyl group $\affine{\grp{W}} \subseteq \SLG{GL}{\aalgh}$. However, the automorphisms themselves satisfy $\conjaut^2 = \brac*{\conjaut \sfaut}^2 = \Sid$ and therefore generate a group isomorphic to $\ZZ_4 \ltimes \ZZ$. As with $\myosp$, we cannot realise the Weyl group $\affine{\grp{W}}$ as a subgroup of the inner automorphisms of $\myaosp$.

Nevertheless, just as we did in \cref{subsec:weylgrp}, we may promote both $\conjaut$ and $\sfaut$ to functors on the module category of either the \ns{} or Ramond algebra. In fact, since $\sfaut$ is of infinite order, we get a functor $\sfaut^n$ for each $n \in \ZZ$.  We shall refer to the images of a module $\Mod{M}$ under $\conjaut$ and $\sfaut^n$ as the conjugate and spectral flow of $\Mod{M}$, respectively.  As in \cref{subsec:weylgrp}, it is clear that these functors commute with parity reversal.  They moreover satisfy
\begin{equation} \label{eqn:twistrels}
\func{\conjaut \conjaut}{\Mod{M}} \cong \Mod{M}, \quad
\func{\sfaut^m \sfaut^n}{\Mod{M}} \cong \func{\sfaut^{m+n}}{\Mod{M}} \quad \text{and} \quad
\func{\conjaut \sfaut^n}{\Mod{M}} \cong \func{\sfaut^{-n} \conjaut}{\Mod{M}},
\end{equation}
for all $m,n \in \ZZ$, so we have indeed constructed a strict $\affine{\grp{W}}$-action on the set of $\myaosp$-module isomorphism classes.  As in \cref{subsec:weylgrp}, this action can be lifted to the category of weight $\myaosp$-modules.  We may also carry this out for the isomorphisms $\sfaut^{1/2}$, constructing functors between the two categories. These satisfy
\begin{equation} \label{eqn:halftwist}
\func{\sfaut^{1/2} \sfaut^{1/2}}{\Mod{M}} \cong \func{\sfaut}{\Mod{M}},
\end{equation}
so in fact we have spectral flow functors $\sfaut^n$ for each $n \in \frac{1}{2} \ZZ$.  The isomorphisms of \eqref{eqn:twistrels} also hold for $m,n \in \frac{1}{2} \ZZ$.

Because the functors $\conjaut$ and $\sfaut^n$, $n \in \frac{1}{2} \ZZ$, are obviously invertible, they define \emph{equivalences} between the corresponding module categories. As a consequence, they preserve structure (submodules, quotients, Loewy diagrams, and so on). These functors turn out to be incredibly useful in analysing the representation theory of $\myaosp$ and its associated \vosas{}.  As an example, we remark that the claim made in \cref{subsec:irrwtmods} --- that the structures of the Ramond Verma modules may be deduced from those of the \ns{} Verma modules --- now follows immediately from the identifications
\begin{equation} \label{eqn:rverma}
\RVer{k,\lambda} \cong \func{\conjaut \sfaut^{-1/2}}{\NSVer{k,k-\lambda}}.
\end{equation}
Proving \eqref{eqn:rverma} is straightforward but illustrative.  First, $\NSVer{k,k-\lambda}$ is generated by a \hwv{} $v$ of $\myosp$-weight $(k-\lambda) \alpha$, so $\func{\conjaut \sfaut^{-1/2}}{\NSVer{k,k-\lambda}}$ is generated by $\conjaut \sfaut^{-1/2}(v)$ (by the invertibility of the functors).  Second, the $\myosp$-weight of $\conjaut \sfaut^{-1/2}(v)$ is $\lambda \alpha$:
\begin{equation}
	h_0 \conjaut \sfaut^{-1/2}(v) = \conjaut \sfaut^{-1/2}(\sfaut^{1/2} \conjaut(h_0) v) = \conjaut \sfaut^{-1/2}\brac[\big]{(-h_0 + K) v} = \lambda \conjaut \sfaut^{-1/2}(v).
\end{equation}
Third, $\conjaut \sfaut^{-1/2}(v)$ is a \hwv{} (in the Ramond sector):
\begin{equation}
	\begin{aligned}
		e_0 \conjaut \sfaut^{-1/2}(v) &= \conjaut \sfaut^{-1/2}(\sfaut^{1/2} \conjaut(e_0) v) = \conjaut \sfaut^{-1/2}(-f_1 v) = 0, \\
		y_{1/2} \conjaut \sfaut^{-1/2}(v) &= \conjaut \sfaut^{-1/2}(\sfaut^{1/2} \conjaut(y_{1/2}) v) = \conjaut \sfaut^{-1/2}(x_0 v) = 0.
	\end{aligned}
\end{equation}
Finally, $\NSVer{k,k-\lambda}$ is freely generated as a $\EnvAlg{\aalgg^-_{\NS}}$-module so $\func{\conjaut \sfaut^{-1/2}}{\NSVer{k,k-\lambda}}$ is freely generated as a $\EnvAlg{\aalgg^-_{\R}}$-module ($\conjaut \sfaut^{-1/2}$ maps $\aalgg^-_{\NS}$ onto $\aalgg^-_{\R}$).  This completes the proof.

It is important to note that the conjugation and spectral flow automorphisms of \cref{subsec:aut} obviously extend to automorphisms of the \uea{} of $\myaosp$ and thereby define automorphisms of the (universal) \emph{vertex superalgebra} obtained from $\OSPUniv{k}$ by forgetting the conformal structure.  Twisting by conjugation or spectral flow therefore preserves the property of being a module of this vertex superalgebra.\footnote{
	We remark that our working definition of module of a vertex superalgebra follows that given by Frenkel and Ben-Zvi \cite[Ch.~5.1]{FreVer01}, adding the requirement to be $\ZZ_2$-graded by parity (as in \cref{subsec:strthy}).  In particular, a module over the level-$k$ universal vertex superalgebra associated to $\myosp$ is just a smooth $\ZZ_2$-graded level-$k$ $\myaosp$-module.
}  In fact, these twists also preserve the property of being a module of the \vosa{}.  The only difference is that the spectral flow images of certain modules may now be distinguished from the original modules because spectral flow does not preserve the conformal structure \eqref{eq:Sugawara}, hence the conformal weights will change in general.  In particular, this is the case for the \emph{vacuum module} which is the \vosa{} regarded as a module over itself.

Finally, note that because automorphisms necessarily preserve the maximal proper ideal of $\OSPUniv{k}$, regarded as a vertex superalgebra, the arguments of the previous paragraph also apply to the minimal model \vosas{} $\OSPMinMod{u}{v}$.  In particular, the category of $\OSPMinMod{u}{v}$-modules is closed under twisting by conjugation and spectral flow.
\begin{prop} \label{prop:AutVOAMod}
	\leavevmode
	\begin{enumerate}
		\item If \(\Mod{M}\) is an \(\OSPUniv{k}\) module, then its twists by
			spectral flow and conjugation, \(\sfmod{n}{\Mod{M}}\) and \(\conjmod{\Mod{M}}\)
			respectively, are also \(\OSPUniv{k}\) modules.
		\item If \(\Mod{M}\) is an \(\OSPMinMod{u}{v}\) module, then its twists by
			spectral flow and conjugation, \(\sfmod{n}{\Mod{M}}\) and \(\conjmod{\Mod{M}}\)
			respectively, are also \(\OSPMinMod{u}{v}\) modules.
	\end{enumerate}
\end{prop}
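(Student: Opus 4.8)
The plan is to make precise the remarks immediately preceding the statement. For part~(i), I would begin by recalling (see the footnote in \cref{subsec:aut}) that an $\OSPUniv{k}$-module is precisely a smooth $\ZZ_2$-graded $\myaosp$-module on which $K$ acts as multiplication by $k$; the conformal structure is then supplied automatically by the Sugawara operators of \cref{thm:sugawara}, which are well defined on any smooth module. So it is enough to check that twisting by $\conjaut$, or by $\sfaut^n$ for $n \in \tfrac12\ZZ$ (with half-integer $n$ interchanging the \ns{} and Ramond sectors), preserves each of these three properties. Two of them are immediate: because $\conjaut$ and $\sfaut$ are automorphisms of $\myaosp$ as a Lie \emph{super}algebra they are even, so a compatible $\ZZ_2$-grading on $\Mod{M}$ transports to one on $\func{\varphi}{\Mod{M}}$; and since $\conjmod{K} = \sfmod{n}{K} = K$, the central element $K$ acts on the twist exactly as it acts on $\Mod{M}$, namely as $k$.

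The only property requiring an actual argument is smoothness, and this is the step I would single out as the (mild) crux --- though it is still routine. Fix a homogeneous $v \in \func{\varphi}{\Mod{M}}$ and a basis element $j \in \set{h,e,f,x,y}$. By definition of the twist, $j_m$ acts on $v$ the way $\varphi^{-1}(j_m)$ acts on the underlying vector of $\Mod{M}$; and from the explicit formulae \eqref{eqn:affconjaut}--\eqref{eqn:halfsf} one reads off that, for $m$ large, $\varphi^{-1}(j_m)$ is a scalar multiple of a single mode $j'_{m+d}$, where the index shift $d$ is bounded uniformly in $m$ ($d=0$ for $\varphi = \conjaut$, $\abs{d}\le 2\abs{n}$ for $\varphi = \sfaut^n$; the central correction to $h_0$ under spectral flow only intervenes at $m=0$ and so is harmless here). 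Since $\Mod{M}$ is smooth, $j'_{m+d}$ annihilates the relevant vector once $m$ is large enough, whence $j_m v = 0$ for $m \gg 0$. This proves part~(i).

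For part~(ii), I would write $\OSPMinMod{u}{v} = \OSPUniv{k}/\mathcal{I}$, where $\mathcal{I}$ is the maximal proper ideal of $\OSPUniv{k}$ regarded as a vertex superalgebra (generated by $\chi_k$), so that an $\OSPMinMod{u}{v}$-module is exactly an $\OSPUniv{k}$-module on which $\mathcal{I}$ acts as $0$. As noted in \cref{subsec:aut}, $\conjaut$ and $\sfaut$ extend to automorphisms of this vertex superalgebra; any such automorphism $\varphi$ sends the proper ideal $\mathcal{I}$ to a proper ideal, hence $\varphi(\mathcal{I}) \subseteq \mathcal{I}$ by maximality, and applying this to $\varphi^{-1}$ as well gives $\varphi(\mathcal{I}) = \mathcal{I}$. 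Consequently, for an $\OSPMinMod{u}{v}$-module $\Mod{M}$ and any $a \in \mathcal{I}$, the field through which $a$ acts on $\func{\varphi}{\Mod{M}}$ is the field through which $\varphi^{-1}(a) \in \mathcal{I}$ acts on $\Mod{M}$, which vanishes; so $\mathcal{I}$ acts as $0$ on $\func{\varphi}{\Mod{M}}$ and, by part~(i), this twist is an $\OSPMinMod{u}{v}$-module. I do not expect any genuine obstacle: the whole argument rests on facts already assembled in \cref{subsec:aut}, the one point worth keeping in mind being that ``module'' here imposes no constraint on the conformal weights, so that the failure of spectral flow to preserve the Sugawara conformal structure \eqref{eq:Sugawara} is not an issue --- it merely shifts the conformal gradings of the twisted module.
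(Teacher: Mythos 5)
Your proposal is correct and follows essentially the same route as the paper, which proves this proposition in the two paragraphs immediately preceding it: the automorphisms extend to the vertex superalgebra (so twisting preserves the module property, with only the conformal grading shifted), and for the minimal model one invokes preservation of the maximal proper ideal. Your write-up merely makes those remarks explicit (unwinding the footnote's characterisation of modules as smooth $\ZZ_2$-graded level-$k$ $\myaosp$-modules and checking smoothness via the bounded index shifts), which is a faithful elaboration rather than a different argument.
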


\section{The affine minimal model $\OSPMinMod{2}{4}$: Modules} \label{sec:minmod}

\Cref{prop:AutVOAMod} is noteworthy because the spectral flows of a given positive-energy module $\Mod{M}$ will not (usually) be positive-energy for almost all $n$.  It follows that the appropriate category of $\OSPMinMod{u}{v}$-modules for constructing a consistent \cft{} is not likely to be a subcategory of the category of positive-energy weight modules.  Consequently, much of the representation theory of these \vosas{} cannot be detected directly by Zhu's algebra.  Nevertheless, it appears \cite{CreMod12,CreLog13,CreMod13,RidBos14} that combining positive-energy classifications with spectral flow does lead to module categories that satisfy key consistency requirements, modular invariance for example.

We shall therefore proceed with the classification of simple positive-energy $\OSPMinMod{u}{v}$-modules using Zhu technology, specialising to $k=-\frac{5}{4}$ ($u=2$, $v=4$).  The central charge of this minimal model is $c=-5$.  We begin by reviewing the theory of (twisted) Zhu algebras, emphasising their realisations in terms of zero modes of \vosa{} elements.

\subsection{Zhu's algebra} \label{subsec:zhualg}

The associative algebras now known as Zhu algebras \cite{ZhuMod96} form an invaluable formalism for classifying positive-energy modules over \vosas{}. Essentially, the Zhu algebra of a \vosa{} is the associative unital algebra of zero modes (of all fields) restricted to only act on ground states, these being vectors that are annihilated by all positive field modes (we will make this precise below).  The relaxed \hwvs{} of \cref{subsec:verma} are salient examples of ground states. Zhu's formalism then guarantees that any module \(\overline{\Mod{M}}\) over Zhu's algebra can be induced to a \vosa{} module \(\Mod{M}\), containing \(\overline{\Mod{M}}\) in its space of ground states. Further, if \(\overline{\Mod{M}}\) is simple, then the space of ground states of the unique simple quotient of \(\Mod{M}\) is precisely \(\overline{\Mod{M}}\). So simple positive-energy modules are classified by first classifying simple modules over Zhu's algebra and then taking the simple quotients of their inductions.

The Zhu algebra $\zhu{\VOA{V}}$ for untwisted modules over a \vosa{} $\VOA{V}$ was first studied in \cite{Kacn1z94}, while the Zhu algebra $\tzhu{\VOA{V}}$ for modules twisted by a finite order automorphism $\parity$ was introduced in \cite{DonTwi98}. In the case at hand, the untwisted modules form the \ns{} sector and the Ramond sector corresponds to modules twisted by the parity automorphism.  We refer to \cite[App.~A]{BloSVir16} for an introduction to Zhu's algebras for general \vosas{}, in both the twisted and untwisted cases.  This introduction emphasises the fact that Zhu's algebra is nothing but an abstraction of the algebra of zero modes acting on ground states.

With this fact in mind, the most straightforward way to define the untwisted Zhu algebra for affine \vosas{} is as follows.  Let $\EnvAlgk{\myaosp}$ denote the quotient of $\EnvAlg{\myaosp}$ by the ideal generated by $K - k \wun$ and let $\EnvAlgk{\myaosp}_0$ be its conformal weight zero subalgebra (the centraliser of $L_0$ in $\EnvAlgk{\myaosp}$).  Then, there is a projection $\pi_0 \colon \EnvAlgk{\myaosp}_0 \to \EnvAlg{\myosp}$ whose kernel is spanned by the \pbw{} basis elements, ordered by increasing mode index, that involve modes with non-zero indices (we identify zero modes with elements of $\myosp$).  The untwisted Zhu algebra $\zhu{\OSPUniv{k}}$ is then the image of the map
\begin{equation} \label{eq:DefZhuMap}
	v \in \OSPUniv{k} \longmapsto [v] = \pi_0(v_0),
\end{equation}
where $v_0$ is the zero mode of the field\footnote{For $v$ of definite conformal weight $\Delta_v$, we assume a mode expansion for the corresponding field of the form $v(z) = \sum_n v_n z^{-n-\Delta_v}$.} corresponding to $v$ (and $\pi_0$ has been extended to an appropriate completion of $\EnvAlgk{\myaosp}_0$).  Zhu's associative product $\zstar$ is then given by
\begin{equation} \label{eq:defstar}
	[u] \zstar [v] = \pi_0(u_0 v_0).
\end{equation}
The equivalence of this definition with Zhu's original one, in the case of affine \vosas{}, follows from noting that $\pi_0$ merely implements the constraint that the zero modes act on ground states.  We refer to \cite[App.~A]{BloSVir16} for further information.

The twisted Zhu algebra $\tzhu{\OSPUniv{k}}$ and its associative product are defined in almost precisely the same manner using the same formulae.  We shall distinguish the twisted image of an element $v$ from its untwisted cousin $[v]$ by a superscript:  $[v]^\parity \in \tzhu{\OSPUniv{k}}$.  The main difference is that we must restrict the defining map to the bosonic orbifold of $\OSPUniv{k}$ (the subalgebra of even elements) because the odd elements have no zero mode when acting in the Ramond sector.

The following \lcnamecref{prop:UnivZhu} was proved in \cite{FreVer92} for affine Kac-Moody algebras and in \cite{Kacn1z94} for the untwisted Zhu algebras of affine Kac-Moody superalgebras.  We are not aware of a source that proves the twisted Zhu algebra result in the latter case though it is surely very well known.
\begin{prop} \label{prop:UnivZhu}
  The untwisted and twisted Zhu algebras of $\OSPUniv{k}$ are
  \begin{equation}
		\zhu{\OSPUniv{k}} = \EnvAlg{\SLSA{osp}{1}{2}}, \quad
		\tzhu{\OSPUniv{k}} = \EnvAlg{\SLA{sl}{2}}.
  \end{equation}
\end{prop}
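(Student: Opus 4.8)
The plan is to run the Frenkel--Zhu argument in each case: build a surjective algebra morphism onto the Zhu algebra from the appropriate enveloping algebra, then prove injectivity by realising \emph{every} module of that enveloping algebra on the ground states of a suitable smooth $\OSPUniv{k}$-module. The untwisted statement is essentially the classical theorem of \cite{FreVer92}, extended to superalgebras in \cite{Kacn1z94}, so I would only recall its proof in outline; the twisted statement is the substantive part and runs the same argument through the parity automorphism $\parity$ and the Ramond zero-mode algebra.

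\emph{Untwisted case.} Since $\OSPUniv{k}$ is strongly generated by the weight-one fields $h(z),e(z),f(z),x(z),y(z)$, the structure theory of Zhu algebras (\cite[App.~A]{BloSVir16}, \cite{Kacn1z94}) shows that $\zhu{\OSPUniv{k}}$ is generated as an algebra under $\zstar$ by the classes $[j_{-1}\vac{k}]$, $j \in \set{h,e,f,x,y}$, and by \eqref{eq:DefZhuMap} these are the images of the corresponding generators of $\EnvAlg{\myosp}$. Because the (anti)commutator of two grade-zero modes carries no central correction --- the term $m \killing{j}{j'} \delta_{m+n,0} K$ of \eqref{eqn:affcomm} vanishes at $m=n=0$ --- these classes satisfy, under $\zstar$, the defining (anti)commutation relations \eqref{eqn:slcomm} and \eqref{eqn:ospcomm} of $\myosp$, so there is a surjection $\EnvAlg{\myosp} \sra \zhu{\OSPUniv{k}}$. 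For injectivity, let $\Mod{M}$ be any $\EnvAlg{\myosp}$-module on which the Casimir $Q'$ of \eqref{eqn:casimir1} acts as a scalar; Verma modules over $\myosp$ are examples, and they have trivial common annihilator in $\EnvAlg{\myosp}$. The level-$k$ \ns{} generalised Verma module induced from $\Mod{M}$ is a smooth level-$k$ $\aalgg_{\NS}$-module, hence an $\OSPUniv{k}$-module by the footnote in \cref{subsec:aut}, and its minimal-conformal-weight subspace is $\Mod{M}$ itself, since the Sugawara $L_0$ of \eqref{eq:Sugawara} restricts there to $\tfrac{1}{2t}Q'$. Zhu's correspondence then makes $\Mod{M}$ a $\zhu{\OSPUniv{k}}$-module on which $[j_{-1}\vac{k}]$ acts by the zero mode $j_0$, i.e. via the original $\EnvAlg{\myosp}$-action; so the latter factors through $\EnvAlg{\myosp} \sra \zhu{\OSPUniv{k}}$, and any kernel element annihilates every such $\Mod{M}$ and therefore vanishes.

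\emph{Twisted case.} The same template applies with $\myosp$ replaced by the Ramond horizontal subalgebra $\sltwo = \vspn\set{h_0,e_0,f_0}$ and with $\OSPUniv{k}$ replaced by its bosonic orbifold, which is strongly generated by $h(z),e(z),f(z)$ together with finitely many normally ordered products of the odd fields, such as $\normord{x(z)y(z)}$. The classes $[h_{-1}\vac{k}]^\parity$, $[e_{-1}\vac{k}]^\parity$, $[f_{-1}\vac{k}]^\parity$ are the generators of $\EnvAlg{\sltwo}$, while evaluating the generalised commutation relations \eqref{eqn:fermgcr} on ground states --- with $a = \tfrac12$ and $b = -\tfrac12$, so that both mode sums on the left vanish --- forces $\normord{xy}_0$, $\normord{yx}_0$, and every other odd-bilinear zero mode to act there as a degree-$\le 1$ element of $\EnvAlg{\sltwo}$. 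Hence the image of the twisted Zhu map lies in $\EnvAlg{\sltwo}$ and we again obtain a surjection $\EnvAlg{\sltwo} \sra \tzhu{\OSPUniv{k}}$. Injectivity follows exactly as before, using Ramond generalised Verma modules induced from $\sltwo$-modules $\Mod{M}$ on which the $\sltwo$-Casimir $Q$ acts as a scalar: these are smooth $\parity$-twisted $\OSPUniv{k}$-modules whose minimal-conformal-weight subspace equals $\Mod{M}$ --- the Sugawara $L_0$ restricting there to $\tfrac{1}{2t}\brac*{Q - \tfrac{k}{4}}$, in accordance with \eqref{eq:RelConfWts} --- and the twisted Zhu correspondence of \cite{DonTwi98} shows the $\EnvAlg{\sltwo}$-action factors through $\EnvAlg{\sltwo} \sra \tzhu{\OSPUniv{k}}$.

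The step I expect to demand the most care is twisted surjectivity: one must verify that \emph{all} strong generators of the bosonic orbifold --- not just $\normord{xy}$ --- have twisted Zhu images inside $\EnvAlg{\sltwo}$, so that the morphism lands exactly there and not in a larger algebra, and one must confirm that the generalised Verma modules used for injectivity really do have the asserted ground-state spaces, so that Zhu's (twisted) correspondence applies without modification. Both points reduce to bookkeeping with the generalised commutation relations \eqref{eqn:fermgcr} and the Sugawara formula \eqref{eq:Sugawara}, rather than to any conceptual obstacle.
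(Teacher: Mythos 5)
Your proposal runs the classical Frenkel--Zhu argument: build a surjection $\EnvAlg{\myosp} \sra \zhu{\OSPUniv{k}}$ from the relations among the classes $[j_{-1}\vac{k}]$, then prove injectivity by realising enough $\EnvAlg{\myosp}$-modules as ground states of generalised Verma modules and invoking the triviality of their common annihilator. This is a genuinely different route from the paper's. The paper works with its concrete definition \eqref{eq:DefZhuMap} of the Zhu algebra as the \emph{image} of the zero-mode map $v \mapsto \pi_0(v_0)$ inside $\EnvAlg{\myosp}$, so the containment $\zhu{\OSPUniv{k}} \subseteq \EnvAlg{\myosp}$ holds by construction and no injectivity (hence no annihilator) argument is needed at all; the only content is surjectivity --- every PBW monomial $j^1 \cdots j^n$ is, up to sign, $[j^n_{-1} \cdots j^1_{-1} \vac{k}]$ --- together with the observation that $\zstar$ reproduces the (anti)commutators because $\pi_0(\comm{j^1_0}{j^2_0}) = \comm{j^1}{j^2}$. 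Your approach buys independence from the particular model of the Zhu algebra (it works from Zhu's abstract definition), at the cost of needing the separation-of-points fact for Verma modules, which you assert without proof.

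The one place your plan does not close as stated is the twisted containment $\tzhu{\OSPUniv{k}} \subseteq \EnvAlg{\sltwo}$. You reduce it to two claims: that the bosonic orbifold is strongly generated by $h,e,f$ and finitely many odd bilinears, and that the twisted Zhu algebra is generated by the images of those strong generators. Neither is established, and the first is a nontrivial orbifold statement that is not ``bookkeeping''; your evaluation of $\normord{xy}_0$ and $\normord{yx}_0$ on Ramond ground states via \eqref{eqn:fermgcr} is correct (it is exactly the computation the paper uses in \cref{prop:SimpZhu'}), but it covers only two of the would-be generators. The paper sidesteps strong generation entirely: it takes an \emph{arbitrary} even PBW monomial, uses the twisted Zhu identity \eqref{eq:tzhuidsb} to raise the index of an odd mode until it is non-negative, and then applies $j^1_r j^2_{-n-1}\vac{k} = \acomm{j^1_r}{j^2_{-n-1}}\vac{k}$ to replace each pair of odd modes by an even one. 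If you adopt that reordering step in place of your appeal to strong generators, your argument goes through.
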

\begin{proof}
	By construction, $\zhu{\OSPUniv{k}}$ lies in $\EnvAlg{\myosp}$.  However, it is easy to check that any monomial $j^1 \cdots j^n \in \EnvAlg{\myosp}$ can be realised, up to a sign coming from parities, as the image in $\zhu{\OSPUniv{k}}$ of an element of $\OSPUniv{k}$, namely $j^n_{-1} \cdots j^1_{-1} \vac{k}$ (here, $\vac{k}$ denotes the vacuum vector of $\OSPUniv{k}$).  This demonstrates equality as vector spaces.  To show equality as associative algebras, we only need show that the Zhu elements $j \equiv j_0 = \pi_0(j_0) = [j_{-1} \vac{k}]$, $j=h,e,f,x,y$, satisfy the same commutation rules with respect to $\zstar$ as they do in $\EnvAlg{\myosp}$.  This is, of course, exactly how $\zstar$ is defined:
	\begin{align}
		j^1 \zstar j^2 - (-1)^{\parityof{j^1} \parityof{j^2}} j^2 \zstar j^1 &= \pi_0(j^1_0 j^2_0) - (-1)^{\parityof{j^1} \parityof{j^2}} \pi_0(j^2_0 j^1_0)
		= \pi_0(\comm{j^1_0}{j^2_0}) = \pi_0(\comm{j^1}{j^2}_0) = \comm{j^1}{j^2}.
	\end{align}
	We recall that $\parityof{j} \in \set{0,1}$ denotes the parity of $j$.

	The argument identifying $\tzhu{\OSPUniv{k}}$ is practically identical once one has shown that it lies in $\EnvAlg{\sltwo}$.  All that we can conclude at present is that $\tzhu{\OSPUniv{k}}$ lies in the even subalgebra of $\EnvAlg{\myosp}$ (which strictly contains $\EnvAlg{\sltwo}$).  To show that it indeed lies in $\EnvAlg{\sltwo}$, consider an even element $u \in \OSPUniv{k}$.  Choosing a \pbw{} ordering in which odd modes appear to the right of even modes (and are then ordered by increasing index), $u$ is expressed as a linear combination of monomials $\cdots j^1_{-m-1} j^2_{-n-1} \vac{k}$, where either the modes appearing are all even or both $j^1_{-m-1}$ and $j^2_{-n-1}$ are odd with $m,n \in \ZZ_{\ge 0}$.

	Consider the image of each such monomial in the twisted Zhu algebra.  If $j^1_{-m-1}$ and $j^2_{-n-1}$ are both odd, then we apply the following identity (obtained by applying $\pi_0$ to \cite[Eq.~(A.2)]{BloSVir16} with $k=m+1$, $n=m-\frac{1}{2}$):
	\begin{equation}\label{eq:tzhuidsb}
		\sqbrac*{\cdots j^1_{-m-1} j^2_{-n-1} \vac{k}}^\parity = -\sum_{\ell=0}^{\infty} \binom{1/2}{\ell+1} \sqbrac*{\cdots j^1_{-m+\ell} j^2_{-n-1} \vac{k}}^\parity, \quad m \in \ZZ_{\ge 0}.
	\end{equation}
	Inductively, the index $r$ of $j^1$ can then be made non-negative at which point we note that
	\begin{equation}
		j^1_r j^2_{-n-1} \vac{k} = \acomm{j^1_r}{j^2_{-n-1}} \vac{k}, \quad r \ge 0,
	\end{equation}
	replaces the two odd modes by an even one (and perhaps a constant).  We conclude that a monomial with a pair of odd modes is equivalent, in $\tzhu{\OSPUniv{k}}$, to a linear combination of monomials in which these odd modes are replaced by an even mode.  By performing this replacement for all odd modes, it follows as before that $\tzhu{\OSPUniv{k}}$ lies in $\EnvAlg{\sltwo}$.  The rest of the argument is identical to the untwisted case.
\end{proof}

For a given admissible level \(k\), recall that \(\chi_{k}\) denotes the \sv{} that generates the (unique) maximal proper ideal of \(\OSPUniv{k}\) by which one quotients in order to obtain the minimal model \vosa{} \(\OSPMinMod{u}{v}\). By the Kac-Kazhdan formula of
\cref{thm:kkdet}, one can determine that the conformal weight and
\(\SLSA{osp}{1}{2}\)-weight of \(\chi_k\) are \(\frac{1}{2}(u-1)v\) and
\((u-1)\alpha\), respectively.
Define \(\phi_k=y_0^{u-1}\chi_k\) and note that this descendant of $\chi_k$ has \(\SLSA{osp}{1}{2}\)-weight $0$.
\begin{prop} \label{prop:SimpZhu}
  The untwisted and twisted Zhu algebras of $\OSPMinMod{u}{v}$ are
  \begin{equation}
  \zhu{\OSPMinMod{u}{v}}\cong \frac{\EnvAlg{\SLSA{osp}{1}{2}}}{\ideal{\left[\phi_k\right]}}, \quad
  \tzhu{\OSPMinMod{u}{v}}\cong \frac{\EnvAlg{\SLA{sl}{2}}}{\ideal{\left[\phi_k\right]^\parity}},
  \end{equation}
  where \(\ideal{\left[\phi_k\right]}\) and \(\ideal{\left[\phi_k\right]^\parity}\) are the two-sided ideals generated by the images of \(\phi_k\) in \(\zhu{\OSPUniv{k}}\cong \EnvAlg{\SLSA{osp}{1}{2}}\) and \(\tzhu{\OSPUniv{k}}\cong \EnvAlg{\SLA{sl}{2}}\), respectively.
\end{prop}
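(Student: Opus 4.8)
The plan is to realise \(\zhu{\OSPMinMod{u}{v}}\) and \(\tzhu{\OSPMinMod{u}{v}}\) as quotients of the corresponding Zhu algebras of \(\OSPUniv{k}\), using the fact that forming Zhu algebras is compatible with passing to quotient vertex superalgebras. Writing \(\Mod{I}\) for the maximal proper ideal of the vertex superalgebra \(\OSPUniv{k}\), so that \(\OSPMinMod{u}{v} = \OSPUniv{k}/\Mod{I}\), the general theory (see \cite{ZhuMod96,DonTwi98} and \cite[App.~A]{BloSVir16}) gives \(\zhu{\OSPMinMod{u}{v}} \cong \zhu{\OSPUniv{k}}/[\Mod{I}]\) and, restricting to the bosonic orbifold, \(\tzhu{\OSPMinMod{u}{v}} \cong \tzhu{\OSPUniv{k}}/[\Mod{I}^\even]\), where \([\Mod{I}]\) and \([\Mod{I}^\even]\) denote the (automatically two-sided) images of \(\Mod{I}\) and of its even part \(\Mod{I}^\even\) under the respective Zhu maps \eqref{eq:DefZhuMap}. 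Combined with \cref{prop:UnivZhu}, the proposition is therefore equivalent to the two identities \([\Mod{I}] = \ideal{[\phi_k]}\) inside \(\EnvAlg{\myosp}\) and \([\Mod{I}^\even] = \ideal{[\phi_k]^\parity}\) inside \(\EnvAlg{\sltwo}\).

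I would first pin down the structure of \(\Mod{I}\). As \(\chi_k\) is singular, skew-symmetry of the vertex operators forces \(\Mod{I}\) to coincide with the \(\myaosp\)-submodule \(\EnvAlg{\aalgg}\chi_k\) of \(\OSPUniv{k}\); writing \(\aalgg = \aalgn \oplus \aalgz \oplus \aalgp\) and using \(\aalgp\chi_k = 0\), this submodule is spanned by the vectors \(w\,y_0^n\chi_k\) with \(w \in \EnvAlg{\aalgn}\) and \(n \ge 0\), since \(e_0\chi_k = x_0\chi_k = 0\) and \(f_0 = -y_0^2\) make the zero-mode submodule generated by the \(\myosp\)-highest-weight vector \(\chi_k\) of weight \((u-1)\alpha\) equal to \(\vspn\set{y_0^n\chi_k \st n \ge 0}\). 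Since \(\phi_k = y_0^{u-1}\chi_k \in \Mod{I}\) and \([\Mod{I}]\) is a two-sided ideal, the inclusion \(\ideal{[\phi_k]} \subseteq [\Mod{I}]\) is immediate; the content lies in the reverse inclusion.

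For the reverse inclusion the plan is a two-part reduction. First, iterating the reduction identities that express the Zhu image of a state carrying non-zero-index modes in terms of images of states with fewer or lower-index modes (coming from \eqref{eq:defstar} in the untwisted case, and from \eqref{eq:tzhuidsb} and its companions in the twisted case) brings all non-zero-index modes occurring in \(w\) to non-negative index, at the cost of correction terms involving fewer modes; this shows that \([w\,y_0^n\chi_k]\) lies in the two-sided ideal generated by \(\set{[y_0^m\chi_k] \st m \ge 0}\). Second, each \([y_0^m\chi_k]\) lies in \(\ideal{[\phi_k]}\): one passes between \([\phi_k] = [y_0^{u-1}\chi_k]\) and the remaining generators by \(\zstar\)-multiplying with \([x_0]\) and \([y_0]\) and invoking the \(\myosp\)-identities \(x_0 y_0^m\chi_k = c_m\, y_0^{m-1}\chi_k\), where \(c_m \neq 0\) for \(1 \le m \le u-1\), these being routine consequences of the \(\myosp\)-module structure recalled in \cref{sec:finosp} (the point being that \(y_0^{u-1}\chi_k\) is not a singular vector of the zero-mode module, as \(u-1 < 2(u-1)+1\)). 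Combining the two parts gives \([\Mod{I}] = \ideal{[\phi_k]}\).

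The twisted case runs in parallel once everything is restricted to the even subspace \((\OSPUniv{k})^\even\) and to \(\Mod{I}^\even\), and it is here that the choice of \(\phi_k\) rather than \(\chi_k\) as generator becomes essential: a state of \(\myosp\)-weight \(m\alpha\) in \(\OSPUniv{k}\) has parity equal to that of \(m\), so \(\chi_k\) is odd whenever \(u\) is even and then has no twisted Zhu image at all, whereas \(\phi_k = y_0^{u-1}\chi_k\) always has \(\myosp\)-weight \(0\), hence is even with vanishing \(\sltwo\)-weight, so \([\phi_k]^\parity \in \EnvAlg{\sltwo}\) is well defined. In the even subspace the zero-mode sector contributes precisely the even members \(y_0^{u-1+2m}\chi_k\), all reachable from \(\phi_k\) by \(\zstar\)-multiplication with elements of \(\EnvAlg{\sltwo}\), and the two-part reduction then yields \([\Mod{I}^\even] = \ideal{[\phi_k]^\parity}\). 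I expect the main obstacle to be the first part of that reduction: verifying carefully, with the correct binomial coefficients and especially in the twisted sector where \eqref{eq:tzhuidsb} must be applied repeatedly, that the Zhu images of arbitrary descendants of \(\chi_k\) really are governed by the zero-mode sector. Once that bookkeeping is in hand, the remainder is elementary \(\myosp\)- and \(\sltwo\)-representation theory.
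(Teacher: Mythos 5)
Your proposal is correct and follows essentially the same route as the paper's proof: both rest on the Zhu-algebra identities for images of non-positive-mode descendants (the paper's \eqref{eq:zhuids} and, in the twisted case, \eqref{eq:tzhuidsb}) together with the observation that $\chi_k$ and $\phi_k$ generate the same zero-mode $\myosp$-module, so that one can pass between them (and between the even zero-mode descendants and $[\phi_k]^\parity$ in the Ramond case) by applying $x_0$ and $y_0$. The only difference is organisational --- you first reduce to the zero-mode sector $\set{[y_0^m\chi_k]}$ and then climb back to $[\phi_k]$, whereas the paper first shows every element of the maximal ideal is a non-positive-mode descendant of $\phi_k$ and then does a single inductive reduction --- which does not change the substance of the argument.
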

\begin{proof}
	Note that while \(\phi_k\) is a zero mode descendant of \(\chi_k\), the
  converse is also true: \(\chi_k\) is a zero mode descendant of
  \(\phi_k\). This follows either by explicitly evaluating \(x_0^{u-1}\phi_k\)
  or by noting that finite-dimensional \(\myosp\) modules are semisimple and
  that the space of vectors in \(\OSPUniv{k}\) of conformal weight
  \(\frac{1}{2}(u-1)v\) is finite-dimensional. Thus, every vector in the maximal
  proper ideal of \(\OSPUniv{k}\) is a descendant of \(\phi_k\) by
  non-positive modes.

  We have to show that the image of the maximal proper ideal in Zhu's algebra is generated by the image of $\phi_k$.  To this end, let \(j\) and $v$ be homogeneous vectors in \(\OSPUniv{k}\) and suppose that $j$ has conformal weight $1$.  In the \ns{} sector, the corresponding fields \(j(z)\) and \(v(z)\) will have zero modes, regardless of their parities, and we have the following identities in \(\zhu{\OSPUniv{k}}\):
	\begin{equation}\label{eq:zhuids}
		[j_0 v] = [j] \zstar [v] -(-1)^{\parityof{j} \parityof{v}} [v] \zstar [j], \qquad
		[j_{-m-1} v] = (-1)^m (-1)^{\parityof{j} \parityof{v}} [v] \zstar [j], \quad m \ge 0.
	\end{equation}
	The first follows from $\comm{j_0}{v_0} = (j_0 v)_0$, while the second follows from $(j_{-m-1}v)(z) = \frac{1}{m!} \normord{\pd^m j(w) v(w)}$.  It follows inductively that if the image of $v$ in $\zhu{\OSPUniv{k}}$ belongs to the image of the maximal proper ideal, then so do the images of all the descendants of $v$ by non-positive modes.  Taking $v = \phi_k$ establishes the untwisted result.

	In the Ramond sector, \(j(z)\) and \(v(z)\) will only have zero modes if they have even parity. In this case, the identities \eqref{eq:zhuids} also hold in \(\tzhu{\OSPUniv{k}}\) if we replace $[\cdot]$ by $[\cdot]^\parity$.  As above, we conclude that the images of all the descendants of $\phi_k$ by non-positive \emph{even} modes belong to the ideal generated by $[\phi_k]^\parity$ in $\tzhu{\OSPUniv{k}}$.  However, as in the proof of \cref{prop:UnivZhu}, even non-positive mode descendants of $\phi_k$ have images that are equivalent to a linear combination of non-positive mode descendants in which all odd modes are zero modes. Since the action of the zero modes on $\phi_k$ generates a simple weight $\myosp$-module whose even subspace is a simple weight $\sltwo$-module, by \cref{thm:ospclass}, each pair of odd zero modes may be replaced by an even zero mode.  We conclude that the image of every even non-positive mode descendant of $\phi_k$ is in the ideal of $\EnvAlg{\SLA{sl}{2}}$ generated by $[\phi_k]^\parity$, as required.
\end{proof}

\begin{remark}
	We mention that in the proof of \cref{prop:UnivZhu}, we could remove the odd zero modes along with the positive modes because they annihilate $\vac{k}$.  In the proof of \cref{prop:SimpZhu} above, $\phi_k$ need not be annihilated by the odd zero modes, hence they are not removed.
\end{remark}

\begin{remark}
	This proposition remains true for $\zhu{\OSPMinMod{u}{v}}$ if we replace $\phi_k$ by $\chi_k$ throughout. The proof follows along the same lines with minor adjustments. For $\tzhu{\OSPMinMod{u}{v}}$, we can make this replacement if $\chi_k$ is even.
\end{remark}

The simple positive-energy \ns{} modules of \(\OSPMinMod{u}{v}\) are the simple quotients of the inductions of the simple
\(\zhu{\OSPMinMod{u}{v}}\)-modules, while the simple positive-energy
Ramond modules of \(\OSPMinMod{u}{v}\) are the simple quotients
of the inductions of the simple \(\tzhu{\OSPMinMod{u}{v}}\)-modules.
Moreover, the simple \(\zhu{\OSPMinMod{u}{v}}\)-modules are precisely those of \cref{thm:ospclass}
on which \(\left[\phi_k\right]\) acts trivially, while the simple
\(\tzhu{\OSPMinMod{u}{v}}\)-modules are those of \cref{thm:slclass}
on which \(\left[\phi_k\right]^\parity\) acts trivially.

The level of interest here is \(k=-\frac{5}{4}\), thus \(u=2\) and \(v=4\) in the setup of \eqref{eq:DefUV},
which implies that the singular vector has conformal weight $2$ and
\(\SLSA{osp}{1}{2}\)-weight \(\alpha\). One choice of normalisation of this vector is
\begin{equation}
  \chi_{-5/4}=\brac*{x_{-2}-4y_{-1}e_{-1}+2h_{-1}x_{-1}}\Omega_{-5/4}.
\end{equation}
Then,
\begin{equation}
  \phi_{-5/4}=y_0\chi_{-5/4}=
  \brac*{h_{-2}+2h^2_{-1}+8f_{-1}e_{-1}+6y_{-1}x_{-1}}\Omega_{-5/4}.
\end{equation}

\begin{prop} \label{prop:SimpZhu'}
  The images of \(\phi_{-5/4}\) in the untwisted and twisted Zhu
  algebras are
  \begin{equation}
		[\phi_{-5/4}]=\Sigma\brac*{2\Sigma-1}, \quad
    [\phi_{-5/4}]^\parity=4Q+\frac{15}{8},
  \end{equation}
  where $\Sigma$ and $Q$ are the super-Casimir of \(\myosp\) and the Casimir of \(\sltwo\), respectively.
\end{prop}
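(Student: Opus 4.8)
The plan is to evaluate the image of $\phi_{-5/4} = (h_{-2} + 2h_{-1}^2 + 8f_{-1}e_{-1} + 6y_{-1}x_{-1})\Omega_{-5/4}$ under the (twisted) Zhu map one monomial at a time, using the identities \eqref{eq:zhuids} established in the proof of \cref{prop:SimpZhu} together with the identification of zero modes with elements of $\myosp$ (respectively $\sltwo$). For the three purely bosonic monomials the two sectors behave identically, since $h$, $e$ and $f$ have integer-indexed modes throughout: the second identity in \eqref{eq:zhuids} gives $[h_{-2}\Omega_{-5/4}] = -[h] = -h$ and $[h_{-1}^2\Omega_{-5/4}] = [h]\zstar[h] = h^2$, while $[f_{-1}e_{-1}\Omega_{-5/4}] = [e]\zstar[f] = ef$ (with no parity sign, as $e$ and $f$ are even). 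Thus these terms contribute $-h + 2h^2 + 8ef$ to both $[\phi_{-5/4}]$ and $[\phi_{-5/4}]^\parity$, and the same computation applies verbatim in $\tzhu{\OSPUniv{-5/4}}$.

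The term demanding care --- and the step I expect to be the main obstacle --- is $6y_{-1}x_{-1}\Omega_{-5/4}$, which behaves differently in the two sectors. In the untwisted case $x(z)$ has a zero mode, so \eqref{eq:zhuids} applies directly and, remembering the sign $(-1)^{\parityof{x}\parityof{y}} = -1$ coming from the odd parities of $x$ and $y$, one gets $[y_{-1}x_{-1}\Omega_{-5/4}] = -[x_{-1}\Omega_{-5/4}]\zstar[y] = -xy$, so this term contributes $-6xy$. In the twisted (Ramond) sector $x(z)$ has only half-integer modes and no zero mode, so \eqref{eq:zhuids} is unavailable; instead I would use the generalised commutation relation \eqref{eqn:fermgcr} (equivalently, the twisted Zhu identity \eqref{eq:tzhuidsb}) with $a = \tfrac12$, $b = -\tfrac12$ and then apply $\pi_0$. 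Every summand on the left-hand side of \eqref{eqn:fermgcr} then involves genuine (half-integer, hence non-zero) modes and is killed by $\pi_0$, so the zero mode $\normord{yx}_0$ of the field for $y_{-1}x_{-1}\Omega_{-5/4}$ collapses to its central and $h_0$ contributions, yielding $[y_{-1}x_{-1}\Omega_{-5/4}]^\parity = -\tfrac14 k - \tfrac12 h = \tfrac{5}{16} - \tfrac12 h$ at $k = -\tfrac54$, so that this term contributes $-3h + \tfrac{15}{8}$. The delicate points here are tracking exactly which modes survive $\pi_0$ and, above all, the $K$-proportional term, which is precisely what produces the constant $\tfrac{15}{8}$.

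It then remains to recognise the resulting expressions. For the untwisted algebra, $[\phi_{-5/4}] = -h + 2h^2 + 8ef - 6xy$, and I would eliminate $ef$ and $xy$ in favour of $h$ and $\Sigma$ using the $\EnvAlg{\myosp}$-relations recorded in \cref{subsec:strthy}: $xy + yx = h$, $yx = \tfrac12(h - \Sigma + \tfrac12)$, $ef - fe = h$ and $Q' = \tfrac12 h^2 + ef + fe - \tfrac12 xy + \tfrac12 yx = \tfrac12\Sigma^2 - \tfrac18$. On substituting, all dependence on $h$ and $h^2$ cancels and one is left with $2\Sigma^2 - \Sigma = \Sigma(2\Sigma - 1)$. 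For the twisted algebra, $[\phi_{-5/4}]^\parity = 2h^2 + 8ef - 4h + \tfrac{15}{8}$; using $ef - fe = h$ one has $Q = \tfrac12 h^2 + ef + fe = \tfrac12 h^2 + 2ef - h$, hence $4Q = 2h^2 + 8ef - 4h$ and therefore $[\phi_{-5/4}]^\parity = 4Q + \tfrac{15}{8}$, as claimed. Apart from the twisted fermionic term, everything here is routine bookkeeping of signs and modes.
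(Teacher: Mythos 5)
Your proposal is correct and follows essentially the same route as the paper: both reduce to evaluating the zero mode of $\phi_{-5/4}$ on ground states, with the only delicate point being the fermionic term in the Ramond sector, which you (like the paper) handle via the generalised commutation relation \eqref{eqn:fermgcr} to extract the constant $\tfrac{1}{4}k=-\tfrac{5}{16}$ and the $-\tfrac{1}{2}h_0$ contribution. The concluding rewriting in terms of $\Sigma$ and $Q$ matches the paper's computation exactly.
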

\begin{proof}
  The proof is very similar to proofs in \cite{TsuExt13,RidJac14,RidRel15,BloSVir16},
  so we will only briefly outline the reasoning.
  Since the \(\myosp\)-weight of \(\phi_{-5/4}\) is 0, its images in
  the Zhu algebras will also have weight 0 and hence these images lie in the centralisers of the
  Cartan subalgebras \(\cosp\) and \(\csl\). Further, since these images are
  uniquely determined by their action on weight spaces, all that remains to complete
  this proof is the computation of this action.

  The field corresponding to \(\phi_{-5/4}\) is
  \begin{equation}
    \phi(z) = \pd h(z) + 2 \normord{h(z) h(z)} + 8 \normord{f(z) e(z)} + 6 \normord{y(z) x(z)}
  \end{equation}
  and thus its zero mode is
  \begin{equation}
    \phi_0 = - h_0 + 2 \normord{h h}_0 + 8 \normord{f e}_0 + 6 \normord{y x}_0.
  \end{equation}

  Evaluating \(\phi_0\) on a \ns{} ground state vector \(u_{\lambda,s}\) of
  \(\SLSA{osp}{1}{2}\)-weight \(\lambda\alpha\) and \(\Sigma\)-eigenvalue
  \(s\) produces
  \begin{align}
    \phi_0u_{\lambda,s}
    &=\left(- h_0 + 2 h_0^2 + 8 e_0f_0 - 6 x_0y_0\right)u_{\lambda,s}
    =s(2s-1)u_{\lambda,s},
  \end{align}
  where the normally ordered products were evaluated using \eqref{eqn:bosnormord}.
  Thus, the image of \(\phi_{-5/4}\) in \(\zhu{\OSPUniv{k}}\) is \(\Sigma(2\Sigma-1)\), as required.

  Similarly, evaluating \(\phi_0\) on a Ramond ground state \(v_{\lambda,q}\) of \(\SLA{sl}{2}\)-weight
  \(\lambda\alpha\) and \(Q\)-eigenvalue \(q\) produces
  \begin{align}
    \phi_0v_{\lambda,q}
    &=\left(- h_0 + 2 h_0^2 + 8 e_0f_0 + 6 \left(\frac{5}{16}-\frac{1}{2}h_0\right)\right)v_{\lambda,q}
    =\brac*{4q+\frac{15}{8}}v_{\lambda,q},
  \end{align}
  where the odd normally ordered product $\normord{yx}_0$ was evaluated using \eqref{eqn:fermgcr}.  This, in turn, implies that the image of \(\phi_{-5/4}\) in \(\tzhu{\OSPUniv{k}}\) is \(4Q+\frac{15}{8}\), completing the proof.
\end{proof}

\subsection{Classifying $\OSPMinMod{2}{4}$-modules} \label{subsec:simospmod}

Given our explicit identifications of the untwisted and twisted Zhu algebras \(\zhu{\OSPMinMod{2}{4}}\) and \(\tzhu{\OSPMinMod{2}{4}}\), we can now easily classify the simple relaxed \hw{} $\OSPMinMod{2}{4}$-modules in both the \ns{} and Ramond sectors.

\begin{thm} \label{thm:classB24NS}
  Any simple \ns{} relaxed \hw{} $\OSPMinMod{2}{4}$-module is isomorphic to
  one of the following mutually non-isomorphic modules:
  \begin{equation}
		\begin{alignedat}{3}
			&\NSFin{0}, & &\NSInf{-1/2}^+, & &\NSInf{1/2}^-, \\
			&\parrmod{\NSFin{0}}, &\quad &\parrmod{\NSInf{-1/2}^+}, &\quad &\parrmod{\NSInf{1/2}^-},
		\end{alignedat}
		\quad \NSRel{\Lambda, 0} \quad \text{(\(\Lambda \in \CC / 2 \ZZ \setminus \set{[\pm\tfrac{1}{2}]}\))}.
	\end{equation}
\end{thm}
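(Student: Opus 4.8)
The plan is to reduce the classification to a short case analysis over the list in \cref{thm:ospclass}. By the Zhu-theoretic discussion of \cref{subsec:zhualg} together with \cref{prop:SimpZhu,prop:SimpZhu'}, the simple \ns{} relaxed \hw{} $\OSPMinMod{2}{4}$-modules correspond, via passing to the space of ground states, to the simple weight $\myosp$-modules of \cref{thm:ospclass} on which $[\phi_{-5/4}] = \Sigma\brac*{2\Sigma-1}$ acts as zero; each such $\myosp$-module is then induced and the resulting $\OSPMinMod{2}{4}$-module replaced by its simple quotient. So I would first record this reduction and then determine, family by family, the parameter values for which $\Sigma\brac*{2\Sigma-1}$ annihilates the module.

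The computation rests on the fact, recalled in \cref{subsec:strthy}, that $\Sigma$ acts on a simple weight $\myosp$-module as the scalar $s$ on the even subspace and $-s$ on the odd subspace. Hence $\Sigma\brac*{2\Sigma-1}$ acts as $s\brac*{2s-1}$ on even vectors and as $\brac*{-s}\brac*{-2s-1} = s\brac*{2s+1}$ on odd vectors, so the module is a $\zhu{\OSPMinMod{2}{4}}$-module precisely when $s\brac*{2s-1}=0$ whenever the even subspace is non-zero and $s\brac*{2s+1}=0$ whenever the odd subspace is non-zero. Running through the list: for $\Fin{\lambda}$ one has $s=\lambda+\tfrac12$, so $s\brac*{2s-1}=\lambda\brac*{2\lambda+1}$ forces $\lambda=0$ (and $\parrmod{\Fin{0}}$, being all odd with $\Sigma$-eigenvalue $\tfrac12$, is likewise annihilated); for $\Inf{\lambda}^+$ again $s=\lambda+\tfrac12$ and the two conditions force $\lambda=-\tfrac12$; for $\Inf{\lambda}^-$ one has $s=-\lambda+\tfrac12$ and the two conditions force $\lambda=\tfrac12$; and for the dense modules $\Rel{\Lambda,s}$ both subspaces are non-zero, so $s\brac*{2s-1}=s\brac*{2s+1}=0$ forces $s=0$, with the simplicity hypothesis of \cref{thm:ospclass} (i.e.\ $\lambda^2\neq\tfrac14$ for all $\lambda\in\Lambda$) excluding exactly $\Lambda=[\pm\tfrac12]$. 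Taking simple quotients of the corresponding inductions yields precisely the modules in the statement; moreover \eqref{eqn:modperiodicity} gives $\parrmod{\Rel{\Lambda,0}}\cong\Rel{\Lambda+1,0}$, so parity reversals of the dense family are already present, whereas $\Fin{0}$, $\Inf{-1/2}^+$ and $\Inf{1/2}^-$ are not isomorphic to their parity reversals, so the reversed \hw{} and \lw{} modules must be listed separately.

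For the final claim of mutual non-isomorphism it suffices, since the passage to Zhu modules is a bijection, to observe that the listed $\myosp$-modules are pairwise non-isomorphic; this follows by comparing weight supports (finite, bounded above, bounded below, or bi-infinite, and in the dense case the coset $\Lambda$) together with parities, exactly as in the proof of \cref{thm:ospclass}. I anticipate no real obstacle: the one place demanding care is the bookkeeping in the second paragraph --- in particular that a module with non-zero even \emph{and} odd subspaces must kill both $s\brac*{2s-1}$ and $s\brac*{2s+1}$, which is what collapses the finite-dimensional family to $\Fin{0}$ and pins the dense modules to $s=0$ --- together with correctly tracking parity reversal. Given \cref{prop:SimpZhu'}, the underlying algebra is entirely routine.
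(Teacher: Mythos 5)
Your proposal is correct and follows exactly the paper's route: reduce to the Zhu-algebra condition via \cref{prop:SimpZhu,prop:SimpZhu'} and then select from \cref{thm:ospclass} the modules annihilated by $\Sigma(2\Sigma-1)$. The paper's own proof simply asserts the outcome of this selection, whereas you carry out the (routine but correct) case analysis, including the subtlety that a module with both parities present forces $s=0$ and that \eqref{eqn:modperiodicity} accounts for the parity reversals of the dense family.
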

\noindent We remark that the parity reversal of $\NSRel{\Lambda, 0}$ is isomorphic to $\NSRel{\Lambda + 1,0}$, by \eqref{eqn:modperiodicity}, so does appear in the list above.
\begin{proof}
  The space of ground states of any simple \ns{} relaxed \hw{} $\OSPMinMod{2}{4}$-module must be isomorphic to one of the simple \(\myosp\)-modules listed in \cref{thm:ospclass} on which \([\phi_{-5/4}] = \Sigma (2 \Sigma - 1)\) acts trivially. The given classification consists of precisely those simple quotients of the inductions of the \(\myosp\)-modules for which this is the case.
\end{proof}

\begin{remark}
  One potential source of confusion that is worth mentioning is that as \(\frac{1}{2}\) is an
  allowed eigenvalue of the super-Casimir \(\Sigma\), yet \(-\frac{1}{2}\) is
  not, any simple $\zhu{\OSPMinMod{2}{4}}$-module with a weight space on which the eigenvalue of
  \(\Sigma\) is \(\frac{1}{2}\) cannot have any weight spaces of opposite
  parity. The only simple $\myosp$-modules satisfying this constraint are \(\Fin{0}\) and its
  parity reversal.
\end{remark}

\begin{remark}
	The gaps in the range of $\Lambda$, for $\NSRel{\Lambda, 0}$, in the previous \lcnamecref{thm:classB24NS} are only to guarantee simplicity.  The reducible, but indecomposable, modules $\NSRel{1/2 + 2 \ZZ, 0}^{\pm}$ and $\NSRel{-1/2 + 2 \ZZ, 0}^{\pm}$ have ground states that obviously satisfy the Zhu constraints, hence they are also $\OSPMinMod{2}{4}$-modules (in the \ns{} sector).
\end{remark}

Given that all the $\NSRel{}$-type $\OSPMinMod{2}{4}$-modules have $s=0$, we shall simplify notation in what follows by writing $\NSRel{\lambda}$, $\NSRel{1/2}^{\pm}$ and $\NSRel{-1/2}^{\pm}$ instead of $\NSRel{\lambda + 2 \ZZ, 0}$, $\NSRel{1/2 + 2 \ZZ, 0}^{\pm}$ and $\NSRel{-1/2 + 2 \ZZ, 0}^{\pm}$, respectively.  We shall also further abuse this notation by identifying $\lambda$ with $\Lambda = \lambda + 2 \ZZ$ when convenient, so that $\lambda \in \CC / 2 \ZZ$ when parametrising a $\NSRel{}$-type module.

\begin{thm} \label{thm:classB24R}
  Any simple Ramond relaxed \hw{} $\OSPMinMod{2}{4}$-module is isomorphic to
  one of the following mutually non-isomorphic modules:
  \begin{equation}
		\begin{alignedat}{5}
			&\RInf{-5/4}^+, & &\RInf{-3/4}^+, & &\RInf{3/4}^-, & &\RInf{5/4}^- & &\RRel{\Lambda, -15/32}, \\
			&\parrmod{\RInf{-5/4}^+}, &\quad &\parrmod{\RInf{-3/4}^+}, &\quad &\parrmod{\RInf{3/4}^-}, &\quad &\parrmod{\RInf{5/4}^-} &\quad &\parrmod{\RRel{\Lambda, -15/32}}
		\end{alignedat}
		\quad \text{(\(\Lambda \in \CC / 2 \ZZ \setminus \set{[\pm\tfrac{5}{4}]}\))}.
	\end{equation}
\end{thm}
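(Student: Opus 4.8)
The plan is to follow the proof of \cref{thm:classB24NS} essentially verbatim, replacing the untwisted Zhu algebra by the twisted one. As recalled just before \cref{thm:classB24NS}, the space of ground states of any simple Ramond relaxed \hw{} $\OSPMinMod{2}{4}$-module must be one of the simple $\sltwo$-modules of \cref{thm:slclass} on which $[\phi_{-5/4}]^{\parity}$ acts trivially, and the module is then the simple quotient of the induction of that $\sltwo$-module. By \cref{prop:SimpZhu'}, $[\phi_{-5/4}]^{\parity} = 4Q + \tfrac{15}{8}$, so acting trivially means the $\sltwo$-Casimir $Q$ acts as the scalar $q = -\tfrac{15}{32}$. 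Thus the first step is to fix $q = -\tfrac{15}{32}$ and walk down the four families of \cref{thm:slclass}, in each case solving for the parameters that produce this Casimir eigenvalue.

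For the \hwms{} $\slFin{\lambda}$ and $\slInf{\lambda}^{+}$ the Casimir acts as $\tfrac{1}{2}\lambda(\lambda+2)$, so the constraint becomes $(\lambda+1)^2 = \tfrac{1}{16}$, giving $\lambda \in \set{-\tfrac{3}{4},-\tfrac{5}{4}}$; neither value lies in $\NN$, so no finite-dimensional modules occur, while both lie in $\CC \setminus \NN$, yielding $\RInf{-3/4}^{+}$ and $\RInf{-5/4}^{+}$. For the \lwms{} $\slInf{\lambda}^{-}$ the Casimir acts as $\tfrac{1}{2}\lambda(\lambda-2)$, so $(\lambda-1)^2 = \tfrac{1}{16}$ and $\lambda \in \set{\tfrac{3}{4},\tfrac{5}{4}} \subset \CC \setminus \ZZ_{\le 0}$, yielding $\RInf{3/4}^{-}$ and $\RInf{5/4}^{-}$. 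For the dense modules $\slRel{\Lambda,q}$ with $q = -\tfrac{15}{32}$, the simplicity condition of \cref{thm:slclass} excludes exactly those cosets $\Lambda$ containing a root of $\tfrac{1}{2}\lambda(\lambda+2) = -\tfrac{15}{32}$, namely $\Lambda \in \set{[-\tfrac{3}{4}],[-\tfrac{5}{4}]} = \set{[\tfrac{5}{4}],[-\tfrac{5}{4}]}$, which leaves $\RRel{\Lambda,-15/32}$ for $\Lambda \in \CC/2\ZZ \setminus \set{[\pm\tfrac{5}{4}]}$. Adjoining the parity reversals of all of these produces the second row of the asserted list.

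What remains is to argue that the list is complete and irredundant as stated. Mutual non-isomorphism of the modules without a parity reversal is inherited from \cref{thm:slclass}. For the parity reversals I would observe that in the Ramond sector the odd modes are half-integer moded, so within each listed module the even- and odd-parity subspaces are supported at conformal weights differing by a half-integer; since module isomorphisms commute with $L_0$, no such module is isomorphic to its parity reversal, and the reversals are pairwise distinct and distinct from the unreversed modules. I would also flag the contrast with the \ns{} case: there is no $\sltwo$-analogue of \eqref{eqn:modperiodicity} --- parity reversal does not shift the Casimir eigenvalue $q$ --- so $\parrmod{\RRel{\Lambda,-15/32}}$ cannot be re-identified with any $\RRel{\Lambda',-15/32}$ and must be listed separately, whereas $\parrmod{\NSRel{\Lambda,0}} \cong \NSRel{\Lambda+1,0}$ in \cref{thm:classB24NS}. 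None of this is genuinely hard; given \cref{prop:SimpZhu'} the only things to take care with are keeping the two Casimir formulae for highest- versus lowest-weight $\sltwo$-modules straight and the parity bookkeeping just described, so I do not expect a substantive obstacle.
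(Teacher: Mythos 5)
Your proposal is correct and takes essentially the same approach as the paper: the paper's proof is just the observation that the ground states must be one of the simple $\sltwo$-modules of \cref{thm:slclass} on which $[\phi_{-5/4}]^{\parity}=4Q+\frac{15}{8}$ acts trivially, and that the listed modules are precisely the simple quotients of the inductions of those. You have merely written out the arithmetic (the roots $\lambda=-\frac{3}{4},-\frac{5}{4}$ and $\frac{3}{4},\frac{5}{4}$ of the two Casimir conditions, and the excluded cosets $[\pm\frac{5}{4}]$) and the parity bookkeeping that the paper leaves implicit, all of which checks out.
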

\begin{proof}
	The space of ground states 	of any simple Ramond relaxed \hw{} $\OSPMinMod{2}{4}$-module must be isomorphic to one of the	simple \(\sltwo\)-modules listed in \cref{thm:slclass} on which \([\phi_{-5/4}]^\parity = 4Q + \frac{15}{8}\) acts trivially. The given classification consists of those simple quotients of the inductions of the $\sltwo$-modules for which this is the case.
\end{proof}

\begin{remark}
	Again, the gaps in the range of $\Lambda$, for $\RRel{\Lambda, -15/32}$, in the previous \lcnamecref{thm:classB24R} guarantee simplicity.  The reducible, but indecomposable, modules $\RRel{\lambda + 2 \ZZ, -15/32}^{\pm}$, with $\lambda = \pm\frac{5}{4}$, are also $\OSPMinMod{2}{4}$-modules (in the Ramond sector).
\end{remark}

Because all these $\RRel{}$-type modules have $q = -\frac{15}{32}$, we shall again simplify notation in what follows by writing $\RRel{\lambda}$ instead of $\RRel{\lambda + 2 \ZZ, -15/32}$, for $\lambda \neq \frac{5}{4} \bmod{2}$, and $\RRel{\lambda}^{\pm}$ instead of $\RRel{\lambda + 2 \ZZ, -15/32}^{\pm}$, for $\lambda = \pm\frac{5}{4}$.  As above, we shall also abuse this notation by identifying $\lambda$ with $\Lambda = \lambda + 2 \ZZ$ when parametrising $\RRel{}$-type modules.

These theorems classify the simple positive-energy weight modules over $\OSPMinMod{2}{4}$ because a weight vector of minimal conformal weight in a simple module must generate the entire module and hence must be a relaxed \hwv{}.  However, we can also twist each of these modules by conjugation and/or spectral flow (\cref{prop:AutVOAMod}).  If we twist by conjugation, the result will again be a simple positive-energy weight module.  Explicitly, we have
\begin{equation} \label{eq:ConjSimples}
	\begin{aligned}
		\conjmod{\NSFin{0}} &\cong \NSFin{0}, &
		\conjmod{\NSInf{-1/2}^+} &\cong \NSInf{1/2}^-, &
		\conjmod{\NSRel{\lambda}} &\cong \NSRel{-\lambda} &
		&\text{(\(\lambda \neq \pm \tfrac{1}{2} \bmod{2}\)),} \\
		\conjmod{\RInf{-5/4}^+} &\cong \RInf{5/4}^-, &
		\conjmod{\RInf{-3/4}^+} &\cong \RInf{3/4}^-, &
		\conjmod{\RRel{\lambda}} &\cong \RRel{-\lambda} &
		&\text{(\(\lambda \neq \pm \tfrac{5}{4} \bmod{2}\)).}
	\end{aligned}
\end{equation}
The conjugates of the spectral flows of these modules now follow (formally) from \eqref{eqn:twistrels}.

However, twisting these simple positive-energy weight modules by spectral flow will not result in modules that are positive-energy, in general.  Na\"{\i}vely, we arrive at the following list of (isomorphism classes of) simple \ns{} weight modules (we also include their parity reversals, recalling that $\parrmod{\NSRel{\lambda}} \cong \NSRel{\lambda + 1}$):
\begin{equation} \label{eqn:nsadmisslist}
\begin{aligned}
\sfmod{\ell}{\NSFin{0}}, \quad
\sfmod{\ell}{\NSInf{-1/2}^+}, \quad
\sfmod{\ell}{\NSInf{1/2}^-}, \quad
\sfmod{\ell}{\NSRel{\lambda}}& &
&\text{(\(\ell \in \ZZ\) and \(\lambda \neq \tfrac{1}{2} \bmod{1}\)),} \\
\sfmod{\ell}{\RInf{-3/4}^+}, \quad
\sfmod{\ell}{\RInf{-5/4}^+}, \quad
\sfmod{\ell}{\RInf{3/4}^-}, \quad
\sfmod{\ell}{\RInf{5/4}^-}, \quad
\sfmod{\ell}{\RRel{\mu}}& &
&\text{(\(\ell \in \ZZ + \tfrac{1}{2}\) and \(\mu \neq \pm\tfrac{5}{4} \bmod{2}\)).}
\end{aligned}
\end{equation}
The simple Ramond weight modules are obtained from these by applying $\sfaut^{1/2}$ to each of the above.  It turns out however, that these lists contain many pairs of isomorphic modules. In particular, it is easy to show (using similar techniques to those used in the proof of \eqref{eqn:rverma}) that the following coincidences hold:
\begin{equation} \label{eqn:sfA0}
\begin{aligned}
	\sfmod{1/2}{\NSFin{0}} &\cong \RInf{-5/4}^+, & \sfmod{-1/2}{\NSFin{0}} &\cong \RInf{5/4}^-, \\
	\sfmod{1/2}{\NSInf{1/2}^-} &\cong \RInf{-3/4}^+, & \sfmod{-1/2}{\NSInf{-1/2}^+} &\cong \RInf{3/4}^-.
\end{aligned}
\end{equation}
With these isomorphisms in mind, the list of (isomorphism classes of) mutually non-isomorphic simple weight modules over  $\OSPMinMod{2}{4}$ reduces to
\begin{equation} \label{eqn:admisslist}
\sfmod{\ell}{\NSFin{0}}, \quad \sfmod{\ell}{\NSInf{-1/2}^+}, \quad \sfmod{\ell}{\NSInf{1/2}^-}, \quad \sfmod{\ell}{\NSRel{\lambda}}, \quad \sfmod{\ell}{\RRel{\mu}}
\end{equation}
and their parity reversals, where $\ell \in \frac{1}{2} \ZZ$, $\lambda \neq \tfrac{1}{2} \bmod{1}$ and $\mu \neq \pm\tfrac{5}{4} \bmod{2}$. To establish that these isomorphism classes are indeed distinct, one may observe that their spaces of extremal  states are different, these being the weight vectors of minimal conformal weight in each subspace of constant $\myosp$-weight.

We shall assume that the physical category $\categ{P}$ relevant for constructing minimal model \cfts{} has precisely the modules \eqref{eqn:admisslist} as its simple objects.  We shall also assume that the category is closed under conjugation and fusion products.  One of the main aims of this paper is to test this assumption by demonstrating that it leads to satisfactory modular properties and that the \emph{standard} Verlinde formula, introduced below in \eqref{eq:fusecoeff}, returns non-negative integer Grothendieck fusion coefficients.

The assumption of closure under fusion is highly non-trivial from a mathematical perspective because the modules in $\categ{P}$, in particular the relaxed \hwms{} $\NSRel{\lambda}$, $\RRel{\mu}$ and their spectral flows, are not all $C_1$-cofinite.  This means that fusion products might not be finite-length (meaning that they have a finite number of composition factors).  However, the failure of $C_1$-cofiniteness is also observed with relaxed modules for admissible-level $\sltwo$ minimal models and bosonic ghosts \cite{RidBos14}, whilst the known fusion calculations \cite{GabFus01,RidFus10,RidBos14} all give finite-length results.  These calculations were performed using the Nahm-Gaberdiel-Kausch algorithm \cite{NahQua94,GabInd96} whose (essential) equivalence to the rigorous theory of Huang, Lepowsky and Zhang \cite{HuaLog10} will appear in \cite{KanNGK18}.  The results give us confidence that our assumptions on the $\OSPMinMod{2}{4}$ fusion rules all hold, though they will be extremely difficult to establish rigorously.

The conjugates and spectral flow images of the reducible but indecomposable $\OSPMinMod{2}{4}$-modules identified above are, likewise, reducible but indecomposable weight modules over $\OSPMinMod{2}{4}$, by \cref{prop:AutVOAMod}.  The conjugates are easily identified using \eqref{eqn:twistrels}:
\begin{equation}
	\conjmod{\NSRel{1/2}^+} \cong \NSRel{-1/2}^-, \quad
	\conjmod{\RRel{3/4}^+} \cong \RRel{-3/4}^-, \quad
	\conjmod{\RRel{5/4}^+} \cong \RRel{-5/4}^-.
\end{equation}
Up to parity reversal, we therefore obtain the following mutually non-isomorphic reducible, but indecomposable, modules:
\begin{equation} \label{eqn:indeclist}
	\sfmod{\ell}{\NSRel{1/2}^+}, \quad
	\sfmod{\ell}{\NSRel{-1/2}^-}, \quad
	\sfmod{\ell}{\RRel{3/4}^+}, \quad
	\sfmod{\ell}{\RRel{5/4}^+}, \quad
	\sfmod{\ell}{\RRel{-3/4}^-}, \quad
	\sfmod{\ell}{\RRel{-5/4}^-} \quad
	\text{(\(\ell \in \frac{1}{2} \ZZ\)).}
\end{equation}
We record the defining non-split short exact sequences, easily deduced from \eqref{ses:CE}, for future convenience:
\begin{equation} \label{eqn:sesindec}
\begin{gathered}
\dses{\parrmod{\NSInf{-1/2}^+}}{}{\NSRel{1/2}^+}{}{\NSInf{1/2}^-}, \\
\dses{\RInf{-5/4}^+}{}{\RRel{3/4}^+}{}{\RInf{3/4}^-}, \\
\dses{\RInf{-3/4}^+}{}{\RRel{5/4}^+}{}{\RInf{5/4}^-},
\end{gathered}
\qquad
\begin{gathered}
\dses{\parrmod{\NSInf{1/2}^-}}{}{\NSRel{-1/2}^-}{}{\NSInf{-1/2}^+}, \\
\dses{\RInf{5/4}^-}{}{\RRel{-3/4}^-}{}{\RInf{-3/4}^+}, \\
\dses{\RInf{3/4}^-}{}{\RRel{-5/4}^-}{}{\RInf{-5/4}^+}.
\end{gathered}
\end{equation}
Sequences for the spectral flows of these modules now follow because the $\sigma^{\ell}$ induce exact covariant functors.

These results indicate that the minimal model $\OSPMinMod{2}{4}$ provides another example of a \cft{} to which the \emph{standard module formalism} of \cite{CreLog13,RidVer14} applies.  To wit, the simple modules $\sfmod{\ell}{\NSRel{\lambda}}$ and $\sfmod{\ell}{\RRel{\mu}}$ comprise the \emph{typical} modules, while the reducible modules of \eqref{eqn:indeclist} are examples of \emph{atypical} modules.  Together, the $\NSRel{}$- and $\RRel{}$-type modules, along with their spectral flows, constitute the \emph{standard modules} of the minimal model.  Subquotients of atypical modules are also said to be atypical, hence the $\sfmod{\ell}{\NSFin{0}}$ and $\sfmod{\ell}{\NSInf{\pm 1/2}^{\mp}}$ are all atypical and thus so are the $\sfmod{\ell}{\RInf{\pm 3/4}^{\mp}}$ and $\sfmod{\ell}{\RInf{\pm 5/4}^{\mp}}$.

The standard module formalism is a collection of empirical observations, first described in \cite{CreRel11}, that coherently organises the modular properties of a large number of logarithmic \cfts{}.  To summarise, the characters of the standard modules (which form a continuously parametrised family) transform rather simply under a natural action of $\SLG{SL}{2;\ZZ}$ with integral kernels replacing the familiar S- and T-matrices.  The transforms of the atypical characters, which include that of the vacuum module, are much more subtle, but may be deduced by constructing infinite (one-sided) resolutions of each atypical module in terms of standard ones.  This rich formalism will provide the starting point for our investigation of the modular properties of the minimal model.  First however, we need to determine the characters of these modules.

\section{The affine minimal model $\OSPMinMod{2}{4}$: Characters} \label{sec:chars}

In this section, we determine the characters and supercharacters of all the simple $\OSPMinMod{2}{4}$-modules of category $\categ{P}$.  We also introduce the Grothendieck group of this category and show explicitly that the images of the standard $\OSPMinMod{2}{4}$-modules in the Grothendieck group form a basis (of a certain completion).

\subsection{Highest-weight characters} \label{subsec:simpchars}

Having identified the modules of interest, we now wish to calculate their characters. Formally, the \emph{character} of a weight-module $\Mod{M}$ of $\myaosp$ is defined to be the power series
\begin{equation} \label{eqn:defchar}
	\fch{\Mod{M}}{y,z,q} = \traceover{\Mod{M}} \brac*{y^k z^{h_0} q^{L_0 - c/24}},
\end{equation}
where $\traceover{\Mod{M}} \brac*{X}$ denotes the trace of the image in $\End \brac*{\Mod{M}}$ of $X \in \aalgh$ under the given representation of $\myaosp$.  We append the subscript $0$ to our notation for characters to indicate that eigenvectors of different parities count equally in the sum.  The notation for supercharacters, where parity matters, will be introduced in \cref{subsec:supchars}. It is customary to interpret such an expression as defining a (meromorphic) function, however some care is needed to make this identification precise. In particular, we must additionally specify a domain for the variables $z$ and $q$ in order to avoid misleading results (further discussion of this may be found in \cite{RidSL208,CreMod12}).

It is clear that
\begin{equation} \label{eqn:chpar}
	\ch{\parrmod{\Mod{M}}} = \ch{\Mod{M}},
\end{equation}
for any weight module $\Mod{M}$ over $\myaosp$, so it might appear that we are justified in identifying modules with their parity reversals, at least for the purposes of studying modular transformations.  However, we shall see that the parity of a module is completely determined by its character and supercharacter, so one can employ modular methods whilst retaining parity information.

As discussed in \cref{subsec:irrwtmods}, the characters of the simple \hw{} $\OSPMinMod{2}{4}$-modules can be expressed in terms of Verma module characters using resolutions.  As usual, the characters of Verma modules are very easy to calculate. A straightforward application of the \pbw{} theorem gives, recalling that $c = -5$,
\begin{equation} \label{eqn:vermachar}
\fch{\NSVer{0}}{y,z,q} = y^{-5/4} q^{5/24} \prod_{m=1}^\infty \frac{ \brac*{1+z^{-1}q^{m-1}} \brac*{1+zq^{m}} }{ \brac*{1-z^{-2}q^{m-1}} \brac*{1-q^{m}} \brac*{1-z^{2}q^{m}} },
\end{equation}
which can be re-expressed more compactly as
\begin{equation} \label{eqn:vermatheta}
\fch{\NSVer{0}}{y,z,q} = -\ii y^{-5/4} z^{1/2} q^{1/4} \frac{\fjth{2}{z;q}}{\fjth{1}{z^2 ;q} \; \deta{q}},
\end{equation}
using the standard infinite product formulae for the Jacobi theta functions $\jth{j}$ and the Dedekind eta function $\eta$. More generally, a simple calculation shows that \ns{} Verma characters are given by
\begin{equation} \label{eqn:vermageneral}
\fch{\NSVer{\lambda}}{y,z,q} = z^{\lambda} q^{\Delta} \fch{\NSVer{0}}{y,z,q},
\end{equation}
where $\Delta$ was given in \eqref{eqn:cwhw}.  This relation holds for all non-critical $k$, simply by replacing the exponent of $y$ by $k$.

From the embeddings of Verma modules shown in \cref{fig:nsverma}, it follows that the characters of the simple \ns{} \hw{} $\OSPMinMod{2}{4}$-modules are given by
\begin{subequations} \label{eqn:nschar'}
	\begin{align}
		\fch{\NSFin{0}}{y,z,q} &= \sum_{n \in 2 \ZZ} \fch{\NSVer{-5/4,n}}{y,z,q} - \sum_{n \in 2 \ZZ + 1} \fch{\NSVer{-5/4,n}}{y,z,q} \notag \\
		&= \sum_{n \in \ZZ} (-1)^n z^{n} q^{n(n+1)} \fch{\NSVer{0}}{y,z,q}
		= y^{-5/4} \frac{\fjth{1}{z;q^2} \; \fjth{2}{z;q}}{\fjth{1}{z^2 ;q} \; \deta{q}}, \label{eqn:nsfinchar} \\
		\fch{\NSInf{-1/2}^+}{y,z,q} &= \sum_{n \in 2 \ZZ} \fch{\NSVer{n-1/2}}{y,z,q} - \sum_{n \in 2 \ZZ + 1} \fch{\NSVer{n-1/2}}{y,z,q} \notag \\
		&= \sum_{n \in \ZZ} (-1)^n z^{n-1/2} q^{n^2 - 1/4} \fch{\NSVer{0}}{y,z,q}
		= -\ii y^{-5/4} \frac{\fjth{4}{z;q^2} \; \fjth{2}{z;q}}{\fjth{1}{z^2 ;q} \; \deta{q}}. \label{eqn:nsinfchar}
	\end{align}
\end{subequations}

In order to calculate the characters of $\NSInf{1/2}^- \cong \conjmod{\NSInf{-1/2}^+}$ and the other twists of these \hwms{}, we use the following relation:
\begin{equation} \label{eqn:twisttrace}
\traceover{\varphi (\Mod{M})} \brac*{X} = \traceover{\Mod{M}} \brac*{\varphi^{-1} \brac*{X}} \quad
\text{(\(\varphi \in \SLG{Aut}{\aalgg}\), \(X \in \aalgh\)).}
\end{equation}
Specialising to $\varphi = \conjaut$ and $\sfaut^{\ell}$, $\ell \in \frac{1}{2} \ZZ$, we obtain the useful formulae
\begin{equation} \label{eqn:autch}
\fch{\conjmod{\Mod{M}}}{y,z,q} = \fch{\Mod{M}}{y,z^{-1},q}, \quad
\fch{\sfmod{\ell}{\Mod{M}}}{y,z,q} = \fch{\Mod{M}}{y z^{2\ell} q^{\ell^2},z q^{\ell},q},
\end{equation}
the latter following from $\sfmod{\ell}{h_0} = h_0 - 2 \ell K$ and $\sfmod{\ell}{L_0} = L_0 - \ell h_0 + \ell^2 K$.  Under the transformation $z \mapsto z^{-1}$, the Jacobi theta functions $\fjth{j}{z;q}$, for $j=2,3,4$, are invariant whilst $\fjth{1}{z^{-1};q} = -\fjth{1}{z;q}$.  We therefore arrive at the character of $\NSInf{1/2}^- \cong \conjmod{\NSInf{-1/2}^+}$:
\begin{equation} \label{eqn:nscinfchar}
\fch{\NSInf{1/2}^-}{y,z,q} = \fch{\NSInf{-1/2}^+}{y,z^{-1},q}
= \ii y^{-5/4} \frac{\fjth{4}{z;q^2} \; \fjth{2}{z;q}}{\fjth{1}{z^2 ;q} \; \deta{q}}.
\end{equation}

Whilst it is tempting to treat the expressions in \eqref{eqn:nsfinchar}, \eqref{eqn:nsinfchar} and \eqref{eqn:nscinfchar} as defining meromorphic functions of $(y,z,q)$, doing so means that we must carefully restrict their domains. Outside of the appropriate domain of validity, these meromorphic functions will give different power series expansions about the origin. Since characters are defined precisely by such expansions, a given meromorphic function may correspond to a multitude of different characters.

It turns out that \eqref{eqn:nsfinchar} is valid (and its \rhs{} is convergent), for all $y,z,q \in \CC$ with $y\neq0$, $0<\abs{q}<1$ and $z$ restricted to the annulus
\begin{equation} \label{eqn:nsfindomain}
\abs{q}^{1/2} < \abs{z} < \abs{q}^{-1/2}.
\end{equation}
In particular, this includes $z=1$, which one should expect, given that the $L_0$-eigenspaces of $\NSFin{0}$ are all finite-dimensional.  Similarly, \eqref{eqn:nsinfchar} and \eqref{eqn:nscinfchar} are valid for $y \neq 0$, $0<\abs{q}<1$ and $z$ restricted to the disjoint annuli
\begin{equation} \label{eqn:nsinfdomains}
1 < \abs{z} < \abs{q}^{-1/2} \quad \text{and} \quad \abs{q}^{1/2} < \abs{z} < 1,
\end{equation}
respectively. Note that neither of the modules $\NSInf{\mp 1/2}^{\pm}$ has finite-dimensional $L_0$-eigenspaces, so their characters must indeed diverge at $z=1$.

In order to calculate the characters of the spectral flows of the above modules, we apply \eqref{eqn:autch}.  For example, using the properties of Jacobi theta functions, we obtain
\begin{subequations}
	\begin{equation} \label{eqn:sfcha0even}
		\fch{\sfmod{\ell}{\NSFin{0}}}{y,z,q} = (-1)^{\ell / 2} y^{-5/4} \frac{\fjth{1}{z;q^2} \; \fjth{2}{z;q}}{\fjth{1}{z^2 ;q} \; \deta{q}},
	\end{equation}
	if $\ell$ is even, and
	\begin{equation} \label{eqn:sfcha0odd}
		\fch{\sfmod{\ell}{\NSFin{0}}}{y,z,q} = \ii (-1)^{(\ell-1) / 2} y^{-5/4} \frac{\fjth{4}{z;q^2} \; \fjth{2}{z;q}}{\fjth{1}{z^2 ;q} \; \deta{q}},
	\end{equation}
\end{subequations}
if $\ell$ is odd.  It may seem, at first glance, that this sequence of characters holds a (perhaps surprising) four-fold periodicity. This semblance breaks down however, once we take into account the domains on which these expressions are valid. Specifically, \eqref{eqn:sfcha0even} and \eqref{eqn:sfcha0odd} will only recover the appropriate characters, as formal power series, when expanded in the domain $0 < \abs{q} < 1$ and
\begin{equation} \label{eqn:sffindomain}
\abs{q}^{1/2 - \ell} < \abs{z} < \abs{q}^{-1/2 - \ell},
\end{equation}
as follows from \cref{eqn:autch,eqn:nsfindomain}.

We can now apply \eqref{eqn:sfA0} and \eqref{eqn:autch} to the character formulae \eqref{eqn:nschar'} and \eqref{eqn:nscinfchar} in order to calculate the characters of the Ramond \hwms{} and their conjugates.  We summarise the results, along with the \ns{} formulae presented above, in the following \lcnamecref{prop:AtypCh}.
\begin{proposition} \label{prop:AtypCh}
	The characters of the \hw{} $\OSPMinMod{2}{4}$-modules are given by the following expressions, valid for $y \neq 0$, $0 < \abs{q} < 1$ and $z$ in the annuli given:
	\begin{equation} \label{eqn:char}
		\begin{aligned}
			\fch{\NSFin{0}}{y,z,q} &= y^{-5/4} \frac{\fjth{1}{z;q^2} \; \fjth{2}{z;q}}{\fjth{1}{z^2 ;q} \; \deta{q}} & &\text{(\(\abs{q}^{1/2} < \abs{z} < \abs{q}^{-1/2}\)),} \\
			\fch{\NSInf{-1/2}^+}{y,z,q} &= -\ii y^{-5/4} \frac{\fjth{4}{z;q^2} \; \fjth{2}{z;q}}{\fjth{1}{z^2 ;q} \; \deta{q}} & &\text{(\(1 < \abs{z} < \abs{q}^{-1/2}\)),} \\
			\fch{\RInf{-5/4}^+}{y,z,q} &= -\ii y^{-5/4} z^{-1/4} q^{1/16} \frac{\fjth{4}{zq^{-1/2};q^2} \; \fjth{3}{z;q}}{\fjth{1}{z^2 ;q} \; \deta{q}} & &\text{(\(1 < \abs{z} < \abs{q}^{-1}\)),} \\
			\fch{\RInf{-3/4}^+}{y,z,q} &= y^{-5/4} z^{-1/4} q^{1/16} \frac{\fjth{1}{zq^{-1/2};q^2} \; \fjth{3}{z;q}}{\fjth{1}{z^2 ;q} \; \deta{q}} & &\text{(\(1 < \abs{z} < \abs{q}^{-1/2}\)).}
		\end{aligned}
	\end{equation}
\end{proposition}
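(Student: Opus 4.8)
The first two formulas, for the characters of $\NSFin{0}$ and $\NSInf{-1/2}^+$, together with their annuli of convergence, have in fact already been established above --- compare \eqref{eqn:nsfinchar} with \eqref{eqn:nsfindomain} and \eqref{eqn:nsinfchar} with \eqref{eqn:nsinfdomains} --- by resolving the simple highest-weight module as an alternating sum of Verma modules read off from the embedding diagrams of \cref{fig:nsverma}. The only genuinely new content is the pair of Ramond formulas, and the plan is to obtain these from the \ns{} ones by spectral flow.

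Concretely, I would combine the isomorphisms $\RInf{-5/4}^+ \cong \sfmod{1/2}{\NSFin{0}}$ and $\RInf{-3/4}^+ \cong \sfmod{1/2}{\NSInf{1/2}^-}$ from \eqref{eqn:sfA0} with the spectral-flow character identity \eqref{eqn:autch} at $\ell = \tfrac{1}{2}$ to get
\begin{equation*}
	\fch{\RInf{-5/4}^+}{y,z,q} = \fch{\NSFin{0}}{yzq^{1/4},\, zq^{1/2},\, q}, \qquad \fch{\RInf{-3/4}^+}{y,z,q} = \fch{\NSInf{1/2}^-}{yzq^{1/4},\, zq^{1/2},\, q},
\end{equation*}
and then substitute the explicit expressions \eqref{eqn:nsfinchar} and \eqref{eqn:nscinfchar} (recalling $\NSInf{1/2}^- \cong \conjmod{\NSInf{-1/2}^+}$). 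Since both $\RInf{}$-modules lie in the Ramond sector, the fractional powers $q^{1/4}$ and $q^{1/2}$ are well defined as formal expressions, so this substitution is legitimate.

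The remaining work is pure theta-function bookkeeping. The relations one needs are the standard quasi-periodicities $\fjth{2}{zq^{1/2};q} = q^{-1/8} z^{-1/2} \fjth{3}{z;q}$, $\fjth{1}{z^2 q;q} = -q^{-1/2} z^{-2} \fjth{1}{z^2;q}$ and, at nome $q^2$, the half-period shifts that send $\fjth{1}{zq^{1/2};q^2}$ and $\fjth{4}{zq^{1/2};q^2}$ to $\fjth{4}{zq^{-1/2};q^2}$ and $\fjth{1}{zq^{-1/2};q^2}$ respectively, each up to a monomial in $z,q$ and a factor of $\ii$. Feeding these in, together with $(yzq^{1/4})^{-5/4} = y^{-5/4} z^{-5/4} q^{-5/16}$, the monomial contributions assemble precisely to the prefactor $y^{-5/4} z^{-1/4} q^{1/16}$ of both Ramond characters, the accumulated signs and $\ii$'s leaving the $-\ii$ in front of the $\RInf{-5/4}^+$ character and cancelling for the $\RInf{-3/4}^+$ character.

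Finally, under the substitution $z \mapsto zq^{1/2}$ the annulus \eqref{eqn:nsfindomain} of the character of $\NSFin{0}$ becomes $1 < \abs{z} < \abs{q}^{-1}$ and the annulus $\abs{q}^{1/2} < \abs{z} < 1$ of the character of $\NSInf{1/2}^-$ given in \eqref{eqn:nsinfdomains} becomes $1 < \abs{z} < \abs{q}^{-1/2}$, matching the stated regions; the conditions $y \neq 0$ and $0 < \abs{q} < 1$ are obviously preserved. The main obstacle is simply the care needed in tracking the theta-function prefactors --- the powers of $z$ and $q$, and especially the factors of $\ii$ and the signs --- and in checking that each quasi-periodicity identity is applied as an identity of formal power series on the correct expansion annulus rather than merely of meromorphic functions, which is exactly why the domains must be propagated throughout.
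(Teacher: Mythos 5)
Your proposal is correct and is precisely the paper's own derivation: the \ns{} characters come from the Verma resolutions of \cref{fig:nsverma} (already recorded in \eqref{eqn:nschar'} and \eqref{eqn:nscinfchar} with their annuli), and the Ramond characters follow by applying the isomorphisms \eqref{eqn:sfA0} and the spectral-flow identity \eqref{eqn:autch}, with the theta-function quasi-periodicities and domain tracking handled exactly as you describe.
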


\begin{remark}
	The characters of the conjugates of the \hw{} $\OSPMinMod{2}{4}$-modules are now easily computed using \cref{eqn:autch}, noting that $\NSFin{0}$ is self-conjugate:
	\begin{equation} \label{eqn:conjchar}
		\begin{aligned}
			\fch{\NSInf{1/2}^-}{y,z,q} &= \ii y^{-5/4} \frac{\fjth{4}{z;q^2} \; \fjth{2}{z;q}}{\fjth{1}{z^2 ;q} \; \deta{q}} & &\text{(\(\abs{q}^{1/2} < \abs{z} < 1\)),} \\
				\fch{\RInf{5/4}^-}{y,z,q} &= -y^{-5/4} z^{-1/4} q^{1/16} \frac{\fjth{1}{zq^{-1/2};q^2} \; \fjth{3}{z;q}}{\fjth{1}{z^2 ;q} \; \deta{q}} & &\text{(\(\abs{q} < \abs{z} < 1\)),} \\
				\fch{\RInf{3/4}^-}{y,z,q} &= \ii y^{-5/4} z^{-1/4} q^{1/16} \frac{\fjth{4}{zq^{-1/2};q^2} \; \fjth{3}{z;q}}{\fjth{1}{z^2 ;q} \; \deta{q}} & &\text{(\(\abs{q}^{1/2} < \abs{z} < 1\)).}
		\end{aligned}
	\end{equation}
	The characters of the remaining simple atypical $\OSPMinMod{2}{4}$-modules are similarly obtained using spectral flow.
\end{remark}

\subsection{Relaxed highest-weight characters} \label{subsec:relchars}

For $\myaosp$, there seems to be no literature addressing the submodule structure of relaxed Verma modules.  Even though character formulae for relaxed Verma modules over $\myaosp$ are easy to obtain, we are therefore unable to deduce character formulae for their simple quotients.  Instead, we resort to an indirect method developed for $\AKMA{sl}{2}$-modules in \cite{CreMod12,CreMod13} and rigorously justified in \cite{KawRel18}.

We start with the short exact sequences given in \eqref{eqn:sesindec} for the reducible, but indecomposable, \ns{} relaxed \hwms{} $\NSRel{\pm 1/2}^{\pm}$, immediately deducing the following character formulae:
\begin{equation} \label{eqn:c1/2+chardecomp}
\ch{\NSRel{\pm 1/2}^{\pm}} =  \ch{\NSInf{-1/2}^+} + \ch{\NSInf{1/2}^-}.
\end{equation}
Of course, this only holds when treating these characters as formal power series. Since the two expressions \eqref{eqn:nsinfchar} and \eqref{eqn:nscinfchar} that we have derived for the characters on the \rhs{}, are valid on disjoint domains, we must be careful to sum these characters as power series rather than as meromorphic functions --- na\"{\i}vely summing the derived characters, as meromorphic functions, gives zero!  For this, we appeal to the following (equivalent) identities, originally derived by Kac and Wakimoto in \cite{KacInt94}:
\begin{subequations} \label{eqn:kwu}
	\begin{align}
		\frac{\fjth{1}{uv;q} \; \deta{q}^3}{\fjth{1}{u;q} \; \fjth{1}{v;q}} &= -\ii \sum_{m \in \ZZ} \frac{u^m}{1-v q^m} \quad \text{(\(\abs{q} < \abs{u} < 1\), \(0<\abs{v}<\abs{q}<1\)),} \label{eqn:kwu1}\\
		\frac{\fjth{1}{uv;q} \; \deta{q}^3}{\fjth{1}{u;q} \; \fjth{1}{v;q}} &= -\ii \sum_{m \in \ZZ} \frac{u^m v q^m}{1-v q^m} \quad \text{(\(1 < \abs{u} < \abs{q}^{-1}\), \(0<\abs{v}<\abs{q}<1\)).} \label{eqn:kwu2}
	\end{align}
\end{subequations}
We refer to \cite[App.~B]{RidSL208} for the conventions on \(\theta\) functions used here.  Setting $u=z^2$ and rearranging, these become
\begin{subequations} \label{eqn:kwz}
	\begin{align}
		\frac{1}{\fjth{1}{z^2;q}} &= -\ii \frac{\fjth{1}{v;q}}{\fjth{1}{z^2 v;q} \; \deta{q}^3} \sum_{m \in \ZZ} \frac{z^{2m}}{1-v q^m} \quad \text{(\(\abs{q}^{1/2} < \abs{z} < 1\), \(0<\abs{v}<\abs{q}<1\)),} \label{eqn:kwz1} \\
		\frac{1}{\fjth{1}{z^2;q}} &= -\ii \frac{\fjth{1}{v;q}}{\fjth{1}{z^2 v;q} \; \deta{q}^3} \sum_{m \in \ZZ} \frac{z^{2m} v q^m}{1-v q^m} \quad \text{(\(1 < \abs{z} < \abs{q}^{-1/2}\), \(0<\abs{v}<\abs{q}<1\)).} \label{eqn:kwz2}
	\end{align}
\end{subequations}

We substitute \eqref{eqn:kwz1} into the character formula \eqref{eqn:nscinfchar}, noting that both have the same domain of validity, obtaining
\begin{equation}
\fch{\NSInf{1/2}^-}{y,z,q} = y^{-5/4} \frac{\fjth{4}{z;q^2} \; \fjth{2}{z,q}}{\deta{q}^4} \frac{\fjth{1}{v;q}}{\fjth{1}{z^2 v;q}} \sum_{m \in \ZZ} \frac{z^{2m}}{1-vq^m}.
\end{equation}
Performing the equivalent substitution of \eqref{eqn:kwz2} into \eqref{eqn:nsinfchar} (which also have the same domain of validity), we instead get
\begin{equation}
\fch{\NSInf{-1/2}^+}{y,z,q} = -y^{-5/4} \frac{\fjth{4}{z;q^2} \; \fjth{2}{z,q}}{\deta{q}^4} \frac{\fjth{1}{v;q}}{\fjth{1}{z^2 v;q}} \sum_{m \in \ZZ} \frac{z^{2m} v q^m}{1-vq^m}.
\end{equation}
Treating these as formal power series in $z$, we may forget their domains and sum them, giving
\begin{equation} \label{eqn:c1/2+charunsimp}
\fch{\NSRel{\pm 1/2}^{\pm}}{y,z,q} = y^{-5/4} \frac{\fjth{4}{z;q^2} \; \fjth{2}{z,q}}{\deta{q}^4} \frac{\fjth{1}{v;q}}{\fjth{1}{z^2 v;q}} \sum_{m \in \ZZ} z^{2m}.
\end{equation}
In this way, we have arrived at a character formula for the reducible \ns{} relaxed \hwms{}.  These are precisely the atypical standard modules in the \ns{} sector.

We may simplify the character formula \eqref{eqn:c1/2+charunsimp} further and eliminate the auxiliary variable $v$, but we must first introduce an alternative set of variables for the characters. Instead of giving characters as power series in $y,z,q$, it is standard (and essential for studying modular properties) to introduce $\psi$, $\zeta$ and $\tau$, defined by
\begin{equation} \label{eqn:modvars}
y = \ee^{2 \pi \ii \psi}, \quad z = \ee^{2 \pi \ii \zeta} \quad \text{and} \quad q = \ee^{2 \pi \ii \tau}.
\end{equation}
The relation between $z$ and $\zeta$ will be used repeatedly in the following manipulations.

Identifying the sum on the \rhs{} of \eqref{eqn:c1/2+charunsimp} as defining a distribution in $\zeta$, we can employ the Fourier-theoretic identity
\begin{equation} \label{eqn:deltasum}
\sum_{n \in \ZZ} \ee^{2 \pi \ii n w} = \sum_{n \in \ZZ} \ddelta{w-n}
\end{equation}
to make the following simplification:
\begin{equation} \label{eqn:elimv}
\frac{\fjth{1}{v;q}}{\fjth{1}{z^2 v;q}} \sum_{m \in \ZZ} z^{2m} = \frac{\fjth{1}{v;q}}{\fjth{1}{\ee^{4 \pi \ii \zeta} v;q}} \sum_{m \in \ZZ} \ddelta{2\zeta - m}
= \sum_{m \in \ZZ} \frac{\fjth{1}{v;q}}{\fjth{1}{\ee^{2 \pi \ii m} v;q}} \ddelta{2\zeta - m}
= \sum_{m \in \ZZ} \ee^{-\pi \ii m} \ddelta{2\zeta - m}.
\end{equation}
Here, we have noted that $\fjth{1}{\ee^{2 \pi \ii} v;q} = \ee^{\pi \ii} \fjth{1}{v;q}$. Now,
\begin{align} \label{eqn:c1/2+charsimp1}
\fch{\NSRel{\pm 1/2}^{\pm}}{y,z,q} &= y^{-5/4} \frac{\fjth{4}{z;q^2} \; \fjth{2}{z;q}}{\deta{q}^4} \sum_{m \in \ZZ} \ee^{-\pi \ii m} \ddelta{2\zeta - m} \notag \\
&= y^{-5/4} \frac{1}{\deta{q}^4} \sum_{m \in \ZZ} \frac{\fjth{4}{\ee^{\pi \ii m};q^2} \; \fjth{2}{\ee^{\pi \ii m};q}}{\ee^{\pi \ii m}} \ddelta{2\zeta - m},
\end{align}
which can be split into sums over even and odd values of $m$, giving
\begin{align} \label{eqn:c1/2+charsimp2}
\fch{\NSRel{\pm 1/2}^{\pm}}{y,z,q} = y^{-5/4} \frac{1}{\deta{q}^4}  \sum_{n \in \ZZ} (-1)^n &\Big[\fjth{4}{1;q^2} \; \fjth{2}{1;q} \; \ddelta{2\zeta - 2n} \notag \\
&+ \fjth{3}{1;q^2} \; \fjth{1}{1;q} \; \ddelta{2\zeta - 2n-1}\Big].
\end{align}
However, $\fjth{1}{1;q}$ vanishes identically, so that
\begin{equation} \label{eqn:c1/2+charsimp3}
\fch{\NSRel{\pm 1/2}^{\pm}}{y,z,q} = y^{-5/4} \frac{\fjth{4}{1;q^2} \; \fjth{2}{1;q}}{2 \, \deta{q}^4} \sum_{n \in \ZZ} \ee^{\pi \ii n} \ddelta{\zeta - n}
= y^{-5/4} \frac{\fjth{4}{1;q^2} \; \fjth{2}{1;q}}{2 \, \deta{q}^4} \sum_{n \in \ZZ} z^{n + 1/2}.
\end{equation}
The $q$-dependent factor can be further simplified, using the product forms of the Jacobi theta functions, giving our final result:
\begin{equation} \label{eqn:c1/2+char}
\fch{\NSRel{\pm 1/2}^{\pm}}{y,z,q} = \frac{y^{-5/4} z^{1/2}}{\deta{q}^2} \sqrt{\frac{\fjth{2}{1;q}}{2 \, \deta{q}}} \sum_{n \in \ZZ} z^n.
\end{equation}
We remark that the decoupling of the $z$- and $q$-dependences of this character is expected for a module whose ground states form a dense $\myosp$-module.

It is convenient, at this point, to extract the $q$-dependent terms from \eqref{eqn:c1/2+char} in the following definition:
\begin{equation} \label{eqn:defA}
	\taufn{j}{q} = \frac{1}{\deta{q}^2} \sqrt{\frac{\fjth{j}{1;q}}{\deta{q}}} \quad \text{(\(j=2,3,4\)).}
\end{equation}
We can rewrite these factors explicitly as
\begin{equation}
	\taufn{2}{q} = \sqrt{2} q^{-1/24} \prod_{i=1}^{\infty} \frac{1+q^i}{(1-q^i)^2}, \quad
	\taufn{3}{q} = q^{-5/48} \prod_{i=1}^{\infty} \frac{1+q^{i-1/2}}{(1-q^i)^2}, \quad
	\taufn{4}{q} = q^{-5/48} \prod_{i=1}^{\infty} \frac{1-q^{i-1/2}}{(1-q^i)^2}.
\end{equation}
Inserting into \eqref{eqn:c1/2+char} and expanding, one finds that
\begin{align}
	\fch{\NSRel{\pm 1/2}^{\pm}}{y,z,q} &= y^{-5/4} z^{1/2} \frac{\taufn{2}{q}}{\sqrt{2}} \sum_{n \in \ZZ} z^n = y^{-5/4} z^{1/2} q^{-1/24} \prod_{i=1}^{\infty} \frac{1+q^i}{(1-q^i)^2} \cdot \sum_{n \in \ZZ} z^n \notag \\
	&= y^{-5/4} z^{1/2} q^{-1/24} \brac*{1+3q+8q^2+19q^3+41q^4+83q^5+\cdots} \sum_{n \in \ZZ} z^n
\end{align}
which indeed gives the correct weight multiplicities (shown for the first six conformal grades) of $\NSRel{\pm 1/2}^{\pm}$.  We mention that the prefactor $q^{-1/24} = q^{\Delta - c/24}$, requires that $\Delta = -\frac{1}{4}$ (recall that $c=-5$), in accord with $s=0$ and \eqref{eq:RelConfWts}.

A similar calculation gives the characters of the atypical relaxed Ramond modules.  We record these results in the following \lcnamecref{prop:AtypStCh}.
\begin{proposition} \label{prop:AtypStCh}
	The characters of the atypical relaxed $\OSPMinMod{2}{4}$-modules are (shifted) formal power series in $z$ whose coefficients are holomorphic functions of $q$ for $\abs{q}<1$.  Explicitly, we have
	\begin{subequations}
		\begin{align}
			\fch{\NSRel{\pm 1/2}^{\pm}}{y,z,q} &= \frac{y^{-5/4} z^{1/2}}{\sqrt{2}} \taufn{2}{q} \sum_{n \in \ZZ} z^n, \label{eqn:c+char} \\
			\fch{\RRel{\pm 5/4}^{\pm}}{y,z,q} = \fch{\RRel{\mp 3/4}^{\mp}}{y,z,q} &= \frac{y^{-5/4} z^{\pm 5/4}}{2} \brac*{ \taufn{3}{q} \sum_{n \in \ZZ} z^n + \taufn{4}{q} \sum_{n \in \ZZ} (-z)^n}. \label{eqn:e+char}
		\end{align}
	\end{subequations}
\end{proposition}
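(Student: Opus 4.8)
The plan is to treat the \ns{} and Ramond cases separately, in each case reducing the relaxed character to a sum of two highest-weight characters via the short exact sequences \eqref{eqn:sesindec} and then combining those two summands --- which live on disjoint annuli and would cancel to zero if na\"{\i}vely added as meromorphic functions --- by means of the Kac--Wakimoto identities \eqref{eqn:kwz1}--\eqref{eqn:kwz2}.

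For the \ns{} modules $\NSRel{\pm 1/2}^{\pm}$ there is nothing further to prove: the chain of manipulations running from \eqref{eqn:c1/2+chardecomp} through \eqref{eqn:c1/2+char}, together with the definition \eqref{eqn:defA}, already establishes \eqref{eqn:c+char}, and the correctness of the resulting weight multiplicities was verified immediately after \eqref{eqn:defA}.

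For the Ramond modules I would first read off from \eqref{eqn:sesindec} that $\fch{\RRel{3/4}^{+}}{y,z,q} = \fch{\RInf{-5/4}^{+}}{y,z,q} + \fch{\RInf{3/4}^{-}}{y,z,q}$ and $\fch{\RRel{5/4}^{+}}{y,z,q} = \fch{\RInf{-3/4}^{+}}{y,z,q} + \fch{\RInf{5/4}^{-}}{y,z,q}$; since the remaining two sequences in \eqref{eqn:sesindec} pair exactly the same highest-weight modules, the identifications $\fch{\RRel{\pm 5/4}^{\pm}}{y,z,q} = \fch{\RRel{\mp 3/4}^{\mp}}{y,z,q}$ drop out at once. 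To evaluate the first sum, insert the explicit characters from \cref{prop:AtypCh} and \eqref{eqn:conjchar}, substitute \eqref{eqn:kwz2} into $\fch{\RInf{-5/4}^{+}}{y,z,q}$ (valid together on $1 < \abs{z} < \abs{q}^{-1/2}$) and \eqref{eqn:kwz1} into $\fch{\RInf{3/4}^{-}}{y,z,q}$ (valid together on $\abs{q}^{1/2} < \abs{z} < 1$); the second sum is handled in the same way. After substitution the two summands acquire a common prefactor and differ only in carrying $\sum_{m} z^{2m}/(1-vq^{m})$ versus $\sum_{m} z^{2m}vq^{m}/(1-vq^{m})$, so adding them collapses the geometric series to $\sum_{m} z^{2m}$, exactly as in the passage to \eqref{eqn:c1/2+charunsimp}. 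Then apply the Fourier identity \eqref{eqn:deltasum} with $w = 2\zeta$ to localise onto $2\zeta \in \ZZ$, eliminate $v$ via the quasi-periodicity $\fjth{1}{\ee^{2\pi\ii}v;q} = \ee^{\pi\ii}\fjth{1}{v;q}$, and split the sum by the parity of the localisation index. In contrast to the \ns{} case --- where the odd-index term vanished because $\fjth{1}{1;q} \equiv 0$ --- here the even indices yield the $\taufn{3}{q}$ contribution and the odd indices the $\taufn{4}{q}$ contribution, once the surviving theta products --- essentially $\fjth{4}{\pm q^{-1/2};q^{2}}\,\fjth{3}{1;q}$ and $\fjth{4}{\pm q^{-1/2};q^{2}}\,\fjth{4}{1;q}$ for the first sum, and the analogues with $\jth{1}$ replacing $\jth{4}$ for the second --- are rewritten through the Jacobi triple product into the shape \eqref{eqn:defA}; assembling the pieces produces \eqref{eqn:e+char}. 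The two remaining modules, $\RRel{-5/4}^{-}$ and $\RRel{-3/4}^{-}$, then follow from the $z \mapsto z^{-1}$ rule in \eqref{eqn:autch} applied to $\conjmod{\RRel{5/4}^{+}} \cong \RRel{-5/4}^{-}$ and $\conjmod{\RRel{3/4}^{+}} \cong \RRel{-3/4}^{-}$, using that $\sum_{n} z^{n}$ and $\sum_{n}(-z)^{n}$ are invariant under $z \mapsto z^{-1}$ while $z^{5/4} \mapsto z^{-5/4}$. Finally, that each character is a shifted formal series in $z$ whose coefficients are holomorphic in $q$ on $\abs{q} < 1$ is visible directly from the closed forms, since $\taufn{2}{q}$, $\taufn{3}{q}$ and $\taufn{4}{q}$ are holomorphic there up to an overall fractional power of $q$.

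The step I expect to be the main obstacle is the theta-function bookkeeping in the even/odd split for the Ramond characters. One has to keep careful track of the half-period shift $zq^{-1/2}$ inside the $q^{2}$-theta functions --- noting in particular that $\fjth{4}{\,\cdot\,;q^{2}}$ contributes integer powers of $z$ whereas $\fjth{1}{\,\cdot\,;q^{2}}$ contributes half-integer powers, which flips an extra sign in the even-index localisation and is exactly what distinguishes the $z^{5/4}$ from the $z^{-5/4}$ prefactor --- and then recognise the resulting $q$-series, which are theta values carrying quarter-integer characteristics, as precisely $\taufn{3}{q}$ and $\taufn{4}{q}$ after the triple product. A secondary subtlety is to perform each Kac--Wakimoto substitution on an annulus where both it and the relevant highest-weight character are simultaneously valid, so that the ensuing summation of characters as formal power series is legitimate.
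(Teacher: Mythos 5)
Your proposal is correct and follows essentially the same route as the paper: the \ns{} case \eqref{eqn:c+char} is exactly the computation carried out in the text from \eqref{eqn:c1/2+chardecomp} through \eqref{eqn:c1/2+char}, and the paper disposes of the Ramond case \eqref{eqn:e+char} with the remark that ``a similar calculation'' applies, which is precisely the calculation you spell out (correctly pairing \eqref{eqn:kwz2} with the \hw{} characters on $1<\abs{z}<\abs{q}^{-1/2}$ and \eqref{eqn:kwz1} with the \lw{} ones on $\abs{q}^{1/2}<\abs{z}<1$, and noting that, unlike the \ns{} case, both parities of the localisation index survive to give the $\taufn{3}{q}$ and $\taufn{4}{q}$ terms). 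Your identification of the theta-function bookkeeping in the even/odd split as the delicate step is also where the real work lies.
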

\noindent With \eqref{eqn:autch}, we can now compute the characters of all the atypical standard modules.  However, we also need to compute the characters of the typical standard modules: the $\NSRel{\lambda}$, the $\RRel{\mu}$ and their spectral flows.  Happily, these may be deduced from those of the atypical standards merely by shifting the powers of $z$ in \eqref{eqn:c1/2+char} and \eqref{eqn:e+char}.
\begin{proposition} \label{prop:TypCh}
	The characters of the typical relaxed $\OSPMinMod{2}{4}$-modules are (shifted) formal power series in $z$ whose coefficients are holomorphic functions of $q$ for $\abs{q}<1$.  Explicitly, we have
	\begin{subequations} \label{eqn:relchar}
		\begin{align}
			\fch{\NSRel{\lambda}}{y,z,q} &= \frac{y^{-5/4} z^{\lambda}}{\sqrt{2}} \taufn{2}{q} \sum_{n \in \ZZ} z^n & &\text{(\(\lambda \neq \pm\frac{1}{2} \bmod{2}\)),} \label{eqn:nsrelchar} \\
			\fch{\RRel{\mu}}{y,z,q} &= \frac{y^{-5/4} z^{\mu}}{2} \brac*{\taufn{3}{q} \sum_{n \in \ZZ} z^n + \taufn{4}{q} \sum_{n \in \ZZ} (-z)^n} & &\text{(\(\mu \neq \pm \frac{5}{4} \bmod{2}\)).} \label{eqn:rrelchar}
		\end{align}
	\end{subequations}
\end{proposition}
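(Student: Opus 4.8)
The plan is to reduce both character formulae to the atypical case already established in \cref{prop:AtypStCh}, the reduction being nothing more than a shift in the power of $z$. Note first that every module occurring in \cref{prop:TypCh} and in \cref{prop:AtypStCh} is a positive-energy weight $\OSPMinMod{2}{4}$-module generated by its space of ground states, and that in each case that space is a \emph{dense} module over the relevant horizontal subalgebra carrying the \emph{same} Casimir data: the super-Casimir acts as $0$ on the \ns{} ground states $\Rel{\lambda+2\ZZ,0}$ and the $\sltwo$ Casimir acts as $-\tfrac{15}{32}$ on the Ramond ground states $\slRel{\mu+2\ZZ,-15/32}$, so by \eqref{eq:RelConfWts} the ground states all have conformal weight $-\tfrac14$ in the \ns{} sector and $-\tfrac{5}{16}$ in the Ramond sector, independently of the weight-labels $\lambda,\mu$.

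The substantive step is to show that the character of any such module factorises as
\begin{equation*}
	\fch{\Mod{M}}{y,z,q}=y^{-5/4}\,z^{\lambda_{\Mod{M}}}\,G(q)\sum_{n\in\ZZ}z^{n},
\end{equation*}
where $z^{\lambda_{\Mod{M}}}$ records the $\myosp$-weight of a chosen ground state, the decoupling of the $z$- and $q$-dependences is the phenomenon already anticipated for dense ground states, and the multiplicity-generating function $G(q)$ is \emph{independent of $\Mod{M}$} within each sector. The weight-independence of $G$ is the heart of the matter --- and the only real obstacle. Morally it holds because the relations defining $\OSPMinMod{2}{4}$ (those descending from the maximal ideal $\ideal{\chi_{-5/4}}$ of $\OSPUniv{k}$) are $h_0$-weight-homogeneous and are imposed uniformly on the dense ground-state modules $\Rel{\lambda+2\ZZ,0}$ (resp.\ $\slRel{\mu+2\ZZ,-15/32}$) as the weight-label varies, so that the graded structure of the induced module is rigid under such a shift. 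However, unlike the highest-weight characters of \cref{subsec:simpchars}, we have no Kac--Kazhdan-type control of relaxed Verma modules over $\myaosp$, so a Bern\v{s}te\u{\i}n--Gel'fand--Gel'fand resolution of $\NSRel{\lambda}$ or $\RRel{\mu}$ cannot simply be written down; I would instead follow the indirect method of \cite{CreMod12,CreMod13}, rigorously underpinned in the relaxed setting by \cite{KawRel18}, working with relaxed Verma modules over $\OSPMinMod{2}{4}$ itself (whose submodule structure, unlike that of the universal relaxed Vermas, is tractable) to reduce the character computation to manifestly weight-independent pieces.

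Granting the factorised form, the remainder is bookkeeping with theta and eta functions. In the \ns{} sector $\NSRel{1/2}^{+}$ is a module of the above kind with a ground state of weight $\tfrac12\alpha$, so comparing with \eqref{eqn:c+char} identifies $G(q)=\taufn{2}{q}/\sqrt2$; translating the ground-state weight from $\tfrac12$ to $\lambda$ --- legitimate precisely for $\lambda\neq\pm\tfrac12\bmod 2$, where $\Rel{\lambda+2\ZZ,0}$ is simple by \cref{thm:classB24NS} and where \eqref{eqn:modperiodicity} accounts for parity reversal --- replaces $z^{1/2}$ by $z^{\lambda}$ and yields \eqref{eqn:nsrelchar}. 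In the Ramond sector the ground states occupy a single $2\ZZ$-coset of $h_0$-weights (there being no odd zero modes), so the character splits into an even-weight and an odd-weight part; applying the same reasoning to $\RRel{5/4}^{+}$ with \eqref{eqn:e+char} identifies the two multiplicity functions as $\tfrac12(\taufn{3}{q}+\taufn{4}{q})$ and $\tfrac12(\taufn{3}{q}-\taufn{4}{q})$, and translating the ground-state weight to $\mu$ (valid for $\mu\neq\pm\tfrac54\bmod 2$) gives \eqref{eqn:rrelchar}. The holomorphy assertion is then immediate from \eqref{eqn:defA} and the product expansions of $\taufn{2}{q},\taufn{3}{q},\taufn{4}{q}$, each of which is holomorphic on $\set{\abs{q}<1}$.
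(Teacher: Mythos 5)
Your proposal is essentially the paper's approach: the paper likewise asserts that the typical characters follow from the atypical standard ones \eqref{eqn:c+char} and \eqref{eqn:e+char} merely by shifting the power of $z$, and it explicitly declines to prove this, citing Adamovi\'{c}'s explicit construction \cite{AdaRea17} for \eqref{eqn:nsrelchar} and the coherent-family argument of \cite{KawRel18} (via \cite{MatCla00}) for both sectors. You have correctly isolated that the weight-independence of the multiplicity-generating function is the only real content, and your deferral of that step to \cite{KawRel18} --- rather than resting on your ``moral'' homogeneity argument, which by itself would not exclude the weight-space dimensions of the maximal submodule jumping as $\lambda$ varies --- is exactly what the paper does.
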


\noindent We will not provide a proof of this assertion here, referring instead to recent work of Adamovi\'{c} \cite[Sec.~11]{AdaRea17} who proves \eqref{eqn:nsrelchar} using an explicit construction of these modules.  A more general proof, that includes the Ramond case \eqref{eqn:rrelchar}, appears in \cite{KawRel18}.  This approach works in the setting of affine Lie superalgebras and relies on Mathieu's theory of coherent families \cite{MatCla00}.  We expect it to also generalise to higher rank superalgebras.

\begin{remark}
	We note that the Ramond characters \eqref{eqn:rrelchar} are manifestly $2$-periodic in $\mu$, in accord with the isomorphisms $\RRel{\mu} \cong \RRel{\mu+2}$.  However, the \ns{} characters \eqref{eqn:nsrelchar} are manifestly $1$-periodic, despite $\NSRel{\lambda}$ and $\NSRel{\lambda+1}$ being non-isomorphic.  Rather, we have $\NSRel{\lambda} \cong \parr{\NSRel{\lambda+1}}$, by \eqref{eqn:modperiodicity}, explaining this $1$-periodicity.
\end{remark}

\subsection{Supercharacters} \label{subsec:supchars}

As mentioned above, modular transformations require us to consider supercharacters in addition to characters.  The \emph{supercharacter} of a weight module $\Mod{M}$ of $\myaosp$ is defined, analogously to the definition of character in \eqref{eqn:defchar}, to be the formal power series
\begin{equation} \label{eqn:defschar}
\fsch{\Mod{M}}{y,z,q} = \straceover{\Mod{M}} \brac*{y^k z^{h_0} q^{L_0 - c/24}}
\end{equation}
in which $\straceover{\Mod{M}} \brac*{X}$ denotes the supertrace of the image in $\End(\Mod{M})$ of $X \in \aalgh$.  This is, we recall, the trace of $X$ on the even subspace $\Mod{M}^{(0)}$ minus the trace on the odd subspace $\Mod{M}^{(1)}$.  The use of the subscript $\frac{1}{2}$ turns out to be convenient in \cref{sec:modver}, see \cref{eq:chschstandard} for example.  We clearly have
\begin{equation} \label{eqn:schpar}
\sch{\parrmod{\Mod{M}}} = -\sch{\Mod{M}}
\end{equation}
and, similarly to characters, supercharacters of twisted modules are easily computed using
\begin{equation} \label{eqn:autsch}
\fsch{\conjmod{\Mod{M}}}{y,z,q} = \fsch{\Mod{M}}{y,z^{-1},q}, \quad
\fsch{\sfmod{\ell}{\Mod{M}}}{y,z,q} = \fsch{\Mod{M}}{y z^{2\ell} q^{\ell^2},z q^{\ell},q},
\end{equation}
the latter holding for all $\ell \in \frac{1}{2} \ZZ$.

In fact, it is easy to deduce formulae for $\myaosp$-module supercharacters from their character formulae because $x$ and $y$ have odd $\myosp$-weights (measured in units of $\alpha$) while $e$, $h$ and $f$ have even $\myosp$-weights.  The vectors of an $\myaosp$-module $\Mod{M}$ whose $\myosp$-weights are even, relative to some chosen (even) vector, therefore span $\Mod{M}^{(0)}$ while those with relative odd $\myosp$-weights span $\Mod{M}^{(1)}$.  Replacing $z$ by $\ee^{\ii \pi} z = -z$ everywhere in $\ch{\Mod{M}}$, except in the prefactor that fixes the $\myosp$-weight of the chosen vector, will therefore convert the character of $\Mod{M}$ into its supercharacter.  The supercharacters of the relaxed \hw{} $\OSPMinMod{2}{4}$-modules are thereby deduced from \cref{prop:AtypCh,prop:AtypStCh,prop:TypCh}.
\begin{proposition} \label{prop:RelSch}
	The supercharacters of the simple $\OSPMinMod{2}{4}$-modules are specified, for $\abs{q} < 1$, by
	\begin{equation} \label{eqn:schars}
		\begin{aligned}
			\fsch{\NSFin{0}}{y,z,q} &= y^{-5/4} \frac{\fjth{2}{z;q^2} \; \fjth{1}{z;q}}{\fjth{1}{z^2 ;q} \; \deta{q}} & &\text{(\(\abs{q}^{1/2} < \abs{z} < \abs{q}^{-1/2}\)),} \\
			\fsch{\NSInf{-1/2}^+}{y,z,q} &= -\ii y^{-5/4} \frac{\fjth{3}{z;q^2} \; \fjth{1}{z;q}}{\fjth{1}{z^2 ;q} \; \deta{q}} & &\text{(\(1 < \abs{z} < \abs{q}^{-1/2}\)),} \\
			\fsch{\NSRel{\lambda}}{y,z,q} &= \frac{y^{-5/4} z^{\lambda}}{\sqrt{2}} \taufn{2}{q} \sum_{n \in \ZZ} (-z)^n & &\text{(\(\lambda \neq \pm\frac{1}{2} \bmod{2}\)),} \\
			\fsch{\RRel{\mu}}{y,z,q} &= \frac{y^{-5/4} z^{\mu}}{2} \brac*{\taufn{4}{q} \sum_{n \in \ZZ} z^n + \taufn{3}{q} \sum_{n \in \ZZ} (-z)^n} & &\text{(\(\mu \neq \pm \frac{5}{4} \bmod{2}\))}
		\end{aligned}
	\end{equation}
	and \eqref{eqn:autsch} (conjugation and spectral flow).  Similarly, the supercharacters of the atypical standard $\OSPMinMod{2}{4}$-modules are specified by
	\begin{equation} \label{eq:AtypStSChars}
		\begin{aligned}
			\fsch{\NSRel{\pm 1/2}^{\pm}}{y,z,q} &= \frac{y^{-5/4} z^{1/2}}{\sqrt{2}} \taufn{2}{q} \sum_{n \in \ZZ} (-z)^n, \\
			\fsch{\RRel{\pm 5/4}^{\pm}}{y,z,q} = \fsch{\RRel{\mp 3/4}^{\mp}}{y,z,q} &= \frac{y^{-5/4} z^{\pm 5/4}}{2} \brac*{\taufn{4}{q} \sum_{n \in \ZZ} z^n + \taufn{3}{q} \sum_{n \in \ZZ} (-z)^n}.
		\end{aligned}
	\end{equation}
\end{proposition}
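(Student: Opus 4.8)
The plan is to derive each supercharacter formula directly from the corresponding character formula of \cref{prop:AtypCh,prop:AtypStCh,prop:TypCh}, using the correlation between the $\ZZ_2$-parity of a weight vector and the parity of its $\myosp$-weight. The key structural point is that, in $\myaosp$ (in either sector), the odd generators $x_n$ and $y_n$ shift the $h_0$-eigenvalue by a single unit of $\alpha$, whereas the even generators $e_n,f_n,h_n,K,L_0$ shift it by an even number of units. Since every module $\Mod M$ appearing in \cref{prop:RelSch} is generated by a single (relaxed) highest-weight vector of some weight $\lambda_0\alpha$, declaring that vector to be even identifies $\Mod M^{(0)}$ with the span of the weight spaces of weights in $(\lambda_0+2\ZZ)\alpha$ and $\Mod M^{(1)}$ with those in $(\lambda_0+2\ZZ+1)\alpha$. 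Comparing the definitions \eqref{eqn:defchar} and \eqref{eqn:defschar}, it follows that $\fsch{\Mod M}{y,z,q}$ is obtained from $\fch{\Mod M}{y,z,q}$ by replacing $z$ by $-z$ everywhere except in the prefactor $z^{\lambda_0}$ that records the base weight (choosing the base vector odd instead merely inserts an overall sign, consistent with \eqref{eqn:schpar}).

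It then remains to carry out this substitution. For the two highest-weight modules $\NSFin{0}$ and $\NSInf{-1/2}^+$ appearing in \eqref{eqn:schars}, one notes that $(-z)^2=z^2$, so $\fjth{1}{z^2;q}$ (and the prefactor $z^{-1/4}$) is untouched, while the shift $z\mapsto -z$ (i.e. $\zeta\mapsto\zeta+\tfrac12$) interchanges $\jth{1}\leftrightarrow\jth{2}$ and $\jth{3}\leftrightarrow\jth{4}$ for any nome, in particular for both $q$ and $q^2$; this sends $\fjth{1}{z;q^2}\fjth{2}{z;q}\mapsto\fjth{2}{z;q^2}\fjth{1}{z;q}$ and $\fjth{4}{z;q^2}\fjth{2}{z;q}\mapsto\fjth{3}{z;q^2}\fjth{1}{z;q}$, which are the first two lines of \eqref{eqn:schars}. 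For the relaxed highest-weight modules the $z$-dependence sits entirely in the geometric-type sums: starting from \eqref{eqn:nsrelchar}, \eqref{eqn:rrelchar}, \eqref{eqn:c+char} and \eqref{eqn:e+char}, one factors out $z^{\lambda}$ (respectively $z^{\mu}$, $z^{1/2}$, $z^{\pm 5/4}$) and replaces $\sum_{n\in\ZZ}z^n$ by $\sum_{n\in\ZZ}(-z)^n$ and vice versa; this yields the alternating sum in $\fsch{\NSRel{\lambda}}{y,z,q}$ and interchanges the roles of $\taufn{3}{q}$ and $\taufn{4}{q}$ in $\fsch{\RRel{\mu}}{y,z,q}$, giving the remaining entries of \eqref{eqn:schars} together with \eqref{eq:AtypStSChars}. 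Finally, the supercharacters of the conjugates and of all spectral flows follow from \eqref{eqn:autsch} in precisely the way the corresponding character statements in \cref{prop:AtypCh} followed from \eqref{eqn:autch}.

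The only genuinely delicate step is establishing the parity/weight correspondence literally for each module on the list: for the highest-weight modules it is read off from the \pbw{} bases (and from the Iohara--Koga resolutions used to obtain \eqref{eqn:char}), while in the Ramond sector one must recall that the odd generators are half-integer moded yet still change the $h_0$-eigenvalue by one unit of $\alpha$, so that the even weight spaces of $\RRel{\mu}$ sit at weights $(\mu+2\ZZ)\alpha$ and the odd ones at $(\mu+2\ZZ+1)\alpha$. The attendant bookkeeping of the theta-function phases under $z\mapsto -z$ (fixed by the chosen conventions) and of the non-integral prefactors $z^{-1/4}$, $z^{\pm 5/4}$ is then purely mechanical; a useful consistency check is that none of the resulting supercharacters vanishes identically, reflecting the fact that no module in the list is isomorphic to its own parity reversal.
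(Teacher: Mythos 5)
Your argument is exactly the one the paper uses: it deduces the supercharacters from the characters of \cref{prop:AtypCh,prop:AtypStCh,prop:TypCh} by observing that the odd generators shift the $\myosp$-weight by an odd multiple of $\alpha$ while the even generators shift it by an even multiple, so that $\schmap$ is obtained from $\chmap$ by substituting $z\mapsto -z$ everywhere except in the prefactor recording the base weight. Your additional bookkeeping of the $\jth{1}\leftrightarrow\jth{2}$, $\jth{3}\leftrightarrow\jth{4}$ interchanges and of the phases (including the compensating sign from $\fjth{1}{z^2;q}$ in the denominator) is just a more explicit rendering of what the paper leaves as "thereby deduced".
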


\begin{remark}
	Note that $\sch{\NSRel{\lambda}}$ is $1$-antiperiodic in $\lambda$ and $\sch{\RRel{\mu}}$ is $2$-periodic in $\mu$, as expected.
\end{remark}

\subsection{The Grothendieck group} \label{sub:Gr}

Given a category $\categ{C}$ of modules, each of which has finitely many composition factors, one defines its Grothendieck group $\tGr{\categ{C}}$ as the free abelian group generated by the isomorphism classes of the simple modules.  The image $\tGr{\Mod{M}} \in \tGr{\categ{C}}$ of a module $\Mod{M} \in \categ{C}$ is then the sum of the images of its composition factors.  As characters and supercharacters are isomorphism invariants that do not distinguish between a module and the direct sum of its composition factors, they define functions $\chmap$ and $\schmap$ on the Grothendieck group.

Neither of these functions is injective --- in general, $\chmap$ and $\schmap$ do not distinguish $\tGr{\Mod{M}}$ from $\tGr{\parrmod{\Mod{M}}}$ and $-\tGr{\parrmod{\Mod{M}}}$, respectively --- but their direct sum is, provided that the set of pairs $(\tch{\Mod{M}}, \tsch{\Mod{M}})$, where $\tGr{\Mod{M}}$ ranges over the isomorphism classes of the simple modules, is linearly independent (over $\ZZ$).  If this is the case, which it is for the physical category $\categ{P}$ of $\OSPMinMod{2}{4}$-modules, then it follows that the character and supercharacter of a module together completely determine the class of the module in the Grothendieck group.

Whilst the set of isomorphism classes of the simple $\OSPMinMod{2}{4}$-modules should be regarded as a canonical basis of the Grothendieck group, the standard module formalism suggests that another basis set will be useful.  This is the set of Grothendieck images of the standard modules, both typical and atypical, noting that
\begin{subequations}
	\begin{equation}
		\Gr{\sfmod{\ell}{\NSRel{1/2}^+}} = \Gr{\parr\sfmod{\ell}{\NSInf{-1/2}^+}} + \Gr{\sfmod{\ell}{\NSInf{1/2}^-}} = \Gr{\parr\sfmod{\ell}{\NSRel{-1/2}^-}}
	\end{equation}
	in the \ns{} sector, whilst in the Ramond sector we similarly have
	\begin{equation}
		\Gr{\sfmod{\ell}{\RRel{3/4}^+}} = \Gr{\sfmod{\ell}{\RRel{-5/4}^-}}, \quad
		\Gr{\sfmod{\ell}{\RRel{5/4}^+}} = \Gr{\sfmod{\ell}{\RRel{-3/4}^-}}
	\end{equation}
\end{subequations}
and their parity-reversed counterparts.  To maintain linear independence among the images of the atypical standards, we may consistently omit the Grothendieck images of standards with $-$ superscripts in favour of those with $+$ superscripts.  There is, of course, no issue with linear independence among typicals.

We assert that the Grothendieck images of the typicals and the atypical standards with $+$ superscripts together form a basis of a certain completion of the Grothendieck group of our $\OSPMinMod{2}{4}$-module category $\categ{P}$.  To justify this, we must exhibit the images of the simple atypicals as a linear combination of images of atypical standards.  This is achieved by constructing resolutions for the former in terms of the latter.  It clearly suffices to give the resolutions for one member of each spectral flow orbit of simple atypicals.

\begin{prop}\label{thm:resolutions}
  The simple modules \(\NSFin{0}\) and \(\NSInf{\pm1/2}^{\mp}\) have the following resolutions:
  \begin{subequations}
    \begin{align}
      \cdots\lra\parr\sfmod{11/2}{\RRel{3/4}^+}\lra \sfmod{5}{\NSRel{1/2}^+}&\lra\sfmod{9/2}{\RRel{5/4}^+} \notag \\
			\lra\sfmod{7/2}{\RRel{3/4}^+}&\lra \parr\sfmod{3}{\NSRel{1/2}^+}\lra\parr\sfmod{5/2}{\RRel{5/4}^+} \notag \\
      &\lra\parr\sfmod{3/2}{\RRel{3/4}^+}\lra \sfmod{}{\NSRel{1/2}^+}\lra\sfmod{1/2}{\RRel{5/4}^+}\lra \NSFin{0}\lra0, \\
			\cdots\lra \parr\sfmod{11/2}{\RRel{5/4}^+}\lra\parr\sfmod{9/2}{\RRel{3/4}^+}&\lra \sfmod{4}{\NSRel{1/2}^+} \notag \\
			\lra\sfmod{7/2}{\RRel{5/4}^+}&\lra\sfmod{5/2}{\RRel{3/4}^+}\lra \parr\sfmod{2}{\NSRel{1/2}^+} \notag \\
			&\lra \parr\sfmod{3/2}{\RRel{5/4}^+}\lra\parr\sfmod{1/2}{\RRel{3/4}^+}\lra \NSRel{1/2}^+\lra \NSInf{1/2}^-\lra0, \\
			\cdots\lra\sfmod{6}{\NSRel{1/2}^+} \lra\sfmod{11/2}{\RRel{5/4}^+}&\lra\sfmod{9/2}{\RRel{3/4}^+} \notag \\
			\lra \parr\sfmod{4}{\NSRel{1/2}^+}&\lra\parr\sfmod{7/2}{\RRel{5/4}^+}\lra\parr\sfmod{5/2}{\RRel{3/4}^+} \notag \\
			&\lra \sfmod{2}{\NSRel{1/2}^+}\lra\sfmod{3/2}{\RRel{5/4}^+}\lra\sfmod{1/2}{\RRel{3/4}^+}\lra \NSInf{-1/2}^+\lra0.
    \end{align}
  \end{subequations}
\end{prop}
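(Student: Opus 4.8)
The plan is to obtain all three resolutions by splicing together suitably twisted copies of the short exact sequences \eqref{eqn:sesindec}. Only the three sequences in its left-hand column are needed, namely those for $\NSRel{1/2}^+$, $\RRel{3/4}^+$ and $\RRel{5/4}^+$, and it is convenient to first use the spectral flow identifications \eqref{eqn:sfA0} to rewrite their outer terms: the $\NSRel{1/2}^+$-sequence is already $\dses{\parr\NSInf{-1/2}^+}{}{\NSRel{1/2}^+}{}{\NSInf{1/2}^-}$, while $\RInf{-5/4}^+ \cong \sfmod{1/2}{\NSFin{0}}$, $\RInf{3/4}^- \cong \sfmod{-1/2}{\NSInf{-1/2}^+}$, $\RInf{-3/4}^+ \cong \sfmod{1/2}{\NSInf{1/2}^-}$ and $\RInf{5/4}^- \cong \sfmod{-1/2}{\NSFin{0}}$ turn the other two into short exact sequences all of whose terms are spectral flows of $\NSFin{0}$, $\NSInf{-1/2}^+$ and $\NSInf{1/2}^-$. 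Since $\sfaut^{\ell}$ ($\ell \in \tfrac12\ZZ$), $\conjaut$ and $\parr$ are exact autoequivalences of the category of $\OSPMinMod{2}{4}$-modules (\cref{prop:AutVOAMod}), every twist of these three sequences is again a short exact sequence; taken together, they exhibit each simple atypical module both as a submodule of one standard atypical module and as a quotient of another.

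I would then assemble the resolution of $\NSFin{0}$ step by step. Twisting the $\RRel{5/4}^+$-sequence by $\sfaut^{1/2}$ gives $\dses{\sfmod{}{\NSInf{1/2}^-}}{}{\sfmod{1/2}{\RRel{5/4}^+}}{}{\NSFin{0}}$, which supplies the rightmost surjection and identifies its kernel. The $\NSRel{1/2}^+$-sequence twisted by $\sfaut$ presents this kernel $\sfmod{}{\NSInf{1/2}^-}$ as a quotient of $\sfmod{}{\NSRel{1/2}^+}$ with kernel $\parr\sfmod{}{\NSInf{-1/2}^+}$; the $\RRel{3/4}^+$-sequence twisted by $\parr\sfaut^{3/2}$ presents that kernel as a quotient of $\parr\sfmod{3/2}{\RRel{3/4}^+}$ with kernel $\parr\sfmod{2}{\NSFin{0}}$; and so on. Composing each surjection with the inclusion of the next kernel produces the stated complex, exactness at every internal node being the elementary fact that splicing short exact sequences $0\to S_{n+1}\to B_n\to S_n\to 0$ yields an exact complex $\cdots\to B_1\to B_0\to S_0\to 0$: the composite $B_{n+1}\to B_n\to B_{n-1}$ factors through $S_{n+1}\hookrightarrow B_n\twoheadrightarrow S_n$ and hence vanishes, while $\ker(B_n\to B_{n-1})=\ker(B_n\to S_n)=S_{n+1}=\im(B_{n+1}\to B_n)$. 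The resolutions of $\NSInf{1/2}^-$ and $\NSInf{-1/2}^+$ are produced the same way, starting respectively from the $\NSRel{1/2}^+$-sequence itself and from the $\RRel{3/4}^+$-sequence twisted by $\parr\sfaut^{1/2}$; the resolutions of all remaining simple atypicals (the spectral flows and parity reversals of these three) then follow by applying the exact functors $\sfaut^{n}$ and $\parr$.

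The final point is to confirm that the construction never stops and reproduces the displayed periodicity. Bookkeeping with \eqref{eqn:sfA0} and the three twisted sequences shows that the lists of standard modules $B_n$ and of simple kernels $S_n$ are $3$-periodic up to the autoequivalence $\parr\sfaut^{2}$: one has $B_{n+3}\cong\parr\sfmod{2}{B_n}$ and $S_{n+3}\cong\parr\sfmod{2}{S_n}$, so each resolution is generated from its first three terms by repeatedly applying $\parr\sfaut^{2}$, and an immediate induction yields the infinite complex. It cannot terminate or split, because the standard modules $\parr^{\varepsilon}\sfmod{\ell}{\RRel{3/4}^+}$, $\parr^{\varepsilon}\sfmod{\ell}{\RRel{5/4}^+}$ and $\parr^{\varepsilon}\sfmod{\ell}{\NSRel{1/2}^+}$ appearing (with $\varepsilon\in\set{0,1}$, $\ell\in\tfrac12\ZZ$) are pairwise non-isomorphic, as their extremal states show. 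The only genuinely substantial ingredient is the validity of the short exact sequences \eqref{eqn:sesindec} --- equivalently, that the relevant induced modules really do have the claimed two-step structure, which rests on \cref{subsec:strthy,subsec:sl2} and is proved rigorously in \cite{KawRel18}. Granting that, everything above is routine, and the main (if modest) obstacle I expect is simply keeping the spectral flow indices and parity twists consistent across the three interlocking sequences.
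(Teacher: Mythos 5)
Your construction is correct and is essentially identical to the paper's proof, which likewise obtains the resolutions by repeatedly splicing the three short exact sequences derived from \eqref{eqn:sesindec} via the identifications \eqref{eqn:sfA0}, together with their parity reversals and spectral flows. (One tiny bookkeeping slip: the resolution of $\NSInf{-1/2}^+$ starts from the $\RRel{3/4}^+$-sequence twisted by $\sfaut^{1/2}$ alone, with no extra $\parr$.)
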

\begin{proof}
  The resolutions are constructed by repeatedly splicing the following short exact sequences, their parity reversals and their spectrally flowed versions:
	\begin{equation}
	  \begin{gathered}
		  \dses{\sfmod{}{\NSInf{1/2}^-}}{}{\sfmod{1/2}{\RRel{5/4}^+}}{}{\NSFin{0}},\\
	    \dses{\parrmod{\NSInf{-1/2}^+}}{}{\NSRel{1/2}^+}{}{\NSInf{1/2}^-},\\
	    \dses{\sfmod{}{\NSFin{0}}}{}{\sfmod{1/2}{\RRel{3/4}^+}}{}{\NSInf{-1/2}^+}.
	  \end{gathered}
	\end{equation}
  These, in turn, are easily derived from the short exact sequences \eqref{eqn:sesindec} and the identifications \eqref{eqn:sfA0}.
\end{proof}

The Euler-Poincar\'e principle now leads to the desired expressions for the images of the simple atypicals in an appropriate completion of the Grothendieck group. Such a completion must admit the required infinite sums and so we shall choose it to consist of (possibly infinite) linear combinations of the form
\begin{equation} \label{eq:thecompletion}
	\sum_{i=1}^n \sum_{\ell \ge L} a_L^i \Gr{\sfmod{\ell}{\Mod{S}_i}} \quad \text{($n \in \ZZ_{\ge 0}$, $L \in \tfrac{1}{2} \ZZ$, $a_L^i \in \ZZ$),}
\end{equation}
where each $\Mod{S}_i$ is either a typical or an atypical standard with $+$
superscript.  The key requirement is that the spectral flow indices be bounded
below.  The Grothendieck images of the typicals and atypical standards with
$+$ superscripts form a basis of this completion, in the obvious sense, as
claimed.

\begin{remark}
The above completion allows for the following interpretation. Let \(\ZZ\brac*{\brac*{\sfaut^{1/2}}}\) be the ring of formal Laurent series in the spectral flow functor $\sfaut^{1/2}$. \footnote{Recall that in rings of formal Laurent series, the powers of the indeterminate are assumed to be bounded below.}
    These series can then be applied to \emph{finite} sums of (Grothendieck images of) standards to get infinite sums
    of the form \eqref{eq:thecompletion}.
    In other words, the set of all elements of the form \eqref{eq:thecompletion} is just the free
    \(\ZZ\brac*{\brac*{\sfaut^{1/2}}}\)-module whose basis consists
    of all typicals and all atypical standards with $+$ superscript.
\end{remark}

\begin{cor}\label{cor:resgr}
  The following identities hold in the given completion of the Grothendieck group (as do their spectrally flowed versions):
  \begin{subequations}\label{eqn:resgr}
	  \begin{align}
	    \Gr{\NSFin{0}}
	    &= \sum_{n=0}^\infty \biggl( \Gr{\sfmod{4n+1/2}{\RRel{5/4}^+}} - \Gr{\sfmod{4n+1}{\NSRel{1/2}^+}} + \Gr{\parr\sfmod{4n+3/2}{\RRel{3/4}^+}} \biggr. \notag \\
	    &\mspace{100mu} \biggl. - \Gr{\parr\sfmod{4n+5/2}{\RRel{5/4}^+}} + \Gr{\parr\sfmod{4n+3}{\NSRel{1/2}^+}} - \Gr{\sfmod{4n+7/2}{\RRel{3/4}^+}} \biggr), \label{eqn:resgra} \\
	    \Gr{\NSInf{1/2}^-}
	    &= \sum_{n=0}^\infty \biggl( \Gr{\sfmod{4n}{\NSRel{1/2}^+}} - \Gr{\parr\sfmod{4n+1/2}{\RRel{3/4}^+}} + \Gr{\parr\sfmod{4n+3/2}{\RRel{5/4}^+}} \biggr. \notag \\
	    &\mspace{100mu} \biggl. - \Gr{\parr\sfmod{4n+2}{\NSRel{1/2}^+}} + \Gr{\sfmod{4n+5/2}{\RRel{3/4}^+}} - \Gr{\sfmod{4n+7/2}{\RRel{5/4}^+}} \biggr), \label{eq:resgrb-} \\
	    \Gr{\NSInf{-1/2}^+}
	    &= \sum_{n=0}^\infty \biggl( \Gr{\sfmod{4n+1/2}{\RRel{3/4}^+}} - \Gr{\sfmod{4n+3/2}{\RRel{5/4}^+}} + \Gr{\sfmod{4n+2}{\NSRel{1/2}^+}} \biggr. \notag \\
	    &\mspace{100mu} - \Gr{\parr\sfmod{4n+5/2}{\RRel{3/4}^+}} + \Gr{\parr\sfmod{4n+7/2}{\RRel{5/4}^+}} - \Gr{\parr\sfmod{4n+4}{\NSRel{1/2}^+}} \biggr). \label{eq:resgrb+}
	  \end{align}
	\end{subequations}
	Applying $\chmap$ and $\schmap$ to these identities then gives analogous identities for simple atypical (super)characters in terms of atypical standard (super)characters.
\end{cor}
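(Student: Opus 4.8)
The plan is to apply the Euler--Poincar\'e principle to the resolutions constructed in \cref{thm:resolutions}. Write such a resolution as $\cdots \to P_1 \to P_0 \to \Mod{M} \to 0$, where $\Mod{M} \in \set{\NSFin{0}, \NSInf{1/2}^-, \NSInf{-1/2}^+}$ and each $P_i$ is a spectral flow (possibly composed with $\parr$) of one of the atypical standard modules $\NSRel{1/2}^+$, $\RRel{3/4}^+$ or $\RRel{5/4}^+$. First I would truncate after the $N$-th term and break the result into the short exact sequences $\dses{Z_i}{}{P_i}{}{Z_{i-1}}$, where $Z_{-1} = \Mod{M}$ and $Z_i = \ker(P_i \to P_{i-1})$. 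Each standard appearing has length two and the spectral flow functors $\sfaut^{\ell}$ are exact, so all the $P_i$, and hence all the $Z_i$, have finite length; splicing the above sequences in $\tGr{\categ{P}}$ therefore yields the exact identity $\Gr{\Mod{M}} = \sum_{i=0}^{N}(-1)^i\Gr{P_i} + (-1)^{N+1}\Gr{Z_N}$.

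Next I would pass to the completion of the Grothendieck group discussed around \eqref{eq:thecompletion}. Inspecting \cref{thm:resolutions}, each $P_i$ has the form $\sfmod{\ell_i}{\Mod{S}}$ or $\parr\sfmod{\ell_i}{\Mod{S}}$ with $\Mod{S} \in \set{\NSRel{1/2}^+, \RRel{3/4}^+, \RRel{5/4}^+}$ and the indices $\ell_i$ bounded below but tending to $+\infty$. Hence the partial sums $\sum_{i=0}^{N}(-1)^i\Gr{P_i}$ converge in the completion, while the remainder $(-1)^{N+1}\Gr{Z_N}$, whose composition factors all carry spectral flow index $\ell_N$, vanishes there in the limit; this gives $\Gr{\Mod{M}} = \sum_{i \ge 0}(-1)^i\Gr{P_i}$ in the completion. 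Reading off the terms of each of the three resolutions, one finds that they are periodic --- the spectral flow exponent increasing by four over each block of six consecutive terms, with the arrangement of parity reversals repeating --- and collecting the terms into these blocks produces exactly \eqref{eqn:resgra}, \eqref{eq:resgrb-} and \eqref{eq:resgrb+}. The parity reversals must be retained throughout, since $\Gr{\parr\sfmod{\ell}{\Mod{S}}}$ and $\Gr{\sfmod{\ell}{\Mod{S}}}$ are distinct basis elements of the completion for each of these standards.

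The spectrally flowed versions of \eqref{eqn:resgr} then follow at once by applying the exact functors $\sfaut^{m/2}$, $m \in \ZZ$, which commute with $\parr$ and translate spectral flow indices; and the (super)character identities follow by applying $\chmap$ and $\schmap$, which are defined on (this completion of) the Grothendieck group and send the relations \eqref{eqn:resgr} term by term to the asserted identities among atypical standard (super)characters.

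The step I expect to be the main obstacle is making precise the vanishing of the Euler--Poincar\'e tail $\Gr{Z_N}$ in the completion. One cannot control it using the conformal-weight grading, because spectral flow destroys the lower bound on the conformal weights of the relaxed modules; the correct bookkeeping parameter is the spectral flow index, and one must combine the fact that $Z_N$ embeds in $\sfmod{\ell_N}{\Mod{S}}$ (or its parity reversal) with $\ell_N \to \infty$ with the description of the completion as formal Laurent series in $\sfaut^{1/2}$ applied to finitely many standards, in which any element supported entirely on large spectral flow indices is negligible.
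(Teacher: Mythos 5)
Your proposal is correct and follows essentially the same route as the paper: the paper simply invokes the Euler--Poincar\'e principle applied to the resolutions of \cref{thm:resolutions}, interpreted in the completion \eqref{eq:thecompletion}, and then verifies convergence of the resulting infinite sums of (super)characters as formal power series. Your more careful treatment of the truncated resolution and the vanishing of the tail $\Gr{Z_N}$ in the completion just makes explicit what the paper leaves implicit.
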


It is easy to check that the infinite sums of (super)characters in these relations converge in the sense of formal power series.  This means that the coefficient of each monomial in the indeterminates only receives non-zero contributions from finitely many of the terms in the infinite sum.

\begin{remark}
	Note that one can also derive right-sided resolutions for the simple atypicals, for example by taking contragredient duals of the left-sided ones of \cref{thm:resolutions}.  The powers of spectral flow appearing in these resolutions and the analogue of \cref{cor:resgr} would then be bounded above.

    The natural completion of the Grothendieck group would now be the free \(\ZZ\brac*{\brac*{\sfaut^{-1/2}}}\)-module on the typicals and atypical standards with $+$ superscript. One can check that the results derived in what follows do not depend on the choice of resolutions.
\end{remark}

\section{The affine minimal model $\OSPMinMod{2}{4}$: Modular properties} \label{sec:modver}

We now study the modular properties of the simple and standard \(\OSPMinMod{2}{4}\)-modules.  More precisely, we first show that the action of the modular group closes on the span of the characters and supercharacters of the standard modules.  From \cref{sub:Gr}, the characters and supercharacters of the simple atypical \(\OSPMinMod{2}{4}\)-modules belong to this span, so their modular properties follow directly.  These explicit modular transformations will be used to compute the Grothendieck fusion rules in \cref{sec:fusion} using a version of the standard Verlinde formula.

\subsection{Standard modular transforms} \label{subsec:stmod}

The modular group \(\SLG{SL}{2;\ZZ}\) admits the following well known presentation:
\begin{equation} \label{eqn:modpres}
  \SLG{SL}{2;\ZZ}=\left\langle \modS,\modT\st \modS^2=(\modS\modT)^3,\ \modS^4=1\right\rangle.
\end{equation}
On (super)characters, we define the action of the generating elements $\modS$
and $\modT$ to be the coordinate transforms
\begin{equation}
	\modS \brac*{\psi \vert \zeta \vert \tau} = \brac*{\psi - \frac{\zeta^2}{\tau} + \frac{1}{2 \pi k} \brac*{\arg{\tau} - \frac{\pi}{2}} \relmiddle{\vert} \frac{\zeta}{\tau} \relmiddle{\vert} \frac{-1}{\tau}}, \quad
	\modT \brac*{\psi \vert \zeta \vert \tau} = \brac*{\psi + \frac{1}{12k} \relmiddle{\vert} \zeta \relmiddle{\vert} \tau + 1},
\end{equation}
where $k=-\frac{5}{4}$ (and we recall \cref{eqn:modvars} for the definition of $\psi$, $\zeta$ and $\tau$).  Though the transform for $\psi$ may appear unfamiliar, it is chosen specifically to obtain an honest, as opposed to a projective, representation of $\SLG{SL}{2;\ZZ}$ on the span of the standard characters.  Direct computation verifies that these generators do indeed satisfy the
defining relations \eqref{eqn:modpres} of \(\SLG{SL}{2;\ZZ}\).

The modular transformations of the standard characters can be computed from those of the auxiliary functions \(\taufn{2}{q}\), \(\taufn{3}{q}\) and \(\taufn{4}{q}\) defined in \eqref{eqn:defA}:
\begin{equation} \label{eqn:Ajmod}
  \begin{aligned}
	  \modch{S}{\taufn{2}{q}} &= \frac{1}{- \ii \tau} \taufn{4}{q},
	  &\modch{S}{\taufn{3}{q}} &= \frac{1}{- \ii \tau} \taufn{3}{q},
	  &\modch{S}{\taufn{4}{q}} &= \frac{1}{- \ii \tau} \taufn{2}{q}, \\
	  \modch{T}{\taufn{2}{q}} &= \ee^{-\pi \ii / 12} \taufn{2}{q},
	  &\modch{T}{\taufn{3}{q}} &= \ee^{-5 \pi \ii / 24} \taufn{4}{q},
	  &\modch{T}{\taufn{4}{q}} &= \ee^{-5 \pi \ii / 24} \taufn{3}{q}.
	\end{aligned}
\end{equation}
These formulae follow directly from the well known transforms of the Jacobi theta functions and the Dedekind eta function.

In preparation for the modular S-transforms below, we rewrite the formulae for the standard (super)characters using the Fourier theoretic identity \eqref{eqn:deltasum}.  For example,
\begin{align}
	\ch{\NSRel{\lambda}} &= y^k z^{\lambda}
	\frac{\taufn{2}{q}}{\sqrt{2}} \sum_{n \in \ZZ} z^n
	=y^k\frac{\taufn{2}{q}}{\sqrt{2}}\sum_{n \in
		\ZZ}z^{\lambda}\delta(\zeta-n)
	=y^k\frac{\taufn{2}{q}}{\sqrt{2}}\sum_{n \in
		\ZZ}\ee^{2\pi\ii\lambda n}\delta(\zeta-n)
\end{align}
implies, using \eqref{eqn:autch}, that
\begin{subequations}
  \begin{align}
    \ch{\sfmod{\ell}{\NSRel{\lambda}}}
    &=(yz^{2\ell}q^{\ell^2})^k\frac{\taufn{2}{q}}{\sqrt{2}}\sum_{n \in \ZZ}\ee^{2\pi\ii\lambda n}\delta(\zeta+\ell\tau-n)\nonumber\\
    &=y^k\frac{\taufn{2}{q}}{\sqrt{2}}
    \sum_{n \in \ZZ}\ee^{2\pi\ii\lambda n}\ee^{2\pi\ii k\ell(2 n-\ell\tau)}\delta(\zeta+\ell\tau-n).
  \intertext{Similarly,}
    \sch{\sfmod{\ell}{\NSRel{\lambda}}}
    &=y^k\frac{\taufn{2}{q}}{\sqrt{2}}\sum_{n \in\ZZ}\ee^{2\pi\ii\lambda (n-1/2)}\ee^{2\pi\ii k\ell(2 n-1-\ell\tau)}
    \delta(\zeta+\ell\tau+\tfrac{1}{2}-n),\\
    \ch{\sfmod{\ell}{\RRel{\mu}}}
    &=y^k \left(\frac{\taufn{3}{q}}{2} \sum_{n \in \ZZ} \ee^{2\pi\ii\mu n}\ee^{2\pi\ii k\ell(2 n-\ell\tau)} \delta(\zeta+\ell\tau-n)\right.\nonumber\\
    &\mspace{50mu}\left.+\frac{\taufn{4}{q}}{2} \sum_{n \in \ZZ} \ee^{2\pi\ii\mu(n-1/2)}\ee^{2\pi\ii k\ell(2 n-1-\ell\tau)} \ddelta{\zeta+\ell\tau+\tfrac{1}{2}-n}\right),\\
    \sch{\sfmod{\ell}{\RRel{\mu}}}
    &=y^k \left(\frac{\taufn{4}{q}}{2} \sum_{n \in \ZZ} \ee^{2\pi\ii\mu n}\ee^{2\pi\ii k\ell(2 n-\ell\tau)} \delta(\zeta+\ell\tau-n)\right.\notag\\
    &\mspace{50mu}\left.+\frac{\taufn{3}{q}}{2} \sum_{n \in \ZZ} \ee^{2\pi\ii\mu(n-1/2)}\ee^{2\pi\ii k\ell(2n-1-\ell\tau)}\delta(\zeta+\ell\tau+\tfrac{1}{2}-n)\right).
  \end{align}
\end{subequations}
These formulae also cover the (super)characters of the atypical standard modules.  Note that we can write these formulae in the succinct forms
\begin{subequations} \label{eq:chschstandard}
	\begin{align}
		\nch{\sfmod{\ell}{\NSRel{\lambda}}}{\eps} &= y^k \frac{\taufn{2}{q}}{\sqrt{2}} \sum_{n \in \ZZ + \eps} \ee^{2 \pi \ii \lambda n} \ee^{2 \pi \ii k \ell (2n - \ell \tau)} \delta(\zeta + \ell \tau - n), \\
		\nch{\sfmod{\ell}{\RRel{\lambda}}}{\eps} &= y^k \brac*{\frac{\taufn{3}{q}}{2} \sum_{n \in \ZZ + \eps} + \frac{\taufn{4}{q}}{2} \sum_{n \in \ZZ + 1/2 + \eps}} \, \ee^{2 \pi \ii \lambda n} \ee^{2 \pi \ii k \ell (2n - \ell \tau)} \delta(\zeta + \ell \tau - n),
	\end{align}
\end{subequations}
where $\eps \in \set{0, \frac{1}{2}}$.  It will be useful in what follows to think of $\eps$ as living in the two element abelian group where addition is taken mod $1$.

\begin{prop}\label{thm:Ttransf}
  For $\ell\in\frac{1}{2}\ZZ$, and $\lambda\in\CC$, let
  \begin{align}
    \tker{\ell}{\lambda}=\ee^{2\pi\ii\ell(k\ell+\lambda)}.
  \end{align}
  Then, the T-transforms of the standard (super)characters are given by the following formulae:
  \begin{subequations}\label{eqn:Ttransf}
	  \begin{align}
			\modch{T}{\nch{\sfmod{\ell}{\NSRel{\lambda}}}{\eps}} &= \ee^{\pi \ii /12} \tker{\ell}{\lambda} \, \nch{\sfmod{\ell}{\NSRel{\lambda}}}{\eps + \ell}, \\
			\modch{T}{\nch{\sfmod{\ell}{\RRel{\lambda}}}{\eps}} &= \ee^{-\pi \ii /24} \tker{\ell}{\lambda} \, \nch{\sfmod{\ell}{\RRel{\lambda}}}{\eps + \ell + 1/2}.
		\end{align}
	\end{subequations}
\end{prop}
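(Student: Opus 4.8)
The plan is to apply the coordinate transform $\modT$ directly to the compact expressions \eqref{eq:chschstandard} and then reorganise the answer back into that same shape. There are three ingredients to track: the prefactor $y^k$, the $q$-dependent functions $\taufn{2}{q},\taufn{3}{q},\taufn{4}{q}$, and the distributional factor $\ee^{2\pi\ii\lambda n}\ee^{2\pi\ii k\ell(2n-\ell\tau)}\delta(\zeta+\ell\tau-n)$. Since $\modT$ sends $(\psi\vert\zeta\vert\tau)$ to $(\psi+\tfrac{1}{12k}\vert\zeta\vert\tau+1)$, we get $y^k=\ee^{2\pi\ii k\psi}\mapsto\ee^{\pi\ii/6}\,y^k$, independently of $k$; and the transforms of the $\taufn{j}{q}$ are precisely \eqref{eqn:Ajmod}, so $\taufn{2}{q}$ merely picks up the scalar $\ee^{-\pi\ii/12}$ whereas $\taufn{3}{q}$ and $\taufn{4}{q}$ are \emph{interchanged} and rescaled by $\ee^{-5\pi\ii/24}$. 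I would take \eqref{eqn:Ajmod} as the sole external input; it follows from \eqref{eqn:defA} together with the standard modular behaviour of the Jacobi theta functions and of $\eta$.

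Next I would handle the distributional factor. Replacing $\tau$ by $\tau+1$ multiplies $\ee^{2\pi\ii k\ell(2n-\ell\tau)}$ by $\ee^{-2\pi\ii k\ell^2}$ and turns $\delta(\zeta+\ell\tau-n)$ into $\delta(\zeta+\ell\tau-(n-\ell))$. Shifting the summation index so that the argument of $\delta$ again reads $\delta(\zeta+\ell\tau-n)$ produces the further factors $\ee^{2\pi\ii\lambda\ell}$ from $\ee^{2\pi\ii\lambda n}$ and $\ee^{4\pi\ii k\ell^2}$ from $\ee^{2\pi\ii k\ell(2n-\ell\tau)}$; multiplying these with the $\ee^{-2\pi\ii k\ell^2}$ above collapses to $\ee^{2\pi\ii\ell(k\ell+\lambda)}=\tker{\ell}{\lambda}$, exactly the asserted kernel. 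Because $\ell\in\tfrac12\ZZ$ one has $-\ell\equiv\ell\pmod 1$, so the congruence class of the index shifts from $\eps$ to $\eps+\ell$ in the two-element group where addition is mod $1$; in the Ramond case the extra interchange $\taufn{3}{q}\leftrightarrow\taufn{4}{q}$ further swaps the cosets $\ZZ+\eps$ and $\ZZ+\tfrac12+\eps$, which is what accounts for the additional $\tfrac12$ in the shift $\eps\mapsto\eps+\ell+\tfrac12$.

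Finally I would assemble the scalar prefactors: in the \ns{} case the product of the $y^k$ and $\taufn{2}{q}$ contributions is $\ee^{\pi\ii/6}\,\ee^{-\pi\ii/12}=\ee^{\pi\ii/12}$, while in the Ramond case it is $\ee^{\pi\ii/6}\,\ee^{-5\pi\ii/24}=\ee^{-\pi\ii/24}$; together with $\tker{\ell}{\lambda}$ and the shifted index class this is precisely \eqref{eqn:Ttransf}. The only genuinely delicate point --- the main obstacle --- is the legitimacy of the distributional manipulations, namely translating the argument of $\delta$ and relabelling the formal sum over $n$; but these are exactly the operations already used in passing from \eqref{eqn:c1/2+charunsimp} to \eqref{eqn:c1/2+char} and in writing \eqref{eq:chschstandard}, and they are justified by reading the sums as tempered distributions in $\zeta$. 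Everything else is routine bookkeeping of phases and coset labels.
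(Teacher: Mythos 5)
Your proposal is correct and follows essentially the same route as the paper: substitute the $\modT$ coordinate transform into the compact standard-(super)character formulae \eqref{eq:chschstandard}, use \eqref{eqn:Ajmod} for the $\taufn{j}{q}$, and shift the summation index to restore the form of the delta functions, with the phases assembling into $\tker{\ell}{\lambda}$ and the coset label shifting by $\ell$ (plus the extra $\tfrac12$ in the Ramond sector from the $\taufn{3}{q}\leftrightarrow\taufn{4}{q}$ interchange). Your treatment of the Ramond case is in fact slightly more explicit than the paper's, which only details the \ns{} computation.
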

\begin{proof}
  The proof follows by substituting the appropriate transformation formula from \eqref{eqn:Ajmod} into the (super)character formulae \eqref{eqn:relchar}.  For example, the \(\NSRel{}\)-type (super)characters give
  \begin{align} \label{eqn:deriveTchC}
      \modch{T}{\nch{\sfmod{\ell}{\NSRel{\lambda}}}{\eps}}
      &=\ee^{\pi\ii/6}y^k\ee^{-\pi\ii/12}\frac{\taufn{2}{q}}{\sqrt{2}}
      \sum_{n \in \ZZ+\eps} \ee^{2\pi\ii\lambda n} \ee^{2 \pi \ii k\ell \brac*{2n-\ell\tau-\ell}} \ddelta{\zeta + \ell \tau+\ell - n}  \nonumber\\
      &=\ee^{\pi\ii/12}y^k\frac{\taufn{2}{q}}{\sqrt{2}}
      \sum_{n \in \ZZ+\eps+\ell} \ee^{2\pi\ii\lambda (n+\ell)}\ee^{2 \pi \ii k\ell \brac*{2n-\ell\tau+\ell}}
      \ddelta{\zeta + \ell \tau - n}  \nonumber\\
      &= \ee^{\pi \ii /12} \tker{\ell}{\lambda} \, \nch{\sfmod{\ell}{\NSRel{\lambda}}}{\eps + \ell}.
  \end{align}
  The \(\RRel{}\)-type case proceeds in the same way.
\end{proof}

We conclude that $\modT$ acts diagonally on \ns{} (super)characters, whilst in the Ramond sector, it takes a character to (a multiple of) the supercharacter of the same module and vice versa.  The behaviour of the (super)characters under the S-transform is similar, though the explicit formulae are slightly more complicated.

\begin{prop}\label{thm:Stransf}
  For $\ell,\ell'\in\frac{1}{2}\ZZ$ and $\lambda,\lambda'\in\CC$, let
  \begin{equation} \label{eqn:Sker}
    \sker{\ell,\ell'}{\lambda,\lambda'} = \ee^{-2 \pi \ii
    \brac*{2k \ell \ell' + \ell' \lambda + \ell
    \lambda'}}.
  \end{equation}
  Then, the \(\modS\)-transforms of the standard (super)characters  are given by the following formulae:
  \begin{subequations}\label{eqn:Stransf}
		\begin{align}
			\modch{S}{\nch{\sfmod{\ell}{\NSRel{\lambda}}}{\eps}} &= \frac{1}{\sqrt{2}} \sum_{\ell' \in \ZZ + \eps} \int_{\RR / 2 \ZZ} \sker{\ell,\ell'}{\lambda,\lambda'} \, \nch{\sfmod{\ell'}{\RRel{\lambda'}}}{\ell + 1/2} \, \dd \lambda', \\
			\modch{S}{\nch{\sfmod{\ell}{\RRel{\lambda}}}{\eps}} &= \frac{1}{\sqrt{2}} \sum_{\ell' \in \ZZ+\eps+1/2} \int_{\RR / \ZZ} \sker{\ell,\ell'}{\lambda,\lambda'} \, \nch{\sfmod{\ell'}{\NSRel{\lambda'}}}{\ell} \, \dd \lambda' \notag \\
			&\mspace{50mu} + \frac{1}{2} \sum_{\ell' \in \ZZ + \eps} \int_{\RR / 2 \ZZ} \sker{\ell,\ell'}{\lambda,\lambda'} \, \nch{\sfmod{\ell'}{\RRel{\lambda'}}}{\ell} \, \dd \lambda'.
		\end{align}
	\end{subequations}
\end{prop}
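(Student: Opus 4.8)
The plan is to substitute the explicit standard (super)character formulas \eqref{eq:chschstandard} into the $\modS$ coordinate transform, apply the known $\modS$-transforms \eqref{eqn:Ajmod} of the auxiliary functions $\taufn{2}{q}$, $\taufn{3}{q}$, $\taufn{4}{q}$ together with the $\psi$-shift, and then recognise the outcome in the claimed form. The key observation is that the unfamiliar transform of $\psi$ was chosen precisely so that the phase $\ee^{\ii(\arg\tau - \pi/2)}$ it contributes to $y^{k} = \ee^{2\pi\ii k\psi}$ cancels against the phase in $\tfrac{1}{-\ii\tau} = \tfrac{1}{\abs{\tau}}\ee^{-\ii(\arg\tau-\pi/2)}$ produced by the $\modS$-transform of $\taufn{j}{q}$, leaving only the real scalar $\abs{\tau}^{-1}$; this in turn is cancelled by the Jacobian $\abs{\tau}$ from rescaling the argument of each $\delta$ under $\zeta \mapsto \zeta/\tau$. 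Working through this for $\nch{\sfmod{\ell}{\NSRel{\lambda}}}{\eps}$, the Gaussian phase $\ee^{-2\pi\ii k\zeta^{2}/\tau}$ acquired by $y^{k}$ combines with the (transformed) spectral-flow factors; evaluating on the support $\zeta = \ell + n\tau$ of the transformed $\delta$ and completing the square, the $\ell$-linear and $\ell^{2}/\tau$ terms all cancel and one is left with $y^{k}\,\taufn{4}{q}/\sqrt{2}$ times $\sum_{n \in \ZZ + \eps}\ee^{2\pi\ii\lambda n}q^{-kn^{2}}\,\delta(\zeta - \ell - n\tau)$.

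It then remains to recognise this $\delta$-comb as the stated integral, which is a Fourier inversion: substituting the delta-comb form \eqref{eq:chschstandard} of $\nch{\sfmod{\ell'}{\RRel{\lambda'}}}{\ell+1/2}$ into the proposed right-hand side and carrying out the $\lambda'$-integral over $\RR/2\ZZ$ --- merely orthogonality of the exponentials $\ee^{2\pi\ii\lambda' n}$ --- collapses it to exactly the same $\delta$-comb, with the lattice sum over $n$ carried over (up to $n \leftrightarrow -\ell'$) to the new spectral-flow sum over $\ell' \in \ZZ + \eps$ and all residual phases assembling into precisely the kernel $\sker{\ell,\ell'}{\lambda,\lambda'}$ of \eqref{eqn:Sker}. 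That only the $\taufn{4}$-graded part of $\nch{\sfmod{\ell'}{\RRel{\lambda'}}}{\ell+1/2}$ survives the integral --- consistent with $\taufn{2}{q}\mapsto\taufn{4}{q}$ --- is because its $\taufn{3}$-graded summation index lies in $\ZZ + \ell + \tfrac12$ rather than $\ZZ + \ell$. The second identity is obtained in exactly the same way, now starting from the two-term formula for $\nch{\sfmod{\ell}{\RRel{\lambda}}}{\eps}$: the piece $\taufn{3}{q}\mapsto\taufn{3}{q}$ rebuilds an $\RRel{}$-type summand, integrated over $\RR/2\ZZ$, while $\taufn{4}{q}\mapsto\taufn{2}{q}$ rebuilds an $\NSRel{}$-type summand, integrated over $\RR/\ZZ$ since the $\NSRel{}$-type (super)characters are only $1$-(anti)periodic in their weight label, with the $\eps$-index arithmetic taken mod $1$ throughout.

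The main obstacle is the phase bookkeeping in the first two steps: one must keep careful account of the Gaussian-in-$\zeta$ phase from $y^{k}$, the $\arg\tau$-phases, and the Gaussian-in-$\tau$ phases produced by spectral flow, and verify that after completing the square they telescope into exactly $\sker{\ell,\ell'}{\lambda,\lambda'}$ with no leftover $\tau$- or $\zeta$-dependence. There is also a genuine subtlety in interpreting the $\delta$-distributions and their changes of variable once $\tau$ lies in the upper half-plane rather than on $\RR$; following the conventions of \cite{RidSL208,RidBos14}, I would treat each (super)character as a $\tau$-indexed family of distributions in $\zeta$ and check at the end that the infinite sums converge in the sense of formal power series (a grade-by-grade finiteness statement, exactly as for the resolutions underlying \cref{cor:resgr}). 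Finally, since these $\modS$-transforms together with the $\modT$-transforms of \cref{thm:Ttransf} are meant to furnish an honest --- not merely projective --- action of $\SLG{SL}{2;\ZZ}$, I would verify directly that they satisfy the defining relations \eqref{eqn:modpres}, which is the consistency check the $\psi$-transform was engineered to pass.
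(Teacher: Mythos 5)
Your proposal is correct and follows essentially the same route as the paper's proof: apply the $\modS$ coordinate transform to the delta-comb form \eqref{eq:chschstandard} of the standard (super)characters, use \eqref{eqn:Ajmod} together with the transform of the prefactor $y^k$ (whose $\arg\tau$ phase and the Jacobian $\abs{\tau}$ from rescaling the $\delta$-arguments cancel exactly as you describe), and then match the resulting comb $y^k\,\taufn{4}{q}/\sqrt{2}\,\sum_{n}\ee^{2\pi\ii n(\lambda-kn\tau)}\ddelta{\zeta-\ell-n\tau}$ against the claimed right-hand side by performing the $\lambda'$-integral via orthogonality and identifying $\ell'=-n$. Your accounting of which $\taufn{j}$-graded pieces survive, and of the $\RR/\ZZ$ versus $\RR/2\ZZ$ integration ranges, agrees with the paper.
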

\begin{proof}
  As a preparatory step, we record the S-transform of
  the common prefactor \(y^k\) of the (super)character formulae:
  \begin{align}
    \modch{S}{y^k}=
    \ee^{2 \pi \ii k \brac*{\psi - \zeta^2/\tau}}
    \ee^{\ii \brac*{\arg{\tau} - \pi/2}} = \frac{-\ii \tau}{\abs{\tau}}y^k
    \ee^{2 \pi \ii k \brac*{- \zeta^{2}/\tau}}.
  \end{align}
  The S-transforms of the \lcnamecref{thm:Stransf} now follow by
  evaluating the left- and right-hand sides and comparing. For example,
  \begin{align}
    \modch{S}{\nch{\sfmod{\ell}{\NSRel{\lambda}}}{\eps}} &=
     \frac{-\ii \tau}{\abs{\tau}}y^k \ee^{2 \pi \ii k \brac*{-\zeta^{2}/\tau}}
     \frac{1}{-\ii\tau} \frac{\taufn{4}{q}}{\sqrt{2}}
    \sum_{n \in \ZZ + \eps} \ee^{2 \pi \ii \lambda n} \ee^{2\pi\ii k\ell(2n + \ell/\tau)}
    \ddelta{\frac{\zeta - \ell}{\tau}-n} \nonumber\\
    &=y^k \frac{\taufn{4}{q}}{\sqrt{2}}
    \sum_{n \in \ZZ + \eps} \ee^{2 \pi \ii n \brac*{\lambda - k n \tau}}
    \ddelta{\zeta - \ell - n \tau},
  \end{align}
  where we have used the identity $\abs{a}\ddelta{a z} = \ddelta{z}$, while
  \begin{align}
    &\frac{1}{\sqrt{2}} \sum_{\ell' \in \ZZ + \eps} \int_{\RR / 2 \ZZ}
    \ee^{-2 \pi \ii \brac*{2k \ell \ell' + \ell' \lambda + \ell \lambda'}}
    \nch{\sfmod{\ell'}{\RRel{\lambda'}}}{\ell + 1/2} \,\dd \lambda'\nonumber\\
    &\quad=\frac{1}{\sqrt{2}} \sum_{\ell' \in \ZZ + \eps}
    \ee^{-2 \pi \ii \ell' \brac*{2k \ell + \lambda}}y^k\int_{\RR/2\ZZ}
    \left(\frac{\taufn{3}{q}}{2} \sum_{n \in \ZZ + \ell + 1/2}
    \ee^{2\pi\ii\lambda' (n-\ell)} \ee^{2\pi\ii k\ell'(2n-\ell'\tau)}
    \delta(\zeta+\ell'\tau-n)\right.\nonumber\\
    &\mspace{300mu}\left.+\frac{\taufn{4}{q}}{2} \sum_{n \in \ZZ + \ell}
      \ee^{2\pi\ii\lambda' (n-\ell)}\ee^{2\pi\ii k\ell'(2 n-\ell'\tau)}
      \delta(\zeta+\ell'\tau-n)\right)\dd\lambda'\nonumber\\
    &\quad=
    y^k \frac{\taufn{4}{q} }{\sqrt{2}}
    \sum_{\ell' \in \ZZ + \eps}  \ee^{-2 \pi \ii \ell' \brac*{\lambda + k \ell' \tau}}
    \ddelta{\zeta - \ell + \ell' \tau},
  \end{align}
  where we have used the identity \(\int_{\RR / 2 \ZZ} \ee^{2\pi\ii \lambda' m}\dd\lambda'=2\delta_{m,0}\) for \(m\in\frac{1}{2}\ZZ\).  Setting $\ell' = -n$ completes this case.  The Ramond case is similar.
\end{proof}

By a slight abuse of nomenclature, we shall from here on refer to the
integration kernels appearing in the integrals \eqref{eqn:Stransf}
as S-matrix coefficients. The coefficient of the (super)character of a module
\(\Mod{M}'\) appearing in the S-transform of the (super)character of a module
\(\Mod{M}\) will be denoted by \(\tSmat{\eps}{\Mod{M}}{\eps'}{\Mod{M}'}\), where $\eps$ and $\eps'$ take values $0$ or $\frac{1}{2}$ to refer to the character or supercharacter of $\Mod{M}$ and $\Mod{M}'$, respectively. With this convention, we may rewrite \eqref{eqn:Stransf} in the form
\begin{equation} \label{eqn:SMatrixEntries}
	\begin{aligned}
    \Smat{\eps}{\sfmod{\ell}{\NSRel{\lambda}}}{\eps'}{\sfmod{\ell'}{\NSRel{\lambda'}}} &= 0, &
	  \Smat{\eps}{\sfmod{\ell}{\RRel{\lambda}}}{\eps'}{\sfmod{\ell'}{\NSRel{\lambda'}}} &= \frac{1}{\sqrt{2}} \sker{\ell,\ell'}{\lambda, \lambda'} \wun_{\ell = \eps'} \wun_{\ell' \neq \eps}, \\
	  \Smat{\eps}{\sfmod{\ell}{\NSRel{\lambda}}}{\eps'}{\sfmod{\ell'}{\RRel{\lambda'}}} &= \frac{1}{\sqrt{2}} \sker{\ell,\ell'}{\lambda, \lambda'} \wun_{\ell \neq \eps'} \wun_{\ell' = \eps}, &
	  \Smat{\eps}{\sfmod{\ell}{\RRel{\lambda}}}{\eps'}{\sfmod{\ell'}{\RRel{\lambda'}}} &= \frac{1}{2} \sker{\ell,\ell'}{\lambda, \lambda'} \wun_{\ell = \eps'} \wun_{\ell' = \eps},
	\end{aligned}
\end{equation}
where the condition in the index of the characteristic function $\wun_{\bullet}$ (which is $1$ when $\bullet$ is true and $0$ otherwise) is always understood to be taken mod $1$ whenever it involves an $\eps$.

This notation allows us to define the transpose of the S-matrix elements in the obvious way:
\begin{equation}
	\Smat{\eps}{\Mod{M}}{\eps'}{\Mod{M}'}^t = \Smat{\eps'}{\Mod{M}'}{\eps}{\Mod{M}}.
\end{equation}
The adjoint $\dagger$ is then defined to be the (complex) conjugate transpose, as usual.  We now see that with respect to the standard basis of characters and supercharacters,
\begin{equation} \label{eqn:basis}
	\set*{\nch{\sfmod{\ell}{\NSRel{\lambda}}}{\eps},\ \nch{\sfmod{\ell}{\RRel{\mu}}}{\eps} \st \eps \in \set{0, \tfrac{1}{2}},\ \ell \in \tfrac{1}{2} \ZZ,\ \lambda \in \CC/\ZZ,\ \mu \in \CC/2\ZZ},
\end{equation}
the S-matrix entries of \cref{eqn:SMatrixEntries} are manifestly symmetric with respect to transposition.  One can also check that the S- and T-transforms are unitary in the sense that \(\modT^\dagger\modT=\wun=\modS^\dagger\modS\).  For example, the $(\NSRel{} \ra \NSRel{})$-type entries of $\modS^\dagger\modS$ are (because $\tSmat{\eps}{\sfmod{\ell}{\NSRel{\lambda}}}{\eps''}{\sfmod{\ell''}{\NSRel{\lambda''}}} = 0$)
\begin{align}
	&\sum_{\eps''} \sum_{\ell'' \in \frac{1}{2} \ZZ} \int_{\CC / 2 \ZZ} \Smat{\eps}{\sfmod{\ell}{\NSRel{\lambda}}}{\eps''}{\sfmod{\ell''}{\RRel{\lambda''}}}^* \Smat{\eps''}{\sfmod{\ell''}{\RRel{\lambda''}}}{\eps'}{\sfmod{\ell'}{\NSRel{\lambda'}}} \, \dd \lambda'' \notag \\
	&\mspace{50mu} = \frac{1}{2} \sum_{\eps''} \sum_{\ell'' \in \frac{1}{2} \ZZ} \wun_{\ell \neq \eps''} \wun_{\ell'' = \eps} \wun_{\ell'' = \eps'} \wun_{\ell' \neq \eps''} \ee^{2 \pi \ii (2k(\ell-\ell')\ell'' + (\lambda-\lambda')\ell''} \int_{\CC / 2 \ZZ} \ee^{2 \pi \ii (\ell-\ell') \lambda''} \, \dd \lambda'' \notag \\
	&\mspace{50mu} = \wun_{\eps=\eps'} \wun_{\ell=\ell'} \sum_{\ell'' \in \ZZ + \eps} \ee^{2 \pi \ii (\lambda-\lambda')\ell''} = \delta_{\eps,\eps'} \delta_{\ell,\ell'} \delta(\lambda = \lambda' \bmod{1}) \ee^{2 \pi \ii \eps(\lambda - \lambda')}.
\end{align}
The result matches the corresponding entries of the identity operator because we should take $\lambda = \lambda'$ (choosing a fundamental domain for $\CC/\ZZ$ in the basis \eqref{eqn:basis}), so that $\ee^{2 \pi \ii \eps(\lambda - \lambda')} = 1$.  If we had insisted on taking $\lambda' = \lambda + 1$ instead, then we would have obtained an additional sign when $\eps = \frac{1}{2}$.  This sign is naturally explained by the fact that the \ns{} supercharacters satisfy $\tsch{\sfmod{\ell}{\NSRel{\lambda+1}}} = \tsch{\parr\sfmod{\ell}{\NSRel{\lambda}}} = -\tsch{\sfmod{\ell}{\NSRel{\lambda}}}$.

\subsection{Atypical modular transforms} \label{subsec:atypmod}

Given the modular S- and T-transforms of the standard modules, it is now easy to compute the corresponding transforms of a given atypical simple (super)character using \cref{cor:resgr}.  We could instead compute these transforms using the formulae of \cref{subsec:simpchars} which, after all, are already expressed in terms of theta functions.  However, these formulae only represent the correct character on an appropriate annulus in the $z$-plane, which becomes a horizontal strip in the $\zeta$-plane.  It is easy to see that the modular S-transform does not respect these strips, so one must meromorphically continue the character in $\zeta$ in order to get good modular properties.  Unfortunately, different strips correspond to different spectral flows of the atypical simple module.  The upshot is, as was first noted in \cite{RidSL208}, that meromorphically continuing the characters in $\zeta$ leads to a loss of linear independence.

The standard module formalism, expressing the atypical simple (super)characters as linear combinations of standard (super)characters, neatly sidesteps this linear independence problem by treating all (super)characters as formal power series in $z$ whose coefficients are meromorphic functions of $q$ (in the unit disc).  We therefore combine the results of \cref{cor:resgr,thm:Stransf} to deduce (extending the notation in an obvious way) the S-matrix entries involving atypicals.

\begin{proposition}
  The S-matrix entries describing the decomposition of the S-transforms of the atypical simple modules are
  \begin{equation} \label{eqn:AtypSMatrixEntries}
		\begin{aligned}
	    \Smat{\eps}{\sfmod{\ell}{\NSFin{0}}}{\eps'}{\sfmod{\ell'}{\NSRel{\lambda'}}}
	    &=
	    \frac{\sker{\ell,\ell'}{0,\lambda'}}{2\sqrt{2}\cos(\pi\lambda')}
	    \wun_{\ell\neq\eps'}\wun_{\ell'\neq\eps},\\
	    \Smat{\eps}{\sfmod{\ell}{\NSFin{0}}}{\eps'}{\sfmod{\ell'}{\RRel{\lambda'}}}
	    &=
	    \frac{\sker{\ell,\ell'}{0,\lambda'}}{4\cos(\pi\lambda')+2\sqrt{2}}
	    \wun_{\ell\neq\eps'}\wun_{\ell'=\eps},\\
	    \Smat{\eps}{\sfmod{\ell}{\NSInf{\mp1/2}^\pm}}{\eps'}{\sfmod{\ell'}{\NSRel{\lambda'}}}
	    &=
	    \frac{\sker{\ell,\ell'}{\mp \frac{1}{2},\lambda'}}{2\sqrt{2}\cos(\pi\lambda')}
	    \wun_{\ell\neq\eps'}\wun_{\ell'\neq\eps}, \\
	    \Smat{\eps}{\sfmod{\ell}{\NSInf{\mp1/2}}^\pm}{\eps'}{\sfmod{\ell'}{\RRel{\lambda'}}}
	    &=
	    \brac*{1+\sqrt{2}\ee^{\mp\pi\ii\lambda'}}
		  \frac{\sker{\ell,\ell'}{\mp \frac{1}{2},\lambda'}}{4\cos(\pi\lambda')+2\sqrt{2}}
	    \wun_{\ell\neq\eps'}\wun_{\ell'=\eps}.
  \end{aligned}
	\end{equation}
\end{proposition}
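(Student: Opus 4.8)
The plan is to feed the Grothendieck-group resolutions of \cref{cor:resgr} into the standard-module $\modS$-transforms of \cref{thm:Stransf}. The resolutions of \cref{thm:resolutions}, hence the identities \eqref{eqn:resgr}, are $\sfaut^4$-periodic: in the completed Grothendieck group of \cref{sub:Gr} we may write $\Gr{\NSFin{0}} = (1 - \sfaut^4)^{-1} B_0$ and $\Gr{\NSInf{\mp 1/2}^{\pm}} = (1 - \sfaut^4)^{-1} B_{\mp}$, where each $B$ is a fixed six-term $\ZZ$-combination of spectral flows (and parity reversals) of the atypical standards $\NSRel{1/2}^{+}$, $\RRel{3/4}^{+}$ and $\RRel{5/4}^{+}$. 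Applying $\chmap$ and $\schmap$ turns these into identities of (super)characters which, as noted after \cref{cor:resgr}, converge as formal power series; the coordinate transform $\modS$ is linear, so I would apply it term by term and invoke \eqref{eqn:Stransf} to rewrite each summand as a sum over $\ell'$ and an integral over $\lambda'$ of standard (super)characters.

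The remaining work is the resummation. Fixing a target standard module, $\sfmod{\ell'}{\NSRel{\lambda'}}$ or $\sfmod{\ell'}{\RRel{\lambda'}}$, with $\lambda'$ continuous and a fixed parity label, I would collect its coefficient in the transformed sum. Because consecutive blocks of the resolution differ by $\sfaut^4$ and the exponent of $\sker{\ell,\ell'}{\lambda,\lambda'}$ in \eqref{eqn:Sker} is bilinear, the dependence of every contributing summand on the resolution index $n$ is $\ee^{-8 \pi \ii n \lambda'}$ (the lattice factor $\ee^{20 \pi \ii n \ell'}$ being trivial for $\ell' \in \tfrac{1}{2} \ZZ$). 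Hence the sum over $n$ is geometric,
\[
	\sum_{n \ge 0} \ee^{-8 \pi \ii n \lambda'} = \frac{1}{1 - \ee^{-8 \pi \ii \lambda'}},
\]
and the sought S-matrix entry equals this factor times the (finite) $\modS$-image of $B$. The two different denominators in the statement reflect the channel structure of \eqref{eqn:Stransf}: the coefficient of a $\NSRel{}$-type target receives contributions only from the $\RRel{}$-type terms of the resolution, via the $\RRel{} \to \NSRel{}$ block weighted by $\tfrac{1}{\sqrt{2}}$, whereas the coefficient of a $\RRel{}$-type target additionally receives contributions from the $\NSRel{}$-type terms of the resolution, via the $\NSRel{} \to \RRel{}$ block weighted by $\tfrac{1}{\sqrt{2}}$, on top of the $\RRel{} \to \RRel{}$ contributions weighted by $\tfrac{1}{2}$. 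Cancelling the factor $1 - \ee^{-8 \pi \ii \lambda'} = (1 - \ee^{-2 \pi \ii \lambda'})(1 + \ee^{-2 \pi \ii \lambda'})(1 + \ee^{-4 \pi \ii \lambda'})$ against the finite numerator, and using the elementary identities $1 + \ee^{-2 \pi \ii \lambda'} = 2 \ee^{-\pi \ii \lambda'} \cos(\pi \lambda')$ and $1 + \sqrt{2}\, \ee^{-\pi \ii \lambda'} + \ee^{-2 \pi \ii \lambda'} = \ee^{-\pi \ii \lambda'}\brac*{2 \cos(\pi \lambda') + \sqrt{2}}$, one reaches exactly the closed forms $\tfrac{1}{2\sqrt{2}\cos(\pi\lambda')}$ and $\tfrac{1}{4\cos(\pi\lambda') + 2\sqrt{2}}$, with the $\sker{}{}$-prefactor and the $\wun_{\ell \neq \eps'}$, $\wun_{\ell' \neq \eps}$, $\wun_{\ell' = \eps}$ indicators surviving from \eqref{eqn:Stransf}. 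The resolutions \eqref{eq:resgrb-} and \eqref{eq:resgrb+} for $\NSInf{\mp 1/2}^{\pm}$ have the same $\sfaut^4$-periodic six-term shape; running the same argument, with the kernel argument $0$ of $\NSFin{0}$ replaced by the weight label $\mp\tfrac{1}{2}$ of $\NSInf{\mp 1/2}^{\pm}$ (compare \eqref{eqn:SMatrixEntries}), reproduces the stated formulae, the extra factor $1 + \sqrt{2}\, \ee^{\mp \pi \ii \lambda'}$ in the $\RRel{}$-channel arising, after resummation, from the combination of the $\RRel{}$-type and $\NSRel{}$-type contributions there.

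The main obstacle is not conceptual but a legitimacy check together with careful bookkeeping. One must justify interchanging $\modS$ with the infinite sum and with the $\lambda'$-integral, and the use of $\sum_{n \ge 0} x^n = (1-x)^{-1}$ as an identity of meromorphic functions of $\lambda'$ inside that integral; this is exactly what the formal-power-series / free $\ZZ\brac*{\brac*{\sfaut^{1/2}}}$-module picture of \cref{sub:Gr} licenses, but it should be spelled out. The delicate part is then tracking, through every term of the six-term blocks and all three transform channels: (i) the sign from parity reversal, since $\chmap(\parrmod{\Mod{M}}) = \chmap(\Mod{M})$ while $\schmap(\parrmod{\Mod{M}}) = -\schmap(\Mod{M})$; (ii) the $\eps$-label shifts $\eps \mapsto \ell \bmod 1$ and $\eps \mapsto \ell + \tfrac{1}{2} \bmod 1$ dictated by \eqref{eqn:Stransf}; (iii) the selection of the $\taufn{3}{q}$- versus $\taufn{4}{q}$-part of a $\RRel{}$-type (super)character according to the residue of $\ell'$; and (iv) verifying that the finite numerators indeed collapse to the stated expressions. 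No single step is hard, but there are enough moving parts that this verification is where an error is most likely to creep in.
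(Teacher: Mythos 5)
Your proposal is correct and follows essentially the same route as the paper: the paper likewise substitutes the resolutions of \cref{cor:resgr} into the standard S-transforms of \cref{thm:Stransf}, resums the resulting geometric series in the resolution index (ratio $\ee^{-8\pi\ii\lambda'}$, with the denominator understood as expanded in non-negative powers of $\ee^{-\pi\ii\lambda'}$), and simplifies the finite numerator to reach the stated closed forms. The only cosmetic difference is that the paper collapses the numerator via $\bigl(\sin(3\pi\lambda')-\sin(\pi\lambda')\bigr)/\sin(4\pi\lambda') = 1/\bigl(2\cos(\pi\lambda')\bigr)$ rather than your explicit factorisation of $1-\ee^{-8\pi\ii\lambda'}$.
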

\begin{proof}
	The calculations are all very similar, so we only detail one example:
	\begin{align}
		\Smat{\eps}{\sfmod{\ell}{\NSFin{0}}}{\eps'}{\sfmod{\ell'}{\NSRel{\lambda'}}}
		&= \sum_{n=0}^{\infty} \left( \Smat{\eps}{\sfmod{\ell+4n+1/2}{\RRel{5/4}^+}}{\eps'}{\sfmod{\ell'}{\NSRel{\lambda'}}} + (-1)^{2\eps} \Smat{\eps}{\sfmod{\ell+4n+3/2}{\RRel{3/4}^+}}{\eps'}{\sfmod{\ell'}{\NSRel{\lambda'}}} \right. \notag \\
		&\mspace{50mu} \left. - (-1)^{2\eps} \Smat{\eps}{\sfmod{\ell+4n+5/2}{\RRel{5/4}^+}}{\eps'}{\sfmod{\ell'}{\NSRel{\lambda'}}} - \Smat{\eps}{\sfmod{\ell+4n+7/2}{\RRel{3/4}^+}}{\eps'}{\sfmod{\ell'}{\NSRel{\lambda'}}} \right) \notag \\
		&= \frac{\ee^{-2 \pi \ii (\ell + 2) (2k\ell' + \lambda')}}{\sqrt{2}} \sum_{n=0}^{\infty} \ee^{-8 \pi \ii \lambda' n} \cdot 2 \ii \brac*{(-1)^{2\ell'} \sin(3 \pi \lambda') + (-1)^{2\eps} \sin(\pi \lambda')} \wun_{\ell \neq \eps'} \wun_{\ell' \neq \eps} \notag \\
		&= \frac{\ee^{-2 \pi \ii \ell (2k\ell' + \lambda')}}{\sqrt{2}} \frac{\sin(3 \pi \lambda') - \sin(\pi \lambda')}{\sin (4 \pi \lambda')} \wun_{\ell \neq \eps'} \wun_{\ell' \neq \eps} \notag \\
		&= \frac{\ee^{-2 \pi \ii \ell (2k\ell' + \lambda')}}{\sqrt{2}} \frac{1}{2 \cos(\pi \lambda')} \wun_{\ell \neq \eps'} \wun_{\ell' \neq \eps}.
	\end{align}
	Note here that the denominator should be expanded in non-negative powers of $\ee^{-\pi \ii \lambda'}$.
\end{proof}

\section{The affine minimal model $\OSPMinMod{2}{4}$: Grothendieck fusion rules} \label{sec:fusion}

This final section applies the modular S-transforms, computed explicitly in the previous section, to the calculation of the Grothendieck fusion rules of $\OSPMinMod{2}{4}$.  This requires a version of the Verlinde formula and we begin by deducing such a formula, valid for \vosas{} like the one under consideration, from the standard Verlinde formula of its bosonic orbifold.  The latter is of course conjectural, but has (in this case) already passed one very stringent consistency test:  the coefficients computed by the standard Verlinde formula for the bosonic orbifold of $\OSPMinMod{2}{4}$ are all non-negative integers \cite{CreMod13}.

\subsection{A Verlinde formula for \(\OSPMinMod{2}{4}\)} \label{subsec:verlinde}

The purpose of this section is to determine the standard Verlinde formula for the $\OSPMinMod{2}{4}$ Grothendieck fusion ring.  This is a formula for the (super)characters of fusion products of $\OSPMinMod{2}{4}$-modules that uses only the modular properties of the (super)characters as input data.  As knowing the character and supercharacter of a module in the category $\categ{P}$ amounts to knowing the image of the module in the Grothendieck group, this formula allows us to compute fusion at the level of the Grothendieck group, thereby turning the Grothendieck group into a ring.  Of course, this ring structure will only be well defined if fusing with any module of $\categ{P}$ defines an exact endofunctor of $\categ{P}$.  We conjecture that this is indeed the case, though we have no proof.  The results obtained below serve, among many things, as a consistency check on this conjecture.

We arrive at the standard Verlinde formula by interpreting $\OSPMinMod{2}{4}$ as a simple current
extension of its bosonic orbifold --- the \voa{} formed by the even elements of $\OSPMinMod{2}{4}$ --- and then lifting the standard Verlinde formula of the orbifold back to \(\OSPMinMod{2}{4}\). In the case at hand, the bosonic orbifold is actually isomorphic to the
\(\AKMA{sl}{2}\) minimal model \voa{} $\SLMinMod{3}{4}$ of level \(-\frac{5}{4}\), see \cite[Sec.~10]{CreMod13}, although this identification is not needed for what follows.  The aim of the standard Verlinde formula is to determine the Grothendieck fusion rules of $\OSPMinMod{2}{4}$ directly without recourse to those of the orbifold.  Of course, it is easy, in this particular example, to use the orbifold's Grothendieck fusion rules (which were computed in \cite{CreMod13}), but the point is to develop the methodology in a generality that applies for all admissible levels.

In order to be concise, we shall in this subsection denote the affine minimal model $\OSPMinMod{2}{4}$ by \(\gvoa\).  Its bosonic orbifold \voa{} (isomorphic to $\SLMinMod{3}{4}$) will be denoted by \(\orb\). Let \(\gfuse\) denote the fusion product of \(\gvoa\)-modules and let \(\bfuse\) denote the fusion product of \(\orb\)-modules.  Denote by \(\Res{}\) the restriction of \(\gvoa\)-modules to \(\orb\)-modules and by \(\Ind{}\) the induction of \(\orb\)-modules to \(\gvoa\)-modules.  We note the following relevant properties and assumptions (which are understood to be in force):
\begin{enumerate}
	\item The \vosa{} \(\gvoa\) is a order $2$ simple current extension of \(\orb\),
  that is, there exists a simple \(\orb\)-module \(\simp\) such that
  \(\simp\bfuse\simp\cong\orb\) and \(\Res{\gvoa}\cong\orb\oplus\simp\).  This property is expected for all bosonic orbifolds and has been proven rigorously under certain hypotheses in \cite{MiyFla11,CarReg16}.  While these works do not address \vosas{} \emph{per se}, their methods are easily adapted to this situation (see \cite[App.~A]{CreSch16}).  Unfortunately, the required hypotheses are not known to be satisfied in our case, but this property was verified in \cite[see Prop.~15 and Sec.~10]{CreMod13} under the assumption that the standard Verlinde formula for $\orb \cong \SLMinMod{3}{4}$ computes Grothendieck fusion coefficients for $\orb$-modules. \label{it:DecompV}
	\item \(\gvoa\) admits a set of \emph{even parity} standard modules \(\gnsr{}=\gnsr{0}\cup\gnsr{1/2}\), where \(\gnsr{0}\) and \(\gnsr{1/2}\) denote the sets of even parity standard modules in the \ns{} and Ramond sectors, respectively.  This means that the characters and supercharacters of the modules in $\gnsr{}$ span a representation of the modular group and every simple module admits a resolution in terms of these modules and their parity reversals.  For $\xi = 0, \frac{1}{2}$, let \(\bnsr{\xi}\) be the set of $\orb$-modules that are obtained as direct summands of restrictions of $\gvoa$-modules in \(\gnsr{\xi}\). Then, \(\bnsr{}=\bnsr{0}\cup\bnsr{1/2}\) forms a set of standard $\orb$-modules. \label{it:NSRParameter}
\item The fusion products of \(\gvoa\) and \(\orb\) are compatible, that is,
  for any \(\orb\)-modules \(\Mod{W}\) and \(\Mod{X}\),
  \begin{equation} \label{eq:FusIndCompat}
    \Ind{\Mod{W}}\gfuse\Ind{\Mod{X}}\cong\Ind{\brac*{\Mod{W}\bfuse\Mod{X}}}.
  \end{equation}
  This was proven in \cite{RidVer14} under the assumption that $\gfuse$ and $\bfuse$ associate.  This assumption may be removed \cite{CreTen17} if the $\orb$-modules form a vertex tensor category in the sense of Huang, Lepowsky and Zhang \cite{HuaLog10}.  However, this is not known for the categories we consider; nevertheless, we assume that \eqref{eq:FusIndCompat} holds.
	\item Fusing with a $\gvoa$-module from the given category $\categ{P}$ of $\gvoa$-modules defines an exact endofunctor on $\categ{P}$. Similarly, fusing with an $\orb$-module defines an exact endofunctor on the category obtained by applying the restriction functor to $\categ{P}$.  In other words, the corresponding Grothendieck fusion products are well defined.
	\item The standard module formalism of \cite{CreLog13,RidVer14} applies to the orbifold \voa{} \(\orb\), so that
	\begin{subequations}
	  \begin{equation}
	    \bch{\Mod{W}}\bGrfuse\bch{\Mod{X}}=\bch{\Mod{W}\bfuse\Mod{X}}=
	    \int_{\bnsr{}}\bfuscoeff{\Mod{W}}{\Mod{X}}{\Mod{Y}}{}\bch{\Mod{Y}}\,\dd \Mod{Y}
	  \end{equation}
	  for any (appropriate) \(\orb\)-modules \(\Mod{W}\) and \(\Mod{X}\), where $\bGrfuse$ denotes the Grothendieck fusion product of $\orb$,
	  \begin{equation} \label{eq:orbstverlinde}
	    \bfuscoeff{\Mod{W}}{\Mod{X}}{\Mod{Y}}{}=\int_{\bnsr{}}\frac{\slSmat{\Mod{W}}{\Mod{Z}}\slSmat{\Mod{X}}{\Mod{Z}}\slSmat{\Mod{Y}}{\Mod{Z}}^\ast}{\slSmat{\orb}{\Mod{Z}}}\dd \Mod{Z}
	  \end{equation}
	  is a Grothendieck fusion coefficient for $\orb$, and \(\slSmat{\cdot}{\cdot}\) denotes the \(\modS\)-matrix entries of the \(\orb\)-module characters (in the basis of standard characters).  This is the most important assumption that we are making.  It was also assumed in \cite{CreMod13} when deriving the Grothendieck fusion rules for all admissible level $\AKMA{sl}{2}$ minimal models.  \cref{eq:orbstverlinde} is the standard Verlinde formula for $\orb$.
	\end{subequations} \label{it:orbstverlinde}
\end{enumerate}

These assumptions imply the following easy consequences.
\begin{cor}\label{thm:basicproperties}\leavevmode
  \begin{enumerate}
  \item The restriction of any \(\gvoa\)-module \(\Mod{M}\) decomposes as
    \begin{equation}
      \Res{\Mod{M}}\cong\Mod{M}^{\even}\oplus\Mod{M}^{\odd}\cong\Mod{M}^{\even}\oplus(\simp\bfuse\Mod{M}^{\even})
      \cong\brac*{\orb\oplus\simp}\bfuse\Mod{M}^{\even}\cong\Res{\gvoa}\bfuse\Mod{M}^{\even},
    \end{equation}
    where $\Mod{M}^{\even}$ and $\Mod{M}^{\odd} \cong \simp \bfuse \Mod{M}^{\even}$ denote the $\orb$-submodules consisting of the even and odd elements of $\Mod{M}$, respectively.
	\item The simple current $\simp$ has no fixed points: if $\Mod{W}$ is a simple $\orb$-module, then $\simp \bfuse \Mod{W} \ncong \Mod{W}$. \label{it:NoFixedPts}
  \item For any $\gvoa$-module $\Mod{M}$, the inductions of $\Mod{M}^{\even}$ and $\Mod{M}^{\odd}$ are given by \label{it:EvenOddInd}
		\begin{equation} \label{eq:EvenOddInd}
			\Ind{\Mod{M}^{\even}} \cong \Mod{M}, \quad \Ind{\Mod{M}^{\odd}} \cong \parrmod{\Mod{M}}.
		\end{equation}
  \item For any \(\gvoa\)-module \(\Mod{M}\), we have
    \begin{equation}
      \nch{\Mod{M}}{\eps}=\bch{\Mod{M}^{\even}}+(-1)^{2\eps}\bch{\Mod{M}^{\odd}}=\brac*{\bch{\orb}+(-1)^{2\eps}\bch{\simp}}\bGrfuse\bch{\Mod{M}^{\even}}.
    \end{equation}
  \end{enumerate}
\end{cor}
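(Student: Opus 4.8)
The plan is to dispatch the four parts in order, with part~(i) carrying essentially all of the substance; parts~(ii)--(iv) then reduce to bookkeeping on top of it (throughout, $\Mod{M}$ is taken from $\categ{P}$, so that the relevant fusion products are defined). Since $\orb$ is generated by the even elements of $\gvoa$, the even and odd subspaces $\Mod{M}^{\even}$ and $\Mod{M}^{\odd}$ of any $\gvoa$-module $\Mod{M}$ are $\orb$-submodules, which gives $\Res{\Mod{M}} \cong \Mod{M}^{\even} \oplus \Mod{M}^{\odd}$; the remaining isomorphisms in part~(i) follow from distributivity of $\bfuse$ over $\oplus$, from $\orb$ being the unit for $\bfuse$, and from property~\ref{it:DecompV} ($\Res{\gvoa} \cong \orb \oplus \simp$), once we establish $\Mod{M}^{\odd} \cong \simp \bfuse \Mod{M}^{\even}$. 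For this last point I would identify $\simp$ with the odd subspace of $\gvoa$ and note that the vertex operators $Y(v,z)$ with $v \in \simp$, restricted to $\Mod{M}^{\even}$ and landing in $\Mod{M}^{\odd}$, constitute an intertwining operator of $\orb$-modules; its universal property yields a canonical $\orb$-module map $f \colon \simp \bfuse \Mod{M}^{\even} \to \Mod{M}^{\odd}$ (and, symmetrically, one $\simp \bfuse \Mod{M}^{\odd} \to \Mod{M}^{\even}$). The associativity of the $\gvoa$-fields together with $\simp \bfuse \simp \cong \orb$ then forces these two maps to be mutually inverse up to a non-zero scalar; since $\simp \bfuse (-)$ is an auto-equivalence of the module category (its own quasi-inverse, as $\simp \bfuse \simp \cong \orb$) and hence reflects isomorphisms, $f$ is an isomorphism.

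For part~(ii): if $\Mod{W}$ is a simple $\orb$-module, then $\simp \bfuse \Mod{W}$ has $\myosp$-weight support (equivalently, $h_0$-spectrum) translated from that of $\Mod{W}$ by the $h_0$-eigenvalues occurring in $\simp$, which are odd integers; as the weight support of any simple $\orb$-module lies in a single coset of $2\ZZ$, the modules $\simp \bfuse \Mod{W}$ and $\Mod{W}$ lie in distinct cosets and so cannot be isomorphic (this is also implicit in the $\SLMinMod{3}{4}$ analysis of \cite{CreMod13}). For part~(iii): induction is left adjoint to restriction, with $\Res{\Ind{\Mod{W}}} \cong \Res{\gvoa} \bfuse \Mod{W} \cong \Mod{W} \oplus (\simp \bfuse \Mod{W})$, the generating copy of $\Mod{W}$ sitting in even parity. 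The $\orb$-inclusion $\Mod{M}^{\even} \hookrightarrow \Res{\Mod{M}}$ corresponds under Frobenius reciprocity to a $\gvoa$-module map $\Ind{\Mod{M}^{\even}} \to \Mod{M}$; by part~(i) its restriction is the identity on the $\Mod{M}^{\even}$ summand and the isomorphism $f$ on the $\simp \bfuse \Mod{M}^{\even}$ summand, hence bijective, so the map is a $\gvoa$-isomorphism and $\Ind{\Mod{M}^{\even}} \cong \Mod{M}$. The same argument applied to $\Mod{M}^{\odd} \hookrightarrow \Res{\Mod{M}}$ gives a bijective $\gvoa$-map $\Ind{\Mod{M}^{\odd}} \to \Mod{M}$, but the generating summand $\Mod{M}^{\odd}$ is even in $\Ind{\Mod{M}^{\odd}}$ and odd in $\Mod{M}$, so this map reverses the $\ZZ_2$-grading, i.e.\ it is an isomorphism $\Ind{\Mod{M}^{\odd}} \cong \parrmod{\Mod{M}}$.

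Finally, part~(iv): because $\orb \subseteq \gvoa$ share the Cartan element $h_0$, the level $K$ and the conformal vector (in particular the same central charge $c=-5$), the $\orb$-characters $\bch{\Mod{M}^{\even}}$ and $\bch{\Mod{M}^{\odd}}$ are power series in the same variables as $\nch{\Mod{M}}{\eps}$, and the graded (super)trace defining $\nch{\Mod{M}}{\eps}$ splits as $\traceover{\Mod{M}^{\even}}(\cdots) + (-1)^{2\eps}\, \traceover{\Mod{M}^{\odd}}(\cdots) = \bch{\Mod{M}^{\even}} + (-1)^{2\eps} \bch{\Mod{M}^{\odd}}$, where $(-1)^{2\eps} = +1$ for $\eps = 0$ (character) and $-1$ for $\eps = \tfrac12$ (supercharacter). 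Part~(i) then gives $\bch{\Mod{M}^{\odd}} = \bch{\simp \bfuse \Mod{M}^{\even}} = \bch{\simp} \bGrfuse \bch{\Mod{M}^{\even}}$, while $\orb$ being the fusion unit gives $\bch{\Mod{M}^{\even}} = \bch{\orb} \bGrfuse \bch{\Mod{M}^{\even}}$; bilinearity of the orbifold Grothendieck fusion product (assumption~\ref{it:orbstverlinde}) combines these into $\brac*{\bch{\orb} + (-1)^{2\eps} \bch{\simp}} \bGrfuse \bch{\Mod{M}^{\even}}$.

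The main obstacle is part~(i), and within it the identification $\Mod{M}^{\odd} \cong \simp \bfuse \Mod{M}^{\even}$: the universal property of fusion products and the associativity of the $\gvoa$-fields it invokes are features of a vertex tensor category structure (in the sense of Huang--Lepowsky--Zhang) that, as remarked in this section, is not known to be available for the categories in play, so this identification holds only modulo the standing assumptions. Apart from that, part~(ii) is the single place drawing on concrete information about $\OSPMinMod{2}{4}$ rather than purely formal manipulation, and everything else is routine.
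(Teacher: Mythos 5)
Your proposal is correct, and it is worth noting at the outset that the paper offers no proof of this corollary at all --- it is introduced only with the phrase ``these assumptions imply the following easy consequences'' --- so there is no argument in the text to compare against; your write-up is a legitimate fleshing-out of what the authors leave implicit. The substance, as you say, is the isomorphism $\Mod{M}^{\odd} \cong \simp \bfuse \Mod{M}^{\even}$ in part (i), and your route (the restriction of the $\gvoa$-fields to $\simp \otimes \Mod{M}^{\even}$ is an $\orb$-intertwiner, hence factors through the fusion product; associativity and $\simp \bfuse \simp \cong \orb$ force the resulting map to be invertible) is exactly the standard simple-current/algebra-object argument of the references the paper cites in assumptions (i) and (iii). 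You are right to flag that this invokes vertex-tensor-category structure not known to exist here; but since the corollary is explicitly conditional on assumptions (i)--(v), including the compatibility \eqref{eq:FusIndCompat}, this caveat is already priced in. Two small remarks: in part (ii), your $h_0$-weight-coset argument is clean and self-contained (the weight support of a simple weight $\orb$-module lies in a single coset of $2\ZZ\alpha$, while $\simp$ shifts it by odd multiples of $\alpha$), and is arguably preferable to appealing to the explicit $\SLMinMod{3}{4}$ spectrum of \cite{CreMod13}; and in part (iii), the degenerate possibility $\Mod{M}^{\odd} = 0$ cannot occur, since the anticommutators $\acomm{x_r}{y_s}$ would then force $K$ to act as zero rather than as $k = -\tfrac{5}{4}$, so the parity bookkeeping you give is all that remains. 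Part (iv) is the routine splitting of the (super)trace over $\Mod{M}^{\even} \oplus \Mod{M}^{\odd}$, as you say.
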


We now state the standard Verlinde formula for \(\gvoa\).  For this, it is convenient to treat the parameter $\xi \in \set{0,\frac{1}{2}}$ that distinguishes the \ns{} and Ramond sectors (see assumption \ref{it:NSRParameter} above) as an element of $\ZZ_2$.
\begin{thm}\label{thm:Verlindeformula}
	\begin{subequations}
	  Let \(\Mod{M}\) and \(\Mod{N}\) be \(\gvoa\)-modules and let \(\xi\) and \(\xi'\) be \(0\) or \(\frac{1}{2}\), depending on whether \(\Mod{M}\) and \(\Mod{N}\) are in the \ns{} or Ramond sector, respectively.  Under the assumptions (i)--(v) above, we then have
	  \begin{equation} \label{eq:chfusion}
	    \nch{\Mod{M}}{\eps}\Grfuse{}\nch{\Mod{N}}{\eps}=\nch{\Mod{M}\gfuse\Mod{N}}{\eps}
	    =\int_{\gnsr{\xi+\xi'}}\fuscoeff{\Mod{M}}{\Mod{N}}{\Mod{P}}{\eps}\nch{\Mod{P}}{\eps}\,\dd \Mod{P},
	  \end{equation}
	  where the Verlinde coefficients of $\gvoa$, indexed by $\eps=0$ (characters) or $\eps=\frac{1}{2}$ (supercharacters), are given by the following standard Verlinde formula:
	  \begin{equation}\label{eq:fusecoeff}
	    \fuscoeff{\Mod{M}}{\Mod{N}}{\Mod{P}}{\eps}
	    =\int_{\gnsr{\eps+1/2}}
	    \frac{\Smat{\eps}{\Mod{M}}{\xi+1/2}{\Mod{Q}} \Smat{\eps}{\Mod{N}}{\xi'+1/2}{\Mod{Q}} \Smat{\eps}{\Mod{P}}{\xi+\xi'+1/2}{\Mod{Q}}^*}{\Smat{\eps}{\gvoa}{1/2}{\Mod{Q}}} \dd \Mod{Q}.
	  \end{equation}
\end{subequations}
\end{thm}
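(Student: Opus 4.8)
The plan is to obtain the standard Verlinde formula \eqref{eq:fusecoeff} for $\gvoa = \OSPMinMod{2}{4}$ by lifting the standard Verlinde formula \eqref{eq:orbstverlinde} of the bosonic orbifold $\orb$ through the order $2$ simple current extension $\orb \subset \gvoa$, following the approach of \cite{EhoFus94} (see also \cite{CanFusII15}).  The first equality in \eqref{eq:chfusion} merely records the definition of the Grothendieck fusion product $\gGrfuse$, which is well defined by assumption (iv), so the substance is the second equality together with the explicit form of the coefficients.  First I would reduce $\gvoa$-fusion to $\orb$-fusion: by \cref{thm:basicproperties} and the compatibility \eqref{eq:FusIndCompat}, $\Mod{M} \gfuse \Mod{N} \cong \Ind{\brac*{\Mod{M}^{\even} \bfuse \Mod{N}^{\even}}}$, so that $\nch{\Mod{M}}{\eps} \gGrfuse \nch{\Mod{N}}{\eps} = \nch{\Mod{M} \gfuse \Mod{N}}{\eps} = \brac*{\bch{\orb} + (-1)^{2\eps} \bch{\simp}} \bGrfuse \bch{\Mod{M}^{\even}} \bGrfuse \bch{\Mod{N}^{\even}}$.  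Applying the orbifold standard Verlinde formula to $\bch{\Mod{M}^{\even}} \bGrfuse \bch{\Mod{N}^{\even}}$ and regrouping the integral over $\bnsr{}$ into the simple current orbits $\set*{\Mod{P}^{\even}, \Mod{P}^{\odd}}$, $\Mod{P} \in \gnsr{}$ (using $\Ind{\Mod{P}^{\even}} \cong \Mod{P}$, $\Ind{\Mod{P}^{\odd}} \cong \parrmod{\Mod{P}}$ and $\nch{\parrmod{\Mod{P}}}{\eps} = (-1)^{2\eps} \nch{\Mod{P}}{\eps}$) brings \eqref{eq:chfusion} into the stated shape with
\[
	\fuscoeff{\Mod{M}}{\Mod{N}}{\Mod{P}}{\eps} = \bfuscoeff{\Mod{M}^{\even}}{\Mod{N}^{\even}}{\Mod{P}^{\even}} + (-1)^{2\eps} \bfuscoeff{\Mod{M}^{\even}}{\Mod{N}^{\even}}{\Mod{P}^{\odd}},
\]
which vanishes unless $\Mod{P} \in \gnsr{\xi+\xi'}$ because fusion of $\gvoa$-modules respects the decomposition into \ns{} and Ramond sectors.

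The second ingredient is a dictionary between the $\gvoa$ S-matrix coefficients of \cref{eqn:SMatrixEntries} and the $\orb$ S-matrix coefficients $\slSmat{\cdot}{\cdot}$.  Since $\nch{\Mod{M}}{\eps} = \bch{\Mod{M}^{\even}} + (-1)^{2\eps} \bch{\Mod{M}^{\odd}}$ with $\Mod{M}^{\odd} \cong \simp \bfuse \Mod{M}^{\even}$, I would compare the $\modS$-transform of the left-hand side, supplied by \cref{thm:Stransf}, with the orbifold $\modS$-transforms of $\bch{\Mod{M}^{\even}}$ and $\bch{\simp \bfuse \Mod{M}^{\even}}$, using the simple current relation $\slSmat{\simp}{\Mod{Z}} = \pm \slSmat{\orb}{\Mod{Z}}$ with the sign recording whether $\Mod{Z}$ lies in $\bnsr{0}$ or $\bnsr{1/2}$.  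This expresses each $\Smat{\eps}{\Mod{M}}{\eps'}{\Mod{Q}}$ as an explicit scalar multiple of $\slSmat{\Mod{M}^{\even}}{\Mod{Q}^{\even}}$, the normalisations being pinned down by cross-checking against \cref{eqn:SMatrixEntries}.  Substituting the orbifold Verlinde formula \eqref{eq:orbstverlinde} for the two coefficients $\bfuscoeff{\Mod{M}^{\even}}{\Mod{N}^{\even}}{\Mod{P}^{\bullet}}$ above, using the simple current symmetry $\bfuscoeff{\Mod{W}}{\Mod{X}}{\simp \bfuse \Mod{Y}} = \bfuscoeff{\simp \bfuse \Mod{W}}{\Mod{X}}{\Mod{Y}}$ and the absence of fixed points (\cref{thm:basicproperties}) to reduce the integral over $\bnsr{}$ in \eqref{eq:orbstverlinde} to one over a single representative per orbit, namely over $\gnsr{\eps+1/2}$, and finally translating the $\slSmat{\orb}{\cdot}$-data back into $\Smat{\eps}{\cdot}{\cdot}{\cdot}$-data, collapses the combined expression to \eqref{eq:fusecoeff}.

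The main obstacle is this last collapse.  One must track the numerical factors (powers of $2$ and $\sqrt{2}$) relating the measure $\dd \Mod{Z}$ on $\bnsr{}$ to $\dd \Mod{Q}$ on $\gnsr{\eps+1/2}$ and relating $\slSmat{\orb}{\Mod{Q}^{\bullet}}$ to $\Smat{\eps}{\gvoa}{1/2}{\Mod{Q}}$, and verify that the monodromy-charge signs carried by $\Mod{Q}^{\even}$, together with the sector data of $\Mod{M}^{\even}$, $\Mod{N}^{\even}$ and $\Mod{P}^{\even}$, reassemble into exactly the index shifts $\xi + \tfrac12$, $\xi' + \tfrac12$, $\xi + \xi' + \tfrac12$ and $\tfrac12$ that label the S-matrix entries in \eqref{eq:fusecoeff}.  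Since the standard (super)characters are distributions rather than honest functions, each of these rearrangements of continuous sums and integrals has to be justified termwise in the sense of formal power series, exactly as was done for the standard modular transforms in \cref{subsec:stmod}.
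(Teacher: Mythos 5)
Your proposal is correct and follows essentially the same route as the paper: it reduces $\gvoa$-fusion to $\orb$-fusion via $\Mod{M} \gfuse \Mod{N} \cong \Ind{(\Mod{M}^{\even} \bfuse \Mod{N}^{\even})}$, arrives at the same intermediate identity $\tfuscoeff{\Mod{M}}{\Mod{N}}{\Mod{P}}{\eps} = \tfuscoeff{\Mod{M}^{\even}}{\Mod{N}^{\even}}{\Mod{P}^{\even}}{\orb} + (-1)^{2\eps}\,\tfuscoeff{\Mod{M}^{\even}}{\Mod{N}^{\even}}{\simp\bfuse\Mod{P}^{\even}}{\orb}$, and invokes the same two supporting facts that the paper isolates as \cref{lem:IndS} (the dictionary between induced and orbifold S-matrix entries) and \cref{thm:scurtransf} (the Schellekens--Yankielowicz sign relations) to collapse the orbit-summed integral over $\bnsr{}$ to the integral over $\gnsr{\eps+1/2}$. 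The remaining bookkeeping of factors of $2$ and of the index shifts that you flag as the main obstacle is exactly what the paper's final displayed computation carries out.
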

\noindent We refer to the $\tfuscoeff{\Mod{M}}{\Mod{N}}{\Mod{P}}{\eps}$ as Verlinde coefficients, rather than Grothendieck fusion coefficients, because they depend upon whether we compute the character ($\eps=0$) or supercharacter ($\eps=\frac{1}{2}$) of the fusion product.  Of course, knowing both allows one to deduce the true Grothendieck fusion coefficients (\cref{sub:Gr}) which will be denoted by $\tfuscoeff{\Mod{M}}{\Mod{N}}{\Mod{P}}{}$.

We shall prove \cref{thm:Verlindeformula} below after recording some additional identities as preparatory lemmas.  First however, we note that \eqref{eq:fusecoeff} may be simplified by noting that all the S-matrix elements \eqref{eqn:SMatrixEntries} and \eqref{eqn:AtypSMatrixEntries} may be factored as follows:
\begin{equation} \label{eq:DefSigma}
	\Smat{\eps}{\Mod{M}}{\eps'}{\Mod{N}} = \Selt{\Mod{M}}{\Mod{N}} \wun_{\xi \neq \eps'} \wun_{\xi' \neq \eps}.
\end{equation}
Here, $\xi$ is $0$ or $\frac{1}{2}$ according as to whether $\Mod{M}$ is in the \ns{} or Ramond sectors, respectively, and $\xi'$ similarly indicates the sector of $\Mod{N}$.  The simplified standard Verlinde formula is therefore
\begin{equation} \label{eq:Grfuscoeff}
	\fuscoeff{\Mod{M}}{\Mod{N}}{\Mod{P}}{\eps} = \wun_{\Mod{P} \in \gnsr{\xi+\xi'}} \int_{\gnsr{\eps+1/2}} \frac{\Selt{\Mod{M}}{\Mod{Q}} \Selt{\Mod{N}}{\Mod{Q}} \Selt{\Mod{P}}{\Mod{Q}}^*}{\Selt{\gvoa}{\Mod{Q}}} \dd \Mod{Q},
\end{equation}
where the characteristic function takes care of the fact that this Grothendieck fusion coefficient is $0$ if $\Mod{P} \notin \gnsr{\xi + \xi'}$.

\begin{lemma} \label{lem:IndS}
	Let $\Mod{W}$ and $\Mod{X}$ be standard $\orb$-modules.  Then, we have
	\begin{subequations}
		\begin{equation} \label{eq:IndS}
			\slSmat{\Mod{W}}{\Mod{X}} = \frac{1}{2} \sum_{\eps,\eps'} \Smat{\eps}{\Ind{\Mod{W}}}{\eps'}{\Ind{\Mod{X}}}, \quad
			\slSmat{\simp \bfuse \Mod{W}}{\Mod{X}} = \frac{1}{2} \sum_{\eps,\eps'} (-1)^{2\eps} \Smat{\eps}{\Ind{\Mod{W}}}{\eps'}{\Ind{\Mod{X}}},
		\end{equation}
		where $\eps$ and $\eps'$ are summed over $0$ and $\frac{1}{2}$.  If $\Mod{W} \in \bnsr{\xi}$ and $\Mod{X} \in \bnsr{\xi'}$, then we may write the first of these relations in the equivalent form
	  \begin{equation} \label{eq:IndS'}
	    \Smat{\eps}{\Ind{\Mod{W}}}{\eps'}{\Ind{\Mod{X}}} = 2 \slSmat{\Mod{W}}{\Mod{X}} \wun_{\xi\neq\eps'} \wun_{\xi'\neq\eps}.
	  \end{equation}
	\end{subequations}
\end{lemma}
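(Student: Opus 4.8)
The plan is to compute the modular $\modS$-transform of a standard orbifold character $\bch{\Mod{W}}$ in two ways and compare. One way uses the definition of the orbifold $\modS$-matrix, $\modch{S}{\bch{\Mod{W}}} = \int_{\bnsr{}} \slSmat{\Mod{W}}{\Mod{Y}} \bch{\Mod{Y}} \, \dd \Mod{Y}$, which is meaningful by the running assumption that the standard module formalism applies to $\orb$. The other way rewrites $\bch{\Mod{W}}$ in terms of the (super)characters of the induced $\gvoa$-module $\Ind{\Mod{W}}$ and then applies the $\gvoa$ $\modS$-transform computed in \cref{thm:Stransf} (and the explicit entries \eqref{eqn:SMatrixEntries}). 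No further hypothesis relating the two modular actions is needed: $\gvoa$ and $\orb \cong \SLMinMod{3}{4}$ share the central charge $c = -5$, so $\modS$ is literally the same coordinate transform in both cases, and by construction the standard $\orb$-characters are exactly the even- and odd-parity components of the standard $\gvoa$-characters.

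Concretely, \cref{thm:basicproperties} gives $(\Ind{\Mod{W}})^{\even} \cong \Mod{W}$ and $(\Ind{\Mod{W}})^{\odd} \cong \simp \bfuse \Mod{W}$, so that $\nch{\Ind{\Mod{W}}}{\eps} = \bch{\Mod{W}} + (-1)^{2\eps} \bch{\simp \bfuse \Mod{W}}$ and hence
\begin{equation}
	\bch{\Mod{W}} = \tfrac{1}{2} \sum_{\eps} \nch{\Ind{\Mod{W}}}{\eps}, \qquad \bch{\simp \bfuse \Mod{W}} = \tfrac{1}{2} \sum_{\eps} (-1)^{2\eps} \nch{\Ind{\Mod{W}}}{\eps}.
\end{equation}
Applying $\modch{S}{\cdot}$ to the first identity and using $\modch{S}{\nch{\Mod{M}}{\eps}} = \sum_{\eps'} \int_{\gnsr{}} \Smat{\eps}{\Mod{M}}{\eps'}{\Mod{N}} \nch{\Mod{N}}{\eps'} \, \dd \Mod{N}$ yields $\modch{S}{\bch{\Mod{W}}} = \tfrac{1}{2} \sum_{\eps,\eps'} \int_{\gnsr{}} \Smat{\eps}{\Ind{\Mod{W}}}{\eps'}{\Mod{N}} \nch{\Mod{N}}{\eps'} \, \dd \Mod{N}$. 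I then rewrite each $\nch{\Mod{N}}{\eps'} = \bch{\Mod{N}^{\even}} + (-1)^{2\eps'} \bch{\Mod{N}^{\odd}}$ and note that, since the simple current $\simp$ has no fixed points (\cref{thm:basicproperties}), the modules $\Mod{N}^{\even}$ and $\Mod{N}^{\odd} \cong \simp \bfuse \Mod{N}^{\even}$ sweep out all of $\bnsr{}$ without repetition as $\Mod{N}$ runs over $\gnsr{}$. Comparing the resulting expansion with $\int_{\bnsr{}} \slSmat{\Mod{W}}{\Mod{Y}} \bch{\Mod{Y}} \, \dd \Mod{Y}$ and using linear independence of the standard $\orb$-characters, the coefficient of $\bch{\Mod{N}^{\even}}$ gives the first identity of \eqref{eq:IndS} with $\Mod{X} = \Mod{N}^{\even}$ (so $\Ind{\Mod{X}} \cong \Mod{N}$), while the coefficient of $\bch{\Mod{N}^{\odd}}$ gives $\slSmat{\Mod{W}}{\simp \bfuse \Mod{N}^{\even}} = \tfrac{1}{2} \sum_{\eps,\eps'} (-1)^{2\eps'} \Smat{\eps}{\Ind{\Mod{W}}}{\eps'}{\Mod{N}}$; since $\Ind{\simp \bfuse \Mod{N}^{\even}} \cong \parrmod{\Mod{N}}$ and $\nch{\parrmod{\Mod{N}}}{\eps'} = (-1)^{2\eps'} \nch{\Mod{N}}{\eps'}$ by \eqref{eqn:chpar} and \eqref{eqn:schpar}, this is again the first identity of \eqref{eq:IndS}, now for $\Mod{X} = \simp \bfuse \Mod{N}^{\even}$. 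Thus the first identity holds for every standard $\orb$-module $\Mod{X}$, and the second identity then follows by applying the first with $\Mod{W}$ replaced by $\simp \bfuse \Mod{W}$, using $\Ind{\simp \bfuse \Mod{W}} \cong \parrmod{\Ind{\Mod{W}}}$ together with $\nch{\parrmod{\Ind{\Mod{W}}}}{\eps} = (-1)^{2\eps} \nch{\Ind{\Mod{W}}}{\eps}$. For \eqref{eq:IndS'}, take $\Mod{W} \in \bnsr{\xi}$ and $\Mod{X} \in \bnsr{\xi'}$, so $\Ind{\Mod{W}}$ and $\Ind{\Mod{X}}$ lie in the $\gvoa$ sectors labelled by $\xi$ and $\xi'$; the factorisation \eqref{eq:DefSigma} reads $\Smat{\eps}{\Ind{\Mod{W}}}{\eps'}{\Ind{\Mod{X}}} = \Selt{\Ind{\Mod{W}}}{\Ind{\Mod{X}}} \wun_{\xi \neq \eps'} \wun_{\xi' \neq \eps}$, and summing over $\eps, \eps' \in \set{0, \tfrac{1}{2}}$ (using $\sum_{\eps} \wun_{\xi' \neq \eps} = \sum_{\eps'} \wun_{\xi \neq \eps'} = 1$) collapses the double sum to $\Selt{\Ind{\Mod{W}}}{\Ind{\Mod{X}}}$; combined with the first identity of \eqref{eq:IndS} this gives $\Selt{\Ind{\Mod{W}}}{\Ind{\Mod{X}}} = 2 \slSmat{\Mod{W}}{\Mod{X}}$, and feeding this back into the factorisation produces \eqref{eq:IndS'}.

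I expect the main obstacle to be bookkeeping rather than anything conceptual: one must keep straight the two-to-one identification $\bnsr{} = \set{\Mod{N}^{\even}} \sqcup \set{\Mod{N}^{\odd} : \Mod{N} \in \gnsr{}}$ together with the compatibility of the integration measures it carries, the parity-reversal signs relating $\Ind{\Mod{N}^{\odd}}$ to $\parrmod{\Mod{N}}$, and the fact that the coefficient matching is legitimate only because the standard $\orb$-characters are linearly independent — which is part of the hypothesis that the standard module formalism applies to $\orb$, not something established here. None of the individual computations is hard; the care lies entirely in the indexing.
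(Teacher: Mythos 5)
Your proposal is correct and follows essentially the same route as the paper's proof: compute $\modch{S}{\bch{\Mod{W}}}$ once via the orbifold S-matrix and once by writing $\bch{\Mod{W}} = \tfrac{1}{2}\sum_{\eps}\nch{\Ind{\Mod{W}}}{\eps}$, applying \cref{thm:Stransf}, decomposing $\nch{\Mod{N}}{\eps'}$ into orbifold characters, and comparing coefficients. The only cosmetic differences are that the paper folds the two-to-one correspondence into a change of measure $\dd\Mod{N} = \tfrac{1}{2}\dd\Mod{X}$ rather than matching coefficients of $\bch{\Mod{N}^{\even}}$ and $\bch{\Mod{N}^{\odd}}$ separately, and that it rederives the second identity of \eqref{eq:IndS} "by the same technique" rather than deducing it from the first via $\Ind{(\simp\bfuse\Mod{W})} \cong \parrmod{(\Ind{\Mod{W}})}$; both variants are sound.
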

\begin{proof}
	To establish the first relation of \eqref{eq:IndS}, we note that
	\begin{align}
		\modch{S}{\bch{\Mod{W}}} &
		= \frac{1}{2} \sum_{\eps} \modch{S}{\nch{\Ind{\Mod{W}}}{\eps}}
		= \frac{1}{2} \sum_{\eps, \eps'} \int_{\gnsr{}} \Smat{\eps}{\Ind{\Mod{W}}}{\eps'}{\Mod{N}} \nch{\Mod{N}}{\eps'} \, \dd \Mod{N} \notag \\
		&= \frac{1}{2} \sum_{\eps, \eps'} \int_{\gnsr{}} \Smat{\eps}{\Ind{\Mod{W}}}{\eps'}{\Mod{N}} \brac*{\bch{\Mod{N^{\even}}} + (-1)^{2\eps'} \bch{\simp \bfuse \Mod{N}^{\even}}} \, \dd \Mod{N} \notag \\
		&= \frac{1}{4} \sum_{\eps, \eps'} \int_{\bnsr{}} \Smat{\eps}{\Ind{\Mod{W}}}{\eps'}{\Ind{\Mod{X}}} \brac*{\bch{\Mod{X}} + (-1)^{2\eps'} \bch{\simp \bfuse \Mod{X}}} \, \dd \Mod{X} \notag
		\intertext{($\dd \Mod{N} = \frac{1}{2} \dd \Mod{X}$ because $\Mod{X}$ and $\simp \bfuse \Mod{X}$ are distinct, by \cref{thm:basicproperties}\ref{it:NoFixedPts}, yet only one has a lift in $\gnsr{}$ because of the even parity requirement, see assumption \ref{it:NSRParameter} above)}
		&= \frac{1}{4} \sum_{\eps, \eps'} \int_{\bnsr{}} \brac*{\Smat{\eps}{\Ind{\Mod{W}}}{\eps'}{\Ind{\Mod{X}}} + (-1)^{2\eps'} \Smat{\eps}{\Ind{\Mod{W}}}{\eps'}{\Ind{(\simp \bfuse \Mod{X})}}} \bch{\Mod{X}} \, \dd \Mod{X} \notag
		\intertext{(since the square of $\simp \bfuse {-}$ is equivalent to the identity functor)}
		&= \frac{1}{2} \sum_{\eps, \eps'} \int_{\bnsr{}} \Smat{\eps}{\Ind{\Mod{W}}}{\eps'}{\Ind{\Mod{X}}} \bch{\Mod{X}} \, \dd \Mod{X},
	\end{align}
	(since $\Ind{(\simp \bfuse \Mod{X})} \cong \parrmod{(\Ind{\Mod{X}})}$, by \cref{thm:basicproperties}\ref{it:EvenOddInd}).  	The second relation follows using the same technique.  To deduce \eqref{eq:IndS'}, note from \eqref{eqn:SMatrixEntries} that the S-matrix elements of $\gvoa$ are all proportional to $\wun_{\xi \neq \eps'} \wun_{\xi' \neq \eps}$ and so the sums in \eqref{eq:IndS} each have a single contributing summand.
\end{proof}

\begin{lem}\label{thm:scurtransf}
  Let \(\Mod{W}\in\bnsr{\xi}\) and \(\Mod{X}\in\bnsr{\xi'}\) be standard \(\orb\)-modules, for some \(\xi,\xi'\in\set{0,\frac{1}{2}}\). Then, we have
  \begin{align} \label{eqn:orbSrelation}
    \slSmat{\simp \bfuse \Mod{W}}{\Mod{X}}=-(-1)^{2\xi'}\slSmat{\Mod{W}}{\Mod{X}}, \quad
    \slSmat{\Mod{W}}{\simp \bfuse \Mod{X}}=-(-1)^{2\xi}\slSmat{\Mod{W}}{\Mod{X}}.
  \end{align}
\end{lem}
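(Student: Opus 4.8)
The plan is to obtain both identities by feeding \cref{lem:IndS} into the factorisation \eqref{eq:IndS'} of the $\gvoa$ S\nobreakdash-matrix and then collapsing the resulting finite parity sums. Concretely, I would start from the second identity of \eqref{eq:IndS},
\[
	\slSmat{\simp \bfuse \Mod{W}}{\Mod{X}} = \frac{1}{2} \sum_{\eps,\eps'} (-1)^{2\eps}\, \Smat{\eps}{\Ind{\Mod{W}}}{\eps'}{\Ind{\Mod{X}}},
\]
and substitute $\Smat{\eps}{\Ind{\Mod{W}}}{\eps'}{\Ind{\Mod{X}}} = 2\, \slSmat{\Mod{W}}{\Mod{X}}\, \wun_{\xi\neq\eps'}\, \wun_{\xi'\neq\eps}$, which is exactly \eqref{eq:IndS'} and is applicable here precisely because $\Mod{W}\in\bnsr{\xi}$ and $\Mod{X}\in\bnsr{\xi'}$. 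This reduces the right-hand side to $\slSmat{\Mod{W}}{\Mod{X}} \sum_{\eps,\eps'} (-1)^{2\eps}\, \wun_{\xi\neq\eps'}\, \wun_{\xi'\neq\eps}$.

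The remaining step is purely combinatorial. Since $\eps,\eps'$ range over the two-element set $\set{0,\tfrac12}$ and $\xi,\xi'$ lie in the same set (with the conditions in the characteristic functions read modulo $1$, as recorded after \eqref{eqn:SMatrixEntries}), the factor $\wun_{\xi\neq\eps'}$ forces $\eps'=\xi+\tfrac12$, so the $\eps'$-sum contributes a single $1$, while $\wun_{\xi'\neq\eps}$ forces $\eps=\xi'+\tfrac12$, whence $(-1)^{2\eps}=(-1)^{2\xi'+1}=-(-1)^{2\xi'}$. Assembling these gives the first claimed identity $\slSmat{\simp \bfuse \Mod{W}}{\Mod{X}}=-(-1)^{2\xi'}\slSmat{\Mod{W}}{\Mod{X}}$.

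For the second identity I would proceed symmetrically. The quickest route is to note that the derivation of the first identity of \eqref{eq:IndS} in the proof of \cref{lem:IndS} applies verbatim when one extracts the coefficient of $\bch{\simp\bfuse\Mod{X}}$ in place of $\bch{\Mod{X}}$, yielding the companion relation $\slSmat{\Mod{W}}{\simp \bfuse \Mod{X}} = \tfrac12 \sum_{\eps,\eps'} (-1)^{2\eps'}\, \Smat{\eps}{\Ind{\Mod{W}}}{\eps'}{\Ind{\Mod{X}}}$; substituting \eqref{eq:IndS'} and collapsing the sums (the sign now lands on $\eps'$, which $\wun_{\xi\neq\eps'}$ forces to equal $\xi+\tfrac12$) gives $-(-1)^{2\xi}\slSmat{\Mod{W}}{\Mod{X}}$. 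Alternatively, and even more economically, one can invoke the symmetry $\slSmat{\Mod{A}}{\Mod{B}}=\slSmat{\Mod{B}}{\Mod{A}}$ of the orbifold S\nobreakdash-matrix (part of the standard module formalism for $\orb$, and verified for $\orb\cong\SLMinMod{3}{4}$ in \cite{CreMod13}) and apply the first identity to $\slSmat{\simp \bfuse \Mod{X}}{\Mod{W}}$, using that $\Mod{X}\in\bnsr{\xi'}$ and $\Mod{W}\in\bnsr{\xi}$ so that the emerging sign is $-(-1)^{2\xi}$. There is essentially no genuine obstacle here; the only points demanding care are the bookkeeping of the $\bnsr{\xi}$ sector labels under $\simp\bfuse{-}$ and the mod\nobreakdash-$1$ reading of the characteristic functions, and — should one take the symmetric-$\modS$ shortcut — the (already-assumed) symmetry of the orbifold S\nobreakdash-matrix.
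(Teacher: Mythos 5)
Your proof is correct, but it takes a genuinely different route from the paper's. The paper invokes the Schellekens--Yankielowicz simple-current theory: it first argues that $\slSmat{\simp\bfuse\Mod{W}}{\Mod{X}}$ is proportional to $\slSmat{\Mod{W}}{\Mod{X}}$ with a sign governed (up to a global, $\Mod{X}$-independent ambiguity) by the conformal-weight shift of $\Mod{X}$ under $\simp\bfuse{-}$ modulo $1$, and then pins down the global sign by explicitly evaluating both sides of \eqref{eq:IndS} for $\Ind{\Mod{W}}=\NSRel{\lambda'}$ and $\Ind{\Mod{X}}=\RRel{\lambda'}$; the second identity then follows from the symmetry of the orbifold S-matrix. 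You instead derive everything internally: substituting the factorisation \eqref{eq:IndS'} into the second relation of \eqref{eq:IndS} and collapsing the parity sums (the characteristic functions force $\eps=\xi'+\tfrac12$ and $\eps'=\xi+\tfrac12$ mod $1$, whence $(-1)^{2\eps}=-(-1)^{2\xi'}$) yields the first identity directly, and your companion relation $\slSmat{\Mod{W}}{\simp\bfuse\Mod{X}}=\tfrac12\sum_{\eps,\eps'}(-1)^{2\eps'}\Smat{\eps}{\Ind{\Mod{W}}}{\eps'}{\Ind{\Mod{X}}}$ is indeed obtained from \eqref{eq:IndS} via $\Ind{(\simp\bfuse\Mod{X})}\cong\parrmod{(\Ind{\Mod{X}})}$ and the parity-reversal sign of supercharacters. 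This is non-circular, since \cref{lem:IndS} is proved without appeal to \cref{thm:scurtransf}. What your argument buys is economy and self-containedness: it needs no external simple-current machinery and no sample computation to fix a residual sign, only the already-established \cref{lem:IndS} and the form \eqref{eqn:SMatrixEntries} of the $\gvoa$ S-matrix. What the paper's argument buys is conceptual context and generality, explaining the origin of the sign within the general simple-current framework rather than as an artefact of the specific S-matrix entries of $\OSPMinMod{2}{4}$.
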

\begin{proof}
  These identities follow, almost immediately, from the general theory of simple currents due to Schellekens and Yankielowicz \cite[Sec.~4]{SchSim90}.  Indeed, they show\footnote{Strictly speaking, the material in \cite{SchSim90} assumes the setting of a rational \voa{}.  However, it is possible to generalise many of their results to the standard module formalism, see \cite{MelSim17} for details.} that
  \begin{equation} \label{eq:propconst}
		\slSmat{\simp \bfuse \Mod{W}}{\Mod{X}} = \frac{\slSmat{\simp}{\Mod{X}}}{\slSmat{\orb}{\Mod{X}}} \slSmat{\Mod{W}}{\Mod{X}}
	\end{equation}
	and that the $\Mod{X}$-dependent proportionality factor is just a sign, as $\simp$ is an order two simple current.  They moreover show that, up to a global $\Mod{X}$-independent sign, this proportionality constant is determined by the difference in the conformal weights (mod $1$) of $\Mod{X}$ and $\simp \bfuse \Mod{X}$.  As this difference is $0$ if $\Ind{\Mod{X}}$ is \ns{} and $\frac{1}{2}$ if $\Ind{\Mod{X}}$ is Ramond, the proportionality factor is $\pm (-1)^{2 \xi'}$.

	To determine the global sign, we compute the \rhss{} of both identities in \eqref{eq:IndS} in a simple case and compare.  Taking $\Ind{\Mod{W}} = \NSRel{\lambda'}$ and $\Ind{\Mod{X}} = \RRel{\lambda'}$, so that both sides are non-zero, it is obvious that the proportionality factor in \eqref{eq:propconst} is $1$.  Since $\RRel{\lambda'}$ is Ramond ($\xi' = \frac{1}{2}$), the global sign is thus $-1$.  This completes the proof as the second identity follows from the first by the symmetry of the orbifold S-matrix.
\end{proof}

\begin{proof}[Proof of \cref{thm:Verlindeformula}]
  First, we combine \eqref{eq:FusIndCompat} with \eqref{eq:EvenOddInd} to give
  \begin{equation}
		\Mod{M} \gfuse \Mod{N} \cong \Ind{\Mod{M}^{\even}} \gfuse \Ind{\Mod{N}^{\even}} \cong \Ind{(\Mod{M}^{\even} \bfuse \Mod{N}^{\even})},
	\end{equation}
	which implies that
  \begin{equation}
	\Res{(\Mod{M} \gfuse \Mod{N})} \cong \Res{\gvoa} \bfuse (\Mod{M}^{\even} \bfuse \Mod{N}^{\even}).
  \end{equation}
	As $\Res{\gvoa} \cong \orb \oplus \simp$ (item \ref{it:DecompV} above), this implies the (super)character identity
  \begin{align}
    \nch{\Mod{M}\times\Mod{N}}{\eps}
    &=\brac*{\bch{\orb}+(-1)^{2\eps}\bch{\simp}}\bGrfuse
    \bch{\Mod{M}^{\even}\bfuse\Mod{N}^{\even}}\nonumber\\
    &= \int_{\bnsr{}} \bfuscoeff{\Mod{M}^{\even}}{\Mod{N}^{\even}}{\Mod{W}}
    \brac*{\bch{\orb}+(-1)^{2\eps}\bch{\simp}}\bGrfuse \bch{\Mod{W}}\,\dd \Mod{W}\nonumber\\
    &= \int_{\bnsr{}} \bfuscoeff{\Mod{M}^{\even}}{\Mod{N}^{\even}}{\Mod{W}} \nch{\Ind{\Mod{W}}}{\eps}\,\dd \Mod{W}\nonumber\\
    &= \int_{\gnsr{}} \brac*{\bfuscoeff{\Mod{M}^{\even}}{\Mod{N}^{\even}}{\Mod{P}^{\even}} \nch{\Mod{P}}{\eps}
    +\bfuscoeff{\Mod{M}^{\even}}{\Mod{N}^{\even}}{\simp\bfuse\Mod{P}^{\even}} \nch{\parrmod{\Mod{P}}}{\eps}} \,\dd \Mod{P}\nonumber\\
    &= \int_{\gnsr{}} \brac*{\bfuscoeff{\Mod{M}^{\even}}{\Mod{N}^{\even}}{\Mod{P}^{\even}}
    + (-1)^{2\eps} \bfuscoeff{\Mod{M}^{\even}}{\Mod{N}^{\even}}{\simp\bfuse\Mod{P}^{\even}}} \nch{\Mod{P}}{\eps} \,\dd \Mod{P},
  \end{align}
  where we set $\Mod{P} = \Ind{\Mod{W}}$, hence $\Mod{P}^{\even} \cong \Mod{W}$, and recall that $\gnsr{}$ consists of only the even parity standard $\gvoa$-modules (item \ref{it:NSRParameter} above).  Using \cref{thm:scurtransf}, the contribution of the standard \(\gvoa\)-module \(\Mod{P}\) to \(\nch{\Mod{M}\gfuse\Mod{N}}{\eps}\) is therefore
  \begin{align}
    \fuscoeff{\Mod{M}}{\Mod{N}}{\Mod{P}}{\eps}
    &=\bfuscoeff{\Mod{M}^{\even}}{\Mod{N}^{\even}}{\Mod{P}^{\even}} +(-1)^{2\eps} \bfuscoeff{\Mod{M}^{\even}}{\Mod{N}^{\even}}{\simp\bfuse\Mod{P}^{\even}} \nonumber\\
    &= \int_{\bnsr{}} \frac{\slSmat{\Mod{M}^{\even}}{\Mod{X}} \slSmat{\Mod{N}^{\even}}{\Mod{X}}}{\slSmat{\orb}{\Mod{X}}}
    \brac*{\slSmat{\Mod{P}^{\even}}{\Mod{X}} + (-1)^{2\eps} \slSmat{\simp\bfuse\Mod{P}^{\even}}{\Mod{X}}}^\ast \,\dd \Mod{X}\nonumber\\
    &= 2\int_{\bnsr{\eps+1/2}} \frac{\slSmat{\Mod{M}^{\even}}{\Mod{X}}\slSmat{\Mod{N}^{\even}}{\Mod{X}}\slSmat{\Mod{P}^{\even}}{\Mod{X}}^\ast}{\slSmat{\orb}{\Mod{X}}} \dd \Mod{X}\nonumber\\
    &= 2\int_{\gnsr{\eps+1/2}} \left( \frac{\slSmat{\Mod{M}^{\even}}{\Mod{Q}^{\even}}\slSmat{\Mod{N}^{\even}}{\Mod{Q}^{\even}}\slSmat{\Mod{P}^{\even}}{\Mod{Q}^{\even}}^\ast}{\slSmat{\orb}{\Mod{Q}^{\even}}} \right. \nonumber\\
    &\mspace{100mu} \left. + \frac{\slSmat{\Mod{M}^{\even}}{\simp\bfuse\Mod{Q}^{\even}}\slSmat{\Mod{N}^{\even}}{\simp\bfuse\Mod{Q}^{\even}}\slSmat{\Mod{P}^{\even}}{\simp\bfuse\Mod{Q}^{\even}}^\ast}{\slSmat{\orb}{\simp\bfuse\Mod{Q}^{\even}}} \right) \dd \Mod{Q}.
  \end{align}
  Again using \cref{thm:scurtransf}, the integrand can only be non-zero if \(\Mod{P}\in\gnsr{\xi+\xi'}\). Assuming this, we have
  \begin{align}
    \fuscoeff{\Mod{M}}{\Mod{N}}{\Mod{P}}{\eps}
    &= 4\int_{\gnsr{\eps+1/2}} \frac{\slSmat{\Mod{M}^{\even}}{\Mod{Q}^{\even}}\slSmat{\Mod{N}^{\even}}{\Mod{Q}^{\even}}\slSmat{\Mod{P}^{\even}}{\Mod{Q}^{\even}}^\ast}{\slSmat{\orb}{\Mod{Q}^{\even}}}\dd\Mod{Q} \nonumber\\
    &=\int_{\gnsr{\eps+1/2}} \frac{\Smat{\eps}{\Mod{M}}{\xi+1/2}{\Mod{Q}} \Smat{\eps}{\Mod{N}}{\xi'+1/2}{\Mod{Q}}\Smat{\eps}{\Mod{P}}{\xi+\xi'+1/2}{\Mod{Q}}^\ast}{\Smat{\eps}{\gvoa}{1/2}{\Mod{Q}}}\dd \Mod{Q}
  \end{align}
  by \eqref{eq:IndS'}, as required.
\end{proof}

\subsection{Evaluating the Verlinde formula}

In this section, we assume that \(\OSPMinMod{2}{4}\) (and its bosonic orbifold) satisfy the assumptions laid out at the beginning of \cref{subsec:verlinde}.  The aim is to use \cref{thm:Verlindeformula} to compute the Grothendieck fusion rules of the simple \(\OSPMinMod{2}{4}\)-modules in the category $\categ{P}$.  To simplify the calculations, we shall make use of the following observations, writing $A \tGr{\Mod{M}}$ for $\tGr{A \Mod{M}}$, $A = \sfaut, \conjaut, \parr$, when convenient.

\begin{prop}\label{thm:twistfuse}
  Let \(\Mod{M}\) and \(\Mod{N}\) be \(\OSPMinMod{2}{4}\)-modules in the category $\categ{P}$.  Then, under the assumptions (i)--(v) above, for all \(m,n\in\frac{1}{2}\ZZ\), we have
  \begin{subequations}
		\begin{align}
			\Gr{\parrmod{\Mod{M}}} \gGrfuse \Gr{\Mod{N}} &= \parrmod{\brac*{\Gr{\Mod{M}} \gGrfuse \Gr{\Mod{N}}}}, \label{eq:parrfuse} \\
			\Gr{\conjmod{\Mod{M}}} \gGrfuse \Gr{\conjmod{\Mod{N}}} &= \conjmod{\brac*{\Gr{\Mod{M}} \gGrfuse \Gr{\Mod{N}}}}, \label{eq:conjfuse} \\
			\Gr{\sfmod{m}{\Mod{M}}} \gGrfuse \Gr{\sfmod{n}{\Mod{N}}} &= \sfmod{m+n}{\brac*{\Gr{\Mod{M}} \gGrfuse \Gr{\Mod{N}}}}. \label{eq:specid}
		\end{align}
	\end{subequations}
\end{prop}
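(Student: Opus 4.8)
The plan is to establish all three identities directly at the level of the (completed) Grothendieck group, using the standard Verlinde formula \eqref{eq:fusecoeff} as the only computational tool. Since a class in $\categ{P}$ is determined by its character together with its supercharacter (\cref{sub:Gr}), it is enough to apply $\nch{\cdot}{\eps}$ to both sides of \eqref{eq:parrfuse}, \eqref{eq:conjfuse} and \eqref{eq:specid}, for $\eps \in \set{0,\tfrac12}$, and compute each side via \cref{thm:Verlindeformula}. The first step is to record how the $\modS$-matrix coefficients $\Smat{\eps}{\Mod{M}}{\eps'}{\Mod{Q}}$ react to parity reversal, conjugation and spectral flow of $\Mod{M}$. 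From $\nch{\parrmod{\Mod{M}}}{\eps} = (-1)^{2\eps}\nch{\Mod{M}}{\eps}$ (\eqref{eqn:chpar}, \eqref{eqn:schpar}) one gets $\Smat{\eps}{\parrmod{\Mod{M}}}{\eps'}{\Mod{Q}} = (-1)^{2\eps}\Smat{\eps}{\Mod{M}}{\eps'}{\Mod{Q}}$. From \eqref{eqn:autch} and \eqref{eqn:autsch}, conjugation acts on all (super)characters simply by $z \mapsto z^{-1}$, which commutes with the $\SLG{SL}{2;\ZZ}$-action of \cref{subsec:stmod} and permutes the standard (super)characters via $\Mod{Q} \mapsto \conjmod{\Mod{Q}}$ (recall $\conjmod{\sfmod{\ell}{\NSRel{\lambda}}} \cong \sfmod{-\ell}{\NSRel{-\lambda}}$, and likewise for Ramond standards, by \eqref{eq:ConjSimples} and \eqref{eqn:twistrels}); hence $\Smat{\eps}{\conjmod{\Mod{M}}}{\eps'}{\conjmod{\Mod{Q}}} = \Smat{\eps}{\Mod{M}}{\eps'}{\Mod{Q}}$. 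Finally, comparing \eqref{eqn:Stransf} and \eqref{eqn:SMatrixEntries} for $\sfmod{n}{\Mod{M}}$ with those for $\Mod{M}$ (and propagating through the resolutions of \cref{cor:resgr} to handle arbitrary $\Mod{M} \in \categ{P}$) gives $\Smat{\eps}{\sfmod{n}{\Mod{M}}}{\eps'}{\Mod{Q}} = \omega_{n}(\Mod{Q})\,\Smat{\eps}{\Mod{M}}{\eps'-n}{\Mod{Q}}$, where for a standard module $\Mod{Q}$ with spectral flow index $\ell'$ and continuous label $\lambda'$ the phase $\omega_{n}(\Mod{Q}) = \ee^{-2\pi\ii n(2k\ell'+\lambda')}$ depends only on $\Mod{Q}$, satisfies $\omega_{m}\omega_{n} = \omega_{m+n}$, and obeys $\overline{\omega_{n}(\Mod{Q})} = \omega_{-n}(\Mod{Q})$ on the real integration domain. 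One should also keep in mind that $\parr$ and $\conjaut$ fix the \ns{}/Ramond sector of a module whereas $\sfaut^{n}$ shifts it by $n \bmod 1$, so in the last rule the second $\eps$-index (and the characteristic functions of \eqref{eq:DefSigma}) move accordingly.

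With these in hand, each identity follows by substituting into \eqref{eq:fusecoeff} and then changing an integration variable. For \eqref{eq:parrfuse} the factor $(-1)^{2\eps}$ pulls straight out of the $\Mod{M}$-term, giving $\fuscoeff{\parrmod{\Mod{M}}}{\Mod{N}}{\Mod{P}}{\eps} = (-1)^{2\eps}\fuscoeff{\Mod{M}}{\Mod{N}}{\Mod{P}}{\eps}$ (the sector of $\Mod{M}$ and all integration ranges are untouched), whence $\nch{\parrmod{\Mod{M}}\gfuse\Mod{N}}{\eps} = (-1)^{2\eps}\nch{\Mod{M}\gfuse\Mod{N}}{\eps} = \nch{\parrmod{(\Mod{M}\gfuse\Mod{N})}}{\eps}$. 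For \eqref{eq:conjfuse} the substitution $\Mod{Q} \mapsto \conjmod{\Mod{Q}}$ in the defining integral, together with the conjugation rule and the self-conjugacy of $\OSPMinMod{2}{4}$, gives $\fuscoeff{\conjmod{\Mod{M}}}{\conjmod{\Mod{N}}}{\Mod{P}}{\eps} = \fuscoeff{\Mod{M}}{\Mod{N}}{\conjmod{\Mod{P}}}{\eps}$; a second substitution $\Mod{P} \mapsto \conjmod{\Mod{P}}$ in $\nch{\conjmod{\Mod{M}}\gfuse\conjmod{\Mod{N}}}{\eps} = \int \fuscoeff{\conjmod{\Mod{M}}}{\conjmod{\Mod{N}}}{\Mod{P}}{\eps}\,\nch{\Mod{P}}{\eps}\,\dd\Mod{P}$ yields $\nch{\conjmod{(\Mod{M}\gfuse\Mod{N})}}{\eps}$. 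For \eqref{eq:specid} the $\Mod{M}$- and $\Mod{N}$-terms pick up $\omega_{m}(\Mod{Q})$ and $\omega_{n}(\Mod{Q})$, and their product $\omega_{m+n}(\Mod{Q})$ is exactly the phase $\overline{\omega_{-(m+n)}(\Mod{Q})}$ generated by replacing $\Mod{P}$ with $\sfmod{-(m+n)}{\Mod{P}}$ in the complex-conjugated $\Mod{P}$-term; once one checks that the accompanying $\eps$-index shifts and the constraint $\wun_{\Mod{P} \in \gnsr{\xi+\xi'}}$ transform compatibly, this gives $\fuscoeff{\sfmod{m}{\Mod{M}}}{\sfmod{n}{\Mod{N}}}{\Mod{P}}{\eps} = \fuscoeff{\Mod{M}}{\Mod{N}}{\sfmod{-(m+n)}{\Mod{P}}}{\eps}$, and the final substitution $\Mod{P} \mapsto \sfmod{m+n}{\Mod{P}}$ (measure-preserving, and carrying $\gnsr{\xi+\xi'}$ onto $\gnsr{\xi+\xi'+m+n}$) completes the identity. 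In all three cases $\nch{\cdot}{\eps}$ of both sides has been matched for $\eps = 0$ and $\eps = \tfrac12$, so the Grothendieck classes coincide.

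I expect the genuine mathematical content here to be minimal and the real obstacle to be purely organisational: making sure the \ns{}/Ramond sector labels, the two $\eps$-slots of \eqref{eq:fusecoeff}, the characteristic functions of \eqref{eq:DefSigma}, and the integration ranges $\gnsr{\eps+1/2}$ and $\gnsr{\xi+\xi'}$ all shift in step under half-integer spectral flows --- this is where \eqref{eq:specid} is most delicate. A secondary subtlety in that same case is that spectrally flowing the complex-conjugated $\Mod{P}$-term transiently produces a factor involving $\overline{\lambda_{\Mod{P}}}$, which must be seen to cancel against the $\overline{\lambda_{\Mod{P}}}$ already present inside $\Smat{\eps}{\Mod{P}}{\cdot}{\Mod{Q}}^{*}$. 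Finally, it is worth remarking that although $\conjaut$ and $\sfaut^{n}$ need not send the chosen standard basis (the typicals and the atypical standards carrying a $+$ superscript) into itself, this causes no trouble, since the whole argument runs in terms of (super)characters and the Verlinde formula and never refers to coordinates relative to that basis.
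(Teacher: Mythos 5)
Your proposal is correct and follows essentially the same route as the paper: reduce to the Verlinde formula via (super)characters, use $\Smat{\eps}{\parrmod{\Mod{M}}}{\eps'}{\Mod{Q}} = (-1)^{2\eps}\Smat{\eps}{\Mod{M}}{\eps'}{\Mod{Q}}$ and $\Smat{\eps}{\conjmod{\Mod{M}}}{\eps'}{\conjmod{\Mod{Q}}} = \Smat{\eps}{\Mod{M}}{\eps'}{\Mod{Q}}$ together with the bijection $\Mod{Q}\mapsto\conjmod{\Mod{Q}}$ and self-conjugacy of the vacuum, and for spectral flow cancel the $\Mod{M}$-independent phase $\ee^{-2\pi\ii(2k\ell'+\lambda')\ell}$ between the three $\Sigma$-factors in the integrand. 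The only cosmetic differences are that the paper invokes bilinearity to reduce to standards at the outset and packages the $\eps'$-shift into the characteristic functions of its factorised S-matrix, whereas you track it explicitly.
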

\begin{proof}
  Note that the standard Verlinde formula \eqref{eq:fusecoeff} for the Verlinde coefficients is bilinear in \(\Mod{M}\) and \(\Mod{N}\).  We may therefore restrict to $\Mod{M}$ and $\Mod{N}$ simple (or standard).

  \cref{eq:parrfuse} is clear from \eqref{eqn:chpar} and \eqref{eqn:schpar}:  $\tnch{\parrmod{\Mod{M}}}{\eps} = (-1)^{2\eps} \tnch{\Mod{M}}{\eps}$.  For \eqref{eq:conjfuse}, note that combining \eqref{eq:ConjSimples} with \eqref{eqn:SMatrixEntries} and \eqref{eqn:AtypSMatrixEntries} yields
	\begin{equation}
		\Smat{\eps}{\conjmod{\Mod{M}}}{\eps'}{\conjmod{\Mod{Q}}} = \Smat{\eps}{\Mod{M}}{\eps'}{\Mod{Q}}.
	\end{equation}
  Since $\Mod{Q} \mapsto \conjmod{\Mod{Q}}$ is a bijection of $\gnsr{\eps+1/2}$ and $\gvoa = \OSPMinMod{2}{4}$ is self-conjugate, it follows from the standard Verlinde formula \eqref{eq:fusecoeff} that
	\begin{equation}
		\fuscoeff{\conjmod{\Mod{M}}}{\conjmod{\Mod{N}}}{\conjmod{\Mod{P}}}{\eps} = \fuscoeff{\Mod{M}}{\Mod{N}}{\Mod{P}}{\eps},
	\end{equation}
  as required.

  The key to \eqref{eq:specid} is to note from \eqref{eqn:SMatrixEntries} and \eqref{eqn:AtypSMatrixEntries} that
	\begin{equation}
		\Selt{\sfmod{\ell}{\Mod{M}}}{\Mod{Q}} = \ee^{-2 \pi \ii \alpha(\Mod{Q}) \ell} \Selt{\Mod{M}}{\Mod{Q}},
	\end{equation}
  where $\alpha(\Mod{Q}) \in \RR$ does not depend on $\Mod{M}$ or $\ell$.  As $\Mod{Q}$ is the same in all $\Sigma$-factors appearing in the integrand of the simplified standard Verlinde formula \eqref{eq:Grfuscoeff}, the exponential factors involving $\alpha(\Mod{Q})$ cancel and we have
	\begin{align}
		\fuscoeff{\sfmod{\ell}{\Mod{M}}}{\sfmod{\ell'}{\Mod{N}}}{\sfmod{\ell+\ell'}{\Mod{P}}}{\eps}
		&= \wun_{\sfmod{\ell+\ell'}{\Mod{P}} \in \gnsr{\xi+\xi'+\ell+\ell'}} \int_{\gnsr{\eps+1/2}} \frac{\Selt{\sfmod{\ell}{\Mod{M}}}{\Mod{Q}} \Selt{\sfmod{\ell'}{\Mod{N}}}{\Mod{Q}} \Selt{\sfmod{\ell+\ell'}{\Mod{P}}}{\Mod{Q}}^*}{\Selt{\gvoa}{\Mod{Q}}} \dd \Mod{Q} \notag \\
		&= \wun_{\Mod{P} \in \gnsr{\xi+\xi'}} \int_{\gnsr{\eps+1/2}} \frac{\Selt{\Mod{M}}{\Mod{Q}} \Selt{\Mod{N}}{\Mod{Q}} \Selt{\Mod{P}}{\Mod{Q}}^*}{\Selt{\gvoa}{\Mod{Q}}} \dd \Mod{Q}
		= \fuscoeff{\Mod{M}}{\Mod{N}}{\Mod{P}}{\eps}.
	\end{align}
  Here, $\xi$ and $\xi'$ indicate whether $\Mod{M}$ and $\Mod{N}$ belong to the \ns{} or Ramond sector, as usual.
\end{proof}

\begin{thm}\label{thm:fuseformulae}
  Under the assumptions (i)--(v) above, the image $\tGr{\NSFin{0}}$ of the vacuum module is the unit of the Grothendieck fusion ring of $\OSPMinMod{2}{4}$.  Moreover, the Grothendieck fusion rules of the remaining simple $\OSPMinMod{2}{4}$-modules are then specified by
  \begin{subequations}
	  \begin{gather}
			\begin{gathered}
		    \Gr{\NSInf{-1/2}^+} \gGrfuse \Gr{\NSInf{-1/2}^+}
					= \Gr{\parrmod{\sfmod{}{\NSInf{+1/2}^-}}} + \Gr{\sfmod{1/2}{\RRel{1/4}}},\\
		    \Gr{\NSInf{-1/2}^+} \gGrfuse \Gr{\NSInf{+1/2}^-} = \Gr{\NSFin{0}} + \Gr{\NSRel{0}},
		  \end{gathered}
		  \label{GrFR:AtypAtyp} \\
		  \begin{gathered}
		    \Gr{\NSInf{-1/2}^+} \gGrfuse \Gr{\NSRel{\lambda}}
					= \Gr{\sfmod{1/2}{\RRel{\lambda-k-1/2}}} + \Gr{\parrmod \sfmod{1/2}{\RRel{\lambda-k+1/2}}}
					+ \Gr{\NSRel{\lambda-1/2}},\\
		    \Gr{\NSInf{-1/2}^+} \gGrfuse \Gr{\RRel{\lambda}}
					= \Gr{\sfmod{1/2}{\NSRel{\lambda-k-1/2}}} + \Gr{\RRel{\lambda-1/2}},
		  \end{gathered}
		  \label{GrFR:AtypSt} \\
		  \begin{gathered}
		    \Gr{\NSRel{\lambda}} \gGrfuse \Gr{\NSRel{\mu}}
					= \Gr{\parrmod{\sfmod{1/2}{\RRel{\lambda+\mu-k-1}}}} + \Gr{\sfmod{1/2}{\RRel{\lambda+\mu-k}}}
					+ 2 \Gr{\NSRel{\lambda+\mu}}
					+ \Gr{\sfmod{-1/2}{\RRel{\lambda+\mu+k}}} + \Gr{\parrmod{\sfmod{-1/2}{\RRel{\lambda+\mu+k+1}}}}, \\
		    \Gr{\NSRel{\lambda}} \gGrfuse \Gr{\RRel{\mu}}
					= \Gr{\sfmod{1/2}{\NSRel{\lambda+\mu-k}}} + \Gr{\RRel{\lambda+\mu}}
					+ \Gr{\parrmod{\RRel{\lambda+\mu+1}}}	+ \Gr{\sfmod{-1/2}{\NSRel{\lambda+\mu+k}}}, \\
		    \Gr{\RRel{\lambda}} \gGrfuse \Gr{\RRel{\mu}}
					= \Gr{\sfmod{1/2}{\RRel{\lambda+\mu-k}}} + \Gr{\NSRel{\lambda+\mu}} + \Gr{\sfmod{-1/2}{\RRel{\lambda+\mu+k}}},
		  \end{gathered}
		  \label{GrFR:StSt}
		\end{gather}
	\end{subequations}
  together with \cref{thm:twistfuse}.
\end{thm}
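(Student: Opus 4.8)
The plan is to reduce the problem to a short finite list of computations by exploiting \cref{thm:twistfuse}, and then to evaluate the simplified standard Verlinde formula \eqref{eq:Grfuscoeff} for each representative (working throughout under assumptions (i)--(v) of \cref{subsec:verlinde}, which feed into \cref{thm:Verlindeformula}). Since the Verlinde coefficients are bilinear in $\Mod{M}$ and $\Mod{N}$ and \cref{thm:twistfuse} intertwines Grothendieck fusion with the parity-reversal, conjugation and spectral-flow functors, it suffices to compute the products appearing in \eqref{GrFR:AtypAtyp}--\eqref{GrFR:StSt}; every remaining simple module (for instance the $\RInf{\pm3/4}^\mp$ and $\RInf{\pm5/4}^\mp$) is a spectral flow of $\NSFin{0}$ or of $\NSInf{\pm1/2}^\mp$ by \eqref{eqn:sfA0}, so its fusion rules follow. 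For the unit claim, note that $\gvoa = \OSPMinMod{2}{4}$ is, as a module over itself, the vacuum module $\NSFin{0}$, whence $\Selt{\NSFin{0}}{\Mod{Q}} = \Selt{\gvoa}{\Mod{Q}}$ and \eqref{eq:Grfuscoeff} collapses to $\wun_{\Mod{P}\in\gnsr{\xi'}}\int_{\gnsr{\eps+1/2}}\Selt{\Mod{N}}{\Mod{Q}}\Selt{\Mod{P}}{\Mod{Q}}^*\,\dd\Mod{Q}$; the unitarity $\modS^\dagger\modS=\wun$ established at the end of \cref{subsec:stmod} (together with the parity identifications discussed there) then forces this to be $\delta_{\Mod{P},\Mod{N}}$, so $\tGr{\NSFin{0}}$ is the identity of the Grothendieck fusion ring.

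Next I would treat the standard $\times$ standard rules \eqref{GrFR:StSt}. Reading the factors $\Selt{\cdot}{\Mod{Q}}$ off \eqref{eqn:SMatrixEntries} and \eqref{eqn:AtypSMatrixEntries}, the integrand in \eqref{eq:Grfuscoeff} is, for $\Mod{Q}$ of $\NSRel{}$-type, an exponential $\ee^{2\pi\ii(\cdots)\lambda'}\ee^{2\pi\ii(\cdots)\ell'}$ times the factor $2\sqrt2\cos(\pi\lambda')$ coming from the denominator $\Selt{\gvoa}{\Mod{Q}}$ (and likewise $4\cos(\pi\lambda')+2\sqrt2$ for $\Mod{Q}$ of $\RRel{}$-type). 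The sum over the spectral-flow index $\ell'$ is then evaluated by the Fourier-orthogonality identities used in the proof of \cref{thm:Stransf}, which fixes the spectral-flow index of $\Mod{P}$ and selects its sector; the integral of the remaining exponential against the $\cos$-expression over $\lambda'\in\RR/\ZZ$ or $\RR/2\ZZ$ produces the handful of Dirac deltas whose positions are exactly the shifts $\lambda+\mu$, $\lambda+\mu\pm k$, $\lambda+\mu-k\mp\tfrac12$ appearing on the right of \eqref{GrFR:StSt}. Performing this for $\eps=0$ (characters) and $\eps=\tfrac12$ (supercharacters) separately yields two sets of Verlinde coefficients; since $\tnch{\parr\Mod{P}}{0}=\tnch{\Mod{P}}{0}$ but $\tnch{\parr\Mod{P}}{1/2}=-\tnch{\Mod{P}}{1/2}$, a combination $m\tGr{\Mod{P}}+n\tGr{\parr\Mod{P}}$ is recorded by the pair $(m+n,m-n)$, so comparing the two coefficient sets disentangles the parities of the summands. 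The characteristic function $\wun_{\Mod{P}\in\gnsr{\xi+\xi'}}$ enforces the correct Neveu--Schwarz/Ramond sector of the output.

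For the atypical fusion rules \eqref{GrFR:AtypAtyp} and \eqref{GrFR:AtypSt}, I would not integrate the atypical $\Sigma$-factors directly (their $1/\cos(\pi\lambda')$-type denominators would have to be expanded geometrically and would give formally infinite sums). Instead I would substitute, for the atypical factor $\NSInf{\mp1/2}^\pm$, its standard resolution from \cref{cor:resgr}, invoke bilinearity of the Verlinde coefficients together with the standard $\times$ standard rules just obtained and \cref{thm:twistfuse}, and then resum. The six-term periodicity of the resolutions in \cref{thm:resolutions} means that the resulting doubly-infinite sum telescopes in the Grothendieck group, consecutive contributions cancelling in pairs, and it converges in the formal-power-series sense by the remark after \cref{cor:resgr}; what survives is precisely the short list in \eqref{GrFR:AtypSt}. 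Applying this to both factors gives \eqref{GrFR:AtypAtyp}. Equivalently, one may plug the atypical S-matrix entries \eqref{eqn:AtypSMatrixEntries} into \eqref{eq:Grfuscoeff}, expand $1/\cos$ in non-negative powers of $\ee^{-\pi\ii\lambda'}$ as in the proof of that proposition, and check that the series produced by the $\ell'$-summation and $\lambda'$-integration collapses to the same finite answer.

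I expect the main obstacle to be exactly this collapse: showing that the infinite sums arising from the atypical resolutions (or from the $1/\cos$ expansions) really do reduce to the finite combinations in \eqref{GrFR:AtypAtyp}--\eqref{GrFR:AtypSt} requires a careful, somewhat delicate accounting of spectral-flow indices and parities across the full length of the resolutions. A more routine but still error-prone point is keeping the module-parameter conventions straight throughout --- $\lambda\in\CC/\ZZ$ for $\NSRel{}$-type modules versus $\mu\in\CC/2\ZZ$ for $\RRel{}$-type modules, together with the identification $\NSRel{\lambda+1}\cong\parr\NSRel{\lambda}$ --- so that the parities read off from the character/supercharacter pair are attached to the correct representatives.
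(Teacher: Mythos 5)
Your proposal is essentially correct and, for the unit claim and the standard-by-standard rules \eqref{GrFR:StSt}, follows the paper's route exactly: evaluate \eqref{eq:Grfuscoeff} using \eqref{eqn:SMatrixEntries}, apply the Fourier orthogonality from the proof of \cref{thm:Stransf}, and compare the $\eps=0$ and $\eps=\frac{1}{2}$ Verlinde coefficients to disentangle parities. Where you diverge is the atypical-by-standard rules \eqref{GrFR:AtypSt}: you propose to resolve the atypical factor via \cref{cor:resgr} and resum, whereas the paper simply plugs the atypical S-matrix entries \eqref{eqn:AtypSMatrixEntries} directly into \eqref{eq:Grfuscoeff}. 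Your stated reason for avoiding the direct route --- that the $1/\cos(\pi\lambda')$-type denominators would force a geometric expansion and formally infinite sums --- is a misconception: in the Verlinde integrand these denominators cancel against the identical denominator in $\Selt{\gvoa}{\Mod{Q}} = \Selt{\NSFin{0}}{\Mod{Q}}$, leaving a finite trigonometric polynomial such as $1+\sqrt{2}\,\ee^{\mp\pi\ii\lambda'}$, which is precisely why the paper finds \eqref{GrFR:AtypSt} ``easy to calculate directly'' (the factor $\frac{1}{2}\bigl(1+\ee^{-\pi\ii(\lambda-\mu-1/2)}\bigr)$ is what then kills the unwanted parity-reversed term). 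Your resolution route is nonetheless legitimate --- the paper explicitly notes it as an alternative, and uses exactly that strategy (resolving one factor and invoking \eqref{GrFR:AtypSt}) for the atypical-by-atypical rules \eqref{GrFR:AtypAtyp}; you resolve both factors and reduce to \eqref{GrFR:StSt}, which is equivalent but doubles the bookkeeping. The cost of your choice is that the infinite resummation you flag as the main obstacle becomes unavoidable for \eqref{GrFR:AtypSt} as well (and ``telescoping'' undersells the cancellation pattern: one must track the six-term periodicity of spectral-flow indices and parities in \cref{thm:resolutions}), together with the need to note that the Grothendieck fusion product extends to the completion \eqref{eq:thecompletion}; the direct route sidesteps all of this for \eqref{GrFR:AtypSt} at the price of a short explicit integral.
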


\begin{proof}
  These Grothendieck fusion rules are grouped into three classes: atypical by atypical, atypical by typical and typical by typical.  The rules are also correct whenever a typical module should be replaced by an atypical standard module (on the left-hand or \rhss{}).  They all follow by evaluating the standard Verlinde formula of \cref{thm:Verlindeformula}.  We discuss the evaluation for two examples, one from each of the last two classes, and explain how to then deduce the rules for the remaining class.  The omitted calculations are similar to those presented.

  First, we compute \(\tGr{\NSRel{\lambda}} \gGrfuse \tGr{\NSRel{\mu}}\). Using the simplified standard Verlinde formula \eqref{eq:Grfuscoeff}, the Verlinde coefficient of \(\tGr{\sfmod{\ell}{\NSRel{\nu}}}\) in this fusion product is
  \begin{align}
    \fuscoeff{\NSRel{\lambda}}{\NSRel{\mu}}{\sfmod{\ell}{\NSRel{\nu}}}{\eps}
    &= \wun_{\sfmod{\ell}{\NSRel{\nu}} \in \gnsr{0}} \int_{\gnsr{\eps+1/2}}
			\frac{\Selt{\NSRel{\lambda}}{\Mod{Q}} \Selt{\NSRel{\mu}}{\Mod{Q}} \Selt{\sfmod{\ell}{\NSRel{\nu}}}{\Mod{Q}}^*}{\Selt{\NSFin{0}}{\Mod{Q}}} \dd \Mod{Q} \notag \\
		&= \wun_{\ell \in \ZZ} \sum_{m \in \ZZ+\eps} \int_{\RR/2\ZZ}
			\frac{\Selt{\NSRel{\lambda}}{\sfmod{m}{\RRel{\rho}}} \Selt{\NSRel{\mu}}{\sfmod{m}{\RRel{\rho}}} \Selt{\sfmod{\ell}{\NSRel{\nu}}}{\sfmod{m}{\RRel{\rho}}}^*}{\Selt{\NSFin{0}}{\sfmod{m}{\RRel{\rho}}}} \dd \rho \nonumber
		\intertext{(since $\Mod{Q} = \sfmod{m}{\NSRel{\rho}}$ gives no contribution)}
    &= \wun_{\ell \in \ZZ} \sum_{m \in \ZZ+\eps} \ee^{-2 \pi \ii (\lambda+\mu-\nu-2k\ell) m}
			\int_{\RR/2\ZZ} \ee^{2 \pi \ii \ell \rho} \brac*{\frac{1}{\sqrt{2}} \ee^{\pi \ii \rho} + \frac{1}{\sqrt{2}} \ee^{-\pi \ii \rho} + 1} \, \dd \rho \nonumber\\
    &= 2\delta_{\ell,0} \ee^{-2 \pi \ii (\lambda+\mu-\nu) \eps} \delta(\nu=\lambda+\mu \bmod{1})
  \end{align}
  (as the terms involving $\ee^{\pm \pi \ii \rho}$ lead to $\ell = \mp \frac{1}{2} \notin \ZZ$).  The contribution to the (super)character of \(\NSRel{\lambda} \gfuse \NSRel{\mu}\), as specified in \eqref{eq:chfusion}, is thus
  \begin{equation}
		\sum_{\ell \in \frac{1}{2} \ZZ} \int_{\RR/\ZZ} \fuscoeff{\NSRel{\lambda}}{\NSRel{\mu}}{\sfmod{\ell}{\NSRel{\nu}}}{\eps} \nch{\sfmod{\ell}{\NSRel{\nu}}}{\eps} \, \dd \nu = 2 \nch{\NSRel{\lambda+\mu}}{\eps}.
	\end{equation}
	The contribution to the Grothendieck fusion rule \(\tGr{\NSRel{\lambda}} \gGrfuse \tGr{\NSRel{\mu}}\) is therefore $2 \tGr{\NSRel{\lambda+\mu}}$.

	A very similar calculation gives
	\begin{align}
		\fuscoeff{\NSRel{\lambda}}{\NSRel{\mu}}{\sfmod{\ell}{\RRel{\nu}}}{\eps}
		&= \delta_{\ell,1/2} \ee^{-2 \pi \ii (\lambda+\mu-\nu-k) \eps} \delta(\nu=\lambda+\mu-k \bmod{1}) \notag \\
		&\mspace{50mu} + \delta_{\ell,-1/2} \ee^{-2 \pi \ii (\lambda+\mu-\nu+k) \eps} \delta(\nu=\lambda+\mu+k \bmod{1}).
	\end{align}
	To compute the contribution to the (super)character, we must account for the fact that $\sfmod{\ell}{\RRel{\nu}}$ is $2$-periodic in $\nu$:
	\begin{align}
		\sum_{\ell \in \frac{1}{2} \ZZ} \int_{\RR/2\ZZ} \fuscoeff{\NSRel{\lambda}}{\NSRel{\mu}}{\sfmod{\ell}{\RRel{\nu}}}{\eps} \nch{\sfmod{\ell}{\RRel{\nu}}}{\eps} \, \dd \nu
		&= \nch{\sfmod{1/2}{\RRel{\lambda+\mu-k}}}{\eps} + (-1)^{2\eps} \nch{\sfmod{1/2}{\RRel{\lambda+\mu-k-1}}}{\eps} \notag \\
		&\mspace{50mu} + \nch{\sfmod{-1/2}{\RRel{\lambda+\mu+k}}}{\eps} + (-1)^{2\eps} \nch{\sfmod{-1/2}{\RRel{\lambda+\mu+k+1}}}{\eps}.
	\end{align}
	The signs $(-1)^{2\eps}$ therefore indicate parity reversal in the contribution to the Grothendieck fusion rule.  Putting this together with the contribution from the previous calculation completes the determination of \(\tGr{\NSRel{\lambda}} \gGrfuse \tGr{\NSRel{\mu}}\).

  For our second example, we consider \(\tGr{\NSInf{-1/2}^+} \gGrfuse \tGr{\RRel{\lambda}}\).  The contributions from the $\tGr{\sfmod{\ell}{\NSRel{\mu}}}$ are calculated as in the previous example, hence are omitted.  We instead consider the contributions from the $\tGr{\sfmod{\ell}{\RRel{\mu}}}$.  This requires us to take both $\Mod{Q} = \sfmod{m}{\NSRel{\rho}}$ and $\Mod{Q} = \sfmod{m}{\RRel{\rho}}$ in the simplified standard Verlinde formula, giving (after some familiar steps) the result that
  \begin{equation}
		\fuscoeff{\NSInf{-1/2}^+}{\RRel{\lambda}}{\sfmod{\ell}{\RRel{\mu}}}{\eps} = \frac{1}{2} \brac*{1 + \ee^{-\pi \ii (\lambda-\mu-1/2)}} \delta_{\ell,0} \ee^{-2 \pi \ii (\lambda-\mu-1/2) \eps} \delta(\mu=\lambda-\tfrac{1}{2} \bmod{1}).
	\end{equation}
	Integrating over $\RR/2\ZZ$ to get the contribution to the (super)character, the first factor is $1$ when $\mu=\lambda-\frac{1}{2} \bmod{2}$ but is $0$ when $\mu=\lambda+\frac{1}{2} \bmod{2}$.  In this way, we avoid introducing a parity-reversed module in the contribution $\tGr{\RRel{\lambda-1/2}}$ to the Grothendieck fusion rule.

  Finally, we note that the most straightforward way to compute the atypical by atypical Grothendieck fusion rules \eqref{GrFR:AtypAtyp} is to combine the identities of \cref{cor:resgr} with the atypical by standard rules \eqref{GrFR:AtypSt} that have already been computed.  (One can also use the same method to compute the atypical by standard rules in terms of the standard by standard ones \eqref{GrFR:StSt}, but the former are easy to calculate directly as we have seen.)  This completes the proof.
\end{proof}

\begin{remark}
	We note that the reason why one can use the identities of \cref{cor:resgr} in this proof is that the Grothendieck fusion product extends to the given completion \eqref{eq:thecompletion} of the Grothendieck group.  This follows easily from \cref{eq:specid,GrFR:StSt} upon recalling that the completion only admits infinite-linear combinations of images of standards in which the spectral flow indices are bounded below.
\end{remark}

\begin{remark}
	Of course, one can easily confirm that the $\OSPMinMod{2}{4}$ Grothendieck fusion rules computed here are reproduced by combining \eqref{eq:FusIndCompat} with the $\SLMinMod{3}{4}$ Grothendieck fusion rules reported in \cite[Tab.~3]{CreMod13}.  When doing so, note that one ``unit'' of spectral flow for $\OSPMinMod{2}{4}$ amounts to two units of spectral flow for $\SLMinMod{3}{4}$ (with the conventions used here and in \cite{CreMod13}).
\end{remark}

\flushleft
%\bibliography{osp}
%\bibliographystyle{unsrt}

\providecommand{\opp}[2]{\textsf{arXiv:\mbox{#2}/#1}}\providecommand{\pp}[2]{\textsf{arXiv:#1
  [\mbox{#2}]}}

\end{document}